\documentclass[acmsmall,screen,nonacm]{acmart}

\usepackage{amsmath}
\usepackage{mathtools}
\usepackage{physics} 
\usepackage{geometry}
\usepackage{xcolor}
\usepackage{hyperref}
\usepackage{cleveref}
\usepackage{stmaryrd} 
\usepackage[shortlabels]{enumitem}
\usepackage{csquotes}
\usepackage{mathpartir}
\usepackage{booktabs}

\usepackage{mathpartir} 
\usepackage{mdframed} 
\usepackage{xcolor} 
\usepackage{array}
\usepackage{longtable}
\usepackage{tikz}
\usepackage{tikz-cd}

\newtheorem{theorem}{Theorem}[section]
\newtheorem{lemma}{Lemma}[section]
\newtheorem{proposition}{Proposition}[section]
\newtheorem{corollary}[theorem]{Corollary}
\newtheorem{definition}{Definition}[section]

\newlength{\mycustomparindent}
\AtBeginDocument{\setlength{\mycustomparindent}{\parindent}} 

\makeatletter
\g@addto@macro\@acmplainbodyfont{\setlength{\listparindent}{\mycustomparindent}\setlength{\parindent}{\mycustomparindent}}
\g@addto@macro\@acmdefinitionbodyfont{\setlength{\listparindent}{\mycustomparindent}\setlength{\parindent}{\mycustomparindent}}
\makeatother

\makeatletter
\newtheoremstyle{acmremark}%
  {.5\baselineskip\@plus.2\baselineskip\@minus.2\baselineskip}
  {.5\baselineskip\@plus.2\baselineskip\@minus.2\baselineskip}
  {\normalfont\setlength{\listparindent}{\mycustomparindent}\setlength{\parindent}{\mycustomparindent}}
  {}
  {\itshape}
  {.}
  {.5em}
  {}
\makeatother

\theoremstyle{acmremark}
\newtheorem{remark}{Remark}

\newcommand{\cB} {\mathcal{B}}
\newcommand{\cC} {\mathcal{C}}
\newcommand{\cD} {\mathcal{D}}
\newcommand{\cE} {\mathcal{E}}

\newcommand{\cH} {\mathcal{H}}

\newcommand{\cK} {\mathcal{K}}

\newcommand{\cQ} {\mathcal{Q}}

\newcommand{\cT} {\mathcal{T}}

\newcommand{\bN} { {\mathbb{N}}}   
\newcommand{\bC} { {\mathbb{C}}}  
\newcommand{\bQ} { {\mathbb{Q}}}   
   
\newcommand{\bR} { {\mathbb{R}}}

\newcommand{\QF}{\mathfrak{t}}                
\newcommand{\Dom}{{\operatorname{Dom}}}              
\newcommand{\Pred}{\mathbb{P}}                
\newcommand{\sem}[1]{\llbracket #1 \rrbracket}
\newcommand{\sqleq}{\sqsubseteq}              
\newcommand{\sqgeq}{\sqsupseteq}
\newcommand{\<}{{\langle}}
\renewcommand{\>}{{\rangle}}
\newcommand{\Ran} {\operatorname{Ran}}
\newcommand{\mul} {\operatorname{Mul}}
\newcommand{\sP} {\mathsf{P}}
\newcommand{\bind}{\mathbf{bind}}
\renewcommand{\vec}[1]{\ensuremath{\mathbf{#1}}}
\newcommand{\seq}[1]{\ensuremath{\overline{#1}}}
\renewcommand{\u}{\vec{u}}
\renewcommand{\v}{\vec{v}}
\newcommand{\w}{\vec{w}}
\newcommand{\x}{\vec{x}}
\newcommand{\y}{\vec{y}}
\newcommand{\z}{\vec{z}}
\renewcommand{\tr}{\operatorname{tr}}
\renewcommand{\Tr}{\operatorname{Tr}}
\renewcommand{\wp}{\operatorname{wp}}
\newcommand{\wlp}{\operatorname{wlp}}
\newcommand{\ii}{\mathrm{i}}

\newcommand{\ert}{\operatorname{ert}}
\let\geq=\geqslant
\let\leq=\leqslant
\let\epsilon=\varepsilon

\newcommand{\density}[1]{%
  \if\relax\detokenize{#1}\relax
    \mathcal{D}
  \else
    \mathcal{D}(#1)
  \fi
}
\newcommand{\pardensity}[1]{\density{#1}}

\title{Reasoning about Continuous-Variable Quantum Systems}
\author{Tianshi Yu}
\affiliation{%
  \institution{Institute of Software, Chinese Academy of Sciences}
  \city{Beijing}
  \country{China}
}
\email{yuts@ios.ac.cn}

\author{Gilles Barthe}
\affiliation{%
  \institution{Max Planck Institute for Security and Privacy (MPI-SP)}
  \city{Bochum}
  \country{Germany}
}
\affiliation{%
  \institution{IMDEA Software Institute}
  \city{Madrid}
  \country{Spain}
}
\email{gilles.barthe@mpi-sp.org}

\author{Minbo Gao}
\affiliation{%
  \institution{Institute of Software, Chinese Academy of Sciences}
  \city{Beijing}
  \country{China}
}
\affiliation{
  \institution{University of Chinese Academy of Sciences}
  \city{Beijing}
  \country{China}
}
\email{gmb17@tsinghua.org.cn}

\author{Mingsheng Ying}
\affiliation{%
  \department{Centre for Quantum Software and Information}
  \institution{University of Technology Sydney}
  \city{Sydney}
  \country{Australia}
}
\email{Mingsheng.Ying@uts.edu.au}

\author{Li Zhou}
\affiliation{%
  \institution{Institute of Software, Chinese Academy of Sciences}
  \city{Beijing}
  \country{China}
}
\email{zhouli@iscas.ac.cn}

\begin{document}

\begin{abstract}
Continuous-variable quantum computing (CVQC) is a computing paradigm in which measurements yield values over a continuous domain. CVQC is both a convenient omputational framework for modeling physical quantum systems, and a good abstraction for hardware platforms based on quantum optics. Yet, the semantic foundations of CVQC remain underdeveloped. To address this gap, we develop a formal semantics for a core CV quantum programming language, and sound verification methods for program correctness. 

A main contribution of this work is to isolate a well-behaved quantitative predicate domain that achieves sufficient expressiveness to accommodate unbounded values as they arise in the infinite-dimensional, continuous setting. Specifically, we choose closed positive quadratic forms as semantic predicates, representing finite expectations, domains of finiteness, and infinite penalties in one ordered object. We validate our choice by establishing that our semantic predicates satisfy desirable closure properties including the definition of weakest preconditions.

We validate our design with two case studies, including an example based on the celebrated GKP error-correcting code, for which we establish a second moment bound.

\end{abstract}

\keywords{continuous-variable quantum programs, quantum program verification,
quantitative Hoare logic, weakest preconditions, unbounded quantum predicates,
continuous-outcome measurements.}

\maketitle




\section{Introduction}

\label{sec:introduction}

Continuous-variable (CV) quantum computation and communication protocols have been extensively studied within the field of quantum information processing \cite{lloyd1999quantum,braunstein2005quantum}.
For example, it plays a central role in photonic quantum computing, where optical systems provide a natural hardware platform, and in optical quantum communication, where many protocols are naturally formulated in CV terms. It is also used in quantum sensing and metrology, quantum simulation, 
and fault-tolerant quantum computation. Beyond these platform-level applications, CV methods are notably used in error-correction procedures such as the Gottesman--Kitaev--Preskill (GKP) code~\cite{gottesman2001encoding}. 



Making CV quantum computation and communication protocols  correct is even more difficult than for discrete ones. Formal verification is useful for this, but it is also more demanding because their correctness is often quantitative rather than merely Boolean. A CV protocol may combine transformations, measurements, outcome-dependent commands, and loops while maintaining bounds on position, resource usage, or cost. Error correction gives a representative example: a measurement returns a real number, the next command uses that number to choose a correction, and the postcondition is not a boolean-valued assertion but a numerical bound on the remaining error, such as a variance bound under finite-moment assumptions. 

The aim of this paper is to develop a \textit{program logic} for reasoning about the correctness of CV quantum algorithms and protocols by treating them as \textit{abstract programs}, rather than programs that are actually executable on (digital) quantum computers.  Our logic supports reasoning about expected cost, hard constraints, and quantitative error bounds, without first replacing these CV  programs by a finite-dimensional cutoff.

\paragraph*{Challenges} Existing quantum weakest-precondition calculus, and Hoare, expected-runtime, and cost logics~\cite{dhondt2006quantum,Ying11,feng2020,liu2025quantum,avanzini2022quantum} provide the foundations for our work, but their usual bounded predicates do not directly express the unbounded expectations needed in reasoning about CV quantum computation and communication. The first challenge for our logic is denotational. Exact verification uses the program's own denotation, rather than a finite-dimensional approximation chosen for simulation. In the finite-dimensional case, a measurement command has finitely many branches. In a CV program, a measurement may return a real number, and the following command may depend on that number. Thus the syntax contains an outcome-indexed bind rather than a finite case split. Giving this construct a denotation is not just replacing a sum by an integral. The semantic clauses record when branch states and continuation maps vary measurably with the outcome, when the resulting integral is a positive trace-class operator, and when the induced state transformer is completely positive and trace-nonincreasing. The same account also covers sequencing, measurement-guarded loops, and cost, and supports a backward weakest-precondition transformer dual to forward execution.

The second challenge is the predicate. In finite-dimensional quantum logics, predicates are often bounded positive operators. These predicates describe event probabilities and bounded expected values, but they do not express quantities that can be arbitrarily large, such as second moments, accumulated cost, or mean-square error in CV quantum systems. Moreover, treating an arbitrary unbounded operator as a predicate is not compositional: backwards semantics may be undefined, infinite branch sums and outcome integrals may not have closed forms, and loop preconditions may only arise as suprema of finite approximants. A related issue is that in practice, assertions often need to combine numerical bounds with hard constraints, for example that an input lies in the finiteness domain of a given quantity. The infinite value captures violation of such a constraint inside the predicate itself. The predicate domain therefore combines finite values, finiteness domains, and infinite penalties in one ordered structure closed under the backward clauses. Defining an appropriate formalism that captures such assertions is challenging.

\paragraph*{Contributions}

This paper develops a logic for CV quantum programs by overcoming the above challenges. The main contributions are as follows.

\begin{itemize}

  \item \textbf{Semantics.} We define a quantum while-language whose variable types are interpreted by separable Hilbert spaces and whose commands include initialization, unitary commands, sequencing, binary measurement-guarded while loops, and an admissible continuous-outcome measurement bind. The semantics of programs denote completely positive, trace-nonincreasing superoperators on trace-class partial density operators; continuous bind is interpreted by integrating measurable families of continuation-applied branch states under explicit measurability and admissibility conditions for continuous quantum instruments~\cite{davies1976quantum,ozawa1984quantum}.

  \item \textbf{Quantum predicates and their transformations.} We use closed positive quadratic forms, equivalently positive self-adjoint linear relations, as expectation predicates for unbounded expectations and hard domain constraints~\cite{LICS2026LinearRelation}. We define weakest preconditions at the level of forms: Kraus-specified quantum operations are pulled back by the corresponding Kraus sums, countable branching uses countable sums of forms, admissible continuous bind uses pointwise integration of extended branch-form values, loops use suprema of finite unrollings, and costs are added as positive forms. We prove closure of the predicate domain under these clauses and establish an exact extended trace-duality identity with the forward Schrödinger semantics. On this basis, we define a cost-parametric weakest precondition $\wp_c$: $c=0$ recovers zero-cost partial-correctness reasoning, while $c>0$ supports expected-cost reasoning and finiteness obligations. 
  
  \item \textbf{Program logic.} Based on the above results, we present a cost-parameterized QHL-style proof system for the resulting calculus.
  This proof system unifies both zero-cost partial correctness and expected-cost reasoning through the cost parameter.
  Its distinctive \textsc{While} rule is a potential-function rule: the loop predicate must cover the exit postcondition and pay for each guard evaluation and, on the continuing branch, one body execution that re-establishes the potential.
  We prove the soundness and relative completeness of this proof system via the exact characterization of cost-parametric weakest preconditions.

\item \textbf{Applications.} We apply our logic to two case studies. The first case study pertains to a discrete, infinite-dimensional setting: a quantum symmetric walk over an $\ell^2(\mathbb Z)$ position register separates almost-sure termination from finite expected cost within the same predicate transformer. The second case study is continuous: a one-round GKP-style correction program uses continuous measurement bind and unbounded variance predicates to prove a variance bound of output state under physical rational assumptions. \textit{A particularly interesting observation arising from these applications is that logical tools developed by the programming languages community can also facilitate reasoning in physics.} 

\end{itemize}

\paragraph{Organization} 
 The rest of this paper is organized as follows. 
\Cref{sec:Preliminaries} provides the minimal mathematical background of the paper. 
The technical core is divided into three sections. To make the boundary between prior foundations and our extensions explicit, each section begins with a review of the relevant existing results. \Cref{sec:syntax_semantics} recalls the \textbf{qwhile} language, then extends it with admissible continuous-outcome bind and develops its parameterized denotational semantics. \Cref{sec:unbounded_predicates} reviews bounded quantum predicate transformers and the unbounded predicate domain, then constructs form-level predicate transformers, including the transformer for continuous-outcome bind. \Cref{sec:unified_proof_system} reviews expected-runtime transformers, then presents the cost-parametric Hoare logic and proves soundness and semantic relative completeness via the cost-parametric weakest precondition. \Cref{sec:random_walk} and~\Cref{sec:gkp_verification} present the quantum random-walk and GKP case studies, respectively. We then discuss related work and conclude in \Cref{sec:related-work,sec:conclusion}.

\section{Preliminaries}
\label{sec:Preliminaries}
This section collects the mathematical conventions used in the rest of the paper. We only recall the parts of operator theory, bounded Schr\"odinger--Heisenberg duality, and measure-theoretic integration that are needed for the constructions below.

\subsection{Hilbert Spaces and Operators}
\label{subsec:hilbert-operators}
This subsection fixes the operator-theoretic notation used in this paper.  The goal is only to record the
infinite-dimensional analogue of the finite-dimensional matrix notation.

A Hilbert space is a complete complex inner-product space.  We write
$\langle \u,\v\rangle$ for the inner product and use the convention that it is
linear in the second argument and conjugate-linear in the first. All Hilbert spaces in this paper are separable, i.e., each has a finite or countably infinite orthonormal basis. Thus 
\(\ell^2(\mathbb Z)\) and \(L^2(\mathbb R)\) are both valid spaces in this paper.
Unbounded predicates are introduced later; this subsection only concerns states, bounded operators, and bounded observables.

A linear map \(A:\cH\to\cH\) is bounded if
\(\|A\|\triangleq \sup_{\|\u\|\le 1}\|A\u\|<\infty\).  We write \(\cB(\cH)\) for
the algebra of bounded operators on \(\cH\).  In finite dimension this is just
the algebra of matrices.  For \(A\in\cB(\cH)\), its adjoint \(A^\dagger\) is
defined by \(\langle A^\dagger\u,\v\rangle=\langle \u,A\v\rangle\).  A unitary is
an operator \(U\in\cB(\cH)\) satisfying \(U^\dagger U=UU^\dagger=I\), where $I$ represents the identity operator.

For Hilbert spaces \(\cH_1,\ldots,\cH_n\), we write
\(\cH_1\otimes\cdots\otimes\cH_n\) for their Hilbert tensor product.  If
\(A_i\in\cB(\cH_i)\), then \(A_1\otimes\cdots\otimes A_n\) is the bounded
operator determined by its action on simple tensors. In finite dimension this is just the Kronecker product of matrices. 

An operator \(A\in\cB(\cH)\) is self-adjoint if \(A=A^\dagger\).  A self-adjoint
operator \(A\) is positive, written \(A\sqgeq 0\), if
\(\langle \u,A\u\rangle\ge 0\) for every \(\u\in\cH\).  For self-adjoint \(A,B\),
we write \(A\sqleq B\) if \(B-A\sqgeq 0\).  This is the usual L\"owner order on
Hermitian matrices.  We write \(\cB(\cH)_+\) for the cone of positive bounded
operators.

A bounded operator \(T\in\cB(\cH)\) is trace-class if its trace norm 
\(\|T\|_1\triangleq \tr(|T|)<\infty\), where
\(|T|\triangleq (T^\dagger T)^{1/2}\).  Equivalently, its singular values are
summable.  We write \(\cT(\cH)\) for the trace-class operators and
\(\cT(\cH)_+\) for the positive trace-class operators.  For positive
\(T\in\cT(\cH)\), the trace is
\(\tr(T)\triangleq \sum_i\langle \u_i,T\u_i\rangle\), where \((\u_i)_i\) is any
orthonormal basis; the value is independent of the basis.  The trace of a
general trace-class operator is obtained by linearity.  If \(A\in\cB(\cH)\) and
\(T\in\cT(\cH)\), then \(AT,TA\in\cT(\cH)\), and \(\tr(AT)=\tr(TA)\).

Finally we introduce some notations for quantum case. A partial density operator is a positive trace-class operator \(\rho\) with
\(\tr(\rho)\le 1\).  We denote the set of partial density operators as $\pardensity{\cH}$.
A bounded observable is a bounded self-adjoint operator \(A\in\cB(\cH)\), and
its expected value on state \(\rho\in\cD(\cH)\) is \(\tr(A\rho)\).  A bounded
effect is an operator \(A\in\cB(\cH)\) with \(0\sqleq A\sqleq I\).  
See \cite{reed1980methods} for the details.

\subsection{Quantum Operations and Schr\"odinger-Heisenberg Duality
}
On the Schr\"odinger picture, a superoperator is a bounded linear map \(\cE:\cT(\cH)\to\cT(\cH)\).  It is
positive if \(\cE(\cT(\cH)_+)\subseteq\cT(\cH)_+\).  It is completely positive
if, for every \(n\in\bN^+\), the amplification
\(\cE\otimes I_n\) maps \(\cT(\cH\otimes\mathbb C^n)_+\) into itself.  It is
trace-nonincreasing if \(\tr(\cE(\rho))\le \tr(\rho)\) for every
\(\rho\in\cT(\cH)_+\).  A completely positive trace-nonincreasing map is often called
a CPTN map for short, also there exist some textbooks that call it quantum operation or CPTNI.  CPTN maps send partial density operators to partial density
operators.

On the Heisenberg side, a linear map
\(\Phi:\cB(\cH)\to\cB(\cH)\) is positive if
\(\Phi(\cB(\cH)_+)\subseteq\cB(\cH)_+\).  It is completely positive if, for
every \(n\ge 1\), the amplification
\(\Phi\otimes I_n:\cB(\cH\otimes\mathbb C^n)\to
\cB(\cH\otimes\mathbb C^n)\) is positive.  It is subunital if
\(\Phi(I)\sqleq I\).  Finally, \(\Phi\) is normal if it preserves suprema of
bounded increasing nets of positive operators: whenever
\(0\le A_\alpha\uparrow A\) in \(\cB(\cH)\), one has
\[
    \Phi(A)=\sup_\alpha \Phi(A_\alpha).
\]

Now we can state the  Schr\"odinger-Heisenberg duality. For every CPTN map \(\cE:\cT(\cH)\to\cT(\cH)\) in Schr\"odinger picture, the trace pairing determines
a unique Heisenberg-picture map \(\cE^\dagger:\cB(\cH)\to\cB(\cH)\) satisfying
\[
    \tr(A\cE(\rho))=\tr(\cE^\dagger(A)\rho).
\]
This dual map is normal, completely positive, and subunital.  Conversely, normal completely positive subunital
maps on \(\cB(\cH)\) are exactly the Heisenberg-picture maps that have
trace-class CPTN preduals. There exist completely positive
subunital maps on \(\cB(\cH)\) that are not normal; such maps do not arise as
duals of trace-class Schr\"odinger-picture evolutions and are not used as program denotations in this paper. 

Furthermore, every CPTN map $\cE$ admits a Kraus representation:
there are \textbf{finite or countable} operators \(K_m\in\cB(\cH)\) such that
\(\sum_m K_m^\dagger K_m\sqleq I\) in the sense that
\[
    \sum_m \|K_m\u\|^2 \le \|\u\|^2
    \qquad(\forall\u\in\cH),
\]
and
\[
    \cE(\rho)=\sum_m K_m\rho K_m^\dagger .
\]
The state-side series converges in trace norm.  The corresponding Heisenberg formula is written
\[
    \cE^\dagger(A)=\sum_m K_m^\dagger A K_m ,
\]
but this equality is read through the trace pairing: for every
\(A\in\cB(\cH)\) and every \(\rho\in\cT(\cH)\),
\[
    \tr(\cE^\dagger(A)\rho)
    =
    \lim_{N\to\infty}
    \tr\!\left(
        \left(\sum_{m=1}^N K_m^\dagger A K_m\right)\rho
    \right).
\]
In finite dimension, one may always choose a Kraus representation with finitely
many operators; with such a choice, the displayed sums are finite matrix sums
and are exactly the familiar Kraus formulas for quantum operations and their
Heisenberg-picture action on observables. See \cite{kraus1971general,WolfQIPNotes,holevo2001statistical} for the details.

\subsection{Measure Spaces and Integration}
\label{subsec:measure-integration}

We fix the measure-theoretic conventions used for continuous families and
operator-valued integration.  Only standard measure theory is needed here.

A measurable space is a pair \((\Omega,\Sigma)\), where \(\Omega\) is a set and
\(\Sigma\) is a \(\sigma\)-algebra of subsets of \(\Omega\).  A measure space is a
triple \((\Omega,\Sigma,\mu)\), where \(\mu:\Sigma\to[0,\infty]\) is a countably
additive measure.  A measure space is \(\sigma\)-finite if
\(\Omega=\bigcup_{n\ge 0}\Omega_n\) for measurable sets \(\Omega_n\in\Sigma\) with
\(\mu(\Omega_n)<\infty\).  Throughout the paper, all measure spaces are assumed to be
\(\sigma\)-finite.  This is a common hypothesis for integration and product
measures; it includes finite or countable discrete spaces with counting measure
and Euclidean spaces such as \((\bR^n,\operatorname{Borel}(\bR^n),\lambda^n)\), where
\(\operatorname{Borel}(\bR^n)\) is the Borel \(\sigma\)-algebra and \(\lambda^n\) is
Lebesgue measure.

For finitely many measure spaces \((\Omega_i,\Sigma_i,\mu_i)\), their product is
\[
    \left(\prod_i \Omega_i,\ \bigotimes_i\Sigma_i,\ \bigotimes_i\mu_i\right).
\]
Here \(\bigotimes_i\Sigma_i\) is the product \(\sigma\)-algebra generated by
measurable rectangles, and \(\bigotimes_i\mu_i\) is the product measure.  Finite
products of \(\sigma\)-finite measure spaces are again \(\sigma\)-finite. In this paper, product \(\sigma\)-algebras are not completed.

If \((\Omega,\Sigma_\Omega)\) and \((Y,\Sigma_Y)\) are measurable spaces, a function
\(f:\Omega\to Y\) is measurable if \(f^{-1}(E)\in\Sigma_\Omega\) for every
\(E\in\Sigma_Y\).  When \(Y\) is a metric or topological space, we always use its
Borel \(\sigma\)-algebra.  Thus, a scalar function \(f:\Omega\to\bR\), a Hilbert-space
valued function \(f:\Omega\to\cH\), and a trace-class-valued function
\(f:\Omega\to\cT(\cH)\) are measurable when they are measurable with respect to the Borel
structures induced respectively by the usual norm, the Hilbert norm, and the
trace norm.

For non-negative scalar functions \(f:\Omega\to[0,\infty]\), the integral
\(\int_\Omega f\,d\mu\) is the usual extended Lebesgue integral and may take the value
\(+\infty\).  We use the standard monotone convergence theorem and Tonelli's
theorem for non-negative measurable functions without further comment.

For trace-class-valued functions, we use the Bochner
integral.  Intuitively, this is the operator-valued analogue of taking the
expectation of an integrable random variable.  Thus, if
\(f:\Omega\to\cT(\cH)\) is trace-norm measurable and
\(\int_\Omega \|f(x)\|_1\,d\mu(x)<\infty\), then its Bochner integral
\(\int_\Omega f(x)\,d\mu(x)\) is an element of \(\cT(\cH)\). For more details see \cite{cohn2013measure,dunford1958linear,diestel1977vector,hytonen2016analysis}.


\section{Syntax and Semantics of Continuous-Variable Quantum Programs}
\label{sec:syntax_semantics}

This section fixes the language model used throughout. We first recall the qwhile language, then extend it with the bind construct for modeling quantum measurement with  continuous outcomes, and finally present the admissibility conditions that make the denotational semantics well defined.

\subsection{Review of qwhile Language}
\label{subsec:_qwhile}
In this subsection, 
we briefly review the \textbf{qwhile} language  \cite{Ying11}, with its syntax and semantics.

\paragraph{Basic setting} Let $\mathsf{QVar}$ be a finite set of quantum variables. Each variable $q$ has an
associated finite or countably infinite-dimensional Hilbert space $\cH_q$. A  quantum register is a list of distinct quantum variables, usually denoted by $\seq q=q_1,\ldots,q_n$. Its Hilbert space is the tensor product
\[
    \cH_{\seq q}=\bigotimes_{i=1}^n \cH_{q_i}.
\]
We write $\cH_{\mathrm{all}}$ for the global Hilbert space of all the quantum variables, which is the tensor product of all $\cH_q$. By convention, if needed, every operator on a register $\seq q$ can be lifted to global $\cH_{\mathrm{all}}$ by using its cylinder extension, i.e., tensoring the identity operator $I_{\mathsf{QVar}\backslash \seq q}$ on the remaining variables.

\paragraph{Syntax} The \textbf{qwhile} programs are generated by the following syntax:
\[
\begin{aligned}
S ::= {}&
    \mathbf{skip} \mid \mathbf{abort} 
    \mid S_1;S_2
    \mid q:=\ket 0
    \mid \seq q := U[\seq q]
\\ &\mid
    \mathbf{if}\ 
      \bigl(\square_{m}\; M[\seq q]=m \to S_m\bigr)\
    \mathbf{fi}
\\ &\mid
    \mathbf{while}\ M[\seq q]=1\ \mathbf{do}\ S\ \mathbf{od}.
\end{aligned}
\]
\textbf{qwhile} is an extensions of classical \textbf{while} language with commands that characterize the common quantum transformations:
\(q := |0\rangle\) resets or initializes quantum variable \(q\) to the basis state \(|0\rangle\); \(\seq q := U[\seq q]\) applies the unitary
transformation \(U\) to the register \(\seq q\);
for conditional $\mathbf{if}\ 
      \bigl(\square_{m}\; M[\seq q]=m \to S_m\bigr)\
    \mathbf{fi}$, the guard is replaced by a 
quantum measurement, which means to first perform the measurement \(M\) on register \(\seq q\) with probabilistic outcome \(m\), and then execute subprogram \(S_m\). Here, the measurement and its outcome are usually assumed to be finite.
In the loop \(\mathbf{while}\ M[\seq q]=1\ \mathbf{do}\ S\ \mathbf{od}\), the measurement in the guard has only two outcomes \(0,1\) and it executes as: first perform measurement \(M\) on \(\seq q\) and, if the outcome is \(0\), the loop terminates, otherwise it executes the loop body \(S\) and enters the loop again.

\paragraph{Semantics}
Program states are modeled by partial density operators, i.e., positive 
operators of trace at most $1$. We write $\pardensity{\cH_{\mathrm{all}}}$ for the set of all program states on global space.
The denotational semantics of a program $S$, written 
\(\sem S:\pardensity{\cH_{\mathrm{all}}}\to \pardensity{\cH_{\mathrm{all}}}\), is a state transformer defined structurally by the \textbf{qwhile} clauses:
\begin{itemize}
    \item \(\sem{\mathbf{abort}}(\rho)
    \triangleq
    0\);
    \item \(\sem{\mathbf{skip}}(\rho)
    \triangleq
    \rho\);
    \item \(\sem{q:=\ket 0}(\rho)
    \triangleq
    \sum_k
    \ket 0_q\!\bra k\cdot
    \rho\cdot
    \ket k_q\!\bra 0\), where $\{|k\>\}_k$ is the computational basis of $\cH_{q}$;
    \item \(\sem{\seq q:=U[\seq q]}(\rho)
    \triangleq
    U\rho U^\dagger\);
    \item \(\sem{S_1;S_2}(\rho)
    \triangleq
    \sem{S_2}(\sem{S_1}(\rho))\)
    \item \(\sem{
  \mathbf{if}\ 
    (\square_{m}\; M[\seq q]=m\to S_m)\
  \mathbf{fi}
}(\rho)
    \triangleq
    \sum_{m}
    \sem{S_m}(M_m\rho M_m^\dagger)\), where $\{M_m\}$ are measurement operators of \(M\);
    \item \(\sem{
  \mathbf{while}\ M[\seq q]=1\ \mathbf{do}\ S\ \mathbf{od}
}(\rho)
    \triangleq
    \bigvee_{n=0}^\infty \sem{
  \mathbf{while}^{(n)}
}(\rho)\),
where $\mathbf{while}^{(n)}$ is the $n$-th syntactic approximation of the loop, i.e., inductively defined by $\mathbf{while}^{(0)} \triangleq \mathbf{abort}$, and $\mathbf{while}^{n+1} \triangleq \mathbf{if}\ 
    M[\seq q]=0\to \mathbf{skip}\ \square\ 1 \to S; \mathbf{while}^{(n)} \ 
  \mathbf{fi}$.
\(\bigvee\) stands for the least upper bound of partial density operators according to the L\"owner order.
\end{itemize}


\subsection{Continuous-outcome bind and admissible programs}
\label{subsec:continuous_bind}

The syntax recalled above is adequate for programs whose measurements have
finite outcomes. 
However, many physically realistic measurements, such as measurements on position,
 return a value in a non-discrete outcome space, typically a value in $\mathbb R$. 
To model this kind of measurement for CV quantum programs, we extend the conditional command with an explicit bind, written as
\[
    \bind(M[\seq q],x.S).
\]
It performs the measurement \(M\) on the register \(\seq q\),
binds the classical outcome to the symbol \(x\), and continues as \(S\) which is parameterized by \(x\).
When the outcome set is finite, such parameterization of \(S\) can be enumerated as branches \(S_m\) and therefore reduces to the ordinary conditional command reviewed in \Cref{subsec:_qwhile}. 
In probabilistic programming, the continuous-outcome sampling construct assumes a bound variable
\(x\) to range over a measurable outcome space.
We adopt the same idea here, coincident to the measurable structure for continuous quantum instruments~\cite{davies1976quantum}.

Formally, the extended syntax of CV quantum programs is:
\[
\begin{aligned}
S ::= {}&
    \mathbf{skip}
    \mid \mathbf{abort}
    \mid S_1;S_2
    \mid q:=\ket 0
    \mid \seq q:=U[\seq q]
\\ &\mid
    \bind(M[\seq q],x.S)
    \mid
    \mathbf{while}\ M[\seq q]=1\ \mathbf{do}\ S\ \mathbf{od}.
\end{aligned}
\]
The commands inherited from \textbf{qwhile} have the same meaning and semantics as
before.
One notable thing is the introduction of $x$ in $\mathbf{bind}$ command, which allows for parameterizing branches using $x$. For simplicity, we view $x$ as a classical meta-variable rather than a classical program variable, and therefore, the unitary transformation or measurement can also be parameterized by these meta-variables; we use \(\mathrm{mv}(U)\) or \(\mathrm{mv}(M)\) to denote the set of meta-variables appearing in $U$ or $M$.

\paragraph{Well-formed programs}
Formally, we introduce the classical meta-variable context $\Gamma$, which collects the allowed meta-variables and will be essential when we handle the combinations of measurable parameters.
We write
\[
    \Gamma\vdash S
\]
to mean that \(S\) is well-formed with respect to the finite context
\(\Gamma\), i.e., all free meta-variables used in $S$ are contained in $\Gamma$.
Note that the context need not be minimal: unused parameters are allowed, and all judgments are closed under weakening.
The well-formedness is defined structurally as follows:
\begin{figure}
    \centering
\begin{mathpar}
\mprset{flushleft}
\inferrule*[right=\textsc{Skip}]
    { }
    { \Gamma \vdash \mathbf{skip} }

\inferrule*[right=\textsc{Abort}]
    { }
    { \Gamma \vdash \mathbf{abort} }
\\
\inferrule*[right=\textsc{Init}]
    { }
    { \Gamma \vdash q := \ket{0} }

\inferrule*[right=\textsc{Unitary}]
    { \mathrm{mv}(U)\subseteq \Gamma}
    { \Gamma \vdash q := U[q] }
\\
\inferrule*[right=\textsc{Seq}]
    { \Gamma_1 \vdash S_1 \\ \Gamma_2 \vdash S_2 }
    { \Gamma_1\cup\Gamma_2 \vdash S_1;S_2 }
    
\inferrule*[right=\textsc{Bind}]
    { \mathrm{mv}(M)\subseteq \Gamma
      \qquad
      \Gamma,x \vdash S }
    { \Gamma \vdash \mathbf{bind}(M[\seq q],x.S) }
\\
\inferrule*[right=\textsc{While}]
    { \mathrm{mv}(M)\subseteq \Gamma_1
      \qquad
      \Gamma_2 \vdash S }
    { \Gamma_1\cup \Gamma_2 \vdash
        \mathbf{while}\ M[\seq q]=1\ \mathbf{do}\ S\ \mathbf{od}
    }

\inferrule*[right=\textsc{Weak}]
    { \Gamma \vdash S \qquad\Gamma \subseteq \Delta }
    { \Delta \vdash S }
\end{mathpar}
\caption{Well-formedness}
    \label{fig:well-formedness}
     \Description{well-formedness}
\end{figure}


\paragraph{Admissible programs} We introduce the concept of admissible programs to check if the parameterized program satisfies certain measurability over the meta-variable context. 
In our programs, each meta-variable \(x\) ranges over its associated measure
space \((\Omega_x,\Sigma_x,\mu_x)\).  Thus, for a finite context \(\Gamma\),
the valuation space is the product measure space
\[
    (\Omega_\Gamma,\Sigma_\Gamma,\mu_\Gamma)\triangleq
    \left(
        \prod_{x\in\Gamma}\Omega_x,\,
        \bigotimes_{x\in\Gamma}\Sigma_x,\,
        \bigotimes_{x\in\Gamma}\mu_x
    \right).
\]
We write \(\mathcal V_\Gamma\) for the same set when its elements are viewed as
valuations:
\[
    \mathcal V_\Gamma
    \triangleq
    \{\omega \mid \omega(x)\in\Omega_x \text{ for every } x\in\Gamma\}
    \cong
    \Omega_\Gamma .
\]
Given a valuation \(\omega\in\mathcal{V}_\Gamma\),
i.e., an assignment of meta-variables, and $\Gamma\vdash S$, 
we write \(S_\omega\) for the instantiated program of \(S\), whose free meta-variables are all fixed by value according to $\omega$.

The involved commands include unitary transformation, while loop and bind, since both $U$ and $M$ are parameterized.
For a unitary command \(\seq q:=U[\seq q]\), it is admissible if
its dependence on the parameter valuation $\omega$ is strongly measurable: for every
\(\u\in\cH_{\seq q}\), the map \(\omega\mapsto U_{\omega}\u\) is measurable; for a $\mathbf{while}$ command, it is admissible if the two-outcome loop guard $\{M_{\omega,0},M_{\omega,1}\}$ satisfying the same strongly measurable condition as unitary command.

For the bind command, its admissible conditions are more complicated: 
    \begin{itemize}
    \item for each valuation $\omega\in\mathcal V_\Gamma$, its measurement $M$ bind the outcome to a fresh meta variable $x$, whose associated measure space is $(\Omega_M,\Sigma_M,\mu_M)$, and $M_{\omega}$ admits a (not necessarily countable) Kraus density:
    \[
    \{M_{\omega,\nu}\}_{\nu\in\Omega_M}\subseteq\cB(\cH_q);
    \]
    \item for each valuation $\omega\in\mathcal V_\Gamma$, the Kraus density of measurement $M$ satisfies the normalization 
    \[
    \int_{\Omega_M}\|M_{\omega,\nu}\u\|^2\,d\mu_M(\nu)=\|\u\|^2,
    \qquad \forall \u\in\cH_{\seq q},
    \]
    which is the continuous counterpart of
    \(\sum_m M_m^\dagger M_m=I\) in the discrete syntax;
    \item \((\omega,\nu)\mapsto M_{\omega,\nu}\u\) is joint measurable for
    every \(\u\in\cH_{\seq q}\).
    \end{itemize}

\subsection{Denotational semantics with parameters}
\label{subsec:denotational_semantics}

We now define the denotational semantics of well-formed programs.  Since the
language has an explicit classical parameter layer, denotations are first
defined after fixing a parameter valuation.  

If \(\Gamma\vdash S\) and
\(\omega\in\Omega_\Gamma\), then all parameter expressions in \(S\) are
evaluated at \(\omega\), and the program denotes a mapping:
\[
\sem{S_\omega}:
\pardensity{\cH_{\mathrm{all}}}\to\pardensity{\cH_{\mathrm{all}}}.
\]
There is also a second, compositional requirement: for each fixed input state
\(\rho\), the output \(\sem{S_\omega}(\rho)\) should depend measurably on
\(\omega\).  This is the quantum analogue of the kernel-measurability condition
in probabilistic semantics with sampling; the additional technical point is
that the values here are partial density operators rather than scalar
probabilities.

For the commands inherited from \textbf{qwhile}, the denotational clauses are the
 ones from Section~\ref{subsec:_qwhile}, with all 
unitaries and measurements instantiated by the current parameter valuation.
The command \(\mathbf{abort}\) denotes the zero state transformer.  Thus the
only new semantic clause is bind. Since the denotation of condition command in \textbf{qwhile} sums over
measurement outcomes, intuitively, the continuous bind replaces this sum by an integral:
\[
    \sem{\bind(M[\seq q],x.S)_{\omega}}(\rho)
    \triangleq
    \int_{\Omega_M}
        \sem{S_{\omega,x\coloneqq\nu}}
        \bigl(M_{\omega,\nu}\rho M_{\omega,\nu}^{\dagger}\bigr)
    \,d\mu_M(\nu).
\]
Here \(M_{\omega,\nu}\) is the Kraus density selected by the current
parameter valuation and outcome \(\nu\), and \(\sem{S_{\omega,x\coloneqq\nu}}\) is denotation of the continuation
whose contexts are evaluated at $\omega\in\Gamma$ and $x=\nu$.  The expression inside the integral is a
trace-class branch state.

The main question is the meaning of this integral. For bind we need more than outcome-wise scalar probabilities: after integrating over outcomes, the result must be a trace-class state that can be passed to the next command. Our admissibility assumptions in \Cref{subsec:continuous_bind} ensure that the branch-state map is trace-class-valued measurable and integrable. Thus the displayed integral is a Bochner integral (a type of operator-valued Lebesgue integral); when tested against bounded observables it agrees with the physical quantum-instrument model.

The proof of parameter measurability proceeds by structural induction on
programs. The cases of basic commands  ($\mathbf{skip},\,\mathbf{abort},\,q:=\ket{0}$) are trivial; $\seq q:=U[\seq q]$ use the admissibility assumptions on
parameterized unitary.  Sequencing uses composition of
measurable maps, and loops use measurable finite approximants together with
monotone convergence in the trace-class state domain.  The bind case is the only
substantive one: the induction hypothesis for the continuation gives
measurability in the extended context \(\Gamma,x\), and the Bochner
integral then integrates over \(x\) while preserving measurability in the
remaining parameters.  Thus the definition of bind and the measurability
statement are proved by a simultaneous structural induction, not by a circular
argument.  The measure-theoretic details are routine but lengthy and are
deferred to the supporting material.

In summary, we can have the following:
\begin{theorem}[Well-defined denotational semantics]
\label{THM:semantics_well_defined}
If \(\Gamma\vdash S\) and $S$ is admissible, then for every valuation
\(\omega\in\mathcal{V}_\Gamma\), the clauses above define a CPTN map
\[
    \sem{S_\omega}:\pardensity{\cH_{\mathrm{all}}}\to\pardensity{\cH_{\mathrm{all}}}.
\]
Moreover, for every fixed \(\rho\in\pardensity{\cH_{\mathrm{all}}}\), the map
\[
    \omega\longmapsto \sem{S_\omega}(\rho)
\]
is measurable from \(\mathcal{V}_\Gamma\) to $\pardensity{\cH_{\mathrm{all}}}$ equipped with the
trace-norm Borel structure.
\end{theorem}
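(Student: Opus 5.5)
We prove both claims together by structural induction on the derivation of \(\Gamma\vdash S\). The induction hypothesis for a subprogram \(\Gamma'\vdash S'\) has two parts. First, each \(\sem{S'_\omega}\) is a CPTN map on \(\cT(\cH_{\mathrm{all}})\); we extend it linearly from \(\pardensity{\cH_{\mathrm{all}}}\) using positivity and the trace bound. Second, \(\omega\mapsto\sem{S'_\omega}(\rho)\) is trace-norm measurable for every \(\rho\), not only for partial density operators. The stronger second part is what the bind case will use.

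Because every \(\sem{S'_\omega}\) is CPTN, we have \(\|\sem{S'_\omega}\|_{1\to1}\le 1\) uniformly in \(\omega\). Pointwise measurability on a dense countable set then extends to all \(\rho\) by uniform approximation. Combined with the separability of \(\cT(\cH_{\mathrm{all}})\), this gives the key technical fact: the map
\[
(\omega,\sigma)\mapsto\sem{S'_\omega}(\sigma)
\]
is jointly measurable, since it is Carathéodory (measurable in \(\omega\), continuous in \(\sigma\)) into a separable metric space. The \textsc{Weak} rule is handled by precomposing with the measurable coordinate projection \(\Omega_\Delta\to\Omega_\Gamma\).

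The base cases are routine. For \(\mathbf{skip}\), \(\mathbf{abort}\) and \(q:=\ket0\), the denotation is a fixed CPTN map, independent of \(\omega\), and the initialization sum converges in trace norm. For unitaries, strong measurability of \(\omega\mapsto U_\omega\u\) gives measurability of \(\omega\mapsto U_\omega\ket{\u}\bra{\v}U_\omega^\dagger\) for rank-one inputs. This extends to finite-rank inputs and then, by the uniform bound, to all \(\rho\). Sequencing follows from the Carathéodory fact: \(\omega\mapsto\sem{S_{2,\omega}}(\sem{S_{1,\omega}}(\rho))\) is the composite of \(\omega\mapsto(\omega,\sem{S_{1,\omega}}(\rho))\) with a jointly measurable map. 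For loops, each approximant \(\mathbf{while}^{(n)}\) is built from the guard (handled like the unitary case), the body, and sequencing, so it is CPTN and measurable by an inner induction on \(n\). The outputs \(\sem{\mathbf{while}^{(n)}_\omega}(\rho)\) increase in the Löwner order and have trace at most \(\tr\rho\). An increasing bounded sequence in \(\cT_+\) converges in trace norm to its supremum, because \(\|\rho_m-\rho_n\|_1=\tr(\rho_m-\rho_n)\). Hence the loop output is a pointwise trace-norm limit of measurable maps, and therefore measurable. Complete positivity and the trace bound pass to the limit; for complete positivity we apply the same argument to the amplifications \(\otimes I_n\).

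The bind case is the main obstacle. Fix \(\omega\) and \(\rho\), and set \(f(\nu)=\sem{S_{\omega,x:=\nu}}(M_{\omega,\nu}\rho M_{\omega,\nu}^\dagger)\). First we show \(\nu\mapsto M_{\omega,\nu}\rho M_{\omega,\nu}^\dagger\) is trace-norm measurable. We write \(\rho=\sum_i p_i\ket{\u_i}\bra{\u_i}\). Each term \(\ket{M\u_i}\bra{M\u_i}\) is measurable by joint measurability of \((\omega,\nu)\mapsto M_{\omega,\nu}\u\). The series converges in trace norm pointwise, so the sum is measurable, and this holds jointly in \((\omega,\nu)\). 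Composing with the jointly measurable continuation (the induction hypothesis in context \(\Gamma,x\), via Carathéodory) makes \(f\) measurable, and in fact makes \((\omega,\nu)\mapsto f_\omega(\nu)\) measurable.

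Integrability comes from positivity and normalization:
\[
\int\|f\|_1\,d\mu_M=\int\tr f\,d\mu_M\le\int\tr(M_{\omega,\nu}\rho M_{\omega,\nu}^\dagger)\,d\mu_M=\sum_ip_i\int\|M_{\omega,\nu}\u_i\|^2d\mu_M=\tr\rho,
\]
where we use Tonelli for the exchange. So the Bochner integral exists, is positive since \(\cT_+\) is closed and convex, and has trace at most \(\tr\rho\). Linearity is clear. Complete positivity follows by running the same argument with \(M_{\omega,\nu}\otimes I_n\) on the amplified space.

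The remaining subtle point is measurability of \(\omega\mapsto\int f_\omega\,d\mu_M\). We plan to reduce this to scalar functions. Since \(\cT(\cH_{\mathrm{all}})\) is separable, Pettis' theorem says it suffices to check \(\omega\mapsto\tr(A\int f_\omega)=\int\tr(Af_\omega(\nu))\,d\mu_M(\nu)\) for \(A\) in a countable norming set of finite-rank operators in \(\cB(\cH_{\mathrm{all}})\). The integrand is jointly measurable and dominated by \(\|A\|\tr f_\omega(\nu)\), which has integral at most \(\|A\|\tr\rho\). Fubini--Tonelli on the \(\sigma\)-finite product, applied to the positive and negative parts of real and imaginary components, then gives measurability in \(\omega\) of the partial integral. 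This step is where \(\sigma\)-finiteness and the uncompleted product \(\sigma\)-algebra are genuinely needed.
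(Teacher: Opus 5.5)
Your proposal is correct and follows essentially the same route as the paper's proof. Both arguments run a simultaneous structural induction carrying CPTN-ness and parameter measurability. Both use Carath\'eodory joint measurability for sequencing, the loop approximants and the bind continuation. Both obtain Bochner integrability from Tonelli plus the normalization condition. Both get measurability of the integrated output in the outer valuation from Pettis' theorem together with a parameterized Fubini--Tonelli argument.

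The differences are minor technical variants. You handle unitaries via rank-one inputs and density rather than the paper's spectral-decomposition sandwich lemma. You get positivity of the Bochner integral from closedness of the cone rather than from Hille's theorem. You test against a countable norming set rather than all of \(\cB(\cH)\). You also treat the \textsc{Weak} rule explicitly by pulling back along coordinate projections.
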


Consequently, after fixing classical parameters, every well-formed and admissible program has
the same kind of Schr\"{o}dinger-picture denotation as in  \textbf{qwhile} Language: a
CPTN mapping on trace-class states.  
The continuous bind construct
does not change the class of denotations; it only replaces finite outcome sums by
Bochner integrals whose measurability is guaranteed by the formation judgment. 

In the rest of the paper, all semantic constructions are considered pointwise
in the valuation.  That is, we first fix an arbitrary
\(\omega\in\mathcal V_\Gamma\), instantiate the program to \(S_\omega\), and then
study the induced quantum state transformer and its predicate transformers.
For readability, we suppress this fixed valuation from the notation: unless
otherwise stated, \(S\), \(\sem{S}\), \(\wlp(S,Q)\), \(\wp_c(S,Q)\) should be read as
\(S_\omega\), \(\sem{S_\omega}\), \(\wlp(S_{\omega},Q)\), \(\wp_c(S_\omega,Q)\), respectively, for an
arbitrary fixed \(\omega\in\mathcal V_\Gamma\).

%


\section{Assertions and Predicate Transformers for Unbounded Quantities}
\label{sec:unbounded_predicates}

We now turn to the assertion layer and backward reasoning. Quantum predicates between the zero and identity operators employed in   \cite{dhondt2006quantum} are not sufficient for our target validity specifications, such as those about energy, expected running time, etc., not being bounded effects. 
To address this, we adopt the positive closed quadratic forms~\cite{LICS2026LinearRelation} as predicates to capture the unbounded behaviors, and then develop weakest-precondition transformers, in particular addressing the healthiness for continuous measurement.

\subsection{Review of Bounded Predicate Transformers and Unbounded Predicate Domain}
\label{subsec:from_bounded_to_unbounded}

We start from the standard predicate-transformer view of quantum programs.  In \cite{dhondt2006quantum},
predicates are bounded quantum effects ordered by the L\"owner order.  
A predicate transformer is regarded as healthy when it satisfies the following conditions\footnote{\cite{dhondt2006quantum} additionally requires healthy conditions to contain monoidality: that is, a local predicate transformer can be lift to a global one; in our setting we always work on the global Hilbert space, so we do not use the local-to-global formulation explicitly. }.

\begin{itemize}
    \item \emph{Linearity.}
    It respects linear combinations of bounded predicates.  

    \item \emph{Monotonicity.}
    It preserves the L\"owner order on predicates.

    \item \emph{$\omega$-Continuity.}
    It preserves suprema of increasing predicate chains.

\end{itemize}

For bounded predicates, CPTN state transformers induce healthy predicate
transformers \cite{dhondt2006quantum,WolfQIPNotes}.  Equivalently, every CPTN map $\cE$ has a Heisenberg dual
$\cE^\dagger$ on bounded predicates, where $\cE^\dagger$ has countable Kraus representation $\cE^{\dagger}(\cdot)=\sum_{m=1}^{\infty}K_m^{\dagger}(\cdot) K_m$ when $\cE(\rho)=\sum_{m=1}^{\infty}K_m \rho K_m^{\dagger}$, and $\cE^{\dagger}$ is characterized by the trace pairing
\[
    \tr(\cE^\dagger(A)\rho)
    =
    \tr(A\,\cE(\rho)).
\]
Thus the bounded quantum weakest precondition equals the Heisenberg dual:
for the program $S$, its denotational semantics $\sem{S}$ is a CPTN map, thus the Heisenberg dual $\sem{S}^{\dagger}$ satisfies the above trace pairing identity, so the predicate transformer is defined as
\[
    \wp(S,Q)\ \triangleq\ \sem{S}^{\dagger}(Q).
\]
This means that backward reasoning computes the same expectation as forward execution.

This bounded predicate domain covers effects, projections, and bounded
probabilities, but it cannot express the unbounded assertions
that arise with continuous variables. Continuous-variable
programs use infinite-dimensional state spaces, and the bind command in
\Cref{subsec:continuous_bind} allows later commands to depend on real-valued
outcomes.  Natural expectation-style assertions---
variances, energy, expected running time---are
often unbounded.  The predicate domain must also represent hard constraints,
such as staying inside a subspace or inside the finite-energy domain of an
observable, where violation behaves as an infinite penalty.
Intuitively one may use unbounded operators as the predicate domain. However, raw unbounded operators do not give such a predicate domain: weakest
preconditions must be closed under bounded pullback, branch sums, continuous
bind, and loop suprema, but ordinary operator products, sums, and integrals may
have domain mismatches.

To treat unbounded quantities and hard constraints uniformly, recent
work~\cite{LICS2026LinearRelation} proposed using linear relations / closed
quadratic forms as predicates.  We follow this viewpoint, but take the positive
closed quadratic forms as our predicate domain, denoted by $\Pred(\cH)$, and use positive self-adjoint
linear relations as their algebraic representation. 

Formally, a predicate \(P\) is a closed positive quadratic form
\[
    \QF_P:\cH\to[0,+\infty].
\]
Its finite-value domain is
\[
    \Dom(\QF_P)\triangleq\{\u\in\cH:\QF_P[\u]<+\infty\}.
\]
Closedness means lower semicontinuity: whenever \(\u_n\to\u\) in \(\cH\),
\(\QF_P[\u]\leq \liminf_{n\to\infty}\QF_P[\u_n]\).
This condition is essential: it is what makes the predicate domain closed under
increasing suprema and under the backward predicate-transformer operations used
below.  Every bounded predicate (i.e., positive bounded operator) \(A\) is viewed as an element in $\Pred(\cH)$ by the inner-product formula
\[
    \QF_A[\u]\triangleq\langle \u,A\u\rangle .
\]
Values \(+\infty\) outside the finite-value domain record the hard-constraint
part of the predicate.

We recall the following properties of this predicate domain, which are introduced in \cite{LICS2026LinearRelation}.
Let $P,Q\in\Pred(\cH)$ with the associated quadratic form denoted by $\QF_P,\QF_Q$; and let
$\rho\in\pardensity{\cH}$.

\begin{itemize}
    \item \emph{Extended trace.}
    If $\rho=\sum_k p_k |\u_k\rangle\langle\u_k|$ is a spectral
    decomposition, define
    \[
        \Tr(P\rho)\triangleq\sum_k p_k\,\QF_P[\u_k]\in[0,+\infty].
    \]
    This value is independent of the decomposition and agrees with the usual trace
    $\tr(A\rho)$ when $P$ is represented by a bounded positive operator $A$.

    \item \emph{Extended L\"owner order.}
    We write $P\sqleq Q$ iff $\QF_P[\u]\leq\QF_Q[\u]$ for every
    $\u\in\cH$, equivalently iff
    $\Tr(P\rho)\leq\Tr(Q\rho)$ for every partial density operator $\rho$.
    Domain inclusion is contravariant: larger predicates may have smaller
    finite-value domains.

    \item \emph{$\omega$-CPO property:}
    Every increasing chain $P_0\sqleq P_1\sqleq\cdots$ has a supremum
    $P_\infty\in\Pred(\cH)$, characterized by
    \[
        \Tr(P_\infty\rho)=\sup_n\Tr(P_n\rho).
    \]

    \item \emph{Bounded approximation.}
    Every predicate $P\in\Pred(\cH)$ is the supremum of an increasing sequence of bounded positive operators. In particular, we can take the spectral truncation to approximate $P$.
\end{itemize}

\subsection{Unbounded Predicate Transformer}
\label{subsec:Unbounded_Predicate_Transformer}


With the facts stated above, we can describe the intended predicate transformer
through the Kraus representation.  Suppose
\[
    \cE(\rho)=\sum_{m=1}^{\infty}K_m\rho K_m^\dagger .
\]
For an unbounded predicate \(P\), we write \(\cE^\dagger(P)\) (by abuse of notation) for the predicate whose quadratic form is intended to satisfy
\begin{equation}\label{eq:def-of-predicate-trans}
    \QF_{\cE^\dagger(P)}[\u]
    =
    \sum_{m=1}^{\infty}\QF_P[K_m\u].
\end{equation}
This is formally the same formula as in the bounded case.  Indeed, when
\(P\) is represented by a bounded positive operator \(A\), we have
\[
    \QF_{\cE^\dagger(A)}[\u]=
    \sum_{m=1}^{\infty} \QF_A[K_m\u]
    =
    \sum_{m=1}^{\infty} \langle  K_m\u,A K_m\u\rangle
    =
    \left\langle
    \u,\Bigl(\sum_{m=1}^{\infty} K_m^\dagger A K_m\Bigr)\u
    \right\rangle,
\]
which is exactly the quadratic form of the bounded predicate transformer
\(\cE^\dagger(A)=\sum_m K_m^\dagger A K_m\).
The right-hand side of \Cref{eq:def-of-predicate-trans} is allowed to be \(+\infty\), so it records both the
quantitative pre-expectation and the pulled-back hard constraint.   In the next subsection we will show that this construction indeed yields
a predicate and satisfies the exact duality
\[
    \Tr(\cE^\dagger(P)\rho)=\Tr(P\,\cE(\rho)).
\]

In this paper, for a bounded operator \(K\in\cB(\cH)\) and a predicate
\(P\in\Pred(\cH)\), we write
\[
    K^\dagger P K\in\Pred(\cH)
\]
for the quadratic form defined by
\[
    \QF_{K^\dagger P K}[\u]\triangleq\QF_P[K\u].
\]
This is a quadratic-form-level notation, not multiplication of unbounded operators.

The next theorem gives the math foundation of turning the Kraus-level intuition into a well-defined healthy
predicate transformer.

\begin{theorem}[Unbounded Heisenberg Dual]
\label{thm:unbounded_heisenberg_transformer}
Let $\cE$ be a CPTN map.  For every $P\in\Pred(\cH)$, there exists a
unique predicate $\cE^\dagger(P)\in\Pred(\cH)$ such that, for every partial
density operator $\rho$,
\begin{equation}\label{eq:trace-dual-eq}
    \Tr(\cE^\dagger(P)\rho)=\Tr(P\,\cE(\rho)).
\end{equation}
If
\(\cE(\rho)=\sum_{m=1}^{\infty} K_m\rho K_m^\dagger
\)
is any Kraus representation, then
\[
    \QF_{\cE^\dagger(P)}[\u]
    =
    \sum_{m=1}^{\infty} \QF_P[K_m\u],
\]
and this form is independent of the chosen Kraus representation.  For bounded
$P$, this agrees with the ordinary Heisenberg dual on bounded observables.

Moreover, 
\[
    \cE^\dagger:\Pred(\cH)\to\Pred(\cH)
\]
satisfies the healthy conditions: it is linear,
monotone and $\omega$-continuous. 
\end{theorem}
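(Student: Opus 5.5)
The plan is to take the Kraus formula as the definition and then verify the required properties in order. Fix a Kraus representation $\cE(\rho)=\sum_m K_m\rho K_m^\dagger$ and set $\QF[\u]\triangleq\sum_m \QF_P[K_m\u]$.

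\emph{Step 1: $\QF$ is a closed positive quadratic form.} Each summand $\u\mapsto\QF_P[K_m\u]$ is a positive quadratic form, meaning it is homogeneous of degree two and satisfies the parallelogram law on its finite domain. It is also lower semicontinuous, since $K_m$ is bounded and $\QF_P$ is lower semicontinuous. A countable sum of nonnegative lower semicontinuous functions is a supremum of finite partial sums, so it is again lower semicontinuous. Quadratic structure is preserved on the finite domain. Hence $\QF$ defines an element of $\Pred(\cH)$.

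\emph{Step 2: trace duality.} For $\rho=\sum_k p_k\ket{\u_k}\bra{\u_k}$ we have
\[
\Tr(\QF\,\rho)=\sum_k p_k\sum_m \QF_P[K_m\u_k],
\]
while $\Tr(P\,\cE(\rho))$ must be computed from a spectral decomposition of $\cE(\rho)$, which is not the family $\{K_m\u_k\}$. This is the main obstacle: we need the extended trace to be computable on any decomposition $\sigma=\sum_j \ket{\v_j}\bra{\v_j}$ into (non-orthogonal) vectors converging in trace norm, i.e. $\Tr(P\sigma)=\sum_j\QF_P[\v_j]$.

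I will prove this first for bounded $A$: there $\tr(A\sigma)=\sum_j\langle\v_j,A\v_j\rangle$ by trace-norm continuity and monotone summation. Then I pass to a general $P$ via the bounded approximation $A_n\uparrow P$ from the recalled facts. Monotone convergence, in the form of exchanging two suprema of nonnegative double sums, gives the identity for $P$. This uses the $\omega$-CPO characterization $\Tr(P\sigma)=\sup_n\tr(A_n\sigma)$, together with $\QF_P[\v]=\sup_n\langle \v,A_n\v\rangle$. The latter follows from applying that characterization to rank-one $\sigma$. Tonelli then yields \eqref{eq:trace-dual-eq}.

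\emph{Step 3: uniqueness, Kraus-independence and bounded agreement.} Uniqueness holds because a predicate is determined by its values $\Tr(\cdot\,\ket{\u}\bra{\u})=\QF[\u]$ on rank-one states, after scaling $\u$ to have norm at most one and using homogeneity. Kraus-independence follows, since any two Kraus representations both satisfy \eqref{eq:trace-dual-eq}. Agreement with the bounded Heisenberg dual was computed just before the theorem.

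\emph{Step 4: healthiness.} Linearity, for nonnegative combinations with the convention $0\cdot\infty=0$, and monotonicity are immediate from the pointwise formula. For $\omega$-continuity, let $P_n\uparrow P_\infty$. Then $\QF_{P_\infty}[\v]=\sup_n\QF_{P_n}[\v]$ by rank-one evaluation of the supremum characterization. Hence
\[
\QF_{\cE^\dagger(P_\infty)}[\u]=\sum_m\sup_n\QF_{P_n}[K_m\u]=\sup_n\sum_m\QF_{P_n}[K_m\u]
\]
by monotone convergence for series. Finally, a pointwise supremum of forms is the supremum in $\Pred(\cH)$.
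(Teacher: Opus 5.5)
Your proposal is correct and follows essentially the same route as the paper. In both, the Kraus sum is shown closed via lower semicontinuity of the supremum of partial sums; the duality is obtained from bounded truncations $A_n\uparrow P$ by exchanging $\sup_n$ with the nonnegative double sum over eigenvectors and Kraus indices; uniqueness comes from rank-one states; and $\omega$-continuity follows from monotone convergence for series. The one difference is cosmetic: you deduce Kraus-independence from duality plus uniqueness, which matches the main-text sketch, while the appendix proves it directly by comparing the bounded Heisenberg duals of the truncations; either works.
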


\begin{proof}[Proof Sketch]
We can approximate the unbounded postcondition \(P\) from below by bounded positive
predicates, apply the bounded Heisenberg dual at each stage, and take the increasing supremum; $\omega$-CPO property ensures the supremum lies in \(\Pred(\cH)\).

The trace identity \Cref{eq:trace-dual-eq} is inherited from the bounded trace identities by the
\emph{Extended trace} and the \emph{\(\omega\)-CPO Property}, and uniqueness
follows because predicates are determined by their expectations on pure states.

The Kraus formula is the pure-state version of the trace identity; independence of Kraus representation and the healthiness conditions
follow from the same trace characterization and monotone-limit argument.
\end{proof}

Therefore, we can define the unbounded predicate transformer of the program as follows:
\begin{definition}[Weakest liberal precondition]
\label{def:unbounded_wlp}
Let \(S\) be a well-formed program with Schr\"odinger-picture denotation
\(\sem{S}\), and let \(Q\in\Pred(\cH)\) be a postcondition.  The
\emph{weakest liberal precondition} of \(S\) with respect to \(Q\) is the
predicate
\[
    \wlp(S,Q)
    \triangleq
    \sem{S}^{\dagger}(Q).
\]
Equivalently, \(\wlp(S,Q)\) is the unique predicate satisfying
\[
    \Tr(\wlp(S,Q)\rho)
    =
    \Tr(Q\,\sem{S}(\rho))
\]
for every partial density operator \(\rho\).  If
\(\sem{S}(\rho)
    =
    \sum_{m=1}^{\infty}K_m\rho K_m^\dagger
\)
is a Kraus representation of \(\sem{S}\), then \(\wlp(S,Q)\) is characterized
at the quadratic-form level by
\(
    \QF_{\wlp(S,Q)}[\u]
    =
    \sum_{m=1}^{\infty}\QF_Q[K_m\u]
\).
\end{definition}

We call this transformer \emph{liberal} because it does not imply any terminability.  In Section~\ref{sec:unified_proof_system} it will appear as the
zero-cost instance \(\wp_0\) of the cost-parametric transformer; under the
validity convention used there, this case carries no termination obligation.

\subsection{Predicate Transformers for Continuous Measurement}
\label{subsec:continuous_branch_predicate_transformers}
The previous construction gives the transformer for a single CPTN map.
Bind command requires one further closure principle: the postcondition may
depend on a measurement outcome, and the backward predicate must combine all
branches.  For a finite or countable measurement this is the usual branch sum
\[
    \sum_\nu M_\nu^\dagger P_\nu M_\nu .
\]
For a continuous outcome space, the same operation should be read as a
continuous branchwise predicate transformer.  Since the branch predicates
\(P_\nu\) may be unbounded, we do not integrate unbounded operators.  Instead,
we integrate the scalar form values
\[
    \QF_{P_\nu}[M_\nu u]
\]
for each input vector \(u\).

\begin{definition}[Continuous-Measurement predicate transformer]
\label{def:continuous_branch_form_transformer}
A continuous measurement branch consists of measurement operators
\(M_\nu\), indexed by outcomes \(\nu\in\Omega\).  Suppose that branch \(\nu\)
has postcondition \(P_\nu\).  The backward predicate is the continuous analogue
of the finite branch sum
\(\sum_i M_i^\dagger P_i M_i \).

For an input vector \(\u\), branch \(\nu\) contributes the nonnegative value
\[
    \QF_{P_\nu}[M_\nu\u].
\]
When these branch values form a measurable function of \(\nu\), we say the family $\{P_\nu\}_{\nu\in\Omega_M}$ is \textbf{$M$-admissible}, and define the
continuous-branch transformer by averaging these values:
\[
    \QF_P[\u]
    \triangleq
    \int_\Omega \QF_{P_\nu}[M_\nu\u]\,d\mu(\nu).
\]
We write this predicate formally as
\[
    P
    \triangleq
    \int_\Omega M_\nu^\dagger P_\nu M_\nu\,d\mu(\nu).
\]
This notation denotes the quadratic-form-level construction above; it is not an integral of unbounded operators.
\end{definition}

\begin{proposition}[Closure and trace identity for continuous branching]
\label{prop:continuous_branch_trace_identity}
If the branch family \((P_\nu)_{\nu\in\Omega}\) is $M$-admissible, then the construction above defines a predicate \(P\in\Pred(\cH)\). Moreover, for every partial density operator \(\rho\),
\[
    \Tr(P\rho)
    =
    \int_\Omega
        \Tr(P_\nu\,M_\nu\rho M_\nu^\dagger)
    \,d\mu(\nu).
\]
\end{proposition}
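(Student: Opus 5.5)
Our plan has two parts: first show that the scalar function $\u\mapsto\QF_P[\u]$ is a closed positive quadratic form, then prove the trace identity by expanding $\rho$ spectrally and applying Tonelli. Throughout we assume the measurement operators come from an admissible bind. Then $\nu\mapsto M_\nu\u$ is measurable and $\int_\Omega\|M_\nu\u\|^2\,d\mu(\nu)\le\|\u\|^2$, and $M$-admissibility makes $\nu\mapsto\QF_{P_\nu}[M_\nu\u]$ a measurable $[0,\infty]$-valued function for every $\u$. So $\QF_P[\u]\in[0,+\infty]$ is well defined as an extended Lebesgue integral.

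\emph{Quadratic-form structure.} Homogeneity $\QF_P[\lambda\u]=|\lambda|^2\QF_P[\u]$ is inherited pointwise in $\nu$. For the parallelogram law, and for $\Dom(\QF_P)$ being a subspace, we write each closed form $\QF_{P_\nu}$ as a monotone limit of bounded forms $\langle\cdot,A_{\nu,k}\cdot\rangle$ (\emph{Bounded approximation}). The identity $\QF[\u+\v]+\QF[\u-\v]=2\QF[\u]+2\QF[\v]$ holds for each bounded $A_{\nu,k}$, survives the increasing limit in $[0,\infty]$, and then survives integration by linearity of the integral for nonnegative functions. The subadditivity estimate $\QF[\u+\v]\le 2\QF[\u]+2\QF[\v]$ follows from this and shows that $\Dom(\QF_P)$ is a linear subspace.

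\emph{Closedness.} This is the crux. Let $\u_n\to\u$. For each $\nu$, $M_\nu$ is bounded, so $M_\nu\u_n\to M_\nu\u$, and lower semicontinuity of $\QF_{P_\nu}$ gives $\QF_{P_\nu}[M_\nu\u]\le\liminf_n\QF_{P_\nu}[M_\nu\u_n]$ pointwise in $\nu$. Fatou's lemma then yields
\[\QF_P[\u]\le\int_\Omega\liminf_n\QF_{P_\nu}[M_\nu\u_n]\,d\mu(\nu)\le\liminf_n\QF_P[\u_n].\]
Fatou needs measurability of each integrand, which $M$-admissibility supplies. So $P\in\Pred(\cH)$. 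I expect the delicate point to be that $M$-admissibility only asserts measurability vector by vector, so every use of the integral must stay at fixed vectors; the argument above respects this.

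\emph{Trace identity.} Take a spectral decomposition $\rho=\sum_k p_k|\u_k\rangle\langle\u_k|$. By definition of the extended trace,
\[\Tr(P\rho)=\sum_k p_k\int_\Omega\QF_{P_\nu}[M_\nu\u_k]\,d\mu(\nu)=\int_\Omega\sum_k p_k\QF_{P_\nu}[M_\nu\u_k]\,d\mu(\nu),\]
where the interchange is Tonelli (sum against integral of nonnegative measurable functions). For fixed $\nu$ we have $M_\nu\rho M_\nu^\dagger=\sum_k p_k|M_\nu\u_k\rangle\langle M_\nu\u_k|$, which is a decomposition into (unnormalized, non-orthogonal) pure terms. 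The remaining step is the lemma $\Tr(P_\nu\sigma)=\sum_k p_k\QF_{P_\nu}[\v_k]$ for any such rank-one decomposition $\sigma=\sum_k p_k|\v_k\rangle\langle\v_k|$, not only spectral ones. To prove it, approximate $P_\nu$ by bounded $A_j\uparrow P_\nu$. For bounded $A_j$ the identity $\tr(A_j\sigma)=\sum_k p_k\langle\v_k,A_j\v_k\rangle$ holds by trace-norm convergence of the series. Then let $j\to\infty$: the left side converges by the \emph{$\omega$-CPO property}, and the right side by monotone convergence. Applying the lemma inside the integral gives $\int_\Omega\Tr(P_\nu M_\nu\rho M_\nu^\dagger)\,d\mu(\nu)$, as claimed. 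This also shows that the integrand on the right is measurable, since it equals a countable sum of measurable functions.
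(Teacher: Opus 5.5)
Your proof is correct and follows essentially the same route as the paper: homogeneity and the parallelogram law pass through the integral, Fatou's lemma gives lower semicontinuity and hence closedness, and the trace identity follows from a spectral decomposition of $\rho$ together with Tonelli. Your extra lemma, $\Tr(P_\nu\sigma)=\sum_k p_k\QF_{P_\nu}[\v_k]$ for non-orthogonal rank-one decompositions, makes explicit a step the paper uses tacitly when it identifies $\sum_j\lambda_j\QF_{P_\nu}[M_\nu\u_j]$ with $\Tr(P_\nu M_\nu\rho M_\nu^\dagger)$; the paper justifies this only through the spectral-measure expression for the extended trace in its appendix.
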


\begin{proof}[Proof sketch]
This is the same construction as the finite branching rule: each branch assigns a nonnegative predicate to the input vector, and the precondition is obtained by adding these branch predicates.
The only difference is that a continuous measurement has a continuum of branches, so the finite sum is replaced by an integral over the measurement outcome.

Because scaling and parallelogram identities hold in every branch and are
preserved by averaging, the result is again a positive quadratic form; because
branch costs cannot decrease abruptly under limits, by Fatou's Lemma their average cannot decrease
abruptly either, i.e., the form is lower semi-continuous, that is, the form is closed.

For pure states the trace identity is exactly the definition of the averaged
branch predicate, and for mixed states it follows by writing the state as a
probabilistic mixture of pure states and exchanging the two nonnegative averages by Tonelli's theorem:
one over mixture components and one over measurement outcomes.
\end{proof}

The above proposition gives the weakest liberal precondition for the bind command.
Let
\[
    B \equiv \bind(M[\seq q],x.S),
\]
and fix a postcondition \(Q\in\Pred(\cH)\).  For each outcome
\(\nu\in\Omega_M\), write \(S_\nu\) for the continuation obtained by setting
\(x\triangleq\nu\).  Then
\[
    \wlp(B,Q)
    \triangleq
    \int_{\Omega_M}
        M_\nu^\dagger\,\wlp(S_\nu,Q)\,M_\nu
    \,d\mu_M(\nu),
\]
meaning, at the quadratic-form level,
\[
    \QF_{\wlp(B,Q)}[\u]
    =
    \int_{\Omega_M}
        \QF_{\wlp(S_\nu,Q)}[M_\nu\u]
    \,d\mu_M(\nu).
\]
The well-formedness and admissibility assumptions on bind ensure that the family
\(\{\wlp(S_\nu,Q)\}_{\nu\in\Omega_M}\) is \(M\)-admissible, so the integral
defines a predicate in \(\Pred(\cH)\).  By the trace identity above and the
definition of \(\wlp(S_\nu,Q)\), this predicate satisfies
\[
    \Tr(\wlp(B,Q)\rho)
    =
    \Tr(Q\,\sem{B}(\rho)),
\]
and hence is exactly the weakest liberal precondition of the bind command.

When \(\Omega_M\) is countable with counting measure, this rule reduces to the
usual measurement-branch sum.  Thus continuous bind is handled by the similar
backward rule as discrete measurement, with sums replaced by form-level integrals.

\section{A Cost-Parametric Hoare Logic}
\label{sec:unified_proof_system}

The previous section gives the unbounded predicate transformer for CV quantum programs. 
Now we are ready to present a cost-parametric quantum Hoare logic, which provides the proper extension that reflect the intuitive meaning of partial correctness and total correctness when unbounded predicates are involved.

\subsection{Review: Expected Runtime Transformers}
\label{subsec:review_expected_runtime_transformers}

We aim to develop a Hoare-style system based on compositional quantitative predicate transformers, treating assertion transformation and cost accumulation uniformly. Under parametric cost perspective, changing the cost interpretation affects only primitive-command costs, while rules for sequencing, bind, and loops remain fixed. In backward reasoning, we start with a quantitative requirement after the command. The backward transformer then computes the corresponding requirement before the command, including the cost of executing it.
Costs and continuations inhabit a common quantitative assertion domain. This yields expected-runtime transformer calculi, where runtime is internalized as a quantitative assertion rather than an auxiliary state variable~\cite{kaminski2018weakest,batz2021relatively}. 
Probabilistic ERT uses extended-real-valued expectations, while its finite-dimensional quantum analogue uses positive bounded operators related, under almost-sure termination, to forward state semantics by trace pairing~\cite{liu2025quantum}. Our semantics extends this principle to a quadratic-form-valued assertion domain.

\paragraph*{Probabilistic programs.}
For probabilistic programs, assertions are expectations
\(f:\Sigma\to[0,\infty]\). The expected-runtime transformer
\(\ert\llbracket C\rrbracket(t)\) takes a continuation expectation \(t\) and
returns the expected value of the accumulated runtime plus the continuation
after executing \(C\). Its relationship with the weakest
preexpectation is
\[
    \ert\llbracket C\rrbracket(t)
    =
    \ert\llbracket C\rrbracket(0)
    +
    \wp\llbracket C\rrbracket(t).
\]
Thus \(\ert\llbracket C\rrbracket(0)\) is not an external scalar annotation:
it is itself an expectation, i.e. a predicate as
ordinary quantitative assertions. 

\paragraph*{Finite-dimensional quantum programs.}
The finite-dimensional quantum ERT calculus keeps the same cost-accounting
structure, but separates two closely related views~\cite{liu2025quantum}. There are two views. The scalar
view assigns an expected runtime \(\mathrm{ERT}[S](\rho)\) to an input state \(\rho\).  The predicate-transformer view maps a continuation
observable \(A\) to an observable \(\ert[S](A)\).  The zero-continuation case
\(\ert\) is the runtime observable.  On the almost-surely terminating
inputs considered in that setting, the scalar and predicate views are linked
by trace duality:
\[
    \mathrm{ERT}[S](\rho)
    =
    \tr(\ert\rho),
    \qquad
    \tr(\ert[S](A)\rho)
    =
    \tr(\ert[S]\rho)
    +
    \tr(A\llbracket S\rrbracket(\rho)).
\]
This is the finite-dimensional quantum analogue of the probabilistic fact that
runtime is a predicate: expected runtime can be represented by an observable,
and the continuation transformer is dual to the forward state semantics plus
the accumulated cost.

However, when dealing with infinite-dimensional systems, the bounded predicate is not enough to capture the runtime behavior, since predicates may be unbounded and naturally take the value \(+\infty\). 
By enlarging the predicate domain to
closed positive quadratic forms, we need not impose almost-sure termination as
an external side condition: the cost predicate is defined for arbitrary
inputs, and finiteness of its expected value entails finite expected cost and
hence almost-sure termination. Thus termination becomes a consequence
expressible within the assertion semantics rather than a prerequisite of the
transformer construction, providing the basis for the sound and relatively
complete cost reasoning system developed below.

\subsection{Cost as a Predicate}
\label{subsec:cost_and_correctness}

In this section, all costs, traces, and quadratic-form values range over $[0,\infty]$, with
$a+\infty=\infty$, $a\cdot\infty=\infty$ for $a>0$, and $0\cdot\infty=0$.
Scalar multiplication on predicates is pointwise, i.e., 
$(\lambda\QF)[u]\triangleq\lambda\QF[u]$.  We write $\top$ for the predicate
$$\QF_{\top}[\u]\triangleq
\left\{\begin{aligned}
    &0,\,\u=\mathbf{0};\\
    &+\infty\text{,\,otherwise.}
\end{aligned}\right.~.$$

We assume that the execution of any single command (e.g., a basic unitary, a measurement, or evaluating a loop guard) consumes a state-independent base cost $c \ge 0$.

\begin{definition}[Cost Functional]
\label{def:cost_predicate}
The cost $\cC_c(S,\rho)$ is defined structurally as follows:
\[
\begin{alignedat}{2}
& \cC_c(\mathbf{skip},\rho)
\qquad
& \triangleq \quad
& c\Tr(\rho),
\\
& \cC_c(\mathbf{abort},\rho)
& \triangleq\quad
& c\Tr(\rho)\cdot(+\infty),
\\
& \cC_c(\seq q\coloneqq\ket{0},\rho)
& \triangleq\quad
& c\Tr(\rho),
\\
& \cC_c(\seq q\coloneqq U[\seq q],\rho)
& \triangleq\quad
& c\Tr(\rho),
\\
& \cC_c(S_1;S_2,\rho)
& \triangleq\quad
& \cC_c(S_1,\rho)
  +\cC_c\bigl(S_2,\sem{S_1}(\rho)\bigr),
\\
& \cC_c\bigl(\mathbf{bind}(M[\seq q],x.S),\rho\bigr)
& \triangleq\quad
& c\Tr(\rho)
  +\int_{\Omega_M}
    \cC_c\bigl(S_\nu,M_{\nu}\rho M_{\nu}^\dagger\bigr)
    \,d\mu_M(\nu),
\\
& \cC_c\bigl(
    \mathbf{while}\ M[\seq q]=1\
    \mathbf{do}\ S\ \mathbf{od},\rho
  \bigr)
& \triangleq\quad
& \sum_{n=0}^{\infty}
  \Bigl(
    c\Tr(\rho_n)
    +\cC_c\bigl(S,M_1\rho_nM_1^\dagger\bigr)
  \Bigr).
\end{alignedat}
\]
where in the last clause $\rho_0\triangleq\rho$ and
$\rho_{n+1}\triangleq\sem S(M_1\rho_nM_1^\dagger)$.
\end{definition}

Definition~\ref{def:cost_predicate} preserves the scalar view of \cite{liu2025quantum} reviewed above.  Sequencing evaluates the continuation cost of \(S_2\)
on the state produced by \(S_1\), and loops are still the monotone limits of
finite unrollings.  The only structural replacement is the branching operation:
the finite sum over conditional branches is replaced by the integral over the
outcomes of \(\mathbf{bind}\).

The real change is the representation of the runtime observable. In the finite-dimensional ERT calculus, the scalar runtime is recovered from an observable by the trace pairing,
\[
    \mathrm{ERT}[S](\rho)=\tr(\ert[S]\rho).
\]
Here we first define the cost as an extended scalar function
\(\cC_c(S,\rho)\in[0,\infty]\).  The theorem below says that its values on pure
states assemble into a closed positive form \(C_{c,S}\).  Thus \(C_{c,S}\) plays the
role of the runtime observable \(\ert[S](0)\), but the new predicate
representation can express \(+\infty\) naturally.  Divergence, abort, infinite
expected cost, and hard domain constraints are therefore handled as ordinary
predicate values, rather than as external side conditions.

\begin{theorem}[Well-Definedness of Cost Predicate]
\label{thm:cost_well_defined}
For any quantum program $S$, the expected cost evaluated on pure
states induces a functional
\[
    \QF_{\cC_c(S)}[\u]\triangleq\cC_c(S,\sP_{\u}).
\]
This functional is a closed positive quadratic form on $\cH$.  Thus there exists
a unique cost predicate $C_{c,S}\in\Pred$ that represents the exact expected cost,
satisfying $\QF_{C_{c,S}}=\QF_{\cC_c(S)}$.
\end{theorem}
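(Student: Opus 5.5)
Structural induction on $S$, proving a stronger statement: for every program $S$ there is a predicate $C_{c,S}\in\Pred(\cH)$ with $\Tr(C_{c,S}\rho)=\cC_c(S,\rho)$ for \emph{every} partial density operator $\rho$, not only pure ones. The strengthening is needed because sequencing and loops evaluate costs on mixed intermediate states $\sem{S_1}(\rho)$ and $\rho_n$. Uniqueness comes for free, since a closed form is determined by its values on vectors and $\sP_{\u}$ is pure. Existence follows from the strengthened claim by restricting to $\rho=\sP_{\u}$.

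The base cases are explicit predicates.
\begin{itemize}
\item For $\mathbf{skip}$, initialization and unitaries, take $C=cI$: a bounded positive form, with $\Tr(cI\rho)=c\Tr(\rho)$.
\item For $\mathbf{abort}$, take $C=\top$ when $c>0$ and $C=0$ when $c=0$. With the convention $0\cdot\infty=0$, this gives $\Tr(\top\rho)=+\infty$ exactly when $\rho\neq0$. The form $\top$ is closed because $\{0\}$ is a closed subspace.
\end{itemize}

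The inductive cases use three closure operations on $\Pred(\cH)$.

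\textbf{Sums.} Finite and countable sums of closed positive forms are closed. Each partial sum is closed, and the total is an increasing supremum, which is closed by the $\omega$-CPO property. The trace is additive by monotone convergence applied to the spectral sum defining $\Tr$.

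\textbf{Pullback.} For the pullback of a cost predicate along a CPTN map, Theorem~\ref{thm:unbounded_heisenberg_transformer} gives $\Tr(\cE^\dagger(P)\rho)=\Tr(P\,\cE(\rho))$.

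\textbf{Continuous branching.} This is handled by Proposition~\ref{prop:continuous_branch_trace_identity}.

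These operations give the inductive cases as follows.
\begin{itemize}
\item For sequencing, set $C_{c,S_1;S_2}=C_{c,S_1}+\sem{S_1}^\dagger(C_{c,S_2})$ and use the trace identity with the induction hypothesis for $S_2$ on the mixed state $\sem{S_1}(\rho)$.
\item For bind, set $C=cI+\int_{\Omega_M}M_\nu^\dagger C_{c,S_\nu}M_\nu\,d\mu_M(\nu)$. The proposition then reproduces the integral clause of Definition~\ref{def:cost_predicate}.
\item For the loop, write $\cE_1(\rho)=\sem S(M_1\rho M_1^\dagger)$ and $\mathcal M_1(\rho)=M_1\rho M_1^\dagger$. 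Then $\rho_n=\cE_1^n(\rho)$, and the $n$-th summand has trace-representing predicate $(\cE_1^\dagger)^n\bigl(cI+\mathcal M_1^\dagger(C_{c,S})\bigr)$. The loop predicate is the countable sum of these, which is closed and has the right trace by the sum closure above.
\end{itemize}

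The main obstacle is the bind case: showing that $\{C_{c,S_\nu}\}_\nu$ is $M$-admissible, i.e.\ that $\nu\mapsto\QF_{C_{c,S_\nu}}[M_\nu\u]$ is measurable. This requires carrying a measurability invariant through the induction, alongside Theorem~\ref{THM:semantics_well_defined}.

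The invariant is: for each $\rho$, the map $\omega\mapsto\cC_c(S_\omega,\rho)$ is measurable, and jointly so in $(\omega,\rho)$ for trace-norm-measurable state families. To establish it, approximate each $C_{c,S_\omega}$ from below by bounded truncations, or equivalently write $\cC_c$ as a supremum of finite-unrolling costs. Each approximant is obtained from measurable maps by Bochner integrals, compositions and finite sums, and a pointwise supremum of countably many measurable functions is measurable. Then Tonelli's theorem gives measurability of the outcome integral in the remaining parameters.

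I expect this bookkeeping, rather than closedness, to be the technical heart of the proof. Closedness itself is inherited entirely from the Fatou/$\omega$-CPO arguments already established.
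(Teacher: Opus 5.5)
Your argument is correct. It is essentially the route indicated by the proof sketch in the main text: sequencing via the unbounded Heisenberg dual, bind via the form-level branch integral, and loops via the $\omega$-CPO.

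The detailed proof in the supplementary material (\Cref{prop:cost_quadratic_form}) is organized differently and never builds $C_{c,S}$ compositionally. It proves, by simultaneous structural induction, three properties of the scalar functional $\rho\mapsto\cC(S_\omega,\rho)$ on the whole cone $\cT(\cH)_+$:
\begin{itemize}
\item joint measurability in $(\omega,\rho)$;
\item finite conic linearity;
\item trace-norm lower semicontinuity.
\end{itemize}
Only at the end does it pass to vectors. Homogeneity and the parallelogram law follow from $\sP_{c\u}=|c|^2\sP_{\u}$ and $\sP_{\u+\v}+\sP_{\u-\v}=2\sP_{\u}+2\sP_{\v}$. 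Closedness follows from lower semicontinuity together with trace-norm continuity of $\u\mapsto\sP_{\u}$. The predicate then comes from Kato's representation theorem. That route is self-contained at the level of scalar functionals: each clause needs only Fatou's lemma, suprema of lower semicontinuous functions, and continuity of CPTN maps. It also keeps the cost independent of the $\wp$ calculus, which the paper relates to it afterwards in \Cref{thm:wp_equivalence}.

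Your route instead produces the explicit formula $C_{c,S}=\wp_c(S,0)$; your loop series $\sum_n(\cE_1^\dagger)^n\bigl(cI+\mathcal M_1^\dagger(C_{c,S})\bigr)$ is just the unrolled Kleene chain. You get the mixed-state identity $\Tr(C_{c,S}\rho)=\cC_c(S,\rho)$ for free, and you inherit closedness from \Cref{thm:unbounded_heisenberg_transformer}, \Cref{prop:continuous_branch_trace_identity} and the $\omega$-CPO property. There is no circularity, since those results do not depend on the cost. In effect you prove this theorem and the $Q=0$ case of the equivalence theorem at once.

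There are two small corrections.

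First, state your strengthened invariant on all of $\cT(\cH)_+$, not only on partial density operators. In the bind case you apply the induction hypothesis to $M_\nu\rho M_\nu^\dagger$. Its trace can exceed $1$, because the normalization of the instrument holds only after integrating over $\nu$. The theorem itself also evaluates $\sP_{\u}$ for non-unit $\u$. Both sides are positively homogeneous, so the fix is only a rescaling, but it must be part of the invariant.

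Second, for the measurability invariant the direct structural induction is enough, and it is exactly what the paper does. You compose with the Carath\'eodory-measurable maps $(\omega,\rho)\mapsto\sem{S_1}_\omega(\rho)$ and $(\omega,\rho,\nu)\mapsto M_{\omega,\nu}\rho M_{\omega,\nu}^\dagger$, apply Tonelli to the outcome integral, and take countable suprema for loops. The detour through spectral truncations of $C_{c,S_\omega}$ is unnecessary, and measurability of those truncations in $\omega$ would itself need proof.
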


\begin{proof}[Proof sketch]
The proof is by structural induction, with one important strengthening: the
induction simultaneously proves that the cost functional induces a closed
positive form and that the branchwise forms appearing in bind are measurable.
This measurability is what makes the integral clauses legitimate, not
just formal notation.  For the core rules, sequencing uses the predicate
transformer in \Cref{thm:unbounded_heisenberg_transformer}.  Bind uses quadratic form-level measurement integration.
While uses the supremum of finite-unrolling costs and the \(\omega\)-CPO
structure of \(\Pred\).
\end{proof}

The validity of cost Hoare triple below is the quantum analogue of the quantitative
Hoare triple in probabilistic program \cite{kaminski2018weakest}.  The classical pre- and post-expectations \(g\) and \(f\) are
replaced by predicates paired with states through the extended trace, and the
plain runtime term \(\ert\llbracket C\rrbracket(0)\) is replaced by the scalar
cost \(\cC_c(S,\rho)\).  Thus a single inequality bounds both the expected
postcondition and the cost.

\begin{definition}[Validity]
\label{def:correctness_validity}
A cost Hoare triple is valid, denoted as $\models_c\{P\}S\{Q\}$, if for all
$\rho\in\pardensity{\cH}$,
\[
    \Tr(P\rho)
    \ge
    \Tr(Q\sem{S}(\rho))+\cC_c(S,\rho).
\]
\end{definition}

As in probabilistic expected-runtime reasoning \cite{kaminski2018weakest}, positive cost gives a
total-correctness mode that is stronger than almost-sure termination.  When
\(c=0\), the cost component vanishes and the judgment is zero-cost partial correctness.
When \(c>0\), finite pre-expectation bounds the expected accumulated cost, and
therefore rules out positive-mass nontermination under the positive-step model.

\begin{proposition}[Finite Pre-expectation Entails Almost-Sure Termination]
\label{prop:finite_pre_ast}
If \(c>0\), \(\models_c\{P\}S\{Q\}\), and \(\Tr(P\rho)<\infty\), then
\[
    \Tr(\sem{S}(\rho))=\Tr(\rho),
\]
i.e., \(S\) terminates almost surely from \(\rho\).
\end{proposition}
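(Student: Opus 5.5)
Validity together with \(\Tr(P\rho)<\infty\) gives \(\cC_c(S,\rho)\le\Tr(P\rho)<\infty\), since both terms on the right of the validity inequality are nonnegative. The proposition therefore reduces to a purely semantic statement that does not involve \(P\) or \(Q\):
\[
    (\ast)\qquad \Tr(\rho)-\Tr(\sem S(\rho))\;\le\;\tfrac{1}{c}\,\cC_c(S,\rho)\cdot 0 \quad\text{whenever } \cC_c(S,\rho)<\infty .
\]
Equivalently, if \(\cC_c(S,\rho)<\infty\) and \(c>0\), then \(\sem S\) is trace-preserving on \(\rho\). I would prove \((\ast)\) by structural induction on \(S\), for all \(\rho\in\pardensity{\cH}\) at once. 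Throughout I use that every \(\sem S\) is CPTN (\Cref{THM:semantics_well_defined}), so the defect \(\Tr(\rho)-\Tr(\sem S(\rho))\) is never negative.

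\textbf{Base cases.} For \(\mathbf{skip}\), initialization and unitaries the denotation is trace-preserving, so there is nothing to prove. For \(\mathbf{abort}\), the hypothesis \(c\Tr(\rho)\cdot(+\infty)<\infty\) together with \(c>0\) forces \(\Tr(\rho)=0\), hence \(\rho=0\), and the claim holds trivially.

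\textbf{Sequencing.} Both summands \(\cC_c(S_1,\rho)\) and \(\cC_c(S_2,\sem{S_1}(\rho))\) are finite. Applying the induction hypothesis twice gives
\[
    \Tr\sem{S_2}(\sem{S_1}\rho)=\Tr\sem{S_1}\rho=\Tr\rho .
\]

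\textbf{Bind.} Finiteness of the integral makes \(\cC_c(S_\nu,M_\nu\rho M_\nu^\dagger)\) finite for \(\mu_M\)-almost every \(\nu\). By the induction hypothesis (uniform in the parameter \(\nu\)), \(\Tr\sem{S_\nu}(M_\nu\rho M_\nu^\dagger)=\Tr(M_\nu\rho M_\nu^\dagger)\) almost everywhere. The trace is continuous and linear, so it commutes with the Bochner integral, and the normalization of the Kraus density gives
\[
    \int_{\Omega_M}\Tr(M_\nu\rho M_\nu^\dagger)\,d\mu_M(\nu)=\Tr\rho ,
\]
first on pure states and then on mixtures by Tonelli.

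\textbf{While, the main case.} Write \(\rho_0=\rho\) and \(\rho_{n+1}=\sem S(M_1\rho_nM_1^\dagger)\) as in \Cref{def:cost_predicate}. Finiteness of the cost gives \(\sum_n c\Tr(\rho_n)<\infty\), hence \(\Tr(\rho_n)\to 0\) because \(c>0\). It also gives \(\cC_c(S,M_1\rho_nM_1^\dagger)<\infty\) for every \(n\), so the induction hypothesis on the body yields
\[
    \Tr(\rho_{n+1})=\Tr(M_1\rho_nM_1^\dagger).
\]
Next I unfold the approximants to express the output as a sum of exit pieces:
\[
    \sem{\mathbf{while}^{(N)}}(\rho)=\sum_{n<N-1}M_0\rho_nM_0^\dagger \quad(\text{up to the indexing convention}),
\]
which I would establish by induction on \(N\). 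Its trace is
\[
    \sum_{n<N-1}\bigl(\Tr\rho_n-\Tr(M_1\rho_nM_1^\dagger)\bigr)=\sum_{n<N-1}\bigl(\Tr\rho_n-\Tr\rho_{n+1}\bigr)=\Tr\rho-\Tr\rho_{N-1},
\]
using \(M_0^\dagger M_0+M_1^\dagger M_1=I\) and the identity above. The sum telescopes. The supremum \(\bigvee_N\) of an increasing chain of partial density operators has trace equal to the supremum of the traces (normality, i.e.\ monotone convergence in \(\cT(\cH)_+\)). Letting \(N\to\infty\) and using \(\Tr\rho_N\to0\) then gives \(\Tr\sem{\mathbf{while}}(\rho)=\Tr\rho\).

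I expect the while case to be the main obstacle. The difficulty is bookkeeping: the exit-piece formula must be stated in a form that is consistent with the approximant definition and with the state sequence used in the cost clause. The bind case needs only the measure-theoretic care already supplied by the admissibility assumptions.
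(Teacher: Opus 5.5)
Your proposal is correct and is essentially the paper's own proof: the same one-line reduction from validity to finite cost, then structural induction. The shared steps are the abort case forced to $\rho=0$, the bind case closed by a.e.\ finiteness plus trace/Bochner commutation and the normalization of the instrument, and the while case closed by telescoping $\Tr\rho=\sum_{n<N}\Tr(M_0\rho_nM_0^\dagger)+\Tr\rho_N$ together with $\Tr\rho_n\to 0$.

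One detail needs repair. In the bind case, the branch state $M_\nu\rho M_\nu^\dagger$ need not be a partial density operator: pointwise, its trace is a probability density in $\nu$ and can exceed $1$, for example for the Gaussian homodyne instrument with small $\sigma_M$. So your induction hypothesis, stated over $\pardensity{\cH}$, does not literally apply there. You can either run the induction over $\cT(\cH)_+$, or, as the paper does, rescale by $\Tr(M_\nu\rho M_\nu^\dagger)$ using conic linearity of the cost and linearity of $\sem{S_\nu}$.

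Two smaller points. Your display $(\ast)$ is garbled, since its right-hand side is just $0$; the ``equivalently'' sentence is the statement you actually prove. Also, the exact approximant formula is $\sem{\mathbf{while}^{(N)}}(\rho)=\sum_{n<N}M_0\rho_nM_0^\dagger$, whose trace is $\Tr\rho-\Tr\rho_N$.
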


The converse implication does not hold in general: almost-sure termination need
not imply finite expected cost.  The random-walk case study in the next section
exhibits this separation.

\subsection{Proof System}
\label{subsec:proof_system}

We present in Fig. \ref{fig:proof rule} the proof system of our logic.
\begin{figure}
    \centering
\begin{mathpar}
\inferrule*[right=\textsc{Skip}]
    { }
    { \vdash_{c} \{ Q + c I \}\ \mathbf{skip}\ \{ Q \} }
\\

\inferrule*[right=\textsc{Abort}]
    { }
    { \vdash_{c} \{ c\cdot\top \}\ \mathbf{abort}\ \{ P \} }
\\

\inferrule*[right=\textsc{Init}]
    { }
    { 
      \vdash_{c}
      \left\{
        c I + \sum_{k=0}^{\infty}
        \ket k_q\!\bra 0 P \ket 0_q\!\bra k
      \right\}
      \ q \coloneqq \ket{0}\ 
      \{ P \}
    }
\\

\inferrule*[right=\textsc{Unitary}]
    { }
    { 
      \vdash_{c}
      \{ c I + U^\dagger P U \}
      \ \seq q\coloneqq U \seq q\ 
      \{ P \}
    }
\\

\inferrule*[right=\textsc{Seq}]
    {
      \vdash_{c} \{P\}\ S_1\ \{R\}
      \\
      \vdash_{c} \{R\}\ S_2\ \{Q\}
    }
    {
      \vdash_{c} \{P\}\ S_1 ; S_2\ \{Q\}
    }
\\

\inferrule*[right=\textsc{Bind}]
    {
      \{P_{\nu}\}_{\nu\in\Omega_M}
      \text{ is }M\text{-admissible}
      \\
      \forall \nu \in \Omega_M,\ 
      \vdash_{c} \{ P_\nu \}\ S_\nu\ \{ Q \}
    }
    {
      \vdash_{c}
      \left\{
        c I +
        \int_{\Omega_M}
          M_{\nu}^\dagger P_\nu M_{\nu}
        \, d\mu_M(\nu)
      \right\}
      \ \mathbf{bind}(M[\seq{q}],\, x.S)\ 
      \{ Q \}
    }
\\

\inferrule*[right=\textsc{While}]
    {
      P \sqsupseteq
      M_0^\dagger Q M_0 + c I + M_1^\dagger R M_1
      \\
      \vdash_{c} \{ R \}\ S\ \{ P \}
    }
    {
      \vdash_{c}
      \{ P \}
      \ \mathbf{while}\ M[\seq{q}]=1\
      \mathbf{do}\ S\ \mathbf{od}\ 
      \{ Q \}
    }
\\

\inferrule*[right=\textsc{Conseq}]
    {
      P \sqsupseteq P'
      \\
      \vdash_{c} \{P'\}\ S\ \{Q'\}
      \\
      Q' \sqsupseteq Q
    }
    {
      \vdash_{c} \{P\}\ S\ \{Q\}
    }
\end{mathpar}
    \caption{Proof system for validity}
    \label{fig:proof rule}
    \Description{Proof system for validity.}
\end{figure}
The main ideas are explained as follows.

\emph{Skip}.
  Even \(\mathbf{skip}\) pays the base cost of executing one command.  Since
  \(\Tr(cI\rho)=c\Tr(\rho)\), the precondition \(Q+cI\) exactly combines
  the unchanged postcondition with this one-step cost.  When \(c=0\), the rule
  reduces to the usual partial-correctness axiom.

 \emph{Abort}.
  For positive cost, aborting is assigned infinite cost on every nonzero input
  state, represented by \(c\cdot\top\).  Thus the rule records that abort has no
  finite-cost proof obligation unless the pre-expectation is allowed to be
  infinite.  At \(c=0\), the precondition becomes \(0\), matching ordinary
  zero-cost partial correctness: no terminal state is produced, so no postcondition has to
  be established.

  \emph{Init}.
  Initialization resets \(q\) to \(\ket{0}\) and discards its previous contents.
  The infinite sum is the form-level Heisenberg dual of the postcondition
  through this reset operation: it evaluates \(P\) after replacing the
  initialized register by \(\ket{0}\).  The additional \(cI\) is the base cost
  of performing the initialization.

  \emph{Unitary}.
  The precondition \(U^\dagger P U\) is the unbounded predicate transformer of the
  postcondition through a unitary command, read at the quadratic-form level when
  \(P\) is unbounded.  The term \(cI\) again pays the base command cost.

  \emph{Seq}.
  The intermediate predicate \(R\) is the continuation assertion for \(S_1\):
  after executing \(S_1\), it must be sufficient to run \(S_2\) and establish
  \(Q\).  There is no extra \(cI\) for sequencing itself, because the costs of
  the two subprograms are already included in the two premises; the semicolon is
  only the compositional structure.

  \emph{Bind}.
  This is the rule for continuous measurements.  For each outcome \(\nu\), the premise
  proves that \(P_\nu\) is sufficient for the continuation \(S_\nu\).  The
  precondition of the whole bind is obtained by pulling these branch
  preconditions back through the measurement operators and integrating over
  outcomes.  Since \(P_\nu\) may be unbounded, the expression
  \(M_{\nu}^\dagger P_\nu M_{\nu}\) is read at 
  quadratic-form level (defer to \Cref{subsec:Unbounded_Predicate_Transformer}), and the admissibility assumption ensures that the
  branchwise scalar form values are measurable and define a closed positive
  form after integration.  The leading \(cI\) pays for performing the
  measurement/bind step itself; the costs of the outcome-dependent continuations
  are already contained in the predicates \(P_\nu\).

  \emph{While}.
  The loop rule is the quantum analogue of the invariant/potential rule used in
  probabilistic expected-runtime reasoning~\cite{kaminski2018weakest}.  The
  predicate \(P\) is a potential (quantitative invariant).  One guard evaluation costs
  \(cI\).  On the terminating branch, the postcondition
  \(M_0^\dagger Q M_0\) must be paid.  On the continuing branch, the loop must
  provide \(M_1^\dagger R M_1\), where \(R\) is strong enough to run the body
  and re-establish \(P\).  Thus, the side condition
  \[
      P \sqsupseteq M_0^\dagger Q M_0+cI+M_1^\dagger R M_1
  \]
  says that the current potential pays for one guard step and for either exit
  or one body execution followed by the restored invariant.  Soundness follows
  by induction over finite loop unrollings and then by taking the monotone
  supremum in \(\Pred(\cH)\).

  \emph{Conseq}.
  Consequence uses the quantitative order of predicates.  A larger precondition
  gives more budget, and a smaller postcondition asks for less after termination.
  Hence from \(P\sqsupseteq P'\), a proof of \(\{P'\}S\{Q'\}\), and
  \(Q'\sqsupseteq Q\), one obtains \(\{P\}S\{Q\}\).

\subsection{Soundness and Completeness}
\label{subsec:weakest_precondition}

The proof system introduced above is both sound and relatively complete, as stated below.
\begin{theorem}[Soundness]
\label{thm:soundness}
The proof system displayed in Fig. \ref{fig:proof rule} is sound for reasoning about the cost; that is,
for any quantum program $S$, predicates $P,Q\in\Pred$, and $c\ge0$, we have:
\[
    \vdash_c\{P\}S\{Q\}
    \quad\mbox{implies}\quad
    \models_c\{P\}S\{Q\}.
\]
\end{theorem}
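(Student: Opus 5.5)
We prove soundness by induction on the derivation of \(\vdash_c\{P\}S\{Q\}\), showing for each rule that validity of the premises implies validity of the conclusion. Throughout we use the extended trace identities: \(\Tr(K^\dagger PK\rho)=\Tr(P\,K\rho K^\dagger)\) and the Kraus-sum duality from \Cref{thm:unbounded_heisenberg_transformer}. We also use \(\Tr(cI\rho)=c\Tr(\rho)\) and additivity of \(\Tr(\cdot\,\rho)\) over sums of forms in \([0,\infty]\).

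\emph{Atomic rules.} For \textsc{Skip}, \textsc{Unitary} and \textsc{Init} we show that the precondition's trace equals \(\Tr(Q\sem S(\rho))+c\Tr(\rho)\). For \textsc{Init} we apply \Cref{thm:unbounded_heisenberg_transformer} to the Kraus family \(\ket0_q\!\bra k\), which gives exactly the displayed form-level sum. For \textsc{Abort}, \(\Tr(c\cdot\top\,\rho)\) is \(+\infty\) when \(c>0\) and \(\rho\neq0\), and is \(0\) otherwise. This matches \(\cC_c(\mathbf{abort},\rho)+0\) under the convention \(0\cdot\infty=0\).

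\emph{Composite rules.} For \textsc{Conseq} we use monotonicity of \(\Tr\) in the extended L\"owner order. For \textsc{Seq}, apply the first premise to \(\rho\), then the second to \(\sem{S_1}(\rho)\), and add. This uses \(\cC_c(S_1;S_2,\rho)=\cC_c(S_1,\rho)+\cC_c(S_2,\sem{S_1}\rho)\) together with associativity of addition in \([0,\infty]\). For \textsc{Bind}, \Cref{prop:continuous_branch_trace_identity} gives
\[
\Tr\Bigl(\int M_\nu^\dagger P_\nu M_\nu\,\rho\Bigr)=\int\Tr(P_\nu M_\nu\rho M_\nu^\dagger)\,d\mu_M(\nu).
\]
Applying each premise pointwise to the branch states \(M_\nu\rho M_\nu^\dagger\) bounds this integral below by
\[
\int \Tr\bigl(Q\sem{S_\nu}(M_\nu\rho M_\nu^\dagger)\bigr)\,d\mu_M+\int \cC_c(S_\nu,M_\nu\rho M_\nu^\dagger)\,d\mu_M .
\]
The second integral is measurable by \Cref{thm:cost_well_defined}. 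For the first, we must exchange \(\Tr(Q\,\cdot)\) with the Bochner integral defining \(\sem B(\rho)\). For that we approximate \(Q\) from below by bounded operators via the bounded-approximation property: for bounded \(A\) the trace is a continuous functional and commutes with the Bochner integral. Monotone convergence on both sides then passes to the supremum. Adding \(c\Tr(\rho)\) yields the claim.

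\emph{While.} This is the main obstacle. Let \(\rho_0=\rho\) and \(\rho_{n+1}=\sem S(M_1\rho_nM_1^\dagger)\). Combining the side condition with the premise on \(M_1\rho_n M_1^\dagger\) gives, for every \(n\),
\[
\Tr(P\rho_n)\ge \Tr(Q\,M_0\rho_nM_0^\dagger)+c\Tr(\rho_n)+\cC_c(S,M_1\rho_nM_1^\dagger)+\Tr(P\rho_{n+1}).
\]
Iterating this \(N\) times and dropping the nonnegative remainder \(\Tr(P\rho_{N})\) gives
\[
\Tr(P\rho)\ge\sum_{n<N}\Bigl(\Tr(QM_0\rho_nM_0^\dagger)+c\Tr(\rho_n)+\cC_c(S,M_1\rho_nM_1^\dagger)\Bigr).
\]
Dropping the remainder is where the argument would fail for total-correctness-style lower bounds, but it is harmless here because all quantities lie in \([0,\infty]\). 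If \(\Tr(P\rho)=\infty\) the inequality is trivial; otherwise every term is finite and the telescoping is legitimate. Letting \(N\to\infty\), the cost sum converges to the \textbf{while} clause of \Cref{def:cost_predicate}. It remains to identify \(\sum_n\Tr(QM_0\rho_nM_0^\dagger)\) with \(\Tr(Q\sem{\mathbf{while}}(\rho))\). Unfolding the approximants shows \(\sem{\mathbf{while}^{(N+1)}}(\rho)=\sum_{n\le N}M_0\rho_nM_0^\dagger\), an increasing sequence whose supremum is the loop denotation. By the \(\omega\)-CPO characterization and \(\omega\)-continuity of \(\rho\mapsto\Tr(Q\rho)\) along increasing sequences (proved by bounded approximation of \(Q\) and exchanging two suprema), the sum equals \(\Tr(Q\sem{\mathbf{while}}(\rho))\). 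This completes the induction.
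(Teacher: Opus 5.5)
Your proof is correct, but it takes a different route from the paper's. The paper never checks the trace inequality directly. It first reduces $\models_c\{P\}S\{Q\}$ to the form inequality $P\sqgeq\wp_c(S,Q)$, using $\wp_c(S,Q)=\sem{S}^\dagger(Q)+C_{c,S}$ and its minimality. It then shows, rule by rule, that each derived precondition dominates $\wp_c$, using the structural $\wp_c$ equations (\Cref{thm:wp_equivalence}) and monotonicity of $\wp_c(S,\cdot)$; see \Cref{thm:soundness_tot} and \Cref{Thm:soundness}. For \textsc{While} this becomes a Kleene-chain (Park-style) induction. One has $X_0=0\sqleq P$, and $X_n\sqleq P$ gives $X_{n+1}=M_0^\dagger QM_0+cI+M_1^\dagger\wp_c(S,X_n)M_1\sqleq M_0^\dagger QM_0+cI+M_1^\dagger RM_1\sqleq P$, so $P$ bounds $\sup_nX_n=\wp_c(\mathbf{while},Q)$.

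Your telescoping along the trajectory $\rho_n$ is the state-side dual of this argument. Dropping the remainder $\Tr(P\rho_N)\ge0$ plays the role of $X_0=0\sqleq P$. Your identification of $\sum_n\Tr(QM_0\rho_nM_0^\dagger)$ with $\Tr(Q\sem{\mathbf{while}}(\rho))$ is the computation the paper performs once, inside the soundness proof of the $\wlp$ calculus (\Cref{thm:wlp_soundness}). The same holds for your bounded-approximation/Bochner interchange in the \textsc{Bind} case.

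The paper's route makes soundness purely order-theoretic once the $\wp_c$ calculus is in place, which completeness needs anyway. Because form inequalities are scale-free, it also never has to worry about normalization of states. Your route is self-contained and does not depend on the existence or characterization of $\wp_c$. It also treats all $c\ge0$ uniformly, whereas the supplementary proofs split into the $c=0$ and $\epsilon>0$ systems. The price is that you must redo the analytic interchanges locally.

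There is one detail you should make explicit. The induction hypothesis gives validity only for $\rho\in\pardensity{\cH}$, but in the \textsc{Bind} case the branch states $M_\nu\rho M_\nu^\dagger$ may have trace exceeding $1$; the Gaussian homodyne densities are an example. Extend each premise to all of $\cT(\cH)_+$ by rescaling $\sigma\mapsto\sigma/\tr(\sigma)$. This uses positive homogeneity of the extended trace, linearity of $\sem{S_\nu}$, and conic linearity of $\cC_c$ (\Cref{prop:cost_quadratic_form}).

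Also, the final step of your \textsc{While} case uses additivity of $\Tr(Q\,\cdot)$ in the \emph{state} argument, namely $\Tr\bigl(Q\sum_{n\le N}M_0\rho_nM_0^\dagger\bigr)=\sum_{n\le N}\Tr(QM_0\rho_nM_0^\dagger)$. That is the linearity established in \Cref{app:trace_proof}, not the additivity over sums of forms you listed at the outset.
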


\begin{theorem}[Completeness]
\label{thm:completeness}
The proof system displayed in Fig. \ref{fig:proof rule} is (relatively) complete for reasoning about the cost; that is,
for any quantum program $S$, predicates $P,Q\in\Pred$, and $c\ge0$, we have:
\[
    \models_c\{P\}S\{Q\}
    \quad\mbox{implies}\quad
    \vdash_c\{P\}S\{Q\}.
\]
\end{theorem}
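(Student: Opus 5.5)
The plan is the standard weakest-precondition route, with the cost-parametric transformer \(\wp_c\) as the bridge. First I define \(\wp_c(S,Q)\in\Pred(\cH)\) for every admissible \(S\) and \(Q\in\Pred(\cH)\) by the predicate \(C_{c,S}+\wlp(S,Q)\), where \(C_{c,S}\) is the cost predicate of \Cref{thm:cost_well_defined} and \(\wlp(S,Q)\) is \Cref{def:unbounded_wlp}. By the trace identity of \Cref{thm:unbounded_heisenberg_transformer} and \Cref{thm:cost_well_defined}, extended from pure states by the decomposition-independence of the extended trace, this predicate satisfies
\[
    \Tr(\wp_c(S,Q)\rho)=\Tr(Q\sem{S}(\rho))+\cC_c(S,\rho)
\]
for every \(\rho\). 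Since the order \(\sqleq\) is tested on partial density operators, \(\models_c\{P\}S\{Q\}\) is then equivalent to \(P\sqgeq\wp_c(S,Q)\). By \textsc{Conseq} (with \(Q'=Q\)), it therefore suffices to prove \(\vdash_c\{\wp_c(S,Q)\}S\{Q\}\) for all \(S\) and \(Q\). I prove this by structural induction on \(S\), quantifying over all \(Q\) and, for bind, over all valuations.

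The next step is to establish the structural equations of \(\wp_c\), each checked on pure states through the trace identity and the cost clauses of \Cref{def:cost_predicate}. They are as follows.
\begin{itemize}
\item \(\wp_c(\mathbf{skip},Q)=Q+cI\).
\item \(\wp_c(\mathbf{abort},Q)=c\cdot\top\). Note that for \(c=0\) this is \(0\), using \(0\cdot\infty=0\).
\item The \textsc{Init} and \textsc{Unitary} preconditions are obtained via the Kraus formula of \Cref{thm:unbounded_heisenberg_transformer}.
\item \(\wp_c(S_1;S_2,Q)=\wp_c(S_1,\wp_c(S_2,Q))\). For a pure state, \(\Tr(\wp_c(S_2,Q)\sem{S_1}\rho)\) unfolds to the cost of \(S_2\) on \(\sem{S_1}\rho\) plus the expected value of \(Q\) on the final state; linearity of the extended trace in \(\rho\) and of \(\cC_c\) in \(\rho\) (itself by induction) is what is needed.
\item \(\wp_c(\bind,Q)=cI+\int M_\nu^\dagger\wp_c(S_\nu,Q)M_\nu\,d\mu_M\). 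This uses \Cref{prop:continuous_branch_trace_identity}, provided the family \(\{\wp_c(S_\nu,Q)\}_\nu\) is \(M\)-admissible.
\end{itemize}
With these equations, each non-loop case follows directly: \textsc{Seq} with intermediate predicate \(R=\wp_c(S_2,Q)\), and \textsc{Bind} with \(P_\nu=\wp_c(S_\nu,Q)\). For the admissibility in the bind case I reuse the simultaneous-measurability strengthening from the proof of \Cref{thm:cost_well_defined} together with the measurability in \Cref{THM:semantics_well_defined}. The branch value \(\nu\mapsto\QF_{\wp_c(S_\nu,Q)}[M_\nu\u]\) is a supremum of bounded-truncation traces \(\tr(A_k\,\sem{S_\nu}(M_\nu\sP_\u M_\nu^\dagger))\) plus a cost term. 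Each truncation trace is measurable, since it is a trace-norm continuous function of a trace-norm measurable map, and a countable supremum of measurable functions is measurable.

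The main obstacle is the loop. Set \(W=\mathbf{while}\ M[\seq q]=1\ \mathbf{do}\ S\ \mathbf{od}\), take \(P=\wp_c(W,Q)\) and \(R=\wp_c(S,P)\), and verify the side condition of \textsc{While} as a fixed-point equation:
\[
    \wp_c(W,Q)=M_0^\dagger QM_0+cI+M_1^\dagger\,\wp_c(S,\wp_c(W,Q))\,M_1.
\]
I would prove this on pure states by writing both \(\Tr(Q\sem{W}\rho)\) and \(\cC_c(W,\rho)\) as series over the iterates \(\rho_n\). Peeling off the \(n=0\) term leaves the same series started at \(\rho_1=\sem S(M_1\rho M_1^\dagger)\). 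For the semantic part, \(\sem W\) is the supremum of the finite unrollings, and the extended trace commutes with this increasing limit by the \(\omega\)-CPO characterization. For the cost part, the rearrangement is legitimate because all terms are nonnegative in \([0,\infty]\). Combining the two parts uses the sequential-composition identity already established.

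One subtlety is that \(\wp_c(W,Q)\) must be the \emph{least} fixed point, so that the choice \(P=\wp_c(W,Q)\) does not overshoot. This is irrelevant for completeness, however: \textsc{While} only requires \(\sqgeq\), and equality certainly gives it. The inner premise \(\vdash_c\{R\}S\{P\}\) then comes from the induction hypothesis applied to the body \(S\) with postcondition \(P\). This is why the induction must quantify over all postconditions.
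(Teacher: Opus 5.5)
Your proposal is correct and follows the paper's own route. Validity forces $P\sqsupseteq\wp_c(S,Q)$, so it suffices to derive $\vdash_c\{\wp_c(S,Q)\}S\{Q\}$ by structural induction and close with \textsc{Conseq}, with the loop rule instantiated at $P=\wp_c(W,Q)$ and $R=\wp_c(S,P)$. The one variation is that you get the loop's unfolding equation directly, by peeling the first iteration off the operational series; the paper instead derives it from $\wp_c(W,Q)=\sup_n X_n$ and $\omega$-continuity of $\wp_c(S,\cdot)$. One correction: lifting the cost identity to mixed states (used in your \textsc{Seq} and loop steps) needs countable additivity of $\rho\mapsto\cC_c(S,\rho)$, which comes from finite conic linearity plus lower semicontinuity, not from decomposition-independence of the extended trace.
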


The proof of completeness follows the standard approach by introducing the weakest precondition; the detailed proof can be found in Supplementary Material.

\begin{definition}[Weakest Precondition (\texorpdfstring{$\wp_c$}{wp})]
\label{def:wp}
For any quantum program $S$ and post-condition $Q\in\Pred$, the weakest
precondition $\wp_c(S,Q)$ is the unique predicate in $\Pred$ satisfying:
\begin{enumerate}[(1)]
    \item \textbf{Validity:} $\models_c\{\wp_c(S,Q)\}S\{Q\}$.
    \item \textbf{Minimality:} For any $P\in\Pred$, if
    $\models_c\{P\}S\{Q\}$, then $\wp_c(S,Q)\sqsubseteq P$.
\end{enumerate}
\end{definition}

We can now state the quantum analogue of the probabilistic
weakest-preexpectation and expected-runtime calculi: \(\wp_c\) pulls back the
post-expectation and adds the cost accumulated by the program.

\begin{proposition}[Well-Definedness of $\wp_c$]
\label{thm:wp_well_defined}
For any quantum program $S$, post-expectation $Q\in\Pred$, and $c\ge0$,
$\wp_c(S,Q)$ is uniquely well-defined as $\wp_c(S,Q)=\sem{S}^{\dagger}(Q)+C_{c,S}$.

In addition, $\wp_c$ satisfies the following structure representation:
\[
\begin{alignedat}{2}
& \wp_c(\mathbf{skip},Q)
& = \quad
& Q+cI,
\\
& \wp_c(\mathbf{abort},Q)
& =\quad
& c\cdot\top,
\\
& \wp_c(q\coloneqq\ket{0},Q)
& =\quad
& cI+\sum_{k=0}^{\infty}
  \ket{k}_q\!\bra{0}\,Q\,\ket{0}_q\!\bra{k},
\\
& \wp_c(\seq q\coloneqq U[\seq q],Q)
& =\quad
& cI+U^\dagger Q U,
\\
& \wp_c(S_1;S_2,Q)
& =\quad
& \wp_c\bigl(S_1,\wp_c(S_2,Q)\bigr),
\\
& \wp_c\bigl(\mathbf{bind}(M[\seq q],x.S_x),Q\bigr)
& =\quad
& cI+\int_{\Omega_M}
  M_{\nu}^\dagger
  \wp_c(S_\nu,Q)
  M_{\nu}\,d\mu_M(\nu),
\\
& \wp_c\bigl(
    \mathbf{while}\ M[\seq q]=1\
    \mathbf{do}\ S\ \mathbf{od},Q
  \bigr)
& =\quad
& \sup_{n\ge 0}X_n.
\end{alignedat}
\]
where $X_0\coloneqq0$ and
$X_{n+1}\coloneqq M_0^\dagger Q M_0+cI+M_1^\dagger\wp_c(S,X_n)M_1$.
\end{proposition}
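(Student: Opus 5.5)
The plan is to take the closed form $W\triangleq\sem{S}^{\dagger}(Q)+C_{c,S}$ as the candidate and verify the two clauses of \Cref{def:wp} directly. First, $W\in\Pred(\cH)$: by \Cref{thm:unbounded_heisenberg_transformer}, $\sem{S}^\dagger(Q)\in\Pred(\cH)$, and by \Cref{thm:cost_well_defined}, $C_{c,S}\in\Pred(\cH)$. A pointwise sum of two positive, lower semicontinuous $[0,\infty]$-valued quadratic forms is again one, so $W$ is a predicate.

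Second, I will establish the extended-trace identity
\[
  \Tr(W\rho)=\Tr(Q\,\sem{S}(\rho))+\cC_c(S,\rho)
\]
for every $\rho\in\pardensity{\cH}$. The first summand is \Cref{eq:trace-dual-eq}. For the second summand, I need $\Tr(C_{c,S}\rho)=\cC_c(S,\rho)$ for mixed $\rho$, whereas \Cref{thm:cost_well_defined} only fixes $C_{c,S}$ on pure states. So I will prove, by structural induction on $S$, that $\rho\mapsto\cC_c(S,\rho)$ is countably additive and positively homogeneous on decompositions $\rho=\sum_k p_k\sP_{\u_k}$. In each clause this follows from the same property of the pieces. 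Sequencing uses linearity of $\sem{S_1}$ together with the induction hypothesis. Bind and while use Tonelli / monotone convergence to swap $\sum_k$ with $\int d\mu_M$ or $\sum_n$; this is legitimate because all terms lie in $[0,\infty]$ with the convention $0\cdot\infty=0$. Given the identity, validity holds with equality. For minimality, if $\models_c\{P\}S\{Q\}$, then specialising to pure states $\sP_{\u}$ gives $\QF_P[\u]\ge\QF_W[\u]$, i.e.\ $W\sqleq P$. Uniqueness follows from antisymmetry of the extended Löwner order, since forms are determined by their values on vectors.

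For the structural equations, it suffices to check that both sides have the same extended trace on every pure state, using the identity above as the characterisation of $\wp_c$.
\begin{itemize}
  \item Skip, unitary and init follow from the cost clauses and the Kraus formula of \Cref{thm:unbounded_heisenberg_transformer}.
  \item Abort: $\sem{\mathbf{abort}}^\dagger(Q)=0$, and the cost is $c\Tr(\rho)\cdot\infty$, which is exactly $c\cdot\top$.
  \item Seq: $\Tr(\wp_c(S_1,\wp_c(S_2,Q))\rho)=\Tr(\wp_c(S_2,Q)\sem{S_1}\rho)+\cC_c(S_1,\rho)$. Expanding once more gives $\Tr(Q\sem{S_2}\sem{S_1}\rho)+\cC_c(S_2,\sem{S_1}\rho)+\cC_c(S_1,\rho)$, which matches the sequencing clause of \Cref{def:cost_predicate}.
  \item Bind: first show that $\{\wp_c(S_\nu,Q)\}_\nu$ is $M$-admissible. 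This should come from the measurability strengthening in the proof of \Cref{thm:cost_well_defined}, combined with the argument used for $\wlp(B,Q)$. Then \Cref{prop:continuous_branch_trace_identity} and the bind clauses of the semantics and of the cost functional give the equation.
\end{itemize}

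The while loop is the step I expect to be the main obstacle, because $X_0=0$ is \emph{not} $\wp_c$ of the syntactic approximant $\mathbf{while}^{(0)}=\mathbf{abort}$, which would be $c\cdot\top$. The iterates therefore cannot be matched to $\wp_c(\mathbf{while}^{(n)},Q)$. Instead I will prove by induction on $n$ that, with $\rho_k$ as in \Cref{def:cost_predicate},
\[
  \Tr(X_n\rho)=\sum_{k<n}\Bigl(\Tr(Q\,M_0\rho_kM_0^\dagger)+c\Tr(\rho_k)+\cC_c(S,M_1\rho_kM_1^\dagger)\Bigr).
\]
The inductive step unfolds $X_{n+1}$ using the already-established identity for $\wp_c(S,\cdot)$, with the state $\rho$ replaced by $M_1\rho M_1^\dagger$, whose iterates are the shifted $\rho_{k+1}$. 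Next, $(X_n)$ is increasing by monotonicity of $\wp_c(S,\cdot)$, which is inherited from \Cref{thm:unbounded_heisenberg_transformer}. By the $\omega$-CPO property, $\Tr(\sup_nX_n\,\rho)$ is therefore the full series. It remains to identify $\sum_k M_0\rho_kM_0^\dagger$ with $\sem{\mathbf{while}}(\rho)$: unfold $\sem{\mathbf{while}^{(n)}}(\rho)=\sum_{k<n}M_0\rho_kM_0^\dagger$ and pass to the supremum, noting that extended traces against $Q$ commute with increasing limits in trace norm by the lower-semicontinuity/$\omega$-CPO characterisation. Combined with the while clause of $\cC_c$, this yields the identity for $\sup_nX_n$, and hence $\sup_nX_n=\wp_c(\mathbf{while},Q)$.
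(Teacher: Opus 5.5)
Your proposal is correct and follows essentially the paper's route: $\wp_c(S,Q)=\sem{S}^{\dagger}(Q)+C_{c,S}$ is obtained from validity and minimality checked on pure states, each structural clause is then verified by matching extended traces, and the loop is handled through the truncated yield-plus-cost series for $\Tr(X_n\rho)$ followed by a monotone supremum, exactly as in the paper's $Y_N+C_N$ argument. The one small difference is how you get $\Tr(C_{c,S}\rho)=\cC_c(S,\rho)$ on mixed states: you prove countable additivity by a direct structural induction, whereas the paper derives it from finite conic linearity plus lower semicontinuity of the cost on $\cT(\cH)_+$; for your version, state the induction hypothesis for arbitrary non-orthogonal countable rank-one decompositions, since sequencing and bind produce decompositions of that kind.
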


\section{Case Study: Quantum Symmetric Random Walk}
\label{sec:random_walk}

The one-dimensional symmetric random walk with an absorbing boundary usually serves as a
canonical example across probabilistic programming and quantitative verification. 
It is also one of the simplest models where qualitative termination does not
imply a finite quantitative bound: recurrence says that the absorbing state is reached with
probability one, whereas null recurrence says that the first hitting time has
infinite expectation. For probabilistic programs, this example has become a
standard stress test for expectation-transformer calculi: an almost-sure
termination proof is not enough to establish positive almost-sure termination
or any finite expected-runtime bound~\cite{kaminski2018weakest}.

Quantum walks provide a quantum version of this benchmark, and
are an important source of quantum algorithms because their
dynamics can exploit interference~\cite{ABN+01,kempe2003quantum}.  They have also
served as demanding examples for quantum expected-runtime calculi: the work \cite{liu2025quantum} computes expected-runtime observables for finite-cycle quantum walks and solve an open runtime-computation problem for arbitrary coins and initial states.

Our case study stresses a different axis of the same benchmark.  We deliberately
reset the coin at every iteration, so the position marginal is exactly the
classical symmetric random walk on the integer lattice.  Thus the example is
not meant to demonstrate interference or a quantum speedup.  Its role is proof-theoretic: by encoding a null-recurrent process
as an infinite-dimensional quantum program, we obtain a small but sharp
separation between qualitative termination
and quantitative cost inside one weakest-precondition framework. 

This separation is invisible in the finite-dimensional runtime theory. In the
finite-dimensional setting of existing quantum expected-runtime calculi, almost
sure termination and finite expected runtime coincide under the usual
non-singular one-step cost assumption~\cite{liu2025quantum}.  Our integer-lattice walk shows that this coincidence is not preserved by the
general infinite-dimensional case; it is a finite-dimensional phenomenon. 
Thus the case study exercises exactly the feature introduced in this paper: the
same positive-form assertion domain accommodates both the bounded weakest-
precondition calculation used to establish almost-sure termination and the
unbounded cost precondition whose value at the initialized state is \(+\infty\).
The limiting argument is part of the weakest-precondition calculation, rather
than a separate truncation analysis outside the logic.

\paragraph{Program.}
The program has a coin qubit $q$ with Hilbert space $\cH_q=\mathbb C^2$ and a walker register $w$ with Hilbert space $\cH_w\triangleq\ell^2(\mathbb Z)$ and basis $\{\ket{k}:k\in\mathbb Z\}$.  Let $U_{\mathrm{sh}}\ket{k}\triangleq\ket{k+1}$ be the bilateral shift.  The controlled walk unitary is
\[
U_{\mathrm{walk}}\ket{0,k}\triangleq\ket{0,k+1},\qquad
U_{\mathrm{walk}}\ket{1,k}\triangleq\ket{1,k-1}.
\]
The loop guard tests whether the walker has hit the origin:
\[
M_0\triangleq I_q\otimes\ket{0}\!\bra{0},\qquad
M_1\triangleq I_q\otimes(I_w-\ket{0}\!\bra{0}).
\]
Outcome $0$ exits and outcome $1$ continues.  With unit cost $c=1$, the program is
\[
\begin{aligned}
S\equiv w&:=\ket{1};\ W,\qquad\text{where}\\
W\equiv \mathbf{while}\ M[w]&=1\ \mathbf{do}\\
        q&:=\ket{0};\\
        q&:=Hq;\\\
        q,w&:=U_{\mathrm{walk}}[q,w]\ \mathbf{od}.
\end{aligned}
\]
The prefix $w:=\ket{1}$ is trace preserving and has finite cost, so the relevant loop-entry state is
\[
\rho_1=\ket{0}_q\!\bra{0}\otimes\ket{1}_w\!\bra{1}.
\]

\begin{lemma}[Body transformer]
\label{lem:rw_body_transformer}
Let $B\equiv q:=\ket{0};\ q:=Hq;\ q,w:=U_{\mathrm{walk}}[q,w]$.  For every positive predicate of the form $X=I_q\otimes X_w$,
\[
\wp_1(B,X)=I_q\otimes\bigl(3I_w+\mathcal E^*(X_w)\bigr),\,\wp_0(B,X)=I_q\otimes\mathcal E^*(X_w)
\]
where
\(
\mathcal E^*(Y)=\frac12U_{\mathrm{sh}}^\dagger YU_{\mathrm{sh}}+\frac12U_{\mathrm{sh}}YU_{\mathrm{sh}}^\dagger
\).
\end{lemma}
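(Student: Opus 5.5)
The plan is to compute $\wp_c(B,X)$ backwards through the three commands of $B$. I will use the sequencing clause $\wp_c(S_1;S_2,Q)=\wp_c(S_1,\wp_c(S_2,Q))$ together with the \textsc{Unitary} and \textsc{Init} clauses of \Cref{thm:wp_well_defined}. Each atomic command contributes exactly one $cI$, so the total cost is $3c$. This gives $3I$ when $c=1$ and $0$ when $c=0$, and both formulas come out of the same computation. All pullbacks are read at the quadratic-form level, $\QF_{K^\dagger PK}[\u]=\QF_P[K\u]$. This lets the argument work for unbounded $X_w$ without ever multiplying unbounded operators.

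\textbf{Step 1 (walk unitary).} The last command gives $\wp_c(U_{\mathrm{walk}},X)=cI+U_{\mathrm{walk}}^\dagger(I_q\otimes X_w)U_{\mathrm{walk}}$. Write $U_{\mathrm{walk}}=\ket0\!\bra0\otimes U_{\mathrm{sh}}+\ket1\!\bra1\otimes U_{\mathrm{sh}}^\dagger$. For $\u=\ket0\otimes \u_0+\ket1\otimes \u_1$ we get $U_{\mathrm{walk}}\u=\ket0\otimes U_{\mathrm{sh}}\u_0+\ket1\otimes U_{\mathrm{sh}}^\dagger \u_1$. Because $\ket0$ and $\ket1$ are orthogonal, $\QF_{I_q\otimes X_w}$ splits as a sum over the coin components. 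This is the one point that needs justification when $X_w$ is unbounded. I would obtain it from the bounded-approximation property, where additivity across orthogonal coin blocks holds for bounded truncations $X_w^{(n)}\uparrow X_w$, and then pass to the limit using the $\omega$-CPO characterization by suprema. The result is
\[
\QF[\u]=\QF_{X_w}[U_{\mathrm{sh}}\u_0]+\QF_{X_w}[U_{\mathrm{sh}}^\dagger\u_1],
\]
that is, the block-diagonal predicate $\ket0\!\bra0\otimes U_{\mathrm{sh}}^\dagger X_wU_{\mathrm{sh}}+\ket1\!\bra1\otimes U_{\mathrm{sh}}X_wU_{\mathrm{sh}}^\dagger$.

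\textbf{Step 2 (Hadamard and reset).} Next, pull this predicate back through $H$ on $q$ and add another $cI$. The form is then evaluated on $H\u$, whose coin components are $(\u_0\pm\u_1)/\sqrt2$. I will not simplify this expression in general. The reset that follows makes it unnecessary, since it only ever feeds the state $\ket0$ into $H$. The \textsc{Init} clause gives $\QF_{\wp}[\u]=c\|\u\|^2+\sum_k\QF_{Y}[\ket0\otimes \u_k]$, where $\u=\sum_k\ket k\otimes\u_k$ and $Y$ is the predicate after the reset. With $\u'=\ket0\otimes \v$ we have $H\u'=\ket+\otimes\v$, whose two coin components both equal $\v/\sqrt2$. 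Step 1 then yields
\[
\tfrac12\QF_{X_w}[U_{\mathrm{sh}}\v]+\tfrac12\QF_{X_w}[U_{\mathrm{sh}}^\dagger\v]=\QF_{\mathcal E^*(X_w)}[\v],
\]
where $\mathcal E^*(X_w)$ is the form sum, a closed positive form because it is a sum of closed forms. Summing over $k$ with the $\u_k$ gives $\QF_{I_q\otimes \mathcal E^*(X_w)}[\u]$, again by the orthogonal-block additivity from Step 1.

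\textbf{Obstacle and assembly.} Adding the three costs gives $I_q\otimes(3cI_w+\mathcal E^*(X_w))$, which specializes to the two claimed formulas at $c=1$ and $c=0$. The main obstacle is the identity $\QF_{I_q\otimes X_w}[\sum_k\ket k\otimes\u_k]=\sum_k\QF_{X_w}[\u_k]$ for unbounded, possibly $+\infty$-valued, forms. Everything else is bookkeeping with the clauses of \Cref{thm:wp_well_defined}. I would settle this identity first, via the bounded-approximation and monotone-supremum argument above, and with it the form identities become equalities in $[0,\infty]$ throughout.
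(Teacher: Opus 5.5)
Your argument is correct. It uses the same backward decomposition as the paper but handles unboundedness differently.

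The paper's proof first assumes $X_w$ bounded and works with block matrices in the coin basis:
\begin{itemize}
\item it writes the pulled-back predicate as $\ket{0}\!\bra{0}\otimes B_0+\ket{1}\!\bra{1}\otimes B_1$;
\item it conjugates by $H$, producing diagonal blocks $\tfrac12(B_0+B_1)$ and off-diagonal blocks $\tfrac12(B_0-B_1)$;
\item it uses the reset Kraus operators $\ket{0}_q\!\bra{i}\otimes I_w$ to keep only the $(0,0)$ block;
\item only then does it pass to unbounded $X_w$, by spectral truncation together with the $\omega$-continuity of $\wp$ and of $\mathcal E^*$.
\end{itemize}

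You instead work with $[0,\infty]$-valued forms evaluated on vectors from the start. You evaluate the Hadamard-stage predicate only on the vectors $\ket{0}\otimes\v$ that the reset produces. As a result, the coherence blocks $B_0-B_1$ never appear; for unbounded $X_w$ these are not positive predicates and are undefined off the common form domain. Your only limiting argument is the block-additivity identity $\QF_{I_q\otimes X_w}[\sum_k\ket{k}\otimes\u_k]=\sum_k\QF_{X_w}[\u_k]$, rather than a supremum commuted through the whole three-command transformer. Carrying a general $c$ also gives the $\wp_1$ and $\wp_0$ formulas in one pass. The paper's route, in exchange, reuses the familiar bounded Heisenberg calculus and shows explicitly the coherences that the reset discards.

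Two small remarks. First, block additivity is really a statement about how $I_q\otimes X_w$ is defined for an unbounded relation, which the paper leaves implicit. With the spectral definition, the $n$-th truncation of $I_q\otimes X_w$ is $I_q\otimes X_w^{(n)}$, which is exactly what your limit argument needs; alternatively you can simply take the block sum as the definition. Second, your cost count should state that the pullbacks of $cI$ through $H$ and through the reset are again $cI$. Concretely, $c\|\u\|^2+\sum_k 2c\|\u_k\|^2=3c\|\u\|^2$ because $\sum_k\|\u_k\|^2=\|\u\|^2$.
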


This is a direct $\wp_c$ calculus in \Cref{subsec:weakest_precondition} for the loop body.  We now use the same loop equation
in two different ways.  Its bounded zero-cost instance has the same defining
equation as the bounded quantum weakest-precondition calculus of \cite{Ying11}, and hence can be read through that calculus
to establish almost-sure termination.  Its 1-cost instance, on the other
hand, is interpreted in our positive-form domain and computes the expected
operational cost.

Concretely, for bounded predicates $0\sqleq Q\sqleq I$, the defining equations of
$\wp_0(S,Q)$ are the same as the equations used in bounded
weakest-precondition calculus.  
Thus the equality
\[
    \wp_0(W,I_q\otimes I_w)=I_q\otimes I_w
\]
below should be understood via the bounded weakest-precondition calculus, rather than as a termination consequence of the $c=0$ fragment
of our cost proof system.
The new part of the example is the 1-cost calculation: $\wp_1(W,0)$ is evaluated in
the larger positive-form domain, where it may be unbounded and can witness an infinite expected cost.

\begin{theorem}[Almost-sure termination without finite expected cost]
\label{thm:rw_ast_infinite_cost}
The loop $W$ is almost surely terminating:
\[
    \wp_0(W,I_q\otimes I_w)=I_q\otimes I_w .
\]
However, in the $\wp_1$ calculus,
\[
    \Tr\bigl(\wp_1(W,0)\rho_1\bigr)=+\infty .
\]
Consequently, the full program $S$ terminates almost surely from its initialized
state, but has infinite expected operational cost.
\end{theorem}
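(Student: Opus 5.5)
Both claims reduce to scalar recurrences on the position register. By \Cref{lem:rw_body_transformer}, every loop iterate produced by \Cref{thm:wp_well_defined} has the form $I_q\otimes Y$ with $Y$ diagonal in the basis $\{\ket k\}$, so we work with diagonal functions $y:\mathbb Z\to[0,\infty]$. The shift conjugations act as translations: $\mathcal E^*(Y)$ is diagonal with entries $\tfrac12(y(k+1)+y(k-1))$. Also $M_0^\dagger Q M_0$ and $M_1^\dagger R M_1$ restrict a diagonal predicate to $k=0$ and to $k\neq 0$, respectively. Hence for the loop iterates $X_{n+1}=M_0^\dagger QM_0+cI+M_1^\dagger\wp_c(B,X_n)M_1$ we get, by induction on $n$, that $X_n=I_q\otimes\mathrm{diag}(x_n)$, where
\[
x_{n+1}(0)=q(0)+c,\qquad x_{n+1}(k)=c+3c+\tfrac12\bigl(x_n(k+1)+x_n(k-1)\bigr)\quad(k\neq0),
\]
and where $q$ is the diagonal of $Q$ and the body cost $3c$ comes from the lemma. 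The extended trace of a diagonal form against $\ket{0}\!\bra 0\otimes\ket1\!\bra1$ is simply $x(1)$, and suprema of such increasing diagonal chains are computed entrywise, which is the $\omega$-CPO characterization.

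For the first claim, take $c=0$ and $Q=I$. Then $x_n(k)$ is exactly the probability that the classical simple symmetric random walk started at $k$ hits $0$ within $n$ steps; I would verify this by matching the recurrence with first-step analysis. Recurrence of the one-dimensional simple random walk then gives $\sup_n x_n(k)=1$ for every $k$, so $\wp_0(W,I)=I$. As the paper instructs, this equality can equally be read inside the bounded calculus of \cite{Ying11}, since every predicate involved is bounded by $I$.

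For the second claim, take $c=1$ and $Q=0$. The recurrence shows that $x_n(k)$ is at least the truncated expected hitting time $\mathbb E_k[\min(T,n)]$, where $T$ is the hitting time of $0$, since each iteration contributes at least one unit. By monotone convergence, $\sup_n x_n(1)\ge \mathbb E_1[T]$. Null recurrence gives $\mathbb E_1[T]=+\infty$.

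The main obstacle is to prove this infinitude self-containedly inside the calculus rather than cite it. My first attempt is a lower-bound potential argument on the fixed point $x=\sup_n x_n$, which by $\omega$-continuity satisfies $x(k)=4+\tfrac12(x(k+1)+x(k-1))$ for $k\neq0$. Suppose $x(1)<\infty$. Then the equation forces $x(2)<\infty$, and inductively $x(k)<\infty$ for all $k\ge1$. The second differences then satisfy $\Delta^2x(k)=-8$, so $x$ is a quadratic in $k$ with negative leading coefficient, which eventually becomes negative. This contradicts $x\ge0$.

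Finally, the statement about $S$ follows from \Cref{thm:wp_well_defined}. The prefix $w:=\ket1$ is trace preserving, so $\wp_c(S,\cdot)=\wp_c(w:=\ket1,\wp_c(W,\cdot))$, which reduces both claims to evaluation at $\rho_1$, with the finite prefix cost added in the cost case.
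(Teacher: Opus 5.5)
Your proof is correct. The cost half coincides with the paper's argument: finiteness of $x(1)$ propagates along $k\ge 1$, the constant second difference $-8$ makes $x$ a concave quadratic, and nonnegativity then fails. Your preliminary bound $x_n(k)\ge \mathbb{E}_k[\min(T,n)]$ combined with null recurrence is also valid, but it is an external shortcut the paper does not use.

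The genuine difference is in the almost-sure-termination half. You identify the Kleene iterates with finite-horizon absorption probabilities of the classical walk and then import recurrence of the one-dimensional simple random walk. Precisely, $x_1=\mathbf{1}_{\{0\}}$, so $x_{n+1}(k)=\Pr_k[T\le n]$; your indexing is off by one, which is harmless under the supremum.

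The paper instead stays inside the calculus:
\begin{itemize}
\item $H=\sup_n X_n$ is a fixed point of the unrolling map, so its diagonal satisfies $h_0=1$ and $h_k=\tfrac12(h_{k+1}+h_{k-1})$ for $k\neq 0$.
\item The zero second difference makes $h$ affine on each half-line: $h_k=1+k(h_1-1)$ for $k\ge 0$.
\item The two-sided bound $0\le h_k\le 1$ forces $h_1=1$, and symmetrically for $k\le 0$. The upper bound holds because every approximant lies below $I$, the semantics being trace-nonincreasing.
\end{itemize}
This is the same second-difference device you already use for the cost, with $-8$ replaced by $0$ and nonnegativity supplemented by the upper bound $1$. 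In effect it reproves P\'olya recurrence rather than citing it.

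Your route makes the probabilistic meaning of the iterates explicit. The paper's route keeps both halves uniform and self-contained, which is exactly the case study's point that the limiting argument lives inside the weakest-precondition calculation.
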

\begin{proof}
First consider expected cost.  Let $T\triangleq\wp_1(W,0)$.  By the while rule, $T$ is the supremum of the finite-unrolling approximants
\[
X_0=0,\qquad
X_{n+1}=I+M_1^\dagger\wp_1(B,X_n)M_1 .
\]
The guard and Lemma~\ref{lem:rw_body_transformer} preserve coin-independent diagonal predicates, hence
\[
T=I_q\otimes D,
\qquad
D=\sum_{k\in\mathbb Z}t_k\ket{k}\!\bra{k}
\]
as a closed positive diagonal form, with $t_k\in[0,+\infty]$.  Taking diagonal matrix elements in the fixed-point equation gives
\[
t_0=1,
\qquad
t_k=4+\frac12(t_{k+1}+t_{k-1})\quad(k\ne0).
\]
Here $1$ is the guard cost and $3$ is the body cost.  If $t_1<\infty$, then the recurrence forces every $t_k$ with $k\ge0$ to be finite.  Writing $\Delta_k=t_k-t_{k-1}$, the equation becomes
\[
\Delta_{k+1}=\Delta_k-8,
\qquad
\text{so}\qquad
 t_n=t_0+n\Delta_1-4n(n-1).
\]
The right-hand side is eventually negative, contradicting $t_n\ge0$.  Therefore $t_1=+\infty$, and
\[
\Tr(T\rho_1)=t_1=+\infty .
\]
The initialization before the loop contributes only finite cost, so it cannot remove this divergence.

It remains to show that the same loop is nevertheless almost surely terminating.  Let $H\triangleq\wp_0(W,I_q\otimes I_w)$ be the termination-probability predicate.  The zero-cost approximants again have coin-independent diagonal form
\[
H=I_q\otimes\sum_{k\in\mathbb Z}h_k\ket{k}\!\bra{k},
\qquad 0\le h_k\le1 .
\]
The zero-cost while equation and Lemma~\ref{lem:rw_body_transformer} yield the bounded harmonic system
\[
h_0=1,
\qquad
h_k=\frac12(h_{k+1}+h_{k-1})\quad(k\ne0).
\]
On the nonnegative half-line, $h_k=1+k(h_1-h_0)$; boundedness in $[0,1]$ forces $h_1=h_0=1$.  The same argument on the negative half-line gives $h_k=1$ for all $k\le0$.  Thus $h_k=1$ for every $k$, so $H=I_q\otimes I_w$ and $W$ is AST.  Since $w:=\ket{1}$ is trace preserving, $S$ is also AST from its initialized state.
\end{proof}

Therefore, this example shows that our $\wp_c$ calculus can expose both
the bounded termination argument and the unbounded cost divergence, without
treating the latter as an external probabilistic calculation.

\section{Case Study: One-Round GKP Error Correction}
\label{sec:gkp_verification}


The Gottesman--Kitaev--Preskill (GKP) code is one of the central
bosonic encodings for fault-tolerant quantum information processing.  It
stores a logical qubit in the phase space of a single harmonic oscillator, so
that small displacement errors in the position and momentum quadratures can be
detected and corrected before they become logical Pauli errors
\cite{gottesman2001encoding,terhal2020towards}.  This hardware-efficient use of
an oscillator is a major reason why GKP states have become a leading candidate
for bosonic quantum error correction: GKP encodings have been realized in
trapped-ion motion and superconducting microwave cavities, and they are used as
basic resources in proposed photonic fault-tolerant architectures
\cite{fluhmann2019encoding,campagne2020quantum,Bourassa2021blueprintscalable}.
More broadly, GKP codes sit at the interface between discrete-variable fault
tolerance and continuous-variable hardware, where Gaussian couplings, homodyne
measurements, and displacement feedback are native operations
\cite{weedbrook2012gaussian,terhal2020towards}.

At the level of program structure, one round of GKP correction is an ordinary continuous-variable feedback program.  The data oscillator is coupled
to an ancilla, a homodyne measurement returns a real-valued syndrome, and the
program applies a displacement determined by that real outcome
\cite{gottesman2001encoding,terhal2020towards}.  In the local correction
regime, the purpose of this step is to reduce residual displacement noise:
after feedback, the relevant quadrature variance should be controlled by the
finite variance of the ancilla and by the measurement noise.  Thus
the core quantitative assertion is not a bounded yes/no property, but a
second-moment bound.  Under the usual zero-mean local-noise convention, this
second moment is exactly the variance of the residual displacement.

This is precisely the kind of property that bounded-effect QHL cannot express
directly.  The postcondition asks for a bound on the second moment of the data
oscillator's position, represented by a physically meaningful unbounded
observable, and the correction command is a continuous bind whose branches are
indexed by the real-valued homodyne outcome.

A finite-dimensional or bounded-effect treatment can access this specification
only after an additional encoding step: one either imposes an energy or
phase-space cutoff, or replaces the unbounded observable by a bounded
truncation.  Such encodings are useful for numerical and finite-dimensional
analyses, but they change the verification goal into a family of
cutoff-dependent assertions; any statement about the physical oscillator then
requires an additional limiting argument.  They also hide part of the program
structure, since syndrome extraction is naturally a real-outcome feedback bind
rather than a finite branch.  Our logic avoids this detour.  It verifies the
one-round local-regime variance bound directly in the continuous-variable
oscillator model: the real-valued syndrome measurement is represented as an
admissible continuous bind, and the variance postcondition is represented as a
closed positive form.

We use this local one-round statement as a focused benchmark: it isolates the
two verification challenges that this case study is meant to exercise, namely
real-valued feedback through continuous bind and a physically meaningful
unbounded variance predicate.  The full modular decoder, Pauli-frame update,
repeated correction rounds, and threshold analysis belong to the broader GKP
fault-tolerance stack; including them here would obscure rather than clarify
the predicate-transformer principles addressed by the proof system.

\paragraph*{Model.}
There are two modes. The data mode has Hilbert space $\cH_{\mathsf{s}}=L^2(\bR)$ and the ancilla mode has Hilbert space $\cH_{\mathsf{a}}=L^2(\bR)$, so the global space is
\[
    \cH \triangleq \cH_{\mathsf{s}}\otimes \cH_{\mathsf{a}} .
\]
Let $\hat q_{\mathsf{s}}$ and $\hat q_{\mathsf{a}}$ be the position operators on the two modes, extended cylindrically to $\cH$. The postcondition is the closed positive form induced by the system second moment
\[
    Q_{\mathsf{GKP}} \triangleq \hat q_{\mathsf{s}}^{\,2}\otimes I_{\mathsf{a}} .
\]
Under the usual zero-mean and local-noise assumption of GKP state, which is conventional in physics, this second moment is the position variance.

The ancilla is initialized to a normalized finite-energy approximate GKP state $\ket{\phi_{\Delta,\kappa}}$~\cite{matsuura2020equivalence,terhal2020towards}. We only use the following two moment facts:
\[
    \langle \phi_{\Delta,\kappa},\hat q_{\mathsf{a}}\phi_{\Delta,\kappa}\rangle=0,
    \qquad
    \langle \phi_{\Delta,\kappa},\hat q_{\mathsf{a}}^{\,2}\phi_{\Delta,\kappa}\rangle
    = \sigma_{\mathsf{a}}^2 <\infty .
\]
The coupling is the continuous-variable SUM gate
\[
    U_{\mathsf{SUM}} \triangleq\exp(-i\hat q_{\mathsf{s}}\otimes \hat p_{\mathsf{a}}),
\]
which shifts the ancilla position by the system position in the Heisenberg picture. The feedback displacement on the system is
\[
    D_{\mathsf{s}}(v)\triangleq \exp(-iv\hat p_{\mathsf{s}}),
    \qquad
    D_{\mathsf{s}}(v)^\dagger \hat q_{\mathsf{s}}D_{\mathsf{s}}(v)=\hat q_{\mathsf{s}}+vI_{\mathsf{s}} .
\]

The homodyne readout of the ancilla is represented by the Gaussian-noisy position instrument
\[
    M_x \triangleq I_{\mathsf{s}}\otimes g_{\sigma_M}(xI_{\mathsf{a}}-\hat q_{\mathsf{a}})^{1/2},
    \qquad
    g_{\sigma_M}(t)\triangleq\frac{1}{\sqrt{2\pi\sigma_M^2}}\exp({-t^2/(2\sigma_M^2)}) .
\]
By the Borel functional calculus, $x\mapsto M_x$ is strongly measurable and
$\int_{\bR}M_x^\dagger M_x\,dx=I$ in the quadratic-form sense. Hence it is an admissible continuous measurement family for the bind rule.

\paragraph*{Program.}
In the local GKP correction regime, the modular residual agrees with the linear residual: no lattice wrap-around occurs, and the feedback function $r(x)$ used in the full GKP decoder satisfies $r(x)=x$ on the branch being verified. The one-round program is therefore the linearized GKP correction step
\[
\begin{array}{rcl}
S_{\mathsf{GKP}} &\equiv& A_{\mathsf{GKP}};\ U_{\mathsf{SUM}};\ B_{\mathsf{hom}},\\[0.3ex]
\text{where~~~~}A_{\mathsf{GKP}} &\equiv& \mathsf{a}:=\ket{\phi_{\Delta,\kappa}},\\[0.3ex]
B_{\mathsf{hom}} &\equiv&
    \bind(M[\mathsf{a}],x.\,D_{\mathsf{s}}(-x)[\mathsf{s}]).
\end{array}
\]
The command $A_{\mathsf{GKP}}$ is an initialization command with Kraus operators
$E_k=I_{\mathsf{s}}\otimes\ket{\phi_{\Delta,\kappa}}\bra{k}$ for any fixed orthonormal basis $\{\ket{k}\}_k$ of $\cH_{\mathsf{a}}$.

\paragraph*{Verification goal.}
We use the zero-cost fragment, since this case study concerns a physical post-expectation rather than runtime. The desired Hoare triple is
\begin{equation}
\label{eq:gkp-target-triple}
    \vdash_0
    \left\{(\sigma_{\mathsf{a}}^2+\sigma_M^2)I\right\}
    \; S_{\mathsf{GKP}} \;
    \left\{Q_{\mathsf{GKP}}\right\} .
\end{equation}
Equivalently, for every partial density operator $\rho$,
\[
    \Tr(Q_{\mathsf{GKP}}\,\sem{S_{\mathsf{GKP}}}(\rho))
    \le (\sigma_{\mathsf{a}}^2+\sigma_M^2)\Tr(\rho).
\]
The physical meaning is that, for any finite-energy inputs, the output second moment (i.e., variance) is therefore bounded by the finite ancilla variance plus the measurement-noise variance.

The calculation needed for the proof is isolated in the following lemma. All equalities are equalities of closed positive quadratic forms.

\begin{lemma}[One-round GKP form calculations]
\label{lem:gkp-form-calculations}
Let
\[
    P_x \triangleq (\hat q_{\mathsf{s}}-xI_{\mathsf{s}})^2\otimes I_{\mathsf{a}},
    \quad
    R \triangleq (\hat q_{\mathsf{s}}\otimes I_{\mathsf{a}}-I_{\mathsf{s}}\otimes\hat q_{\mathsf{a}})^2+\sigma_M^2 I,\quad
    T \triangleq I_{\mathsf{s}}\otimes\hat q_{\mathsf{a}}^{\,2}+\sigma_M^2 I,
    \quad
    P_0 \triangleq (\sigma_{\mathsf{a}}^2+\sigma_M^2)I .
\]
Then:
\begin{align}
    P_x&=(D_{\mathsf{s}}(-x)\otimes I_{\mathsf{a}})^\dagger
    Q_{\mathsf{GKP}}
    (D_{\mathsf{s}}(-x)\otimes I_{\mathsf{a}})
    , \label{eq:gkp-feedback-calc}\\
    R&=\int_{\bR}M_x^\dagger P_xM_x\,dx
    , \label{eq:gkp-bind-calc}\\
    T&=U_{\mathsf{SUM}}^\dagger R U_{\mathsf{SUM}}
    , \label{eq:gkp-sum-calc}\\
    P_0&=\sum_k E_k^\dagger T E_k.
    \label{eq:gkp-init-calc}
\end{align}
\end{lemma}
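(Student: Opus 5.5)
Each of the four identities is an equality of closed positive quadratic forms, so the plan is to compare form values $\QF[\u]$ vector by vector. I would work in the Schr\"odinger representation $\cH=L^2(\bR^2)$ with coordinates $(s,a)$, where $\hat q_{\mathsf{s}},\hat q_{\mathsf{a}}$ act as multiplication by $s,a$. In this picture:
\begin{itemize}
\item $D_{\mathsf{s}}(v)$ is the translation $(D_{\mathsf{s}}(v)\u)(s,a)=\u(s-v,a)$;
\item $U_{\mathsf{SUM}}$ is the measure-preserving shear $(U_{\mathsf{SUM}}\u)(s,a)=\u(s,a-s)$;
\item $M_x$ is multiplication by $g_{\sigma_M}(x-a)^{1/2}$.
\end{itemize}
The convention for a positive multiplication operator $f(s,a)\ge0$ is $\QF_f[\u]=\int f|\u|^2\in[0,\infty]$. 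This is its closed form, and the value $+\infty$ encodes the domain constraint. With this convention every step reduces to a change of variables or an application of Tonelli's theorem for nonnegative integrands, so no domain questions about unbounded operators arise.

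\textbf{Steps \eqref{eq:gkp-feedback-calc} and \eqref{eq:gkp-sum-calc}.}
\begin{itemize}
\item For \eqref{eq:gkp-feedback-calc}, the notation $K^\dagger P K$ means $\QF_P[K\u]$. So I compute $\int s^2|\u(s+x,a)|^2\,ds\,da$ and substitute $s\mapsto s-x$. This gives $\int (s-x)^2|\u|^2=\QF_{P_x}[\u]$.
\item For \eqref{eq:gkp-sum-calc}, $\QF_R[U_{\mathsf{SUM}}\u]=\int\bigl((s-a)^2+\sigma_M^2\bigr)|\u(s,a-s)|^2$. Substituting $a\mapsto a+s$ (Jacobian $1$) gives $\int (a^2+\sigma_M^2)|\u|^2=\QF_T[\u]$.
\end{itemize}
The main care needed is the sign convention of $U_{\mathsf{SUM}}=\exp(-i\hat q_{\mathsf{s}}\otimes\hat p_{\mathsf{a}})$. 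I would fix it so that $U^\dagger\hat q_{\mathsf{a}}U=\hat q_{\mathsf{a}}+\hat q_{\mathsf{s}}$, consistent with the stated Heisenberg shift, and then check that the difference $\hat q_{\mathsf{s}}-\hat q_{\mathsf{a}}$ in $R$ pulls back to $\mp\hat q_{\mathsf{a}}$. The sign is irrelevant after squaring.

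\textbf{Step \eqref{eq:gkp-bind-calc}.} This is the main step. By Definition~\ref{def:continuous_branch_form_transformer} I must first check that $x\mapsto\QF_{P_x}[M_x\u]=\int (s-x)^2 g_{\sigma_M}(x-a)|\u(s,a)|^2\,ds\,da$ is measurable. This follows because the integrand is jointly measurable and nonnegative, so Tonelli's theorem gives measurability of the partial integral. Hence the family is $M$-admissible, and Proposition~\ref{prop:continuous_branch_trace_identity} applies. Integrating over $x$ and exchanging the order of integration by Tonelli, the inner integral is
\[
\int_{\bR}(s-x)^2g_{\sigma_M}(x-a)\,dx=(s-a)^2+\sigma_M^2 ,
\]
which is the Gaussian mean and variance identity. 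The $x$-integral of the cross term vanishes because the Gaussian has mean $0$; this is legitimate since the full integrand is nonnegative, so only the final closed-form value matters. Therefore $\int\QF_{P_x}[M_x\u]\,dx=\QF_R[\u]$, and both sides take the value $+\infty$ at the same vectors $\u$.

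\textbf{Step \eqref{eq:gkp-init-calc}.} Write $E_k\u=\u_k\otimes\phi$ with $\u_k=(I\otimes\bra k)\u$. Then
\[
\QF_T[E_k\u]=\|\u_k\|^2\bigl(\langle\phi,\hat q_{\mathsf{a}}^2\phi\rangle+\sigma_M^2\bigr)=(\sigma_{\mathsf{a}}^2+\sigma_M^2)\|\u_k\|^2 .
\]
Summing over $k$ with Parseval, $\sum_k\|\u_k\|^2=\|\u\|^2$, gives $\QF_{P_0}[\u]$. The first-moment hypothesis $\langle\phi,\hat q_{\mathsf{a}}\phi\rangle=0$ is not needed at this stage; it would enter only in a variant of the argument that kept $R$ unexpanded.

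Among these steps, I expect the bind identity \eqref{eq:gkp-bind-calc} to be the main obstacle. It is the only place where admissibility and the order of integration must be handled. If measurability of the branch values gives trouble, I would first approximate $P_x$ from below by bounded truncations $\min\bigl((s-x)^2,N\bigr)$ and then pass to the limit using the monotone convergence theorem together with the $\omega$-CPO property.
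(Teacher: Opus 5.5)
Your argument is correct, and it reaches the four identities by a somewhat different path from the paper's supplementary derivation. The paper argues at the operator level. It gets \eqref{eq:gkp-feedback-calc} from the BCH/Weyl translation relation. For the SUM step it uses the direct-integral decomposition $U_{\mathsf{SUM}}=\int dE_{\mathsf s}(y)\otimes e^{-iy\hat p_{\mathsf a}}$ (Lemma~\ref{lem:interaction_transform}), followed by an exchange of the outcome integral with the joint spectral integral of $\hat q_{\mathsf s}\otimes I$ and $I\otimes\hat q_{\mathsf a}$. In particular, it pulls the \emph{unevaluated} bind integrand through $U_{\mathsf{SUM}}$ and integrates over $x$ only afterwards, so $R$ is never isolated. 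Its initialization step is checked only on product vectors $\u_{\mathsf s}\otimes\u_{\mathsf a}$ and then extended.

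You instead work in $L^2(\bR^2)$, where every predicate is a multiplication form and $D_{\mathsf s}$, $U_{\mathsf{SUM}}$ are measure-preserving changes of variables. This gives you several things the paper's route does not:
\begin{itemize}
\item a proof of \eqref{eq:gkp-bind-calc} exactly as stated;
\item an explicit check of the $M$-admissibility of $(P_x)_x$, which the main-text Hoare derivation attributes to this lemma;
\item exchanges of integrals that are all instances of Tonelli, so the value $+\infty$ needs no separate treatment;
\item \eqref{eq:gkp-init-calc} for arbitrary $\u$ via Parseval, not only on product states.
\end{itemize}
The paper's formulation of the SUM step for arbitrary Borel functions of $\hat q_{\mathsf a}$ is not stated separately in your version. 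Your shear argument gives the same for any nonnegative $f(s,a)$.

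Two points of wording should be tightened.
\begin{itemize}
\item The direction of the shear is not a convention you may fix, and it is not irrelevant. With $\hat p_{\mathsf a}=-i\,d/da$ it is forced to be $(U_{\mathsf{SUM}}\u)(s,a)=\u(s,a-s)$, as you wrote. The opposite shear would turn $R$ into $(2\hat q_{\mathsf s}-\hat q_{\mathsf a})^2+\sigma_M^2 I$ rather than $T$. Only the sign in $\mp\hat q_{\mathsf a}$ is removed by squaring.
\item Splitting $\int_{\bR}(s-x)^2g_{\sigma_M}(x-a)\,dx$ into Gaussian moments is justified because each term is absolutely integrable in $x$ for fixed $(s,a)$. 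Nonnegativity of the full integrand is what licenses the Tonelli exchange, not the cancellation of the cross term.
\end{itemize}
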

\begin{proof}
Deferred to the supplementary material. The proof uses the Weyl translation relation, the joint spectral calculus for $\hat q_{\mathsf{s}}\otimes I$ and $I\otimes\hat q_{\mathsf{a}}$, and the first two Gaussian moments of the noisy homodyne kernel.
\end{proof}

\paragraph*{Hoare derivation.}
We now derive \eqref{eq:gkp-target-triple} using the rules of \Cref{subsec:proof_system}. First, by the unitary rule and \eqref{eq:gkp-feedback-calc}, for every $x\in\bR$,
\[
    \vdash_0\{P_x\}\ D_{\mathsf{s}}(-x)[\mathsf{s}]\ \{Q_{\mathsf{GKP}}\} .
\]
The family $(P_x)_x$ is $M$-admissible by \Cref{lem:gkp-form-calculations}. Applying the bind rule and then \eqref{eq:gkp-bind-calc} gives
\[
    \vdash_0\{R\}\ B_{\mathsf{hom}}\ \{Q_{\mathsf{GKP}}\} .
\]
Next, the unitary rule for $U_{\mathsf{SUM}}$ and \eqref{eq:gkp-sum-calc} give
\[
    \vdash_0\{T\}\ U_{\mathsf{SUM}}\ \{R\} .
\]
By the sequence rule,
\[
    \vdash_0\{T\}\ U_{\mathsf{SUM}};B_{\mathsf{hom}}\ \{Q_{\mathsf{GKP}}\} .
\]
Finally, the initialization rule for $A_{\mathsf{GKP}}$ and \eqref{eq:gkp-init-calc} give
\[
    \vdash_0\{P_0\}\ A_{\mathsf{GKP}}\ \{T\} .
\]
A second application of the sequence rule yields
\[
    \vdash_0\{P_0\}\ A_{\mathsf{GKP}};U_{\mathsf{SUM}};B_{\mathsf{hom}}\ \{Q_{\mathsf{GKP}}\},
\]
which is exactly \eqref{eq:gkp-target-triple}.

\paragraph*{Summary.}
The Hoare derivation above is the endpoint of the case study: it proves
\eqref{eq:gkp-target-triple} directly, without any finite dimensional truncation / bounded approximation.  The
main point is that \(Q_{\mathsf{GKP}}\) and the intermediate predicates are
well-formed assertions of the logic, so the proof can move them backward through
the program by syntax-directed rules.  More importantly, the derivation illustrates the role of
the proof system as a compositional interface between continuous-variable
semantics and Hoare-style verification.  The program syntax generates
local proof obligations, while the required analytic facts are used only as
lemmas for discharging those obligations.  Thus the case study demonstrates that
unbounded continuous-variable reasoning can be presented as a structured Hoare
proof rather than as a single external calculation.

\section{Related Work}
\label{sec:related-work}



\paragraph{Motivating works}
Four lines of work are most relevant.  First, a measure-theoretic denotational
semantics for probabilistic programs was given by Kozen
\cite{kozen1979semantics}, while finite-dimensional quantum programs were
interpreted by Selinger using CPTN maps \cite{selinger2004towards}.  We combine these
perspectives to treat continuous measurements over
infinite-dimensional Hilbert spaces.

Second, probabilistic weakest-preexpectation transformers were introduced in
\cite{MMS96}; quantum weakest preconditions were then defined in finite
dimensions \cite{dhondt2006quantum} and extended to general Hilbert spaces with
bounded predicates \cite{YingDuanFengJi2010PredicateTransformer}.  Separately,
probabilistic ERT was developed in \cite{kaminski2018weakest} and extended to
finite-dimensional quantum programs \cite{liu2025quantum}.  Both transformer
lines are extended here to unbounded assertions in
infinite dimensions.

Third, a systematic proof framework for probabilistic programs was developed
in \cite{McIverM05}, while a sound and relatively complete QHL was established
in \cite{Ying11}.  Our system adds unbounded assertions and potential-based cost
reasoning while retaining soundness and relative completeness.

Finally, infinite-valued predicates were introduced in relational quantum logics
\cite{qotl} and generalized to positive closed forms in infinite dimensions
\cite{LICS2026LinearRelation}.  Whereas these logics compare two programs, their assertion techniques are used here for reasoning about properties and costs of one single program.

\paragraph{Verification of infinite-dimensional quantum programs.}
Predicate-transformer semantics for quantum programs were developed over general
Hilbert-space models \cite{YingDuanFengJi2010PredicateTransformer}, and
Floyd--Hoare logic for quantum while programs was formulated for finite- and
countably infinite-dimensional state spaces \cite{Ying11}.  This line was further
connected to classical verification methods through quantum Markov-chain based
program verification over separable Hilbert spaces
\cite{YingYuFengDuan2013Verification}.  Later work studied proof principles and
automation for the same broad setting, including invariant generation
\cite{YingYingWu2017Invariants}, termination analysis
\cite{LiYing2018Termination}, and automatic verification
\cite{Ying2019AutomaticVerification}.

A complementary line refined the predicate and proof methodology:
applied quantum Hoare logic uses subspace/projection assertions
\cite{ZhouYY19}, ghost variables support reasoning about auxiliary quantum data
\cite{Unr19b}, quantum relational Hoare logic treats relational assertions as
subspaces \cite{Unr19a}, expectation-based relational reasoning generalizes this
to quantitative assertions \cite{LU19}, and disjoint parallel quantum programs
are handled by a dedicated proof system \cite{YZL18}.  More recent work clarifies
logical foundations for infinite-dimensional reasoning, including
Birkhoff--von Neumann quantum logic as an assertion language
\cite{Ying2022BirkhoffAssertion}, structured theorems for quantum programs
\cite{Yu2023StructuredTheorem}, quantum/classical registers
\cite{Unruh2021QuantumClassicalRegisters}, algebraic laws of quantum programming
\cite{YingZhouBarthe2025Laws}, Bayesian conditioning in infinite-dimensional
quantum programs \cite{GehnenUnruhKatoen2025Bayesian}, and complete relational
logics with unbounded assertions \cite{LICS2026LinearRelation}.

Our work is complementary to these logics rather than a direct extension of any
single one of them.  We focus on continuous-variable programs with
continuous-outcome measurements and outcome binding, and use closed positive
quadratic forms to represent unbounded physical observables.  This yields
domain-sensitive weakest-precondition and cost reasoning tailored to
continuous-variable examples such as symmetric quantum walks and GKP error
correction, rather than to subspace, expectation, or relational assertions alone.


\paragraph*{Verification of quantum error correction (QEC) codes}
There are already prior works on formal verification of QEC codes, although they focus on a discrete setting---based on a qubit system. 
In the program logic aspect,
\citet{Rand2021gottesman,sundaram2026hoaremeetsheisenberglightweight} first introduced stabilizer formalism as assertions, upon which \citet{sundaram2026hoaremeetsheisenberglightweight} further established a Hoare-like logic that is capable of reasoning about QEC programs and has been formalized in Rocq proof assistant~\cite{Cho2026}.
\citet{wu2021qecv} designed a programming language with a stabilizer constructor in the syntax, with assertions allowing sums of stabilizers, and can effectively prove the correctness of a given QEC program.
\citet{VeriQEC} revisit these two approaches, but using subspace as assertion semantics. By designing a substitution-based quantum Hoare logic they can automatically verify medium-sized (~300 qubits) QEC programs.
Symbolic execution is another appealing way, as shown by \cite{Fang2024symbolic,Chen2025}, capable of verifying QEC code, and appears most efficient in finding bugs in a QEC implementation.
More recently, a mechanized framework, Lean-QEC~\cite{ehatamm2026endtoendformalizationquantumerror} in Lean 4, has been announced, which can provide end-to-end, machine-checked distance certificates of QEC code at industrial code sizes.

\paragraph{Continuous-variable quantum system}
Continuous-variable (CV) quantum computation and its broader quantum information framework were established and systematically developed in \cite{lloyd1999quantum,braunstein2005quantum}.
Recently it has been studied as a hardware-efficient route to fault-tolerant quantum computation \cite{leghtas2013hardware, Bourassa2021blueprintscalable,madsen2022quantum}. Its main platforms include photonic modes \cite{madsen2022quantum}, microwave cavities \cite{campagne2020quantum}, and trapped-ion mechanical oscillators \cite{fluhmann2019encoding}. Experimental and architectural work has demonstrated bosonic error-correcting codes, including the Gottesman--Kitaev--Preskill (GKP) code \cite{gottesman2001encoding, campagne2020quantum}; large-scale cluster states \cite{asavanant2019generation}; programmable photonic processors \cite{madsen2022quantum}; and quantum squeezing for precision metrology \cite{vahlbruch2016detection,tse2019quantum}.

\section{Conclusions and Future Work}
\label{sec:conclusion}
Our work establishes semantic and logical foundations for reasoning about continuous-variable quantum systems. Building on these foundations, we intend to develop formal models and verification methods for the continuous-variable quantum programs and protocols in the future. Because these systems are inherently complex, it is essential that these verifications are machine-checked. Therefore, our next step is to formalize the results of this paper in a proof assistant, which would provide a basis for future mechanized verification of continuous-variable quantum systems.
\bibliographystyle{ACM-Reference-Format}
\bibliography{ref}
\appendix
\section*{The Guide of Appendices}
The appendices constitute the supplementary material for this paper. It is designed to be read independently and is fully self-contained. In particular, it includes all mathematical preliminaries, definitions, semantic constructions, logical judgments, and proof rules required for the development. It also provides complete proofs and detailed case studies for results that are presented more concisely in the main body of the paper. Thus, the technical results established in this supplementary material do not rely on definitions or arguments external to the appendices.

The remainder of this supplementary material is organized as follows. \Cref{sec:math_foundations} develops the mathematical foundations of the work. It introduces positive self-adjoint linear relations and their associated closed positive quadratic forms, defines the extended L"owner order and extended trace, and establishes the monotone-convergence and order-completeness properties used throughout the subsequent development.

\Cref{SEC:syntax_semantics} presents the syntax and denotational semantics of the quantum programming language. Quantum states are modeled by partial density operators on separable Hilbert spaces, while programs are interpreted as completely positive trace-nonincreasing maps on the trace class. Particular attention is given to parameterized quantum operations, continuous-outcome measurement binding, and measurement-guarded loops. The section also establishes the well-definedness, physical validity, and required measurability properties of the semantics.

\Cref{sec:logic} develops the quantitative predicate-transformer semantics. It defines quantum predicates using positive self-adjoint linear relations, constructs weakest liberal preconditions through pullbacks of quadratic forms, and proves their structural and duality properties. It then introduces the operational cost predicate and uses it to define weakest preconditions for finite-expected-cost total correctness.

\Cref{sec:proof_system} and \Cref{sec:proof_system_total} present the Hoare-style proof systems for partial and total correctness, respectively. For each system, we give the inference rules and establish soundness and semantic relative completeness. In the total-correctness system, the rule for while loops uses a quantitative predicate simultaneously as an invariant and as a potential function that accounts for the cost of further iterations.

\Cref{Sec:random_walk} and \Cref{Sec:gkp_verification} illustrate the framework through two case studies. Section~7 formalizes a symmetric random walk as an infinite-dimensional quantum program and demonstrates the distinction between almost-sure termination and finite expected runtime. Section~8 studies a continuous-variable GKP error-correction procedure and derives a bound on the resulting second moment through the continuous-measurement predicate-transformer calculus.

Finally, \Cref{app:further_details} provides further technique details. \Cref{app:spectral_approximation} supplies the detailed construction and properties of the extended trace, while \Cref{app:trace_proof} provides the spectral-approximation result used to approximate unbounded predicates monotonically by bounded ones.

\section{Mathematical Foundations}
\label{sec:math_foundations}

\subsection{Linear Relations and Associated Quadratic Forms}

Throughout this paper, $\cH$ denotes a separable complex Hilbert space. We use the following convention consistently:
\[
    \langle \cdot,\cdot\rangle_{\cH} \text{ is conjugate-linear in the first argument and linear in the second argument.}
\]
All sesquilinear forms in this paper follow the same convention. 

Unlike finite-dimensional cases, observables in infinite-dimensional and continuous-variable systems are often unbounded. We model such observables by \emph{linear relations}. A linear relation $T$ on $\cH$ is a linear subspace of $\cH\oplus\cH$. An operator $A$ is identified with its graph
\[
    \mathcal{G}(A)=\{(\x,A\x)\mid \x\in\Dom(A)\}.
\]
Linear relations generalize operators by allowing multivalued behavior.

\begin{definition}[Basic notions]
    Let $T\subseteq \cH\oplus\cH$ be a linear relation.
    \begin{itemize}
        \item \textbf{Domain and range:}
        \[
            \Dom(T)=\{\x\in\cH\mid \exists\,\y\in\cH,\ (\x,\y)\in T\},\qquad
            \Ran(T)=\{\y\in\cH\mid \exists\,\x\in\cH,\ (\x,\y)\in T\}.
        \]
        \item \textbf{Kernel and multivalued part:}
        \[
            \ker(T)=\{\x\in\cH\mid (\x,0)\in T\},\qquad
            \mul(T)=\{\y\in\cH\mid (0,\y)\in T\}.
        \]
        \item \textbf{Inverse:}
        \[
            T^{-1}=\{(\y,\x)\mid (\x,\y)\in T\}.
        \]
        In particular, $\ker(T)=\mul(T^{-1})$.
    \end{itemize}
    The relation $T$ is the graph of an operator if and only if $\mul(T)=\{0\}$.
\end{definition}

\begin{definition}[Closed and densely defined relations]
    Let $T\subseteq\cH\oplus\cH$ be a linear relation.
    \begin{itemize}
        \item $T$ is \emph{closed} if it is a closed subspace of $\cH\oplus\cH$ with respect to the product Hilbert-space topology.
        \item $T$ is \emph{densely defined} if $\Dom(T)$ is dense in $\cH$.
    \end{itemize}
\end{definition}

\begin{definition}[Adjoint and self-adjointness]
    The adjoint relation $T^*$ is defined by
    \[
        T^*\coloneqq \{(\u,\v)\in\cH\oplus\cH \mid
        \langle \v,\x\rangle_{\cH}=\langle \u,\y\rangle_{\cH}
        \text{ for all }(\x,\y)\in T\}.
    \]
    The relation $T$ is called \emph{symmetric} if $T\subseteq T^*$, and \emph{self-adjoint} if $T=T^*$.
\end{definition}

A crucial structural property of self-adjoint linear relations is the orthogonal decomposition of the Hilbert space. Since $\mul(T^*)=(\Dom(T))^\perp$ holds for every linear relation $T$, self-adjointness implies the following decomposition.

\begin{proposition}[Canonical decomposition, \cite{cross1998multivalued,behrndt2020boundary}]
\label{prop:canonical_decomposition}
    Let $T$ be a self-adjoint linear relation on $\cH$. Then
    \[
        \mul(T)=(\Dom(T))^\perp,
    \]
    and hence
    \[
        \cH=\overline{\Dom(T)}\oplus\mul(T).
    \]
    With $\cH_0\coloneqq\overline{\Dom(T)}$ and $\cH_\infty\coloneqq\mul(T)$, the relation decomposes as
    \[
        T=T_{\operatorname{op}}\oplus(\{0\}\times\cH_\infty),
        \qquad
        T_{\operatorname{op}}\coloneqq T\cap(\cH_0\times\cH_0),
    \]
    where $T_{\operatorname{op}}$ is a densely defined self-adjoint operator on $\cH_0$.
\end{proposition}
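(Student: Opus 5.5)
We need to prove the canonical decomposition for a self-adjoint linear relation $T$: $\mul(T)=\Dom(T)^\perp$, the orthogonal splitting of $\cH$, and $T=T_{\operatorname{op}}\oplus(\{0\}\times\cH_\infty)$ with $T_{\operatorname{op}}$ a densely defined self-adjoint operator on $\cH_0$.

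\emph{Step 1: the identity $\mul(T^*)=\Dom(T)^\perp$.} This holds for any relation $T$ and follows directly from the definition of the adjoint. A pair $(0,\v)$ lies in $T^*$ iff $\langle \v,\x\rangle=\langle 0,\y\rangle=0$ for all $(\x,\y)\in T$, that is, iff $\v\perp\Dom(T)$. Since $T=T^*$, we get $\mul(T)=\Dom(T)^\perp$. Then $\cH=\overline{\Dom(T)}\oplus\Dom(T)^\perp=\cH_0\oplus\cH_\infty$, because $\Dom(T)^\perp=\overline{\Dom(T)}^\perp$.

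\emph{Step 2: the range of $T$ stays in $\cH_0$ up to $\mul(T)$.} Take $(\x,\y)\in T$ and write $\y=\y_0+\y_\infty$ with $\y_0\in\cH_0$ and $\y_\infty\in\cH_\infty$. Since $(0,\y_\infty)\in T$ by Step 1 and $T$ is linear, $(\x,\y_0)\in T$. Also $\x\in\Dom(T)\subseteq\cH_0$. Hence $(\x,\y_0)\in T_{\operatorname{op}}$, and
\[
T=T_{\operatorname{op}}+(\{0\}\times\cH_\infty).
\]
The sum is orthogonal in $\cH\oplus\cH$ because the components lie in $\cH_0\times\cH_0$ and $\{0\}\times\cH_\infty$ respectively.

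\emph{Step 3: $T_{\operatorname{op}}$ is an operator.} We have $\mul(T_{\operatorname{op}})=\mul(T)\cap\cH_0=\cH_\infty\cap\cH_0=\{0\}$, so $T_{\operatorname{op}}$ is single-valued. Its domain is $\Dom(T)$, which is dense in $\cH_0$ by definition of $\cH_0$.

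\emph{Step 4 (main obstacle): self-adjointness of $T_{\operatorname{op}}$ within $\cH_0$.} This is the one step needing care, because adjoints must be computed relative to $\cH_0$ rather than $\cH$. Let $(\u,\v)\in\cH_0\times\cH_0$ be in the $\cH_0$-adjoint of $T_{\operatorname{op}}$. We check that $(\u,\v)\in T^*$ by testing against both summands of $T$:
\begin{itemize}
\item against $(\x,\y_0)\in T_{\operatorname{op}}$ the required identity $\langle\v,\x\rangle=\langle\u,\y_0\rangle$ holds by assumption;
\item against $(0,\w)$ with $\w\in\cH_\infty$ we need $\langle\v,0\rangle=\langle\u,\w\rangle$, i.e.\ $0=\langle\u,\w\rangle$, which holds since $\u\in\cH_0\perp\cH_\infty$.
\end{itemize}
By linearity of the pairing this covers all of $T$, so $(\u,\v)\in T^*=T$, and therefore $(\u,\v)\in T\cap(\cH_0\times\cH_0)=T_{\operatorname{op}}$.

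Conversely, $T_{\operatorname{op}}\subseteq T=T^*$ gives $T_{\operatorname{op}}$ symmetric. Restricting the test vectors to $\cH_0$ then shows $T_{\operatorname{op}}$ is contained in its $\cH_0$-adjoint. Together the two inclusions give $T_{\operatorname{op}}=T_{\operatorname{op}}^*$ in $\cH_0$. Closedness comes for free, since $T_{\operatorname{op}}$ is the intersection of the closed subspace $T=T^*$ with a closed subspace.
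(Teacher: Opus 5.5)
Your proof is correct and follows the same route as the paper: you derive $\mul(T)=\Dom(T)^\perp$ from $\mul(T^*)=\Dom(T)^\perp$, split off the $\cH_\infty$-component of each graph element, and read off single-valuedness and density of the domain. Your Step 4 spells out, as an explicit two-inclusion check, what the paper compresses into ``taking the adjoint inside the reducing subspace $\cH_0$.''
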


\begin{proof}
    By the definition of $T^*$,
    \[
        (0,\y)\in T^*
        \iff
        \langle \y,\x\rangle_{\cH}=0 \text{ for all }(\x,\z)\in T
        \iff
        \y\in(\Dom(T))^\perp.
    \]
    Thus $\mul(T^*)=(\Dom(T))^\perp$. If $T=T^*$, then $\mul(T)=(\Dom(T))^\perp$, and the Hilbert-space decomposition follows.

    Let $(\x,\y)\in T$ and write $\y=\y_0+\y_\infty$ according to $\cH=\cH_0\oplus\cH_\infty$. Since $\y_\infty\in\mul(T)$, we have $(0,\y_\infty)\in T$, and therefore $(\x,\y_0)=(\x,\y)-(0,\y_\infty)\in T$. Hence every element of $T$ decomposes into an operator part in $\cH_0\times\cH_0$ plus a purely multivalued part in $\{0\}\times\cH_\infty$.
    Moreover, $\mul(T_{\operatorname{op}})=\mul(T)\cap\cH_0=\{0\}$, so $T_{\operatorname{op}}$ is single-valued. Its domain is dense in $\cH_0$ by definition of $\cH_0$. The self-adjointness of $T_{\operatorname{op}}$ follows by taking the adjoint inside the reducing subspace $\cH_0$.
\end{proof}

In quantitative verification, observables such as execution time, energy, and variance are typically non-negative or at least bounded below. We use the following relation-level definition.

\begin{definition}[Bounded-below linear relations]
    A self-adjoint linear relation $T$ is \emph{bounded below} if there exists $\gamma\in\bR$ such that
    \[
        \langle \x,\y\rangle_{\cH}\ge \gamma\|\x\|^2,
        \qquad \forall(\x,\y)\in T.
    \]
    The quantity $\langle\x,\y\rangle_{\cH}$ is real and independent of the representative $\y$ in the affine fiber $T\x$, because $T$ is self-adjoint and $\mul(T)\perp\Dom(T)$. If $\gamma=0$, then $T$ is called \emph{positive}.
\end{definition}

Linear relations provide the structural framework for unbounded observables. To define the logical order and expected values, we pass to scalar-valued quadratic forms.

\begin{definition}[Minimal quadratic form and form norm]
\label{def:minimal-quadratic-form}
    Let $T$ be a positive self-adjoint linear relation. Its \emph{minimal sesquilinear form} $\QF_0$ on $\Dom(T)$ is defined by
    \[
        \QF_0[\u,\v]\coloneqq \langle \w,\v\rangle_{\cH},
        \qquad (\u,\w)\in T,
        \quad \v\in\Dom(T).
    \]
    The Symmetry of $\QF_0$ follows from the self-adjointness of $T$.
    This definition is independent of the choice of $\w$: if $(\u,\w_1),(\u,\w_2)\in T$, then $\w_1-\w_2\in\mul(T)=\Dom(T)^\perp$. Hence $\langle \w_1-\w_2,\v\rangle_{\cH}=0$ for all $\v\in\Dom(T)$.

    The associated quadratic form is $\QF_0[\u]\coloneqq\QF_0[\u,\u]$. It induces the form norm
    \[
        \|\u\|_{\QF_0}\coloneqq\sqrt{\QF_0[\u]+\|\u\|^2}.
    \]
\end{definition}

The space $\Dom(T)$ equipped with the inner product
\[
    \langle \u,\v\rangle_{\QF_0}\coloneqq\QF_0[\u,\v]+\langle \u,\v\rangle_{\cH}
\]
is a pre-Hilbert space. It is generally not complete. A prerequisite for completing it as a form domain embedded in $\cH$ is closability.

\begin{proposition}[Closability]
\label{prop:closability}
    The minimal form $\QF_0$ is closable. Equivalently, for every sequence $\{\u_n\}\subseteq\Dom(T)$,
    \[
        \|\u_n\|\to0 \text{ in }\cH
        \quad\text{and}\quad
        \{\u_n\} \text{ is a Cauchy sequence under the form norm }\|\cdot\|_{\QF_0}
        \quad\Longrightarrow\quad
        \QF_0[\u_n]\to0.
    \]
\end{proposition}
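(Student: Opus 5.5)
The plan is to reduce the closability of $\QF_0$ to the operator case, using the canonical decomposition of Proposition~\ref{prop:canonical_decomposition}. Write $\cH=\cH_0\oplus\cH_\infty$ with $\cH_0=\overline{\Dom(T)}$ and $T=T_{\operatorname{op}}\oplus(\{0\}\times\cH_\infty)$, where $T_{\operatorname{op}}$ is a densely defined self-adjoint operator on $\cH_0$. Since $\Dom(T)=\Dom(T_{\operatorname{op}})\subseteq\cH_0$ and $\QF_0[\u,\v]=\langle T_{\operatorname{op}}\u,\v\rangle$ for $\u,\v\in\Dom(T)$ (the multivalued part is orthogonal to $\Dom(T)$), the form $\QF_0$ coincides with the form of $T_{\operatorname{op}}$ on its operator domain. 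Positivity of $T$ transfers to positivity of $T_{\operatorname{op}}$. So it suffices to show that the form of a positive self-adjoint operator $A\coloneqq T_{\operatorname{op}}$, defined on $\Dom(A)$, is closable in $\cH_0$.

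For this I would give the standard direct argument rather than invoke Friedrichs theory. Let $\u_n\in\Dom(A)$ with $\u_n\to0$ in $\cH$ and $(\u_n)$ Cauchy for $\|\cdot\|_{\QF_0}$. Then $\QF_0[\u_n]=\QF_0[\u_n,\u_n]$, and for fixed $m$,
\[
\QF_0[\u_n]=\QF_0[\u_n,\u_n-\u_m]+\QF_0[\u_n,\u_m].
\]
By Cauchy--Schwarz for the positive semi-inner product $\QF_0$, the first term is bounded by $\QF_0[\u_n]^{1/2}\QF_0[\u_n-\u_m]^{1/2}$. Here $\QF_0[\u_n]$ is bounded because the sequence is Cauchy in form norm, and $\QF_0[\u_n-\u_m]$ is small for large $m,n$. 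The second term equals $\overline{\langle A\u_m,\u_n\rangle}$, using symmetry and $\u_m\in\Dom(A)$. For fixed $m$ this tends to $0$ as $n\to\infty$, since $\u_n\to0$ in $\cH$ and $A\u_m$ is a fixed vector. Given $\epsilon>0$, first choose $N$ so that $\QF_0[\u_n-\u_m]<\epsilon^2$ for $m,n\ge N$. Then fix $m\ge N$ and let $n\to\infty$, which gives $\limsup_n\QF_0[\u_n]\le C\epsilon$ with $C=\sup_n\QF_0[\u_n]^{1/2}$. Hence $\QF_0[\u_n]\to0$.

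Two side points need checking first. The Cauchy--Schwarz inequality requires $\QF_0$ to be a Hermitian positive semidefinite sesquilinear form on $\Dom(T)$. Hermitian symmetry follows from self-adjointness: for $(\u,\w),(\v,\z)\in T$, the defining identity of $T^*$ with $T=T^*$ gives $\langle\w,\v\rangle=\langle\u,\z\rangle=\overline{\langle\z,\u\rangle}$. Positivity is exactly the hypothesis. The equivalence stated in the proposition, between closability and the sequential condition, is the standard definition of closability: completing $(\Dom(T),\|\cdot\|_{\QF_0})$ injects into $\cH$ precisely when form-Cauchy null sequences have vanishing form. I would note this briefly and prove only the sequential condition.

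The main obstacle is only bookkeeping. The second term has to be handled using the representative $\w$ with $(\u_m,\w)\in T$ and well-definedness modulo $\mul(T)$, from Definition~\ref{def:minimal-quadratic-form}, so that no domain issue arises from the multivalued part. Everything else is routine.
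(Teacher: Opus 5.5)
Your proof is correct and uses the same key step as the paper's. Self-adjointness turns $\QF_0[\u_n,\v]$ into $\langle \u_n,\w\rangle$ for a fixed $(\v,\w)\in T$, and this tends to $0$ because $\u_n\to0$ in $\cH$. The paper finishes by passing to the abstract completion of $(\Dom(T),\|\cdot\|_{\QF_0})$ and showing the limit is orthogonal to the dense subspace $\Dom(T)$; you instead take $\v=\u_m$ and close with a direct Cauchy--Schwarz $\epsilon$-estimate, which avoids the completion. Your preliminary reduction to $T_{\operatorname{op}}$ is harmless but unnecessary, since the representative-based identity already handles the multivalued part.
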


\begin{proof}[Proof Sketch]
    Let $\{\u_n\}\subseteq\Dom(T)$ satisfy the two assumptions. Let $\cH_{\QF}$ be the abstract completion of $\Dom(T)$ under $\|\cdot\|_{\QF_0}$. Then $\{\u_n\}$ converges in $\cH_{\QF}$ to some $\tilde\u$.

    Fix $\v\in\Dom(T)$ and choose $(\v,\w)\in T$. For each $n$, choose $(\u_n,\w_n)\in T$. Since $T$ is symmetric,
    \[
        \QF_0[\u_n,\v]=\langle \w_n,\v\rangle_{\cH}=\langle \u_n,\w\rangle_{\cH}.
    \]
    Therefore
    \[
        \langle \u_n,\v\rangle_{\QF_0}
        =\QF_0[\u_n,\v]+\langle\u_n,\v\rangle_{\cH}
        =\langle\u_n,\w\rangle_{\cH}+\langle\u_n,\v\rangle_{\cH}\to0.
    \]
    Hence $\langle\tilde\u,\v\rangle_{\cH_{\QF}}=0$ for all $\v\in\Dom(T)$. Since $\Dom(T)$ is dense in its completion $\cH_{\QF}$, it follows that $\tilde\u=0$. Thus $\|\u_n\|_{\QF_0}\to0$, and in particular $\QF_0[\u_n]\to0$.
\end{proof}

\begin{definition}[Associated closed quadratic form]
\label{def:associated_quadratic_form}
    Let $T$ be a positive self-adjoint linear relation and let $\QF_0$ be its minimal form. The associated closed form $\QF_T$ is the closure of $\QF_0$.
    \begin{enumerate}[(1)]
        \item \textbf{Domain:} $\Dom(\QF_T)$ consists of all $\u\in\cH$ for which there exists a sequence $\{\u_n\}\subseteq\Dom(T)$ such that $\u_n\to\u$ in $\cH$ and $\{\u_n\}$ is Cauchy with respect to $\|\cdot\|_{\QF_0}$.
        \item \textbf{Form value:} for such $\u$, define
        \[
            \QF_T[\u]\coloneqq\lim_{n\to\infty}\QF_0[\u_n].
        \]
        More generally, if $\u_n\to\u$ and $\v_n\to\v$ are approximating sequences that are Cauchy in the form norm, then
        \[
            \QF_T[\u,\v]\coloneqq\lim_{n\to\infty}\QF_0[\u_n,\v_n].
        \]
    \end{enumerate}
    By \Cref{prop:closability}, these limits are independent of the approximating sequences\footnote{Indeed, let $\{\u_n\},\{\v_n\}$ be two $\QF_0$-Cauchy sequences that are both convergent to $\u$, and $\w_n \coloneqq \u_n - \v_n$. By \Cref{prop:closability}, $\QF_0[\w_n] \to 0$. Applying the Minkowski inequality for the semi-norm $\sqrt{\QF_0[\cdot]}$, we obtain $|\sqrt{\QF_0[u_n]} - \sqrt{\QF_0[v_n]}| \le \sqrt{\QF_0[w_n]} \to 0$. This ensures that the limits of the quadratic forms coincide.}. Therefore, the space $(\Dom(\QF_T),\|\cdot\|_{\QF_T})$ is a Hilbert space, where
    \[
        \|\u\|_{\QF_T}^2=\QF_T[\u]+\|\u\|^2.
    \]
\end{definition}

We now state the form-theoretic definitions used throughout the semantic construction.

\begin{definition}[Positive quadratic form]
\label{def:Positive_QF}
    A \emph{positive quadratic form} on $\cH$ is an extended map $\QF:\cH\to[0,\infty]$ whose finite domain
    \[
        \Dom(\QF)\coloneqq\{\u\in\cH\mid \QF[\u]<\infty\}
    \]
    is a linear subspace of $\cH$, and for which there exists a positive sesquilinear form
    \[
        \mathfrak{s}:\Dom(\QF)\times\Dom(\QF)\to\bC
    \]
    such that $\QF[\u]=\mathfrak{s}[\u,\u]$ for all $\u\in\Dom(\QF)$. We write $\QF[\u,\v]$ for $\mathfrak{s}[\u,\v]$ on $\Dom(\QF)$, and set $\QF[\u]=+\infty$ when $\u\notin\Dom(\QF)$.

    The form induces an inner product on $\Dom(\QF)$ by
    \[
        \langle \u,\v\rangle_{\QF}\coloneqq\QF[\u,\v]+\langle\u,\v\rangle_{\cH}.
    \]
\end{definition}

\begin{definition}[Closed quadratic form]
\label{def:Positive_Closed_QF}
    A positive quadratic form $\QF$ is \emph{closed} if $(\Dom(\QF),\langle\cdot,\cdot\rangle_{\QF})$ is a Hilbert space.
\end{definition}

In practice, checking completeness via the form norm can be cumbersome. The following standard characterization connects closedness to lower semicontinuity. An extended functional $f:\cH\to\bR\cup\{+\infty\}$ is \emph{lower semicontinuous} (LSC for short) if
\[
    \u_n\to\u \text{ in }\cH
    \quad\Longrightarrow\quad
    f(\u)\leq\liminf_{n\to\infty}f(\u_n).
\]

\begin{theorem}[Closedness $\iff$ lower semicontinuity]
\label{thm:closed_iff_lsc}
    A positive quadratic form $\QF$ is closed if and only if the extended functional $\u\mapsto\QF[\u]$ is lower semicontinuous on $\cH$.
\end{theorem}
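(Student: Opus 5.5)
We prove the two implications separately, working throughout with the form norm $\|\u\|_{\QF}^2=\QF[\u]+\|\u\|^2$ on $\Dom(\QF)$ and extending $\QF$ by $+\infty$ off $\Dom(\QF)$, as in \Cref{def:Positive_QF}.

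\emph{Closed $\Rightarrow$ LSC.} Let $\u_n\to\u$ in $\cH$ and put $L\coloneqq\liminf_n\QF[\u_n]$. If $L=+\infty$ there is nothing to show. Otherwise we pass to a subsequence with $\QF[\u_n]\to L$, so $(\u_n)$ is bounded in the Hilbert space $(\Dom(\QF),\langle\cdot,\cdot\rangle_{\QF})$. Any bounded sequence in a Hilbert space has a weakly convergent subsequence, so $\u_{n_k}\rightharpoonup\tilde\u$ weakly in the form space.

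The inclusion $\iota:\Dom(\QF)\to\cH$ is bounded, since $\|\u\|\le\|\u\|_{\QF}$. Hence $\iota$ is weak-to-weak continuous, and $\u_{n_k}\rightharpoonup\tilde\u$ weakly in $\cH$. But $\u_{n_k}\to\u$ in norm, so $\tilde\u=\u$, and in particular $\u\in\Dom(\QF)$. The form norm is weakly lower semicontinuous, which gives
\[
    \QF[\u]+\|\u\|^2\le\liminf_k\bigl(\QF[\u_{n_k}]+\|\u_{n_k}\|^2\bigr)=L+\|\u\|^2,
\]
and subtracting $\|\u\|^2$ yields $\QF[\u]\le L$. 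The key point in this direction is identifying the weak limit with $\u$ through the continuous embedding.

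\emph{LSC $\Rightarrow$ closed.} Let $(\u_n)$ be Cauchy in $\|\cdot\|_{\QF}$. Since $\|\cdot\|\le\|\cdot\|_{\QF}$, it is also Cauchy in $\cH$, so $\u_n\to\u$ in $\cH$. By lower semicontinuity, $\QF[\u]\le\liminf_n\QF[\u_n]<\infty$, because a Cauchy sequence is bounded in form norm. Thus $\u\in\Dom(\QF)$. To get convergence in form norm, fix $\epsilon>0$ and choose $N$ with $\QF[\u_n-\u_m]<\epsilon$ for all $n,m\ge N$. For fixed $n\ge N$ we have $\u_n-\u_m\to\u_n-\u$ in $\cH$ as $m\to\infty$, and LSC applied to this sequence gives $\QF[\u_n-\u]\le\liminf_m\QF[\u_n-\u_m]\le\epsilon$. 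Hence $\u_n\to\u$ in $\|\cdot\|_{\QF}$, and the form space is complete.

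The main obstacle is the first direction, specifically the weak-compactness step. If one prefers to avoid weak topologies, the alternative is the Banach–Saks lemma: Cesàro means of a suitable subsequence converge in form norm. Convexity of $\QF$, which follows from the parallelogram identity for a positive sesquilinear form, then controls the liminf along those means.
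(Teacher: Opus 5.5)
Your proposal is correct and follows the paper's proof essentially step for step. For closed $\Rightarrow$ LSC you use weak compactness in the form Hilbert space, identify the weak limit through the bounded embedding into $\cH$, and conclude by weak lower semicontinuity of the form norm; for LSC $\Rightarrow$ closed you apply LSC to $\u_n-\u_m\to\u_n-\u$. The only differences are cosmetic: you obtain $\u\in\Dom(\QF)$ directly from LSC rather than from $\u_n-\u\in\Dom(\QF)$, and the paper spells out the weak lower semicontinuity of the norm via Cauchy–Schwarz.
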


\begin{proof}
    ($\implies$) Suppose first that $\QF$ is closed and let $\u_n\to\u$ in $\cH$. If $\liminf_n\QF[\u_n]=+\infty$, there is nothing to prove. Otherwise choose a subsequence $\u_{n_k}$ such that
    \[
        \QF[\u_{n_k}]\to L\coloneqq\liminf_{n\to\infty}\QF[\u_n]<\infty.
    \]
    Discarding finitely many terms if necessary, we may assume $\u_{n_k}\in\Dom(\QF)$ for every $k$. Since $\u_{n_k}\to\u$ in $\cH$ and $\QF[\u_{n_k}]$ is bounded, the sequence $\{\u_{n_k}\}$ is bounded in the form Hilbert space $(\Dom(\QF),\|\cdot\|_{\QF})$. By Banach-Alaoglu theorem and the reflexivity of Hilbert spaces, after passing to a further subsequence, not relabelled, there is $\u'\in\Dom(\QF)$ such that
    \[
        \u_{n_k}\rightharpoonup \u'
        \quad\text{weakly in }(\Dom(\QF),\langle\cdot,\cdot\rangle_{\QF}).
    \]
    Consider the inclusion map $\iota:(\Dom(\QF),\|\cdot\|_{\QF})\hookrightarrow(\cH,\|\cdot\|_{\cH})$. Since $\|\v\|_{\cH}\le\|\v\|_{\QF}$ for all $\v\in\Dom(\QF)$, $\iota$ is continuous and linear. Therefore $\u_{n_k}\rightharpoonup\u'$ weakly in $\cH$. On the other hand, the same subsequence converges to $\u$ in the sense of $\cH$-norm, hence also weakly to $\u$ in $\cH$. Uniqueness of weak limits gives $\u'=\u$, and therefore $\u\in\Dom(\QF)$.

    We now prove the weak lower semicontinuity of the form norm. With our convention that the inner product is linear in the second variable, weak convergence in the form Hilbert space gives
    \[
        \lim_{k\to\infty}\langle \u,\u_{n_k}\rangle_{\QF}
        =\langle \u,\u\rangle_{\QF}
        =\|\u\|_{\QF}^2.
    \]
    By Cauchy-Schwarz inequality,
    \[
        |\langle \u,\u_{n_k}\rangle_{\QF}|
        \le \|\u\|_{\QF}\,\|\u_{n_k}\|_{\QF}.
    \]
    If $\|\u\|_{\QF}=0$, the desired inequality is trivial. Otherwise, taking the limit inferior in the above estimate and dividing by $\|\u\|_{\QF}$ yields
    \[
        \|\u\|_{\QF}\le\liminf_{k\to\infty}\|\u_{n_k}\|_{\QF}.
    \]
    Hence
    \[
        \|\u\|_{\QF}^2
        \le \liminf_{k\to\infty}\|\u_{n_k}\|_{\QF}^2
        =\lim_{k\to\infty}\bigl(\QF[\u_{n_k}]+\|\u_{n_k}\|_{\cH}^2\bigr)
        =L+\|\u\|_{\cH}^2.
    \]
    Since $\|\u\|_{\QF}^2=\QF[\u]+\|\u\|_{\cH}^2$, we obtain
    \[
        \QF[\u]\le L=\liminf_{n\to\infty}\QF[\u_n].
    \]

    ($\impliedby$) Conversely, assume that $\QF$ is lower semicontinuous. Let $\{\u_n\}$ be Cauchy in $\|\cdot\|_{\QF}$. Since $\|\v\|\le\|\v\|_{\QF}$, the sequence is Cauchy in $\cH$, so $\u_n\to\u$ in $\cH$ for some $\u\in\cH$. For every $\epsilon>0$, choose $N$ such that $\|\u_n-\u_m\|_{\QF}^2<\epsilon$ for $n,m\ge N$. Fixing $n\ge N$ and letting $m\to\infty$, lower semicontinuity yields
    \[
        \QF[\u_n-\u]\le\liminf_{m\to\infty}\QF[\u_n-\u_m]\le\epsilon.
    \]
    Thus $\u_n-\u\in\Dom(\QF)$ and hence $\u\in\Dom(\QF)$. Moreover, $\QF[\u_n-\u]\to0$ and $\|\u_n-\u\|\to0$, so $\u_n\to\u$ in the form norm. Therefore $\Dom(\QF)$ is complete.
\end{proof}

\begin{theorem}[Kato's First representation theorem]
\label{thm:kato_first}
    There is a one-to-one correspondence between positive self-adjoint linear relations $T$ on $\cH$ and closed positive quadratic forms $\QF$ on $\cH$:
    \[
        T\longleftrightarrow\QF.
    \]
    Under our inner-product convention, this correspondence is characterized by
    \begin{equation}\label{eq:variational}
        (\u,\w)\in T
        \iff
        \u\in\Dom(\QF)\text{ and }
        \QF[\u,\v]=\langle\w,\v\rangle_{\cH}
        \text{ for all }\v\in\Dom(\QF).
    \end{equation}
\end{theorem}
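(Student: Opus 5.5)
The plan is to reduce the statement to the classical Kato first representation theorem for densely defined closed positive forms, via the canonical decomposition of \Cref{prop:canonical_decomposition}. Given a closed positive form $\QF$, set $\cH_0\coloneqq\overline{\Dom(\QF)}$ and $\cH_\infty\coloneqq\cH_0^\perp$. Restricted to $\cH_0$, $\QF$ is a densely defined closed positive form: closedness is inherited, because $(\Dom(\QF),\langle\cdot,\cdot\rangle_{\QF})$ is the same Hilbert space. The classical theorem then yields a unique positive self-adjoint operator $T_{\operatorname{op}}$ on $\cH_0$. It satisfies $\Dom(T_{\operatorname{op}})\subseteq\Dom(\QF)$ and $\QF[\u,\v]=\langle T_{\operatorname{op}}\u,\v\rangle$ for $\v\in\Dom(\QF)$, and $\Dom(T_{\operatorname{op}})$ is a core. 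Define
\[
T\coloneqq T_{\operatorname{op}}\oplus(\{0\}\times\cH_\infty).
\]

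For the forward direction I will verify three facts in order. First, $T$ is self-adjoint, by computing $T^*$ componentwise. The adjoint of $\{0\}\times\cH_\infty$ inside $\cH_\infty$ is $\{0\}\times\cH_\infty$, and the adjoint of an orthogonal sum of relations on reducing subspaces is the sum of the adjoints. Second, $T$ is positive, since $\langle\u,\w\rangle=\langle\u,T_{\operatorname{op}}\u\rangle\ge0$ for $(\u,\w)\in T$; this uses $\u\in\cH_0$ and $\w-T_{\operatorname{op}}\u\in\cH_\infty$.

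Third, I check \eqref{eq:variational}. If $(\u,\w)\in T$, write $\w=T_{\operatorname{op}}\u+\w_\infty$. Then for $\v\in\Dom(\QF)\subseteq\cH_0$ we get $\langle\w,\v\rangle=\langle T_{\operatorname{op}}\u,\v\rangle=\QF[\u,\v]$. Conversely, suppose $\u\in\Dom(\QF)$ and $\QF[\u,\v]=\langle\w,\v\rangle$ for all $\v\in\Dom(\QF)$. Let $\w_0$ be the projection of $\w$ onto $\cH_0$; the identity still holds with $\w_0$ in place of $\w$. The classical characterization of $T_{\operatorname{op}}$ as the operator associated with the form then gives $\u\in\Dom(T_{\operatorname{op}})$ and $T_{\operatorname{op}}\u=\w_0$. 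Hence $(\u,\w)=(\u,\w_0)+(0,\w-\w_0)\in T$.

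For the reverse direction, start from a positive self-adjoint relation $T$. \Cref{def:associated_quadratic_form} already produces a closed form $\QF_T$. I must show that applying the forward construction to $\QF_T$ returns $T$, and that the form obtained from the relation built out of $\QF$ is $\QF$ itself. Using \Cref{prop:canonical_decomposition}, $T=T_{\operatorname{op}}\oplus(\{0\}\times\mul T)$, and $\QF_T$ is the closure of $\u\mapsto\langle T_{\operatorname{op}}\u,\u\rangle$ on $\Dom(T_{\operatorname{op}})$, extended by $+\infty$ off its closed domain. So $\overline{\Dom(\QF_T)}=\cH_0$, and the correspondence reduces to the classical bijection between densely defined closed forms and self-adjoint operators on $\cH_0$.

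The key point is the identity $\overline{\Dom(\QF_T)}=\overline{\Dom(T)}$: the form domain is sandwiched between $\Dom(T)$ and its closure, which keeps the multivalued parts consistent. Injectivity in both directions then follows: two relations with the same form share $\cH_0$ and the same operator part, and two forms with the same relation share the same variational equations. The variational identity pins down $\QF$ on the core $\Dom(T_{\operatorname{op}})$, and closedness pins it down everywhere.

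The main obstacle is the converse half of \eqref{eq:variational}, i.e.\ showing that a pair satisfying the weak equation lies in $\Dom(T_{\operatorname{op}})$. Here I would appeal directly to the classical statement of Kato's theorem (Kato, VI.2.1) rather than reprove it. If a self-contained argument is wanted, I would use Riesz representation in the form Hilbert space. Specifically, $T_{\operatorname{op}}+I$ is defined as the inverse of the bounded positive injective map $\cH_0\to\Dom(\QF)$ given by $\langle\x,\cdot\rangle_{\cH}=\langle R\x,\cdot\rangle_{\QF}$. Its self-adjointness comes from the symmetry of $R$, and the weak equation then reads $\u=R(\w_0+\u)$.
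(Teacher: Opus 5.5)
Your proposal is correct and matches the paper's proof. The forward construction is the same as the paper's Friedrichs lemma and Step~1: a Riesz/Friedrichs resolvent on the form Hilbert space over $\overline{\Dom(\QF)}$, a direct sum with $\{0\}\times\Dom(\QF)^\perp$, and projection of $\w$ onto $\overline{\Dom(\QF)}$ for the converse half of the variational identity. The only difference is the round trip $T\mapsto\QF_T\mapsto T_{\QF_T}$: instead of reducing to the classical operator-level bijection on $\overline{\Dom(T)}$, the paper extends $\QF_T[\u,\v]=\langle\w,\v\rangle$ from $\v\in\Dom(T)$ to $\Dom(\QF_T)$ by form-norm density to get $T\subseteq T_{\QF_T}$, then concludes $T_{\QF_T}=T_{\QF_T}^*\subseteq T^*=T$ by maximality of self-adjoint relations, which avoids citing Kato for that step.
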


\begin{remark}[Domain distinction]
    The condition $\u\in\Dom(\QF)$ is necessary for $\QF[\u,\v]$ to be defined, but it is not sufficient for $\u\in\Dom(T)$. In general,
    \[
        \Dom(T)\subsetneq\Dom(\QF).
    \]
    The variational equation \eqref{eq:variational} is a regularity constraint: it selects precisely those $\u\in\Dom(\QF)$ for which the functional $\v\mapsto\QF[\u,\v]$ is bounded with respect to the ambient $\cH$-norm, and hence can be represented by some $\w\in\cH$. The representing vector is unique modulo $(\Dom(\QF))^\perp$, which is exactly the multivalued part of the corresponding relation.
\end{remark}

To prove this theorem, we firstly state and prove the following lemma:
\begin{lemma}[Friedrichs resolvent construction]
\label{lem:friedrichs_resolvent}
    Let $\QF$ be a closed positive quadratic form with domain $D\coloneqq\Dom(\QF)$, and put
    \[
        \cH_0\coloneqq\overline D,\qquad
        \mathcal V\coloneqq(D,\langle\cdot,\cdot\rangle_{\QF}).
    \]
    For each $\x\in\cH_0$, there is a unique $B\x\in D$ such that
    \begin{equation}\label{eq:friedrichs_resolvent}
        \QF[B\x,\v]+\langle B\x,\v\rangle_{\cH}
        =
        \langle \x,\v\rangle_{\cH},
        \qquad \forall \v\in D .
    \end{equation}
    The operator $B$ is an injective positive self-adjoint contraction on $\cH_0$, and $\Ran(B)$ is dense in $\mathcal V$. Furthermore, define
    \[
        A\coloneqq B^{-1}-I,
        \qquad
        \Dom(A)=\Ran(B),
    \]
    then $A$ is a positive self-adjoint operator (maybe unbounded) on $\cH_0$. Moreover,
    \begin{equation}\label{eq:friedrichs_variational}
        \u\in\Dom(A),\ A\u=\mathbf{f}
        \iff
        \u\in D
        \text{ and }
        \QF[\u,\v]=\langle\mathbf{f},\v\rangle_{\cH}
        \text{ for all }\v\in D .
    \end{equation}
\end{lemma}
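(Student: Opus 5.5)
We will prove the Friedrichs resolvent construction by a Riesz representation argument in the form Hilbert space $\mathcal V$, and then read off $A$ from $B$ by elementary operator algebra. First, since $\QF$ is closed, $\mathcal V=(D,\langle\cdot,\cdot\rangle_{\QF})$ is a Hilbert space by \Cref{def:Positive_Closed_QF}. For fixed $\x\in\cH_0$, the map $\v\mapsto\langle\x,\v\rangle_{\cH}$ is conjugate-linear on $\mathcal V$ and bounded, because $|\langle\x,\v\rangle_{\cH}|\le\|\x\|\,\|\v\|_{\cH}\le\|\x\|\,\|\v\|_{\QF}$. The Riesz representation theorem in $\mathcal V$ therefore gives a unique $B\x\in D$ with $\langle B\x,\v\rangle_{\QF}=\langle\x,\v\rangle_{\cH}$ for all $\v\in D$, which is exactly \eqref{eq:friedrichs_resolvent}. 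Linearity of $B$ follows from uniqueness.

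\textbf{Properties of $B$.} Taking $\v=B\x$ gives $\|B\x\|_{\cH}^2\le\|B\x\|_{\QF}^2=\langle\x,B\x\rangle_{\cH}\le\|\x\|\,\|B\x\|_{\cH}$. Hence $\|B\x\|\le\|\x\|$, and $\langle\x,B\x\rangle\ge0$, so $B$ is a positive contraction on $\cH_0$; note $B\x\in D\subseteq\cH_0$. For self-adjointness, evaluate
\[
\langle \y,B\x\rangle_{\cH}=\overline{\langle B\x,\y\rangle_{\cH}}
\]
via \eqref{eq:friedrichs_resolvent} with $\v=B\x$ and test vector $\y$. This yields $\langle\y,B\x\rangle_{\cH}=\langle B\y,B\x\rangle_{\QF}$, which is Hermitian-symmetric in $(\x,\y)$; since $B$ is bounded, $B=B^\dagger$. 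For injectivity, $B\x=0$ forces $\langle\x,\v\rangle_{\cH}=0$ for all $\v\in D$, so $\x\in D^\perp\cap\cH_0=\{0\}$. For density of $\Ran(B)$ in $\mathcal V$, suppose $\w\in D$ is $\langle\cdot,\cdot\rangle_{\QF}$-orthogonal to $\Ran(B)$. Then $0=\langle B\x,\w\rangle_{\QF}=\langle\x,\w\rangle_{\cH}$ for all $\x\in\cH_0$, and since $\w\in\cH_0$ this gives $\w=0$.

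\textbf{The operator $A$.} Since $B$ is injective, self-adjoint and positive, $B^{-1}$ is a self-adjoint (possibly unbounded) operator on $\Dom(B^{-1})=\Ran(B)$, which is dense in $\cH_0$; here I would use the standard fact that the inverse of an injective self-adjoint operator is self-adjoint, e.g.\ because the graph of the inverse is the flipped graph and adjoints commute with the flip. Because $\|B\|\le1$, for $\u=B\x$ we get $\langle\u,B^{-1}\u\rangle=\langle B\x,\x\rangle\ge\|B\x\|^2=\|\u\|^2$, so $A=B^{-1}-I$ is positive and self-adjoint. For \eqref{eq:friedrichs_variational}, the forward direction is immediate: if $\u=B\x$, then $A\u=\x-\u$, and \eqref{eq:friedrichs_resolvent} rewrites as $\QF[\u,\v]=\langle\x-\u,\v\rangle$. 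For the converse, given $\u\in D$ and $\mathbf f$ with $\QF[\u,\v]=\langle\mathbf f,\v\rangle$ for all $\v$, let $\mathbf f_0$ be the projection of $\mathbf f$ onto $\cH_0$; it gives the same pairings against $D$. Put $\x=\mathbf f_0+\u$. Then $\langle\u,\v\rangle_{\QF}=\langle\x,\v\rangle$, so uniqueness yields $\u=B\x$ and $A\u=\mathbf f_0$.

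\textbf{Main obstacle.} The main subtlety is this last point: $\mathbf f$ is only determined modulo $D^\perp$, so the equivalence must be read with $\mathbf f\in\cH_0$, or equivalently as determining $A\u$ up to the multivalued part. I would state explicitly that the equivalence holds for $\mathbf f\in\cH_0$. This is precisely the ambiguity that the canonical decomposition of \Cref{prop:canonical_decomposition} later absorbs into $\mul(T)$ when proving \Cref{thm:kato_first}.
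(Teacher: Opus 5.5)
Your proof is correct. Up through the construction and properties of $B$ it is the paper's argument: Riesz representation in $\mathcal V$, testing with $\v=B\x$, symmetry from $\langle \y,B\x\rangle_{\cH}=\langle B\y,B\x\rangle_{\QF}$, injectivity, and form-density of $\Ran(B)$. One small slip is harmless: under the paper's convention, $\v\mapsto\langle\x,\v\rangle_{\cH}$ is linear, not conjugate-linear.

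The genuine difference is how you obtain self-adjointness of $A$.

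\emph{The paper's route.} It first proves the variational characterization \eqref{eq:friedrichs_variational}. From it, $A$ is positive and symmetric. It then notes $(A+I)B\x=\x$, so $\Ran(A+I)=\cH_0$, and concludes with the range criterion for positive symmetric operators.

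\emph{Your route.} You inherit self-adjointness directly from $B$: the inverse of an injective self-adjoint operator is self-adjoint by the graph-flip identity $(B^{-1})^*=(B^*)^{-1}$, and subtracting the bounded $I$ preserves self-adjointness. Positivity follows most cleanly from $\langle\x,B\x\rangle_{\cH}=\|B\x\|_{\QF}^2\ge\|B\x\|_{\cH}^2$. (Your $B^2\sqleq B$ also works, but it uses $B\sqgeq 0$, not just $\|B\|\le 1$.) You should add the one line showing $\Ran(B)$ is dense in $\cH_0$: in $\cH_0$ one has $\Ran(B)^\perp=\ker B=\{0\}$. This decouples self-adjointness from the form identity and avoids the range criterion. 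The paper's route stays at the level of the form, which is the standard Friedrichs argument.

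Your treatment of the converse in \eqref{eq:friedrichs_variational} is more careful than the paper's. The paper writes $B(\u+\mathbf f)$, which is only defined when $\mathbf f\in\cH_0$. As literally stated the converse fails otherwise: take $\mathbf f=A\u+\kappa$ with $0\neq\kappa\in D^\perp$. Projecting $\mathbf f$ onto $\cH_0$, as you do, is exactly what the paper later does by hand in the proof of \Cref{thm:kato_first}, when it splits $\w=\w_0+\kappa$ before invoking the lemma.
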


\begin{proof}
    The functional $\v\mapsto\langle\x,\v\rangle_{\cH}$ is bounded on $\mathcal V$ since
    \[
        |\langle\x,\v\rangle_{\cH}|
        \le \|\x\|_{\cH}\|\v\|_{\cH}
        \le \|\x\|_{\cH}\|\v\|_{\QF}.
    \]
    Hence \eqref{eq:friedrichs_resolvent} follows from Riesz representation, equivalently Lax--Milgram, applied to the Hilbert space $\mathcal V$.

    Taking $\v=B\x$ in \eqref{eq:friedrichs_resolvent} gives
    \[
        \|B\x\|_{\QF}^2
        =
        \langle\x,B\x\rangle_{\cH}
        \le
        \|\x\|_{\cH}\|B\x\|_{\cH}
        \le
        \|\x\|_{\cH}\|B\x\|_{\QF},
    \]
    so $B$ is bounded $\cH_0\to\mathcal V$ and a contraction on $\cH_0$. The same identity gives
    $\langle\x,B\x\rangle_{\cH}\ge0$, so $B$ is positive. For $\x,\y\in\cH_0$, applying \eqref{eq:friedrichs_resolvent} with $\v=B\y$ in the representation of $\x$, and with $\v=B\x$ in the representation of $\y$, gives
    \[
        \langle\x,B\y\rangle_{\cH}
        =
        \langle B\x,B\y\rangle_{\QF}
        =
        \langle B\x,\y\rangle_{\cH},
    \]
    where the second equality uses conjugate symmetry. Hence $B$ is self-adjoint. If $B\x=0$, then \Cref{eq:friedrichs_resolvent} gives that $\langle\x,\v\rangle_{\cH}=0$ for all $\v\in D$, and since $\x\in\overline D$, we get $\x=0$, which implies $B$ is injective.

    If $\eta\in\mathcal{V}$ is orthogonal to $\Ran(B)$ in $\mathcal V$, then
    \[
        0=\langle B\x,\eta\rangle_{\QF}
        =
        \langle\x,\eta\rangle_{\cH},
        \qquad \forall\x\in\cH_0.
    \]
    Taking $\x=\eta$ gives $\eta=0$, so $\Ran(B)$ is dense in $\mathcal V$. Note that the density needed here is form-norm density, not merely $\cH_0$-density.

    Define $A=B^{-1}-I$ on $\Ran(B)$. For $\u=B\x$, \eqref{eq:friedrichs_resolvent} gives
    \[
        \QF[\u,\v]
        =
        \langle \x-B\x,\v\rangle_{\cH}
        =
        \langle A\u,\v\rangle_{\cH},
        \qquad \forall\v\in D.
    \]
    Conversely, if $\u\in D$ and $\QF[\u,\v]=\langle\mathbf{f},\v\rangle_{\cH}$ for all $\v\in D$, then
    \[
        \langle\u+\mathbf{f},\v\rangle_{\cH}
        =
        \QF[\u,\v]+\langle\u,\v\rangle_{\cH}
        =
        \langle\u,\v\rangle_{\QF},
    \]
    so uniqueness in \eqref{eq:friedrichs_resolvent} gives $B(\u+\mathbf{f})=\u$, hence $\u\in\Ran(B)=\Dom(A)$ and $A\u=\mathbf{f}$. This proves \eqref{eq:friedrichs_variational}. In particular, $A$ is positive and symmetric. Since $(A+I)B\x=\x$ for every $\x\in\cH_0$, we have $\Ran(A+I)=\cH_0$, by the elementary range criterion for positive symmetric operators (see such as \cite[Proposition 3.11]{schmudgen2012unbounded}), $A$ is self-adjoint.
\end{proof}

\begin{proof}[Proof of \Cref{thm:kato_first}]
    \textbf{1. From Form to Relation ($\QF \mapsto T_{\QF}$):}
    Let $\QF$ be a closed positive form and put
    \[
        \cH_0\coloneqq\overline{\Dom(\QF)},
        \qquad
        \cK\coloneqq(\Dom(\QF))^\perp .
    \]
    Apply \Cref{lem:friedrichs_resolvent} to $\QF$ as a densely defined closed form on $\cH_0$, and let $A$ be the resulting positive self-adjoint operator on $\cH_0$. Define
    \[
        T_{\QF}
        \coloneqq
        \mathcal G(A)\oplus(\{0\}\times\cK).
    \]
    Then $T_{\QF}$ is a positive self-adjoint linear relation on $\cH$ and
    \[
        \mul(T_{\QF})=\cK=(\Dom(\QF))^\perp.
    \]
    If $(\u,\w)\in T_{\QF}$, then $\w=A\u+\kappa$ for some $\kappa\in\cK$, and therefore, for every $\v\in\Dom(\QF)$,
    \[
        \langle\w,\v\rangle_{\cH}
        =
        \langle A\u,\v\rangle_{\cH}
        =
        \QF[\u,\v].
    \]
    Conversely, suppose $\u\in\Dom(\QF)$ and satisfies
    \[
        \QF[\u,\v]=\langle\w,\v\rangle_{\cH},
        \qquad \forall\v\in\Dom(\QF).
    \]
    Write $\w=\w_0+\kappa$ with $\w_0\in\cH_0$ and $\kappa\in\cK$. Since $\kappa\perp\Dom(\QF)$, we have
    \[
        \QF[\u,\v]=\langle\w_0,\v\rangle_{\cH},
        \qquad \forall\v\in\Dom(\QF).
    \]
    By \Cref{eq:friedrichs_variational}, $\u\in\Dom(A)$ and $A\u=\w_0$. Hence $(\u,\w)\in T_{\QF}$. This proves \eqref{eq:variational} for the form-to-relation construction.

    \textbf{2. From Relation to Form ($T \mapsto \QF_T$):}
    Given a positive self-adjoint linear relation $T$, we define $\QF_T$ as the closure of the minimal form $\QF_0[\u] = \langle \w, \u \rangle$ for any $(\u,\w)\in T$ and $\v\in\Dom(T)$, as constructed in \Cref{def:minimal-quadratic-form} and \Cref{def:associated_quadratic_form}. By the definition of closure, the domain $\Dom(T)$ is dense in $\Dom(\QF_T)$ with respect to the form norm.

    \textbf{3. Bijectivity (Inverse Property):}
    Starting from $\QF$ and then forming the minimal form of $T_{\QF}$ gives the restriction of $\QF$ to $\Dom(A)=\Ran(B)$. By \Cref{lem:friedrichs_resolvent}, $\Ran(B)$ is dense in $\Dom(\QF)$ in the form norm; hence the closure is exactly $\QF$.

    Starting from $T$, take $(\u,\w)\in T$. For $\v\in\Dom(\QF_T)$, choose $\v_n\in\Dom(T)$ with $\v_n\to\v$ in the form norm. Then
    \[
        \QF_T[\u,\v]
        =
        \lim_n \QF_0[\u,\v_n]
        =
        \lim_n \langle\w,\v_n\rangle_{\cH}
        =
        \langle\w,\v\rangle_{\cH}.
    \]
    Thus $T\subseteq T_{\QF_T}$. Since both relations are self-adjoint,
    \[
        T_{\QF_T}=T_{\QF_T}^*\subseteq T^*=T.
    \]
    Hence $T_{\QF_T}=T$. Therefore the two constructions are inverse to each other.
\end{proof}

\subsection{The L\"owner Order via Quadratic Forms}
\label{subsec:loewner-order-and-extended-trace}
To define a logic, we need an order structure on predicates. The standard operator order is inadequate for unbounded operators because their domains need not match. We therefore use the extended L\"owner order induced by quadratic forms.

\begin{definition}[Extended L\"owner order]
\label{def:extended-lowner-order}
    Let $A$ and $B$ be positive self-adjoint linear relations with associated closed forms $\QF_A$ and $\QF_B$. We write $A\sqleq B$ if and only if
    \begin{enumerate}[(1)]
        \item $\Dom(\QF_B)\subseteq\Dom(\QF_A)$, and
        \item $\QF_A[\u]\le\QF_B[\u]$ for every $\u\in\Dom(\QF_B)$.
    \end{enumerate}
    Equivalently, after extending each form by $+\infty$ outside its domain, $A\sqleq B$ iff $\QF_A[\u]\le\QF_B[\u]$ for all $\u\in\cH$.
\end{definition}

\begin{remark}
    The domain inclusion is reversed. This is the usual behavior of form order: a larger unbounded predicate may have a smaller finite-value domain.
\end{remark}

The extended L\"owner order connects directly to expectation values. Since an unbounded relation $T$ need not define a bounded operator product $T\rho$, the usual trace is generally not available. We therefore define the trace through the associated closed form.

For any $\u\in\cH$, let $\sP_{\u}$ be the rank-one positive operator
\[
    \sP_{\u}(\v)=\langle\u,\v\rangle_{\cH}\u,
    \qquad \v\in\cH.
\]
If $\|\u\|=1$, then $\sP_{\u}$ is the orthogonal projection onto $\operatorname{span}\{\u\}$.

\begin{definition}[Extended trace]
\label{def:extended_trace}
    Let $T$ be a positive self-adjoint linear relation with associated closed form $\QF_T$.
    \begin{enumerate}[(1)]
        \item For a pure state $\rho=\sP_{\u}$ with $\|\u\|=1$, define
        \[
            \Tr(T\rho)\coloneqq\QF_T[\u],
        \]
        where the value is $+\infty$ if $\u\notin\Dom(\QF_T)$.
        \item For a partial density operator $\rho$ with spectral decomposition
        \[
            \rho=\sum_i p_i\sP_{\u_i},
            \qquad p_i>0,
            \quad \sum_i p_i\le1,
        \]
        where $\{\u_i\}$ is an finite or countable orthonormal family, define
        \[
            \Tr(T\rho)\coloneqq\sum_i p_i\,\QF_T[\u_i]\in[0,\infty].
        \]
    \end{enumerate}
\end{definition}

\begin{remark}[Well-definedness and additivity]
    The above definition of the extended trace has a direct physical
    interpretation, but its independence of the chosen decomposition
    $\rho=\sum_i p_i\sP_{\u_i}$ is not immediate. We justify it by showing
    that our definition agrees with the spectral-measure definition of
    expectations for positive unbounded observables used in mathematical
    physics; see, e.g.,
    \cite{davies1976quantum,holevo2011probabilistic}. This equivalence
    also gives additivity and positive homogeneity on the positive
    trace-class cone. The details are given in
    \hyperref[app:trace_proof]{\Cref*{app:trace_proof}}.
\end{remark}

This expectation characterizes the logical order.

\begin{proposition}[Order characterization]
\label{prop:order_characterization}
    Let $A$ and $B$ be positive self-adjoint linear relations. Then $A\sqleq B$ if and only if
    \[
        \Tr(A\rho)\le\Tr(B\rho),
        \qquad \forall\rho\in\pardensity{\cH}.
    \]
\end{proposition}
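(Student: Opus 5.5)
The two directions are of very different difficulty, so we treat them separately, reading $\Tr$ as in \Cref{def:extended_trace} and $\sqleq$ as in \Cref{def:extended-lowner-order}.

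\emph{($\Rightarrow$).} Suppose $A\sqleq B$, so that $\QF_A[\u]\le\QF_B[\u]$ for every $\u\in\cH$, with both sides extended by $+\infty$ outside their domains. Take $\rho\in\pardensity{\cH}$ and fix one spectral decomposition $\rho=\sum_i p_i\sP_{\u_i}$ with $\{\u_i\}$ orthonormal and $p_i>0$. The same decomposition may be used to evaluate both $\Tr(A\rho)$ and $\Tr(B\rho)$; this is legitimate because the extended trace does not depend on the decomposition, as justified in the remark following \Cref{def:extended_trace} and in \Cref{app:trace_proof}. The inequality is then a termwise comparison of series with nonnegative terms in $[0,\infty]$:
\[
\sum_i p_i\QF_A[\u_i]\le\sum_i p_i\QF_B[\u_i],
\]
which holds because $0\cdot$ terms do not arise ($p_i>0$) and $a+\infty=\infty$.

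\emph{($\Leftarrow$).} Now assume $\Tr(A\rho)\le\Tr(B\rho)$ for all $\rho\in\pardensity{\cH}$, and fix $\u\in\cH$.
\begin{itemize}
\item If $\u=0$, both forms vanish and there is nothing to prove.
\item If $\u\neq0$, set $\v=\u/\|\u\|$ and test the hypothesis on the pure state $\rho=\sP_{\v}$. By \Cref{def:extended_trace}(1) this gives $\QF_A[\v]\le\QF_B[\v]$.
\end{itemize}
To pass from $\v$ back to $\u$, use the scaling law $\QF[\lambda\v]=|\lambda|^2\QF[\v]$. On the domain this is sesquilinearity. Off the domain, $\lambda\v\notin\Dom(\QF)$ for $\lambda\neq0$ because $\Dom(\QF)$ is a linear subspace, so both sides are $+\infty$. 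Multiplying by $\|\u\|^2>0$ therefore gives $\QF_A[\u]\le\QF_B[\u]$, including the case where either side is $+\infty$.

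Having $\QF_A\le\QF_B$ pointwise on $\cH$ is exactly the extended formulation in \Cref{def:extended-lowner-order}. It also yields conditions (1) and (2) of that definition directly: if $\u\in\Dom(\QF_B)$, then $\QF_A[\u]\le\QF_B[\u]<\infty$, so $\u\in\Dom(\QF_A)$.

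The only nontrivial ingredient is the independence of $\Tr$ from the chosen decomposition in the forward direction. We take it from the appendix, where $\Tr(T\rho)$ is identified with the spectral-measure expectation. With that in hand, everything else is bookkeeping in $[0,\infty]$.
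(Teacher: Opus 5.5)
Your proposal is correct and follows essentially the same route as the paper: the forward direction compares the two series termwise over one spectral decomposition of $\rho$, and the converse tests on pure states $\sP_{\v}$ and rescales using the homogeneity of quadratic forms. Your explicit handling of the scaling law off the form domain and of the domain inclusion is slightly more careful than the paper's sketch. The decomposition-independence you single out is not specific to this proof, however; it is simply what makes $\Tr$ well defined in the first place.
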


\begin{proof}[Proof Sketch]
    Suppose $A\sqleq B$ and let $\rho=\sum_i p_i\sP_{\u_i}$ be a partial density operator. For every $i$, the extended-form inequality gives $\QF_A[\u_i]\le\QF_B[\u_i]$. Multiplying by $p_i$ and summing yields $\Tr(A\rho)\le\Tr(B\rho)$.

    Conversely, assume the trace inequality holds for all partial density operators. Choosing the pure state $\rho=\sP_{\v}$ for every unit vector $\v$ gives $\QF_A[\v]\le\QF_B[\v]$. By homogeneity of quadratic forms, the same inequality holds for every $\u\in\cH$ after writing $\v=\u/\|\u\|$ when $\u\ne0$; the case $\u=0$ is trivial. Therefore $\QF_A[\u]\le\QF_B[\u]$ as extended functions on $\cH$, which is equivalent to $A\sqleq B$.
\end{proof}

\subsection{Monotone Convergence and CPO Structure}

The semantics of loops relies on least fixed points, which require an order-complete predicate domain. In the unbounded setting, weak and strong operator convergence are not the right primitives; monotone convergence of closed forms is.

We use the standard monotone convergence theorem for positive closed forms, including singular forms \cite{Simon_1978,Behrndt_Hassi_de_Snoo_Wietsma_2010,behrndt2020boundary}.

\begin{theorem}[Monotone convergence for forms]
\label{thm:monotone_convergence}
    Let $\{\QF_n\}_{n=1}^{\infty}$ be a non-decreasing sequence of positive closed quadratic forms, i.e.,
    \[
        \Dom(\QF_{n+1})\subseteq\Dom(\QF_n),
        \qquad
        \QF_n[\u]\le\QF_{n+1}[\u]
        \quad \forall\u\in\Dom(\QF_{n+1}).
    \]
    Define
    \[
        \Dom(\QF)\coloneqq
        \left\{\u\in\bigcap_{n=1}^{\infty}\Dom(\QF_n)
        \;\middle|\;
        \sup_n\QF_n[\u]<\infty\right\},
    \]
    and
    \[
        \QF[\u]\coloneqq\sup_n\QF_n[\u]
        =\lim_{n\to\infty}\QF_n[\u],
        \qquad \u\in\Dom(\QF),
    \]
    with $\QF[\u]=+\infty$ outside $\Dom(\QF)$. Then $\QF$ is a positive closed quadratic form.
\end{theorem}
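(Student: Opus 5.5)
The approach is to verify the two ingredients of the definition directly: first that $\QF$ is a positive quadratic form in the sense of \Cref{def:Positive_QF}, then that it is closed, using the characterization in \Cref{thm:closed_iff_lsc}. We never need to build the completion of $\Dom(\QF)$ by hand.

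\textbf{Step 1: $\QF$ is a positive quadratic form.} First, $\Dom(\QF)$ is a linear subspace. Each $\Dom(\QF_n)$ is a subspace, so $\bigcap_n\Dom(\QF_n)$ is one as well. For $\u,\v$ in this intersection, the parallelogram-type bound $\QF_n[\u+\v]\le 2\QF_n[\u]+2\QF_n[\v]$ holds, and $\QF_n[\lambda\u]=|\lambda|^2\QF_n[\u]$. Taking suprema over $n$ shows that finiteness of $\sup_n\QF_n$ is preserved under sums and scalar multiples.

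Next we produce the sesquilinear form. On $\Dom(\QF)$, define $\mathfrak s[\u,\v]$ by polarization,
\[
\mathfrak s[\u,\v]\coloneqq\tfrac14\sum_{k=0}^{3}\ii^{-k}\,\QF[\u+\ii^k\v],
\]
with the power of $\ii$ chosen to match the convention of linearity in the second argument. Because $\QF_n[\cdot]\to\QF[\cdot]$ pointwise and monotonically on $\Dom(\QF)$, we have $\mathfrak s[\u,\v]=\lim_n\QF_n[\u,\v]$. Sesquilinearity, Hermitian symmetry and positivity all pass to this pointwise limit, and $\mathfrak s[\u,\u]=\QF[\u]$. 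Outside $\Dom(\QF)$ the value is $+\infty$ by definition, and since the $\QF_n$ are nondecreasing, $\QF[\u]=\sup_n\QF_n[\u]$ holds on all of $\cH$ as an extended-valued identity.

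\textbf{Step 2: closedness.} By \Cref{thm:closed_iff_lsc}, each $\QF_n$, extended by $+\infty$ off its domain, is lower semicontinuous on $\cH$. We then use the standard fact that a pointwise supremum of lower semicontinuous extended functions is lower semicontinuous. Concretely, if $\u_k\to\u$, then for every fixed $n$,
\[
\QF_n[\u]\le\liminf_k\QF_n[\u_k]\le\liminf_k\QF[\u_k].
\]
Taking $\sup_n$ on the left gives $\QF[\u]\le\liminf_k\QF[\u_k]$. Applying \Cref{thm:closed_iff_lsc} in the reverse direction, $\QF$ is closed.

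\textbf{Main obstacle.} The only real subtlety is Step 1: showing that the extended-valued supremum really is a quadratic form on a linear domain, rather than merely a lower semicontinuous functional. The polarization-limit argument handles this, with one requirement. For each $\u,\v\in\Dom(\QF)$, all vectors $\u+\ii^k\v$ must lie in every $\Dom(\QF_n)$, so that the approximating forms $\QF_n[\u,\v]$ are defined. This holds because $\Dom(\QF)\subseteq\bigcap_n\Dom(\QF_n)$ and these domains are subspaces. Singular forms with non-dense domains need no separate treatment, because nothing in the argument uses density.
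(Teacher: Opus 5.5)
Your proof is correct, but it takes a different route from the paper's.

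\textbf{The paper's route.} The paper proves closedness directly from the definition, by showing that $(\Dom(\QF),\|\cdot\|_{\QF})$ is complete. A $\|\cdot\|_{\QF}$-Cauchy sequence $\u_k$ is $\|\cdot\|_{\QF_n}$-Cauchy for every $n$. Closedness of each $\QF_n$ identifies its $\QF_n$-limit with the $\cH$-limit $\u$. The uniform bound $\QF_n[\u]\le\sup_k\QF[\u_k]$ then places $\u$ in $\Dom(\QF)$. Finally, letting $m\to\infty$ in $\QF_n[\u_k-\u_m]<\epsilon$ and taking $\sup_n$ gives $\QF[\u_k-\u]\le\epsilon$.

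\textbf{Your route.} You apply \Cref{thm:closed_iff_lsc} twice: closed $\Rightarrow$ lower semicontinuous for each $\QF_n$, then lower semicontinuous $\Rightarrow$ closed for $\QF$. In between you use only that a pointwise supremum of lower semicontinuous functions is lower semicontinuous. This is shorter, and it is the same mechanism the paper itself uses for countable sums of pulled-back forms in \Cref{thm:wlp_well_defined} and, via Fatou, for the bind integral. The cost is that your argument inherits the weak-compactness (Banach--Alaoglu) step in the forward direction of \Cref{thm:closed_iff_lsc}. The paper's argument, by contrast, is elementary and self-contained from the completeness definition.

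\textbf{Your Step 1.} The paper's proof uses the form norm $\|\cdot\|_{\QF}$ and the inequality $\QF_n[\u_k-\u_m]\le\QF[\u_k-\u_m]$. It never checks that $\Dom(\QF)$ is a subspace or that $\QF$ is the diagonal of a sesquilinear form. Your Step 1 supplies exactly this missing check. Your definition $\mathfrak s[\u,\v]=\lim_n\QF_n[\u,\v]$ inherits sesquilinearity directly from the approximants. That is cleaner than passing only the parallelogram law to the limit and then trying to recover a sesquilinear form from it.
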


\begin{proof}
    We show completeness of $(\Dom(\QF),\|\cdot\|_{\QF})$. Let $\{\u_k\}_{k=1}^{\infty}$ be Cauchy in $\|\cdot\|_{\QF}$. Since $\|\v\|_{\cH}\le\|\v\|_{\QF}$, there exists $\u\in\cH$ such that
    \begin{equation}\label{eq:H_limit}
        \lim_{k\to\infty}\|\u_k-\u\|_{\cH}=0.
    \end{equation}

    Fix $n$. Since $\QF_n\le\QF$ on $\Dom(\QF)$, the sequence $\{\u_k\}$ is Cauchy in $\|\cdot\|_{\QF_n}$. Closedness of $\QF_n$ gives a limit $\u^{(n)}\in\Dom(\QF_n)$ in the $\QF_n$-norm. This convergence implies convergence in $\cH$, so uniqueness of the $\cH$-limit gives $\u^{(n)}=\u$. Hence
    \begin{equation}\label{eq:norm_conv_n}
        \u\in\Dom(\QF_n),
        \qquad
        \lim_{k\to\infty}\|\u_k-\u\|_{\QF_n}=0
        \quad\text{for every }n.
    \end{equation}
    In particular, $\QF_n[\u]=\lim_{k\to\infty}\QF_n[\u_k]$.

    Since $\{\u_k\}$ is Cauchy in $\|\cdot\|_{\QF}$, it is bounded in that norm. Hence there exists $M$ such that $\QF[\u_k]\le M$ for all $k$. For every fixed $n$,
    \[
        \QF_n[\u]=\lim_{k\to\infty}\QF_n[\u_k]
        \le\limsup_{k\to\infty}\QF[\u_k]
        \le M.
    \]
    Taking the supremum over $n$ gives $\sup_n\QF_n[\u]<\infty$, so $\u\in\Dom(\QF)$.

    It remains to show form-norm convergence. Let $\epsilon>0$. Choose $N$ such that
    \[
        \|\u_k-\u_m\|_{\QF}^2<\epsilon,
        \qquad k,m\ge N.
    \]
    Then $\QF_n[\u_k-\u_m]\le\QF[\u_k-\u_m]<\epsilon$ for all $n$ and all $k,m\ge N$. Fixing $k\ge N$ and letting $m\to\infty$, \eqref{eq:norm_conv_n} gives
    \[
        \QF_n[\u_k-\u]\le\epsilon,
        \qquad\forall n.
    \]
    Therefore
    \[
        \QF[\u_k-\u]=\sup_n\QF_n[\u_k-\u]\le\epsilon.
    \]
    Combining this with \eqref{eq:H_limit} yields $\|\u_k-\u\|_{\QF}\to0$. Thus $\QF$ is closed.
\end{proof}

This theorem is the cornerstone of our semantic domain.

\begin{corollary}[$\omega$-CPO property]
\label{cor:omega-CPO-for-LRs}
    The set of all positive self-adjoint linear relations, equipped with the extended L\"owner order $\sqleq$, forms an $\omega$-complete partial order ($\omega$-CPO). Every countable ascending chain has a least upper bound, obtained by taking the pointwise supremum of the associated quadratic forms. The bottom element is the zero relation
    \[
        \mathbf{0}=\cH\oplus\{0\},
    \]
    represented by the zero form.
\end{corollary}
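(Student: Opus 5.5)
The plan is to read the corollary off \Cref{thm:monotone_convergence}, using Kato's first representation theorem (\Cref{thm:kato_first}) to move between relations and forms. By \Cref{thm:kato_first}, positive self-adjoint linear relations are in bijection with closed positive quadratic forms. By \Cref{def:extended-lowner-order}, $\sqleq$ corresponds to the pointwise order of the forms after they are extended by $+\infty$ outside their domains. So it suffices to prove the statement for closed positive forms with the pointwise extended order, and then transport it back along the bijection.

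First, $\sqleq$ is a partial order. Reflexivity and transitivity are inherited from $\le$ on $[0,\infty]$. Antisymmetry follows because $A\sqleq B\sqleq A$ forces $\QF_A=\QF_B$ as extended functions, hence equal domains and equal quadratic forms. Polarization then gives equal sesquilinear forms, and injectivity of the Kato correspondence gives $A=B$. The zero relation $\cH\oplus\{0\}$ is the graph of the zero operator, so its form is identically $0$, which lies below every positive form. Hence it is the bottom element.

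Next, take an ascending chain $A_1\sqleq A_2\sqleq\cdots$ with forms $\QF_n$. In the language of \Cref{thm:monotone_convergence}, the chain condition says exactly that $\Dom(\QF_{n+1})\subseteq\Dom(\QF_n)$ and that $\QF_n\le\QF_{n+1}$ on $\Dom(\QF_{n+1})$. The theorem then gives that the pointwise supremum $\QF=\sup_n\QF_n$ is a closed positive quadratic form. Its finite domain is the one described there: a vector $\u$ outside some $\Dom(\QF_n)$ already has value $+\infty$, so it is excluded, consistent with the extended convention. Let $A_\infty$ be the relation corresponding to $\QF$ under \Cref{thm:kato_first}. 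Then $\QF_n\le\QF$ pointwise, so each $A_n\sqleq A_\infty$. If $B$ is any upper bound, then $\QF_n\le\QF_B$ for all $n$, so $\sup_n\QF_n\le\QF_B$ pointwise, that is, $A_\infty\sqleq B$. Thus $A_\infty$ is the least upper bound.

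The only delicate point is the bookkeeping with the value $+\infty$. I must check that the extended pointwise supremum agrees with the form built in \Cref{thm:monotone_convergence}. The theorem requires the domain to be a linear subspace carrying a sesquilinear form, and this is exactly where it is invoked. The closedness content, which is the real work, has already been proved there, so no further obstacle remains.
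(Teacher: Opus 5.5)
Your proposal is correct and follows essentially the same route as the paper: you read the chain as a non-decreasing sequence of closed forms, apply the monotone convergence theorem for forms to obtain the closed supremum form, transport it back to a relation via Kato's first representation theorem, and check the upper-bound and least-upper-bound properties pointwise. Beyond what the paper writes out, you also explicitly verify antisymmetry (via polarization and injectivity of the Kato correspondence) and that $\cH\oplus\{0\}$ is the bottom element; the paper's proof leaves both of these points implicit.
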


\begin{proof}
    Let $T_1\sqleq T_2\sqleq\cdots$ be an ascending chain of positive self-adjoint linear relations, and let $\QF_n$ be the closed form associated with $T_n$. By \Cref{def:extended-lowner-order}, the forms are non-decreasing in the sense of \Cref{thm:monotone_convergence}. Let $\QF$ be the closed limit form given there, and let $T_\infty$ be the corresponding relation from \Cref{thm:kato_first}.

    For every $n$, we have $\Dom(\QF)\subseteq\Dom(\QF_n)$ and $\QF_n[\u]\le\QF[\u]$ on $\Dom(\QF)$, hence $T_n\sqleq T_\infty$. Thus $T_\infty$ is an upper bound.

    If $K$ is another upper bound, then $T_n\sqleq K$ for all $n$. Equivalently,
    \[
        \Dom(\QF_K)\subseteq\Dom(\QF_n),
        \qquad
        \QF_n[\u]\le\QF_K[\u]
        \quad\forall\u\in\Dom(\QF_K),\ \forall n.
    \]
    Taking the supremum over $n$ gives $\QF[\u]\le\QF_K[\u]$ for all $\u\in\Dom(\QF_K)$, and also $\Dom(\QF_K)\subseteq\Dom(\QF)$. Therefore $T_\infty\sqleq K$. Hence $T_\infty$ is the least upper bound.
\end{proof}

\begin{remark}[Relation to resolvent convergence]
    For a non-decreasing sequence of closed forms, the form limit above is equivalent to strong resolvent convergence of the associated positive self-adjoint relations in the generalized sense of Kato \cite{kato2013perturbation}. Thus logical suprema agree with the standard spectral limiting process.
\end{remark}

\begin{corollary}[Monotone convergence of traces over linear relations]
\label{cor:mct_traces}
    Let $\{P_n\}_{n\in\mathbb{N}}$ be a monotonically increasing sequence of positive self-adjoint linear relations with supremum $P\coloneqq\sup_n P_n$. For every partial density operator $\rho$,
    \[
        \sup_n\Tr(P_n\rho)=\Tr(P\rho).
    \]
\end{corollary}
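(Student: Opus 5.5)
The argument works directly from the definition of the extended trace (\Cref{def:extended_trace}) and the pointwise description of the supremum in \Cref{cor:omega-CPO-for-LRs}. Fix a partial density operator $\rho$ with spectral decomposition $\rho=\sum_i p_i\sP_{\u_i}$, where $p_i>0$ and $\{\u_i\}$ is a finite or countable orthonormal family. Using the same decomposition for every $P_n$ and for $P$, which is legitimate because the extended trace is independent of the decomposition, we have
\[
    \Tr(P_n\rho)=\sum_i p_i\,\QF_{P_n}[\u_i],
    \qquad
    \Tr(P\rho)=\sum_i p_i\,\QF_{P}[\u_i],
\]
with all terms in $[0,\infty]$.

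The first step identifies the form of the supremum pointwise. By \Cref{cor:omega-CPO-for-LRs}, the least upper bound $P$ is the relation associated through \Cref{thm:kato_first} with the closed form $\QF$ of \Cref{thm:monotone_convergence}. That form satisfies $\QF[\u]=\sup_n\QF_{P_n}[\u]$ on $\Dom(\QF)$ and equals $+\infty$ elsewhere.

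We then check that this identity holds on all of $\cH$ as extended functions. If $\u\notin\Dom(\QF)$, then either $\u\notin\Dom(\QF_{P_m})$ for some $m$, in which case $\QF_{P_n}[\u]=+\infty$ for all $n\ge m$ by the domain inclusions, or $\sup_n\QF_{P_n}[\u]=+\infty$. In both cases $\sup_n\QF_{P_n}[\u]=+\infty=\QF[\u]$. Since Kato's correspondence is bijective, $\QF_P=\QF$, and so $\QF_P[\u_i]=\sup_n\QF_{P_n}[\u_i]$ for every $i$.

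The second step exchanges the supremum with the sum. For each $i$ the sequence $n\mapsto p_i\QF_{P_n}[\u_i]$ is non-decreasing in $[0,\infty]$, because $P_n\sqleq P_{n+1}$. The monotone convergence theorem for series of non-negative extended reals, which is Lebesgue monotone convergence for counting measure, therefore gives
\[
    \sup_n\sum_i p_i\QF_{P_n}[\u_i]=\sum_i p_i\sup_n\QF_{P_n}[\u_i]=\sum_i p_i\QF_P[\u_i].
\]
This is exactly the claim.

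The only delicate point is the bookkeeping with $+\infty$ in the first step: one must confirm that points outside the limit domain really give supremum $+\infty$, rather than relying only on the finite-domain description in \Cref{thm:monotone_convergence}. Once that is settled, the rest is a standard interchange of a supremum with a non-negative series.
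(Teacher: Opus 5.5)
Your proposal is correct and follows essentially the same route as the paper. The paper also expands $\Tr(P_n\rho)$ along a fixed spectral decomposition of $\rho$, uses $\sup_n\QF_{P_n}[\u_k]=\QF_P[\u_k]$ from \Cref{thm:monotone_convergence}, and interchanges supremum and sum by Beppo Levi's theorem for series. Your explicit check that this pointwise identity also holds, with value $+\infty$, outside $\Dom(\QF)$ is bookkeeping that the paper leaves implicit.
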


\begin{proof}
    Let $\rho=\sum_{k=1}^{\infty}\lambda_k\sP_{\u_k}$ be a spectral decomposition over the positive eigenvalues of $\rho$, where $\lambda_k>0$, $\sum_k\lambda_k\le1$, and $\{\u_k\}$ is an orthonormal family. By \Cref{def:extended_trace},
    \[
        \Tr(P_n\rho)=\sum_{k=1}^{\infty}\lambda_k\QF_{P_n}[\u_k].
    \]
    Since $P_n\sqleq P_{n+1}$, the sequence $\QF_{P_n}[\u_k]$ is non-decreasing for each fixed $k$. Beppo Levi's monotone convergence theorem for series gives
    \[
        \sup_n\Tr(P_n\rho)
        =\sup_n\sum_{k=1}^{\infty}\lambda_k\QF_{P_n}[\u_k]
        =\sum_{k=1}^{\infty}\lambda_k\sup_n\QF_{P_n}[\u_k].
    \]
    By \Cref{thm:monotone_convergence}, $\sup_n\QF_{P_n}[\u_k]=\QF_P[\u_k]$. Therefore
    \[
        \sup_n\Tr(P_n\rho)
        =\sum_{k=1}^{\infty}\lambda_k\QF_P[\u_k]
        =\Tr(P\rho),
    \]
    which completes the proof.
\end{proof}

Finally, to bridge the gap between standard quantum information theory, which mostly deals with bounded operators, and our domain of linear relations, we introduce bounded predicates and their spectral approximation.

\begin{definition}[Bounded positive linear relations]
    A positive self-adjoint linear relation $T$ is \emph{bounded} if it is the graph of a bounded positive self-adjoint operator on $\cH$. Equivalently, $\Dom(T)=\cH$, $\mul(T)=\{0\}$, and there exists $A\in\cB(\cH)$ such that
    \[
        T=\{(\u,A\u)\mid \u\in\cH\}.
    \]
\end{definition}

The following proposition ensures that every positive linear relation is generated as the supremum of bounded predicates, allowing properties proved in the bounded case to be lifted to the unbounded case via the $\omega$-CPO structure.

\begin{proposition}[Spectral approximation]
\label{prop:spectral_approximation}
    For every positive self-adjoint linear relation $T$, there exists a countable sequence of bounded positive linear relations $\{T_n\}_{n\in\mathbb{N}}$ such that
    \begin{enumerate}[(1)]
        \item $T_1\sqleq T_2\sqleq\cdots$, and
        \item $\QF_T[\u]=\sup_n\QF_{T_n}[\u]$ for every $\u\in\cH$.
    \end{enumerate}
    One may choose $T_n$ to be the bounded spectral saturation of $T$ at height $n$.
\end{proposition}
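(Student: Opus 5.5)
Let $T$ be a positive self-adjoint linear relation with associated closed form $\QF_T$. The plan is to use \Cref{prop:canonical_decomposition} to split $\cH=\cH_0\oplus\cH_\infty$, with $\cH_0=\overline{\Dom(T)}$ and $\cH_\infty=\mul(T)$, and $T=T_{\operatorname{op}}\oplus(\{0\}\times\cH_\infty)$, where $T_{\operatorname{op}}$ is a positive self-adjoint operator on $\cH_0$. Let $E$ be the spectral measure of $T_{\operatorname{op}}$ on $[0,\infty)$ and $P_\infty$ the orthogonal projection onto $\cH_\infty$. I will define the bounded operators
\[
    A_n \coloneqq \int_{[0,\infty)}\min(\lambda,n)\,dE(\lambda)\ \oplus\ n\,P_\infty ,
\]
so that $A_n$ acts as the truncation $\min(T_{\operatorname{op}},n)$ on $\cH_0$ and as $n$ times the identity on the multivalued part. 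Each $A_n$ is bounded, positive, self-adjoint, and satisfies $\|A_n\|\le n$. Let $T_n$ be its graph. This is the ``bounded spectral saturation at height $n$''.

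For monotonicity, take $\u=\u_0+\u_\infty$. Then
\[
    \QF_{T_n}[\u]=\int\min(\lambda,n)\,d\|E(\lambda)\u_0\|^2+n\|\u_\infty\|^2 ,
\]
and this is non-decreasing in $n$ because the integrand and the coefficient are. So $T_n\sqleq T_{n+1}$ by \Cref{def:extended-lowner-order}. By monotone convergence for the spectral integral, the supremum equals $\int\lambda\,d\|E(\lambda)\u_0\|^2+(+\infty)\cdot\|\u_\infty\|^2$, using the convention $0\cdot\infty=0$.

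The remaining step is to identify this supremum with $\QF_T[\u]$. I expect this to be the main obstacle, since $\QF_T$ is defined abstractly as the closure of the minimal form (\Cref{def:associated_quadratic_form}). It has two parts.

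\begin{enumerate}[(1)]
\item I first show $\Dom(\QF_T)\subseteq\cH_0$, so that $\QF_T[\u]=+\infty$ whenever $\u_\infty\neq0$. This holds because approximating sequences lie in $\Dom(T)\subseteq\cH_0$ and converge in $\cH$, and $\cH_0$ is closed. This matches the term $n\|\u_\infty\|^2\to\infty$.
\item On $\cH_0$, I use the standard fact that the closure of the form $\u\mapsto\langle T_{\operatorname{op}}\u,\u\rangle$ on $\Dom(T_{\operatorname{op}})$ is the form with domain $\Dom(T_{\operatorname{op}}^{1/2})=\{\u_0:\int\lambda\,d\|E\u_0\|^2<\infty\}$ and value $\|T_{\operatorname{op}}^{1/2}\u_0\|^2$. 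Since $\Dom(T_{\operatorname{op}})$ is a core for $T_{\operatorname{op}}^{1/2}$, the closure of the minimal form coincides with this spectral form. The uniqueness part of Kato's correspondence (\Cref{thm:kato_first}) then identifies $\QF_T$ on $\cH_0$ with $\int\lambda\,d\|E\u_0\|^2$.
\end{enumerate}

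Combining the two parts gives $\QF_T[\u]=\sup_n\QF_{T_n}[\u]$ for every $\u$. In particular, by \Cref{cor:omega-CPO-for-LRs} (least upper bounds are pointwise form suprema), $T=\sup_nT_n$.
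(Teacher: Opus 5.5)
Your proposal is correct and follows essentially the same route as the paper. Your $A_n=\min(T_{\operatorname{op}},n)\oplus nP_\infty$ is exactly the paper's $\int_{[0,\infty]}\min(\lambda,n)\,dE_T(\lambda)$, with the multivalued part $\cH_\infty$ carried as spectral mass at $\lambda=\infty$, and both arguments conclude by monotone convergence. The one difference is in your favor: the paper only recalls that $\QF_T[\u]=\int_{[0,\infty]}\lambda\,d\langle\u,E_T(\lambda)\u\rangle$, whereas you derive this identity from the closure definition of $\QF_T$, using $\Dom(\QF_T)\subseteq\cH_0$ and the fact that $\Dom(T_{\operatorname{op}})$ is a core for $T_{\operatorname{op}}^{1/2}$.
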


\begin{proof}
    By the spectral theorem for positive self-adjoint operators, applied
    to the operator part in the canonical decomposition of a positive
    self-adjoint linear relation, one obtains an extended spectral
    measure $E_T$ of $T$ on $[0,\infty]$; see
    \cite{kato2013perturbation,schmudgen2012unbounded,behrndt2020boundary}.
    We recall the precise relation-valued version needed here in
    \Cref{app:spectral_approximation}. For $n\in\mathbb N$, set
    \[
        f_n(\lambda)\coloneqq\min(\lambda,n),
        \qquad \lambda\in[0,\infty],
    \]
    with $f_n(\infty)=n$, and define
    \[
        T_n\coloneqq \int_{[0,\infty]} f_n(\lambda)\,dE_T(\lambda).
    \]
    Then $T_n$ is a bounded positive self-adjoint operator,
    $0\le T_n\le nI$, and $T_n\sqleq T_{n+1}$. Since
    $f_n(\lambda)\nearrow\lambda$ on the extended half-line, the
    spectral monotone convergence theorem gives
    \[
        \QF_T[\u]=\sup_n \QF_{T_n}[\u],
        \qquad \forall\u\in\cH.
    \]
    The only point not covered by the ordinary operator-valued spectral
    theorem is the mass at $\infty$, which represents the multivalued
    part of $T$; this is treated in
    \Cref{app:spectral_approximation}.
\end{proof}

\section{Syntax and Semantics of Quantum Programs}
\label{SEC:syntax_semantics}

\subsection{Syntax and State Space}
\label{subsec:syntax}

We extend the standard quantum programming syntax to explicitly support continuous-variable (CV) systems and continuous control flow. This involves introducing classical arithmetic expressions to parameterize unitary transformations and a generic binding mechanism for measurement outcomes.

\subsubsection*{Basic Setup and State Space}

Let $\mathit{QVar}$ be a \emph{finite} set of quantum variables representing the quantum registers available in the system (e.g., modes of an optical field). Each variable $q \in \mathit{QVar}$ is associated with a separable (not necessarily finite-dimensional) Hilbert space $\cH_q$ (e.g., $L^2(\mathbb{R})$).
In every $\cH_q$ we fix a distinguished unit vector \(\ket{0}_q\in\cH_q\), and an orthonormal basis
\[
    \{\ket{k}_q\}_{k\in J_q}
\]
containing \(\ket{0}_q\), where \(J_q\) is finite or countably infinite.
The vector \(\ket{0}_q\) is the reset state of \(q\).

The global system resides in the tensor product space:
\[
    \cH = \bigotimes_{q \in \mathit{QVar}} \cH_q.
\]
Since $\mathit{QVar}$ is finite and each $\cH_q$ is separable, the global Hilbert space $\cH$ remains \emph{separable}. This ensures that the mathematical foundations laid out in Sec.~\ref{sec:math_foundations} are strictly applicable.

\begin{remark}[Finiteness and Separability]
    We restrict $\mathit{QVar}$ to be finite for two reasons:
    \begin{enumerate}
        \item \textbf{Sufficiency:} The assumption of finite variables involves no loss of generality for program verification. Any concrete program $S$ written in our syntax is a finite string of symbols and thus can only reference a finite number of quantum variables.
        \item \textbf{Mathematical Stability:} The extension of this logic to systems with infinitely many degrees of freedom (potentially involving non-separable Hilbert spaces, as in general Quantum Field Theory) presents distinct topological challenges, particularly regarding the definitions of the trace functional and operator decompositions. We leave the exploration of such non-separable domains as an open problem for future mathematical investigation.
    \end{enumerate}
\end{remark}

The state of the system is modeled by a partial density operator.
\begin{definition}[State Space]
    The state space is the set of partial density operators on the global Hilbert space $\cH$:
    \[
        \pardensity{\cH} \coloneqq \{ \rho \in \cT(\cH) \mid \rho \sqgeq 0, \tr(\rho) \le 1 \}.
    \]
    We equip $\pardensity{\cH}$ with the standard L\"owner order: $\rho \sqsubseteq \rho' \iff \rho' - \rho \sqgeq 0$. 
\end{definition}

To seamlessly bridge the algebraic fixed-point semantics of loops with the measure-theoretic requirements of continuous environments, we establish the exact topological nature of this order.

\begin{proposition}[$\omega$-CPO and Trace-Norm Convergence]\label{prop:state_cpo_topology}
    The poset $(\pardensity{\cH}, \sqsubseteq)$ is an $\omega$-Complete Partial Order ($\omega$-CPO) with a least element $0$. 
    
    Moreover, for any increasing chain $\rho_0 \sqsubseteq \rho_1 \sqsubseteq \dots$ in $\pardensity{\cH}$, its supremum $\rho_\infty = \sup_n \rho_n$ exists in $\pardensity{\cH}$, and:
    \[
        \lim_{n \to \infty} \|\rho_\infty - \rho_n\|_1 = 0.
    \]
\end{proposition}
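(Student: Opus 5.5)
The proof shows that the traces of an increasing chain form a Cauchy sequence in trace norm, so the chain converges in $\cT(\cH)$. We then check that the norm limit is the supremum in the Löwner order and has trace at most $1$.

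\textbf{Least element.} The zero operator lies in $\pardensity{\cH}$ and is below every positive operator, so $0$ is the least element.

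\textbf{Cauchy in trace norm.} Let $\rho_0\sqsubseteq\rho_1\sqsubseteq\cdots$ be an increasing chain in $\pardensity{\cH}$. The traces $\tr(\rho_n)$ are nondecreasing and bounded by $1$, hence convergent. For $m\ge n$ the difference $\rho_m-\rho_n$ is positive and trace-class. For a positive trace-class operator the trace norm equals the trace, so
\[
\|\rho_m-\rho_n\|_1=\tr(\rho_m-\rho_n)=\tr(\rho_m)-\tr(\rho_n).
\]
This tends to $0$ as $n,m\to\infty$, so $(\rho_n)$ is Cauchy in $\cT(\cH)$ with the trace norm.

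\textbf{Limit lies in the state space.} Since $\cT(\cH)$ is a Banach space under $\|\cdot\|_1$, there is a limit $\rho_\infty$ with $\|\rho_\infty-\rho_n\|_1\to 0$. Trace-norm convergence implies weak operator convergence, because $|\langle \u,(\rho_\infty-\rho_n)\u\rangle|\le\|\rho_\infty-\rho_n\|\,\|\u\|^2\le \|\rho_\infty-\rho_n\|_1\|\u\|^2$. Therefore
\[
\langle \u,\rho_\infty\u\rangle=\lim_n\langle\u,\rho_n\u\rangle\ge 0,
\]
so $\rho_\infty\sqsupseteq 0$. The trace is continuous in trace norm, so $\tr(\rho_\infty)=\lim_n\tr(\rho_n)\le 1$. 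Hence $\rho_\infty\in\pardensity{\cH}$.

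\textbf{The limit is the supremum.} For each fixed $n$ and every vector $\u$, the values $\langle\u,\rho_m\u\rangle$ are nondecreasing in $m$. Their limit is $\langle\u,\rho_\infty\u\rangle$, which is therefore at least $\langle\u,\rho_n\u\rangle$. So $\rho_n\sqsubseteq\rho_\infty$, and $\rho_\infty$ is an upper bound. If $\sigma\in\pardensity{\cH}$ is any upper bound, then $\langle\u,\rho_n\u\rangle\le\langle\u,\sigma\u\rangle$ for all $n$. Passing to the limit gives $\rho_\infty\sqsubseteq\sigma$. Thus $\rho_\infty=\sup_n\rho_n$, which establishes both the $\omega$-CPO property and the trace-norm convergence.

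There is no real obstacle. The only point that needs care is the identity $\|A\|_1=\tr(A)$ for positive trace-class $A$, which follows from $|A|=A$. This identity is what converts monotonicity of the chain into trace-norm Cauchyness, and it removes any need for Vigier's theorem on strong convergence of monotone bounded nets.
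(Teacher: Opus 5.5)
Your proof is correct, and it runs essentially in the reverse order of the paper's.

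The paper first obtains the supremum in $\cB(\cH)$ from the monotone convergence theorem for bounded increasing sequences of positive operators (Vigier's theorem, cited as Kadison--Ringrose, Lemma 5.1.4). This gives strong convergence $\rho_n\to\rho_\infty$ together with the L\"owner supremum. The paper then shows $\rho_\infty\in\pardensity{\cH}$ by interchanging $\sup_n$ with the basis sum $\sum_k\langle e_k,\rho_n e_k\rangle$. Only at the end does it invoke $\|\rho_\infty-\rho_n\|_1=\tr(\rho_\infty)-\tr(\rho_n)$ to get trace-norm convergence.

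You instead apply the identity $\|A\|_1=\tr(A)$ for positive trace-class $A$ at the outset, to the differences $\rho_m-\rho_n$. Cauchyness plus completeness of $(\cT(\cH),\|\cdot\|_1)$ then produces the limit directly inside the trace class. Positivity, the trace bound, and the least-upper-bound property follow from convergence of the scalars $\langle\u,\rho_n\u\rangle$.

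Your route is more elementary and self-contained. It needs neither Vigier's theorem nor the implicit fact that a positive operator with finite basis trace is trace-class, and trace-norm convergence comes for free rather than as a final step. The paper's route has a different advantage: it constructs the supremum in the ambient algebra $\cB(\cH)$ with a lemma that uses no trace bound. That is exactly what the paper reuses later for bounded increasing nets of observables in the normality argument for $\cE^\dagger$.

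One minor point is worth stating. The paper defines positivity only for self-adjoint operators, and $\langle\u,\rho_\infty\u\rangle\ge 0$ for all $\u$ yields self-adjointness of $\rho_\infty$ only because $\cH$ is complex. Alternatively, you can note that $\rho_\infty$ is a trace-norm limit of self-adjoint operators.
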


\begin{proof}
    Let \((\rho_n)_{n\in\mathbb N}\) be an increasing sequence in
    \(\pardensity{\cH}\).  Since \(0\sqleq\rho_n\) and
    \(\tr(\rho_n)\leq 1\), we have \(\rho_n\sqleq I\).  By the monotone
    convergence theorem for bounded positive self-adjoint operators
    \cite[Lemma 5.1.4]{kadison1997fundamentals}, there exists a
    positive operator \(\rho_\infty\in\cB(\cH)\) such that
    \[
        \rho_n\to\rho_\infty
        \,\,\text{strongly, hence weakly,}
    \]
    and \(\rho_\infty=\sup_n\rho_n\) in the L\"owner order.

    To verify that $\rho_\infty \in \pardensity{\cH}$, we evaluate its trace. Let $\{e_k\}_{k=1}^\infty$ be an arbitrary orthonormal basis of $\cH$. The trace is given by $\tr(\rho_n) = \sum_{k=1}^\infty \langle e_k, \rho_n e_k \rangle$. By the Monotone Convergence Theorem for series, we can exchange the supremum over $n$ and the infinite sum over $k$:
    \[
        \tr(\rho_\infty) = \sum_{k=1}^\infty \sup_n \langle e_k, \rho_n e_k \rangle = \sup_n \sum_{k=1}^\infty \langle e_k, \rho_n e_k \rangle = \sup_n \tr(\rho_n).
    \]
    Since $\tr(\rho_n) \le 1$ for all $n$, we have $\tr(\rho_\infty) \le 1$. Thus, $\rho_\infty \in \pardensity{\cH}$, which establishes the $\omega$-CPO property.
    
    Finally, we establish the trace-norm convergence. Since $\rho_\infty \ge \rho_n \ge 0$, the operator $\rho_\infty - \rho_n$ is positive semidefinite. For positive operators, the trace norm equals the trace itself. Therefore:
    \[
        \|\rho_\infty - \rho_n\|_1 = \tr(\rho_\infty - \rho_n) = \tr(\rho_\infty) - \tr(\rho_n).
    \]
    As $n \to \infty$, $\tr(\rho_n) \to \tr(\rho_\infty)$ by our earlier derivation, yielding $\|\rho_\infty - \rho_n\|_1 \to 0$. 
\end{proof}

\subsubsection*{Classical Expressions and Quantum Syntax}

Let $\mathit{CVar}$ be a finite set of classical parameter variables, denoted
by $x,y,\dots$.  These variables are meta-level variables used to
parameterize quantum primitives and program phrases.  They are not mutable
program variables, and the language contains no classical assignment command.

Each classical parameter variable $x\in\mathit{CVar}$ is assigned a
$\sigma$-finite measure space
\[
    (\Omega_x,\Sigma_x,\mu_x).
\]
A classical context $\Gamma\subseteq\mathit{CVar}$ records the classical
parameter variables currently in scope.

For a finite context $\Gamma$, we define the corresponding meta-parameter
space by
\[
    \Omega_\Gamma
    \coloneqq
    \prod_{x\in\Gamma}\Omega_x,
    \qquad
    \Sigma_\Gamma
    \coloneqq
    \bigotimes_{x\in\Gamma}\Sigma_x,
    \qquad
    \mu_\Gamma
    \coloneqq
    \bigotimes_{x\in\Gamma}\mu_x .
\]
No completion of the product $\sigma$-algebra is taken.  Since $\Gamma$ is
finite and each $(\Omega_x,\Sigma_x,\mu_x)$ is $\sigma$-finite,
$(\Omega_\Gamma,\Sigma_\Gamma,\mu_\Gamma)$ is again $\sigma$-finite.

Elements of $\Omega_\Gamma$ are denoted by $\omega$.  When
$x\in\Gamma$, the $x$-component of $\omega$ is written
\[
    \omega(x)\in\Omega_x .
\]
For the empty context, we use the one-point measure space
\[
    \Omega_\emptyset=\{\ast\},
    \qquad
    \Sigma_\emptyset=\{\emptyset,\{\ast\}\},
    \qquad
    \mu_\emptyset(\{\ast\})=1.
\]

\begin{remark}[$\sigma$-finite reference spaces]
    Throughout the syntax, all classical parameter spaces and measurement
    outcome spaces are assumed to carry $\sigma$-finite reference measures.
    This assumption covers the standard spaces used in continuous-variable
    quantum programming, such as finite products of Euclidean spaces with
    Borel measure and countable discrete spaces with counting measure.  It
    is also technically convenient for the Bochner-integral semantics, since it
    allows product-measurability and parameterized integration arguments to be
    applied uniformly in the induction on program structure.
\end{remark}

Classical parameter expressions over $\Gamma$ are defined by the grammar
\begin{equation}
    e ::= c \mid x \mid f(e_1,\dots,e_n)
    \qquad (\text{Parameter expressions}),
\end{equation}
where $c$ is a constant, $x\in\Gamma$ is a classical parameter variable
currently in scope, and $f$ ranges over measurable functions of the appropriate
arity.

Each well-typed expression $e$ has a $\sigma$-finite value space
\[
    (\Omega_e,\Sigma_e,\mu_e)
\]
and denotes a measurable function
\[
    \sem{e}_\Gamma:
    (\Omega_\Gamma,\Sigma_\Gamma)
    \longrightarrow
    (\Omega_e,\Sigma_e).
\]
For a meta-parameter instance $\omega\in\Omega_\Gamma$, the value of $e$ at
$\omega$ is written
\[
    \sem{e}_\Gamma(\omega)\in\Omega_e .
\]
This interpretation is only a meta-level interpretation of syntactic
parameters.  It is not a program command, does not modify a classical store,
and does not enter the quantum state space.

For an expression tuple $\seq e=(e_1,\ldots,e_k)$, we write
\[
    \Omega_{\seq e}
    \coloneqq
    \Omega_{e_1}\times\cdots\times\Omega_{e_k},
    \qquad
    \Sigma_{\seq e}
    \coloneqq
    \Sigma_{e_1}\otimes\cdots\otimes\Sigma_{e_k},
    \qquad
    \mu_{\seq e}
    \coloneqq
    \mu_{e_1}\otimes\cdots\otimes\mu_{e_k},
\]
and
\[
    \sem{\seq e}_\Gamma
    \coloneqq
    \bigl(\sem{e_1}_\Gamma,\ldots,\sem{e_k}_\Gamma\bigr)
    :
    \Omega_\Gamma
    \longrightarrow
    \Omega_{\seq e}.
\]

For the empty expression tuple \(()\), we use the one-point space
\[
    \Omega_{()}=\{\ast\},\qquad
    \Sigma_{()}=\{\emptyset,\{\ast\}\},\qquad
    \mu_{()}(\{\ast\})=1,
\]
and \(\sem{()}_\Gamma(\omega)=\ast\).

We use $\omega$ as a generic symbol for elements of classical parameter
spaces.  Thus $\omega$ may denote an element of $\Omega_\Gamma$ or, locally,
an element of an expression value space such as $\Omega_e$ or
$\Omega_{\seq e}$; its type is determined by the surrounding domain.  In
primitive clauses, after evaluating a parameter tuple $\seq e$ at an outer
meta-parameter instance, we may locally write
\[
    \omega \coloneqq \sem{\seq e}_\Gamma(\omega_{\mathrm{out}})
    \in\Omega_{\seq e},
\]
where $\omega_{\mathrm{out}}\in\Omega_\Gamma$ denotes the original outer
meta-parameter instance. Later, when a binding outcome is displayed separately as $\nu$, the symbol $\omega$
will denote the surrounding valuation, while $(\omega,\nu)$ denotes the
corresponding valuation in the extended binding context.

The syntax of quantum programs is defined relative to a finite classical
context $\Gamma$.  If all classical parameter expressions occurring in a
program are well-typed over $\Gamma$, except for parameters locally bound by
binding constructs, we write $\Gamma\vdash S$.  The grammar is
\begin{equation}
    S ::= \mathbf{skip}
    \mid \mathbf{abort}
    \mid q := 0
    \mid U(\seq{e})[\seq{q}]
    \mid S_1; S_2
    \mid \bind(M(\seq{e}), x. S(x))
    \mid
    \mathbf{while}\ M(\seq e)[\seq{q}] = 1\ \mathbf{do}\ S\ \mathbf{od}.
\end{equation}
The binding construct and its associated measurability assumptions are
specified below in the explanation of syntax.

\subsubsection*{Explanation of Syntax and Measurability Assumptions}

\paragraph*{Primitive-signature convention.}
The language is used schematically, as is customary for quantum while
languages: primitive operation symbols are not generated from a fixed finite
universal gate set.  A primitive unitary symbol may denote any unitary family
satisfying the measurability assumption below, and a primitive measurement
symbol may denote any Kraus-density family satisfying the joint measurability
and normalization assumptions below.  Such a family is treated as one syntactic
primitive, even when its semantic specification contains infinitely or
continuously many operators.

To ensure physical and mathematical validity, we specify the behavior of each
primitive.  Central to our continuous-parameter operations is the following
measurability concept.

\begin{definition}[Strong operator measurability]
    Let $(X,\Sigma_X)$ be a measurable space.  A family
    $\{A_x\}_{x\in X}\subseteq\cB(\cH)$ is \emph{strongly operator measurable}
    if, for every $\u\in\cH$, the map
    \[
        x\longmapsto A_x\u
    \]
    is a measurable function from $(X,\Sigma_X)$ to $\cH$, where $\cH$ carries
    its norm Borel $\sigma$-algebra.  Equivalently, for every $\u\in\cH$ and
    every norm-open set $O\subseteq\cH$,
    \[
        \{x\in X\mid A_x\u\in O\}\in\Sigma_X .
    \]
\end{definition}

With this definition, we impose the following assumptions on the valid
primitives in our syntax.

\begin{itemize}
    \item $\mathbf{skip}$: Performs no operation and leaves the global quantum
    state unchanged.

    \item $\mathbf{abort}$: Represents anomalous termination or
    non-termination, mapping every input state to the zero operator.

    \item $q := 0$: Resets the quantum system $q$ to the ground state
    $|0\rangle$.

    \item $U(\seq{e})[\seq{q}]$: Applies a unitary transformation on the
    quantum variables $\seq q$, parameterized by the classical evaluation of
    the expression tuple $\seq e$.  This allows quantum operations to depend on
    classical parameters, including prior measurement outcomes that have been
    bound by a preceding binding construct.
    
    \textbf{Assumption:}
    Let
    \[
        (\Omega_{\seq e},\Sigma_{\seq e},\mu_{\seq e})
    \]
    be the value space of the expression tuple $\seq e$.  A well-formed
    parameterized unitary primitive is specified by a family
    \[
        \{U_{\omega}\}_{\omega\in\Omega_{\seq e}}\subseteq\cB(\cH).
    \]
    For each parameter value $\omega\in\Omega_{\seq e}$, the operator
    $U_{\omega}$ is unitary.  If the gate is specified only on the subsystem
    $\seq q$, we use the same notation $U_{\omega}$ for its lift to the global
    Hilbert space $\cH$, obtained by tensoring with the identity on all variables
    outside $\seq q$.
    
    The family
    \[
        \omega\longmapsto U_{\omega}
    \]
    is required to be strongly operator measurable over
    $(\Omega_{\seq e},\Sigma_{\seq e})$.  Equivalently, for every $\u\in\cH$,
    the map
    \(
        \omega\longmapsto U_{\omega}\u
    \)
    is $\Sigma_{\seq e}$-measurable as an $\cH$-valued function.

    \item $S_1; S_2$: Sequential composition.

    \item $\bind(M(\seq{e}),x.S(x))$: Performs a measurement using the
    instrument selected by the value of the classical parameter expression tuple
    $\seq e$.  The measurement outcome is bound to the classical parameter
    variable $x$, and the continuation is then executed with access to this
    outcome value. The notation $S(x)$ denotes the outcome-dependent
    continuation of the binding construct.
    
    \textbf{Assumption:}
    The variable $x$ is an already declared classical parameter variable, but it
    is not in the current context $\Gamma$. Thus \(x\) is fresh relative to the current context.
    The binding construct temporarily extends the context to
    \[
        \Gamma,x \coloneqq \Gamma\cup\{x\}.
    \]
    The continuation $S(x)$ is well-formed over this extended context.
    
    Let
    \[
        (\Omega_M,\Sigma_M,\mu_M)
    \]
    be the $\sigma$-finite measure space of measurement outcomes.  Well-formedness
    requires the value space of the bound variable $x$ to agree with the
    measurement outcome space:
    \[
        (\Omega_x,\Sigma_x,\mu_x)
        =
        (\Omega_M,\Sigma_M,\mu_M).
    \]
    Accordingly, we identify
    \[
        \Omega_{\Gamma,x}
        =
        \Omega_\Gamma\times\Omega_M,
        \qquad
        \Sigma_{\Gamma,x}
        =
        \Sigma_\Gamma\otimes\Sigma_M,
        \qquad
        \mu_{\Gamma,x}
        =
        \mu_\Gamma\otimes\mu_M .
    \]
    No completion of the product $\sigma$-algebra is taken.
    
    A meta-parameter instance for the continuation is therefore a pair
    \[
        (\omega,\nu)\in\Omega_\Gamma\times\Omega_M
        =
        \Omega_{\Gamma,x},
    \]
    where $\omega$ is the old outer meta-parameter instance and
    $\nu\in\Omega_M$ is the concrete measurement outcome assigned to $x$.
    In the semantic clauses below, the continuation under this extended
    meta-parameter instance will be indexed by $(\omega,\nu)$.
    Moreover, when the continuation phrase is denoted by $S(x)$, we shall write
    \[
        S_{\omega,\nu}
    \]
    for this continuation interpreted under the extended valuation
    \[
        (\omega,\nu)\in\Omega_{\Gamma,x}
        =
        \Omega_\Gamma\times\Omega_M.
    \]
    Equivalently, $S_{\omega,\nu}$ means $S(x)_{\omega[x:=\nu]}$.  Here
    $\omega$ records the valuation of all classical meta-variables from the
    surrounding context $\Gamma$, while $\nu$ records the value assigned to the
    freshly bound outcome variable $x$.
    
    The result of the binding construct is again a program phrase over the
    original context $\Gamma$; the variable $x$ is local to the continuation.
    Thus the construct does not create a new classical variable and does not
    introduce a mutable classical store.
    
    The notation $S(x)$ should not be understood as a completely arbitrary
    family of programs indexed by the outcome value.  It denotes an
    outcome-indexed continuation subject to a finiteness condition: there exist
    finitely many pairwise disjoint measurable subsets
    \[
        A_1,\ldots,A_m\in\Sigma_M,
        \qquad
        \Omega_M=\bigsqcup_{i=1}^{m}A_i,
    \]
    and finitely many program fragments
    \[
        S_1(x),\ldots,S_m(x),
    \]
    each well-formed over the extended context $\Gamma,x$, such that for every
    concrete outcome value $\nu\in A_i$, the continuation selected by $S(x)$ is
    the fragment $S_i(x)$ interpreted with the bound parameter $x$ assigned the
    value $\nu$.  Thus the syntax remains finite, although each measurable set
    $A_i$ may contain infinitely or uncountably many outcome values.
    
    Let
    \[
        (\Omega_{\seq e},\Sigma_{\seq e},\mu_{\seq e})
    \]
    be the value space of the parameter tuple $\seq e$.  A well-formed
    parameterized instrument is specified by a Kraus-density family
    \[
        \{M_{\omega,\nu}\}_{(\omega,\nu)\in
        \Omega_{\seq e}\times\Omega_M}
        \subseteq \cB(\cH)
    \]
    over the product measurable space
    \[
        (\Omega_{\seq e}\times\Omega_M,
        \Sigma_{\seq e}\otimes\Sigma_M).
    \]
    This family is required to be jointly strongly operator measurable in the
    following pointwise sense: for every $\u\in\cH$, the map
    \[
        (\omega,\nu)
        \longmapsto
        M_{\omega,\nu}\u
    \]
    is measurable from
    \[
        (\Omega_{\seq e}\times\Omega_M,
        \Sigma_{\seq e}\otimes\Sigma_M)
    \]
    to $\cH$.  Equivalently, for every $\u\in\cH$ and every norm-open set
    $O\subseteq\cH$,
    \[
        \{(\omega,\nu)\in\Omega_{\seq e}\times\Omega_M
          \mid M_{\omega,\nu}\u\in O\}
        \in
        \Sigma_{\seq e}\otimes\Sigma_M .
    \]
    This is a pointwise measurability assumption on the displayed raw product
    $\sigma$-algebra, not an almost-everywhere statement on the completion of the
    product measure.
    
    Furthermore, for every fixed parameter value
    $\omega\in\Omega_{\seq e}$, the instrument satisfies the normalization
    condition
    \[
        \int_{\Omega_M}
        \langle \u,M_{\omega,\nu}^{\dagger}M_{\omega,\nu}\u\rangle
        \,d\mu_M(\nu)
        =
        \langle \u,\u\rangle,
        \qquad
        \forall \u\in\cH .
    \]
    
    When the instrument is not parameterized by classical expressions, the
    parameter space $\Omega_{\seq e}$ may be taken to be the one-point space, and
    we write the Kraus family simply as
    \[
        \{M_{\nu}\}_{\nu\in\Omega_M}.
    \]

    \item
    $\mathbf{while}\ M(\seq e)[\seq{q}] = 1\ \mathbf{do}\ S\ \mathbf{od}$:
    Executes the loop body $S$ repeatedly as long as the binary measurement
    $M(\seq e)$ yields the outcome $1$.
    Here $M(\seq e)$ denotes a parameterized binary instrument: the
    expressions $\seq e$ are evaluated under the meta-valuation and
    remain fixed throughout the loop; only the outcome $b\in\{0,1\}$ is
    produced at runtime.

    \textbf{Assumption:}
    For each valuation $\omega$, write
    \[
        M_\omega
        =
        M_{\sem{\seq e}_\Gamma(\omega)}
        =
        \{M_{\omega,0},M_{\omega,1}\}\subseteq\cB(\cH).
    \]
    The loop guard is a discrete binary instrument satisfying
    \[
        M_{\omega,0}^\dagger M_{\omega,0}
        +
        M_{\omega,1}^\dagger M_{\omega,1}
        =
        I
        \qquad
        \text{for every }\omega\in\Omega_\Gamma .
    \]
    Moreover, for $b\in\{0,1\}$, the map
    \[
        \omega\longmapsto M_{\omega,b}
    \]
    is strongly operator measurable.  The while loop does not modify the
    classical context or its valuation; hence, for each fixed valuation
    $\omega$, all iterations use the same binary guard.  When $\omega$ is fixed
    and no ambiguity arises, we write the guard simply as
    $M=\{M_0,M_1\}$.
    The outcome $1$ dictates continuation of the loop, while outcome $0$
    dictates termination.
\end{itemize}

\begin{remark}[Boolean guards as coarse-grained continuous guards]
    The restriction of the core $\mathbf{while}$ syntax to Boolean guards
    should not be read as excluding tests obtained from continuous
    measurements.  It abstracts the situation where the measurement outcome is
    used only through a measurable predicate, such as ``continue'' versus
    ``terminate'', and the concrete outcome is not exposed to the loop body.

    Suppose that a continuous measurement is specified by a Kraus-density
    family
    \(
        \{M_\nu\}_{\nu\in\Omega}
    \)
    over a $\sigma$-finite outcome space $(\Omega,\Sigma,\mu)$, and that the
    loop continues exactly when $\nu\in\Delta$ for some $\Delta\in\Sigma$.
    If the loop body does not depend on the concrete value of $\nu$, then the
    guard may be coarse-grained, at the system level, into the binary CP
    instrument
    \[
        \mathcal G_1(\rho)
        =
        \int_{\Delta} M_\nu\rho M_\nu^\dagger\,d\mu(\nu),
        \qquad
        \mathcal G_0(\rho)
        =
        \int_{\Omega\setminus\Delta}
        M_\nu\rho M_\nu^\dagger\,d\mu(\nu).
    \]
    If the original measurement is normalized, then
    $\mathcal G_0+\mathcal G_1$ is trace-preserving.

    In general, these two branch maps need not be single-Kraus maps on the
    original Hilbert space, and hence should not be identified with two
    operators $M_0,M_1\in\cB(\cH)$.  Thus, under the single-Kraus guard syntax
    used in the core language, such a coarse-grained continuous guard is a
    semantic abstraction rather than necessarily a literal syntactic instance.
    By Ozawa's dilation theorem for quantum instruments
    \cite{ozawa1984quantum}, the resulting two-outcome instrument
    \((\mathcal G_0,\mathcal G_1)\) can be implemented, outside the core syntax, in an enlarged operational model by adjoining a fresh pointer system and measuring a two-outcome pointer observable at each guard evaluation.

    Guards whose loop body depends on the actual value of $\nu$ require an
    outcome-binding recursive construct and are outside the single-Kraus
    Boolean guard fragment considered here.
\end{remark}

\subsection{Semantics via Superoperators}
\label{subsec:semantics}

\paragraph*{Remark on Quantum Variables and Global State}
Although the syntax $S$ involves quantum variables $\bar{q}$ referencing specific subsystems (implying a tensor product structure $\cH = \cH_{\bar{q}} \otimes \cH_{\text{env}}$), our semantic model operates exclusively on the \emph{global} Hilbert space $\cH$.
To rigorously handle unbounded observables, we adopt a \emph{holistic perspective}:
\begin{enumerate}
    \item All semantic maps $\sem{S}$ are defined as superoperators on the global state space $\pardensity{\cH}$.
    \item Any local operation (e.g., a unitary $U$ acting on $\bar{q}$) is interpreted as its \emph{cylindrical extension} to the entire space (i.e., $U \otimes I_{\text{env}}$).
    \item We avoid explicit tensor product decompositions of linear relations in the subsequent logic, as they pose significant domain-theoretic challenges. Treating operations as global bounded maps ensures that the weakest preconditions (pullbacks) are well-defined closed forms.
\end{enumerate}

For each fixed meta-parameter instance \(\omega\in\Omega_\Gamma\), the state
semantics of a well-formed phrase \(\Gamma\vdash S\) is first defined on
\(\cT(\cH)_{+}\). The well-definedness theorem below shows that it extends
uniquely to a CPTNI superoperator
\[
    \sem{S}_\omega:\cT(\cH)\to\cT(\cH).
\]
For classically closed programs, where \(\Omega_\emptyset=\{\ast\}\), we write
\(\sem{S}\) for \(\sem{S}_\ast\).

\subsubsection{CPTNI Maps and Dual Normality}

A fundamental requirement in quantum theory is that valid physical evolutions are
Completely Positive and Trace-Non-Increasing (CPTNI).  In the present paper the
Schrödinger picture is primary: program denotations act on trace-class operators,
whereas normality is assigned to the corresponding Heisenberg-picture dual map.

\begin{definition}[Superoperator and CPTNI Map]
\label{def:normal_superoperator}
    A \emph{superoperator} on $\cH$ is a bounded linear map
    \[
        \cE:\cT(\cH)\longrightarrow \cT(\cH).
    \]
    It is called:
    \begin{enumerate}[(1)]
        \item \textbf{Completely Positive (CP)} if, for every $n\in\mathbb{N}$,
        the amplification
        \[
            \cE\otimes \mathcal{I}_n:
            \cT(\cH\otimes \mathbb{C}^n)\longrightarrow
            \cT(\cH\otimes \mathbb{C}^n)
        \]
        maps the positive cone of $\cT(\cH\otimes \mathbb{C}^n)$ into itself.

        \item \textbf{Trace-Non-Increasing (TNI)} if $\cE$ is positive and
        \[
            \tr(\cE(A))\leq \tr(A),
            \qquad \forall A\in \cT(\cH)_+ .
        \]
    \end{enumerate}
    A \emph{CPTNI map} is a superoperator that is both CP and TNI.  When a
    CPTNI map is applied to partial density operators, we always mean its
    restriction to
    \[
        \pardensity{\cH}=\{\rho\in\cT(\cH)\mid \rho\sqgeq 0,\ \tr(\rho)\leq 1\}.
    \]
\end{definition}

\begin{remark}
    The condition $\cE(\pardensity{\cH})\subseteq \pardensity{\cH}$ is a
    consequence of CP and TNI: complete positivity implies positivity, and TNI
    bounds the trace.  Thus it should not be built into the word
    ``superoperator'' itself.
\end{remark}

\begin{proposition}[Automatic \(\omega\)-order continuity on the trace class]
\label{prop:cptni_trace_class_normality}
    Let $\cE:(\cT(\cH),\|\cdot\|_1)\to(\cT(\cH),\|\cdot\|_1)$ be a bounded positive linear map.  If
    $\{\rho_n\}_{n\in\bN}\subseteq \cT(\cH)_+$ is an increasing sequence, i.e., $\rho_n\sqleq \rho_{n+1}$,
    with supremum $\rho=\sup_n\rho_n\in\cT(\cH)_+$ in the L\"owner order, then
    \[
        \cE(\rho)=\sup_{n}\cE(\rho_n).
    \]
    Moreover,
    \[
        \lim_{n\to\infty}\|\cE(\rho)-\cE(\rho_n)\|_1=0 .
    \]
    In particular, every CPTNI map is \(\omega\)-order continuous on $\cT(\cH)_+$ and hence on $\pardensity{\cH}$.
\end{proposition}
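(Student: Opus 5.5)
The plan is to establish trace-norm convergence first, and then obtain the order statement from it.

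Set $\rho=\sup_n\rho_n$. The differences $\rho-\rho_n$ are positive and decreasing. Their traces satisfy $\tr(\rho-\rho_n)=\tr(\rho)-\tr(\rho_n)\to 0$, by the trace computation in \Cref{prop:state_cpo_topology}: exchange the supremum over $n$ with the sum over an orthonormal basis by monotone convergence. That argument never used $\tr\le 1$ except for the final bound, so it applies on all of $\cT(\cH)_+$. Because $\rho-\rho_n\sqgeq 0$, its trace norm equals its trace, and hence $\|\rho-\rho_n\|_1\to 0$.

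Boundedness of $\cE$ on $(\cT(\cH),\|\cdot\|_1)$ then gives
\[
\|\cE(\rho)-\cE(\rho_n)\|_1\le\|\cE\|\,\|\rho-\rho_n\|_1\to 0,
\]
which is the second claim.

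For the order claim, positivity and linearity show that $\cE(\rho_n)$ is increasing, since $\cE(\rho_{n+1})-\cE(\rho_n)=\cE(\rho_{n+1}-\rho_n)\sqgeq 0$. It is also bounded above by $\cE(\rho)$, because $\cE(\rho)-\cE(\rho_n)=\cE(\rho-\rho_n)\sqgeq 0$. So $\cE(\rho)$ is an upper bound.

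To see that it is the least upper bound, take any positive upper bound $X$, so that $X-\cE(\rho_n)\sqgeq 0$ for all $n$. Trace-norm convergence $\cE(\rho_n)\to\cE(\rho)$ implies weak operator convergence. Hence for each vector $\u$,
\[
\langle\u,(X-\cE(\rho))\u\rangle=\lim_n\langle\u,(X-\cE(\rho_n))\u\rangle\ge 0,
\]
which gives $\cE(\rho)\sqleq X$. The same argument works if $X$ is only a bounded self-adjoint upper bound, so the supremum is the same whether it is taken in $\cT(\cH)_+$ or in $\cB(\cH)$. This proves $\cE(\rho)=\sup_n\cE(\rho_n)$.

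Finally, a CPTNI map is by definition a bounded positive map, since complete positivity implies positivity. Restricting to chains in $\pardensity{\cH}$, whose suprema stay in $\pardensity{\cH}$ by \Cref{prop:state_cpo_topology}, gives the last claim.

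The only step needing care is the first: the supremum of $\rho_n$ in the Löwner order must be a trace-class operator whose trace is the limit of the traces. This is exactly what the monotone-convergence exchange of sum and supremum, reused from \Cref{prop:state_cpo_topology}, supplies. Everything after that is immediate.
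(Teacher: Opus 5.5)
Your proof is correct and follows the paper's argument. Both establish trace-norm convergence $\rho_n\to\rho$ via the monotone-convergence trace computation of \Cref{prop:state_cpo_topology}, use boundedness of $\cE$ to get $\cE(\rho_n)\to\cE(\rho)$, use positivity for the upper bound, and take limits of $\langle\u,\cdot\,\u\rangle$ for leastness (the paper phrases this last step as trace-norm closedness of $\cT(\cH)_+$). The only cosmetic difference is that the paper rescales by $\tr(\rho)$ so as to land in $\pardensity{\cH}$ and cite that proposition verbatim, whereas you rerun its argument on $\cT(\cH)_+$. That is fine, but note that the argument also used $\tr\le 1$ for the uniform bound $\rho_n\sqleq I$; in your setting this bound is replaced by $\rho_n\sqleq\rho\sqleq\|\rho\|I$.
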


\begin{proof}
    If $\rho=0$, then $\rho_n=0$ for all $n$, and the claim is trivial.  Assume $\tr(\rho)>0$.      
    Since scalar multiplication by the positive number $\tr(\rho)$ is an
    order isomorphism, the sequence
    $\{\rho_n/\tr(\rho)\}_n$ is increasing in $\pardensity{\cH}$ and has
    supremum $\rho/\tr(\rho)$.  By
    Proposition~\ref{prop:state_cpo_topology},
    \[
        \lim_{n\to\infty}
        \left\|
            \frac{\rho}{\tr(\rho)}
            -
            \frac{\rho_n}{\tr(\rho)}
        \right\|_1
        =0.
    \]
    Multiplying by $\tr(\rho)$ gives
    \[
        \lim_{n\to\infty}\|\rho-\rho_n\|_1=0.
    \]
    Since $\cE$ is bounded,
    \[
        \|\cE(\rho)-\cE(\rho_n)\|_1
        \leq \|\cE\|\,\|\rho-\rho_n\|_1
        \xrightarrow[n\to\infty]{}0 .
    \]
    Positivity of $\cE$ gives
    \[
        0\sqleq \cE(\rho_n)\sqleq \cE(\rho_{n+1})\sqleq \cE(\rho),\,\forall n\in\bN.
    \]
    Thus $\{\cE(\rho_n)\}_n$ is an increasing sequence bounded above by
    $\cE(\rho)$.  If $\sigma\in\cT(\cH)_+$ is any other upper bound, then
    \[
        \sigma-\cE(\rho_n)\sqgeq 0,
        \qquad \forall n\in\bN .
    \]
    Since $\cE(\rho_n)\to\cE(\rho)$ in trace norm, we have
    \[
        \sigma-\cE(\rho_n)
        \xrightarrow{\|\cdot\|_1}
        \sigma-\cE(\rho).
    \]
    The positive cone $\cT(\cH)_+$ is closed in trace norm: indeed, if
    $X_k\in\cT(\cH)_+$ and $X_k\to X$ in trace norm, then
    $\|X_k-X\|_\infty\leq \|X_k-X\|_1$, so $X_k\to X$ in operator norm.
    Hence, for every $\u\in\cH$,
    \[
        \langle\u,X\u\rangle
        =
        \lim_{k\to\infty}\langle\u,X_k\u\rangle
        \geq 0,
    \]
    and therefore $X\in\cT(\cH)_+$.  Applying this to
    $X_k=\sigma-\cE(\rho_k)$ yields
    \[
        \sigma-\cE(\rho)\sqgeq 0.
    \]
    Hence $\cE(\rho)\sqleq\sigma$.  Since $\sigma$ was arbitrary,
    $\cE(\rho)$ is the least upper bound of $\{\cE(\rho_n)\}_n$.
\end{proof}

\begin{definition}[Normality in the Heisenberg picture]
\label{def:normal_heisenberg_map}
    Let $\Phi:\cB(\cH)\to\cB(\cH)$ be a positive linear map.  We say that
    $\Phi$ is \emph{normal} if for every bounded increasing net
    $\{A_\alpha\}_\alpha\subseteq\cB(\cH)_+$ with supremum
    $A=\sup_\alpha A_\alpha$ in $\cB(\cH)_+$, one has
    \[
        \Phi(A)=\sup_\alpha \Phi(A_\alpha).
    \]
\end{definition}

\begin{remark}
    Equivalently, a positive map between von Neumann algebras is normal iff it
    is ultraweakly continuous, or iff it admits a predual map.  In the special
    case of $\cB(\cH)$, the ultraweak topology is the weak-* topology induced
    by the predual $\cT(\cH)$. See \cite[Chapter 7]{kadison1997fundamentals} for more details.
\end{remark}

\begin{proposition}[Schr\"odinger--Heisenberg duality]
\label{prop:schrodinger_heisenberg_dual}
    Let $\cE:\cT(\cH)\to\cT(\cH)$ be a bounded linear map.  There exists a
    unique bounded linear map
    \[
        \cE^\dagger:\cB(\cH)\longrightarrow\cB(\cH)
    \]
    such that
    \[
        \tr(\cE(\rho)A)=\tr(\rho\,\cE^\dagger(A)),
        \qquad
        \forall \rho\in\cT(\cH),\ \forall A\in\cB(\cH).
    \]
    Moreover, $\cE$ is positive (respectively, completely positive) iff
    $\cE^\dagger$ is positive (respectively, completely positive).  
    If $\cE$ is positive, then
    \[
        \cE \text{ is TNI}
        \quad\Longleftrightarrow\quad
        \cE^\dagger(I)\sqleq I\, (\text{subunital}).
    \]
    In particular, if $\cE$ is CPTNI, then $\cE^\dagger$ is a \textbf{normal} CP subunital map on $\cB(\cH)$.
\end{proposition}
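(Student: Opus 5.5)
We obtain \(\cE^\dagger\) from the Banach-space duality \(\cT(\cH)^*\cong\cB(\cH)\) given by \(A\mapsto \tr(\,\cdot\,A)\). For fixed \(A\in\cB(\cH)\), the functional \(\rho\mapsto\tr(\cE(\rho)A)\) is bounded on \(\cT(\cH)\) with norm at most \(\|\cE\|\,\|A\|\). Hence there is a unique \(B\in\cB(\cH)\) with \(\tr(\cE(\rho)A)=\tr(\rho B)\) for all \(\rho\), and we set \(\cE^\dagger(A)\coloneqq B\).

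This choice is forced, which gives uniqueness: an operator \(B\) with \(\tr(\rho B)=0\) for all rank-one \(\rho=\sP_{\u}\) has vanishing quadratic form, so \(B=0\) by polarization. Linearity of \(\cE^\dagger\) follows from the same uniqueness, and \(\|\cE^\dagger\|\le\|\cE\|\). So \(\cE^\dagger\) is the Banach adjoint under this identification.

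\emph{Positivity.} If \(\cE\) is positive and \(A\sqgeq0\), test the identity on \(\rho=\sP_{\u}\). This gives
\[
\langle\u,\cE^\dagger(A)\u\rangle=\tr(\cE(\sP_{\u})A)\ge0,
\]
because \(\cE(\sP_{\u})\sqgeq0\) and \(A\sqgeq0\). For the converse, suppose \(\cE^\dagger\) is positive and \(\rho\sqgeq0\). Then \(\langle\u,\cE(\rho)\u\rangle=\tr(\rho\,\cE^\dagger(\sP_{\u}))\ge0\), so \(\cE(\rho)\sqgeq0\).

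\emph{Complete positivity.} Apply the same argument to the amplifications. Under \(\cT(\cH\otimes\bC^n)\cong M_n(\cT(\cH))\) and \(\cB(\cH\otimes\bC^n)\cong M_n(\cB(\cH))\), one checks \((\cE\otimes\mathcal I_n)^\dagger=\cE^\dagger\otimes\mathcal I_n\). This comes down to \(\tr\) splitting entrywise and holds on elementary tensors \(\rho\otimes E_{ij}\), hence everywhere by linearity and continuity. The positivity equivalence at level \(n\) then gives the CP equivalence.

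\emph{TNI versus subunitality.} For positive \(\cE\), taking \(A=I\) gives
\[
\tr(\cE(\rho))=\tr(\rho\,\cE^\dagger(I)).
\]
Thus \(\tr(\cE(\rho))\le\tr(\rho)\) for all \(\rho\sqgeq0\) holds iff \(\tr(\rho(I-\cE^\dagger(I)))\ge0\) for all such \(\rho\). Restricting to pure states, this is iff \(\cE^\dagger(I)\sqleq I\).

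\emph{Normality.} This is the only step needing care, and I expect it to be the main obstacle, since it concerns nets and not just sequences. Let \(0\sqleq A_\alpha\uparrow A\) be a bounded increasing net. By Vigier's theorem (the net analogue of \cite[Lemma 5.1.4]{kadison1997fundamentals}), \(A_\alpha\to A\) strongly, and also ultraweakly, i.e.\ \(\tr(\sigma A_\alpha)\to\tr(\sigma A)\) for every \(\sigma\in\cT(\cH)\).

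The ultraweak convergence is established in two moves. First prove it for \(\sigma\sqgeq0\) by writing \(\sigma=\sum_k p_k\sP_{\u_k}\); the terms \(p_k\langle\u_k,A_\alpha\u_k\rangle\) increase to \(p_k\langle\u_k,A\u_k\rangle\), and the tail is controlled uniformly by \(\sup_\alpha\|A_\alpha\|\sum_{k>N}p_k\). Then extend to general \(\sigma\) by decomposing it into four positive parts.

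With this in hand, for each \(\u\) take \(\sigma=\cE(\sP_{\u})\). This gives \(\langle\u,\cE^\dagger(A_\alpha)\u\rangle\uparrow\langle\u,\cE^\dagger(A)\u\rangle\). The net \(\cE^\dagger(A_\alpha)\) is increasing by positivity and bounded, and its supremum is characterized by exactly these quadratic-form limits. Hence \(\sup_\alpha\cE^\dagger(A_\alpha)=\cE^\dagger(A)\), which is normality. Together with the CP and subunital parts this proves the final claim for CPTNI \(\cE\).
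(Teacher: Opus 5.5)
Your proposal is correct and follows essentially the same route as the paper. You use the duality $\cT(\cH)^*\cong\cB(\cH)$ for existence and uniqueness, verifying the (complete) positivity equivalences directly where the paper just cites \cite{WolfQIPNotes}; you use the identity $\tr(\rho)-\tr(\cE(\rho))=\tr(\rho(I-\cE^\dagger(I)))$ on pure states for TNI; and for normality you combine \cite[Lemma 5.1.4]{kadison1997fundamentals} with exchanging the supremum and the spectral series of a positive trace-class operator, except that the paper first proves $\tr(\rho B)=\tr(\rho\,\cE^\dagger(A))$ for general positive $\rho$ whereas you go straight to $\sigma=\cE(\sP_{\u})$.
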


\begin{proof}
    The existence and uniqueness of $\cE^\dagger$, as well as the equivalence
    of positivity and complete positivity for $\cE$ and $\cE^\dagger$, follow
    from the trace-class duality $\cT(\cH)^*\cong\cB(\cH)$; see
    \cite[Thm.~1.55]{WolfQIPNotes}.

    It remains to record the trace-nonincreasing condition.  For
    $\rho\in\cT(\cH)_+$, the trace-pairing identity gives
    \[
        \tr(\rho)-\tr(\cE(\rho))
        =
        \tr\bigl(\rho(I-\cE^\dagger(I))\bigr).
    \]
    Hence $\tr(\cE(\rho))\leq\tr(\rho)$ for all
    $\rho\in\cT(\cH)_+$ iff
    \[
        \tr\bigl(\rho(I-\cE^\dagger(I))\bigr)\geq 0,
        \qquad
        \forall \rho\in\cT(\cH)_+ .
    \]
    Since positivity of a bounded self-adjoint operator can be checked on
    rank-one projections,
    \[
    X\sqgeq 0
    \quad\Longleftrightarrow\quad
    \tr(\sP_{\u}X)=\langle\u,X\u\rangle\geq 0
    \quad \forall\u\in\cH,
    \]
    this is equivalent to $I-\cE^\dagger(I)\sqgeq 0$, i.e., to
    $\cE^\dagger(I)\sqleq I$.

    Finally, assume that $\cE$ is CPTNI.  We prove normality of
    $\cE^\dagger$ in the sense of Definition~\ref{def:normal_heisenberg_map}.
    Let $\{A_\alpha\}_\alpha\subseteq\cB(\cH)_+$ be a bounded increasing net with supremum $A$.  
    Then $\{\cE^\dagger(A_\alpha)\}_\alpha$ is an increasing net in
    $\cB(\cH)_+$.  Moreover, since $A_\alpha\sqleq A$ and $\cE^\dagger$ is positive,
    \[
        0\sqleq \cE^\dagger(A_\alpha)\sqleq \cE^\dagger(A)
        \sqleq \|\cE^\dagger(A)\|\,I .
    \]
    By the monotone convergence theorem for bounded self-adjoint operators
    \cite[Lemma 5.1.4]{kadison1997fundamentals}, the supremum
    \[
        B\coloneqq \sup_\alpha \cE^\dagger(A_\alpha)
    \]
    exists in $\cB(\cH)_+$.
    We now show that $B=\cE^\dagger(A)$.  Fix
    $\rho\in\cT(\cH)_+$.  We first use the following elementary fact: if
    $\tau\in\cT(\cH)_+$ and $\{C_\alpha\}_\alpha\subseteq\cB(\cH)_+$ is a
    bounded increasing net with supremum $C$, then
    \[
        \tr(\tau C)=\sup_\alpha \tr(\tau C_\alpha).
    \]
    Indeed, writing the spectral decomposition
    \[
        \tau=\sum_{k=1}^{\infty}\lambda_k\sP_{\u_k},
        \qquad
        \lambda_k\geq 0,\quad \sum_k\lambda_k=\tr(\tau)<\infty,
    \]
    we have, for each $k$,
    \[
        \langle \u_k,C_\alpha \u_k\rangle
        \uparrow
        \langle \u_k,C \u_k\rangle .
    \]
    Hence, by monotone convergence for non-negative scalar nets and series,
    \[
    \begin{aligned}
        \tr(\tau C)
        &=
        \sum_{k=1}^{\infty}
        \lambda_k\langle \u_k,C \u_k\rangle        \\
        &=
        \sup_\alpha
        \sum_{k=1}^{\infty}
        \lambda_k\langle \u_k,C_\alpha \u_k\rangle \\
        &=
        \sup_\alpha \tr(\tau C_\alpha).
    \end{aligned}
    \]

    Applying this fact first to
    $\tau=\rho$, $C_\alpha=\cE^\dagger(A_\alpha)$ and $C=B$, and then to
    $\tau=\cE(\rho)$, $C_\alpha=A_\alpha$ and $C=A$, we obtain
    \[
    \begin{aligned}
        \tr(\rho B)
        &=
        \sup_\alpha
        \tr\bigl(\rho\,\cE^\dagger(A_\alpha)\bigr)  \\
        &=
        \sup_\alpha
        \tr\bigl(\cE(\rho)A_\alpha\bigr) \\
        &=
        \tr\bigl(\cE(\rho)A\bigr) \\
        &=
        \tr\bigl(\rho\,\cE^\dagger(A)\bigr).
    \end{aligned}
    \]
    Taking $\rho=\sP_{\u}$ gives
    \[
        \langle \u,B\u\rangle
        =
        \langle \u,\cE^\dagger(A)\u\rangle,
        \qquad \forall \u\in\cH .
    \]
    Since both $B$ and $\cE^\dagger(A)$ are self-adjoint, the polarization
    identity implies
    \[
        B=\cE^\dagger(A).
    \]
    Therefore
    \[
        \cE^\dagger(A)
        =
        \sup_\alpha \cE^\dagger(A_\alpha),
    \]
    so $\cE^\dagger$ is normal.
\end{proof}

We denote by $\cQ(\cH)$ the set of all CPTNI superoperators
$\cE:\cT(\cH)\to\cT(\cH)$.  By
Proposition~\ref{prop:cptni_trace_class_normality}, every element of
$\cQ(\cH)$ is automatically order-normal on the trace-class positive cone.
By Proposition~\ref{prop:schrodinger_heisenberg_dual}, its Heisenberg dual
$\cE^\dagger:\cB(\cH)\to\cB(\cH)$ is a normal CP subunital map. 
Now we establish two key structural properties of $\cQ(\cH)$: its representation via operators and its domain-theoretic completeness.

\begin{theorem}[Countable Kraus representation, \cite{kraus1971general,holevo2001statistical,WolfQIPNotes}]
\label{thm:kraus_rep}
    If $\cE:\cT(\cH)\to\cT(\cH)$ is CPTNI, then there are
    $\{K_j\}_{j\in J}\subseteq\cB(\cH)$, with $J$ \textbf{finite or countably infinite},
    such that
    \[
        \sum_{j\in J}K_j^\dagger K_j \sqleq I
        \quad\text{in the strong operator topology,}
    \]
    and
    \[
        \cE(\rho)=\sum_{j\in J}K_j\rho K_j^\dagger
        \quad(\rho\in\cT(\cH)),
        \qquad
        \cE^\dagger(A)=\sum_{j\in J}K_j^\dagger A K_j
        \quad(A\in\cB(\cH)).
    \]
    The first series converges in trace norm, and the second converges
    ultraweakly.\footnote{That is, after enumerating $J$ as a finite set or as
    $\mathbb{N}$, for every $A\in\cB(\cH)$ and every $\rho\in\cT(\cH)$,
    \[
        \tr\!\left(
            \rho\sum_{j=1}^{N}K_j^\dagger A K_j
        \right)
        \quad\text{converges as }N\to\infty,
    \]
    with limit $\tr\!\left(\rho\,\cE^\dagger(A)\right)$.}
\end{theorem}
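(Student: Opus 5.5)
Our plan is to prove the Heisenberg-side statement first and then transfer it to the Schr\"odinger side using the trace pairing of Proposition~\ref{prop:schrodinger_heisenberg_dual}. That proposition gives a normal, completely positive, subunital map $\cE^\dagger:\cB(\cH)\to\cB(\cH)$. Since $\cH$ is separable, we apply the Stinespring dilation theorem for normal CP maps. This yields a separable Hilbert space $\cK$, a normal unital $*$-representation $\pi:\cB(\cH)\to\cB(\cK)$, and an operator $V:\cH\to\cK$ with
\[
\cE^\dagger(A)=V^\dagger\pi(A)V .
\]
We may take $\cK$ to be the closed span of $\pi(\cB(\cH))V\cH$, which is separable because $\cB(\cH)$ is strongly separable on bounded sets and $\cH$ is separable. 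Subunitality gives $V^\dagger V=\cE^\dagger(I)\sqleq I$.

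The key structural step, and the one we expect to need the most care, is that every normal representation of $\cB(\cH)$ is unitarily equivalent to an amplification. Concretely, there is a unitary $W:\cK\to\cH\otimes\ell^2(J)$ with $W\pi(A)W^\dagger=A\otimes I$, where $J$ is finite or countable because $\cK$ is separable. We would prove this as follows. Fix a unit vector $e\in\cH$ and let $p=\pi(\sP_e)$. Normality gives $\sum_k\pi(\sP_{e_k})=I_\cK$ for any orthonormal basis $\{e_k\}$. An orthonormal basis $\{f_j\}_{j\in J}$ of $p\cK$ then generates $\cK$ through the partial isometries $\pi(|e_k\rangle\langle e|)$, which supplies $W$. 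Normality is used exactly here: it ensures that the rank-one projections sum to $I_\cK$ strongly.

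Given $W$, write $WV\u=\sum_{j\in J}K_j\u\otimes f_j$. This defines bounded operators $K_j$ with $\sum_j\|K_j\u\|^2=\|V\u\|^2\le\|\u\|^2$, which is the strong-operator bound $\sum_jK_j^\dagger K_j\sqleq I$. We then compute
\[
\langle\u,\cE^\dagger(A)\u\rangle=\sum_j\langle K_j\u,AK_j\u\rangle ,
\]
and polarization gives the weak form of the Heisenberg formula. To upgrade this to ultraweak convergence, we note that for $\rho=\sum_k\lambda_k\sP_{\u_k}$ the double series $\sum_k\sum_j\lambda_k\langle K_j\u_k,AK_j\u_k\rangle$ is dominated by $\|A\|\sum_k\lambda_k\|V\u_k\|^2<\infty$. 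Exchanging the order of summation is then justified, and the partial sums $\tr(\rho\sum_{j\le N}K_j^\dagger AK_j)$ converge to $\tr(\rho\,\cE^\dagger(A))$. Linearity extends this from positive $\rho$ to all of $\cT(\cH)$.

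For the Schr\"odinger side, we show that $\sum_jK_j\rho K_j^\dagger$ converges in trace norm for $\rho\sqgeq0$. The partial sums are increasing, and each has trace at most $\sum_j\tr(K_j\rho K_j^\dagger)\le\tr\rho$. Proposition~\ref{prop:state_cpo_topology}, applied after rescaling, then gives a trace-norm limit $\sigma$. Pairing $\sigma$ with any $A\in\cB(\cH)$ and using the ultraweak identity above shows $\tr(\sigma A)=\tr(\rho\,\cE^\dagger(A))=\tr(\cE(\rho)A)$, so $\sigma=\cE(\rho)$. Decomposing a general trace-class operator into four positive parts finishes the proof, and the tail estimates are uniform by the trace bound.
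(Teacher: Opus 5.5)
Your proposal is correct. Its Schr\"odinger-side half is exactly the paper's argument: monotone partial sums $\sum_{j\le N}K_j\rho K_j^\dagger$ with trace bounded by $\tr\rho$, then Proposition~\ref{prop:state_cpo_topology}, then identification of the limit with $\cE(\rho)$ by pairing against every $A\in\cB(\cH)$.

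You differ on the Heisenberg side. The paper simply cites the Kraus theorem for normal CP maps on $\cB(\cH)$ and obtains $\sum_jK_j^\dagger K_j=\cE^\dagger(I)\sqleq I$ by setting $A=I$. You instead prove that theorem from the normal Stinespring dilation, together with the fact that a normal unital representation of $\cB(\cH)$ is an amplification $A\mapsto A\otimes I$.

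Your route shows where each hypothesis is used:
\begin{itemize}
\item normality and unitality give $\sum_k\pi(\sP_{e_k})=I_{\cK}$, so the vectors $\pi(|e_k\rangle\langle e|)f_j$ form an orthonormal basis of $\cK$;
\item separability makes $J$ countable;
\item the strong-operator bound follows from $\sum_j\|K_j\u\|^2=\|V\u\|^2\le\|\u\|^2$;
\item your dominated double-series estimate proves the ultraweak convergence asserted in the footnote, rather than importing it from the reference.
\end{itemize}

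The cost is twofold. First, you still use the normality of the minimal Stinespring representation as a black box. Second, your separability claim for $\cK$ tacitly uses strong continuity of a normal $\pi$ on bounded sets. A cleaner way to get countability is to note that $p\cK$ is the closed span of the vectors $\pi(|e\rangle\langle f|)V\u$ with $f,\u\in\cH$. Since $f\mapsto\pi(|e\rangle\langle f|)$ is norm continuous, $p\cK$ is separable directly, and no strong-topology argument is needed.
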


\begin{proof}[Proof sketch]
    By Proposition~\ref{prop:schrodinger_heisenberg_dual}, the Heisenberg dual
    $\cE^\dagger:\cB(\cH)\to\cB(\cH)$ is a normal CP map satisfying
    $\cE^\dagger(I)\sqleq I$.  
    The Kraus representation theorem (see for example \cite{kraus1971general,holevo2001statistical}) for normal CP maps on $\cB(\cH)$ gives a finite or countable family
    $\{K_j\}_{j\in J}\subseteq\cB(\cH)$ such that
    \[
        \cE^\dagger(A)=\sum_{j\in J}K_j^\dagger A K_j
    \]
    ultraweakly for every $A\in\cB(\cH)$.  Taking $A=I$ gives
    \[
        \sum_{j\in J}K_j^\dagger K_j
        =
        \cE^\dagger(I)
        \sqleq I
    \]
    in the strong operator topology.

    It remains to identify the predual action. 
    For any $\rho\in\cT(\cH)_+$, let $S_N(\rho)=\sum_{j=1}^{N}K_j\rho K_j^\dagger$.  
    Since $S_N(\rho)$ is increasing and
    \[
        \tr(S_N(\rho))
        =
        \tr\!\left(\rho\sum_{j=1}^{N}K_j^\dagger K_j\right)
        \leq \tr(\rho),
    \]
    it is trace-norm Cauchy, hence converges in $\cT(\cH)_+$ to some
    $S(\rho)$.  For every $A\in\cB(\cH)$,
    \[
        \tr(S(\rho)A)
        =
        \lim_N
        \tr\!\left(\rho\sum_{j=1}^{N}K_j^\dagger A K_j\right)
        =
        \tr(\rho\,\cE^\dagger(A))
        =
        \tr(\cE(\rho)A).
    \]
    Thus $S(\rho)=\cE(\rho)$ by trace duality.  The general case follows by
    linearity from the decomposition of an arbitrary trace-class operator into
    four positive ones.
\end{proof}

\begin{remark}
    Conversely, any family $\{K_j\}_{j\in J}$ satisfying
    $\sum_j K_j^\dagger K_j\sqleq I$ and for which
    $\sum_j K_j\rho K_j^\dagger$ converges in trace norm defines a CPTNI map.
    This converse direction is standard and will not be used in the sequel.
\end{remark}

For loop semantics, we need closure of CPTNI maps under increasing countable suprema.
\begin{lemma}[Closure of CPTNI Maps under Increasing Countable Suprema]
\label{lem:cpo_domain}
    We equip $\cQ(\cH)$ with the \emph{pointwise L\"owner order} $\sqsubseteq$:
    \[
        \cE \sqsubseteq \mathcal{F} \iff \cE(\rho) \sqsubseteq \mathcal{F}(\rho), \quad \forall \rho \in \pardensity{\cH}.
    \]
    The poset $(\cQ(\cH), \sqsubseteq)$ forms an $\omega$-Complete Partial Order ($\omega$-CPO) with a least element $\bot$ (the zero map). Specifically, for any increasing sequence $\cE_0 \sqsubseteq \cE_1 \sqsubseteq \dots$, the pointwise supremum $\cE = \sup_n \cE_n$ is a well-defined CPTNI map.
    Moreover, for any state $\rho \in \pardensity{\cH}$, the sequence $\cE_n(\rho)$ converges to $\cE(\rho)$ in the trace norm topology.
\end{lemma}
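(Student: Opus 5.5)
The plan is to define the supremum pointwise on positive trace-class inputs and then extend it by linearity. For $\rho\in\cT(\cH)_+$, the sequence $\cE_n(\rho)$ is increasing in the L\"owner order and bounded in trace by $\tr(\rho)$. After rescaling, as in the proof of Proposition~\ref{prop:cptni_trace_class_normality}, Proposition~\ref{prop:state_cpo_topology} gives a supremum $\cE(\rho)\in\cT(\cH)_+$ with $\|\cE(\rho)-\cE_n(\rho)\|_1=\tr(\cE(\rho))-\tr(\cE_n(\rho))\to 0$. This already yields the trace-norm convergence statement on $\pardensity{\cH}$, and $\tr\cE(\rho)=\sup_n\tr\cE_n(\rho)\le\tr(\rho)$ gives TNI.

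Next I will show that $\cE$ is additive and positively homogeneous on $\cT(\cH)_+$. Both properties pass to trace-norm limits, since $\cE_n(\rho+\sigma)=\cE_n(\rho)+\cE_n(\sigma)$ for each $n$ and each side converges in trace norm. Every $X\in\cT(\cH)$ decomposes as $X=X_1-X_2+\ii(X_3-X_4)$ with $X_i\in\cT(\cH)_+$ and $\sum_i\|X_i\|_1\le 2\|X\|_1$; the self-adjoint parts split via Jordan decomposition. Using this, I extend $\cE$ by linearity; well-definedness follows from additivity on the cone. This gives the bound $\|\cE(X)\|_1\le\sum_i\tr(X_i)\le 2\|X\|_1$, so $\cE$ is a bounded linear map.

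The same limit argument works for arbitrary $X$. Each $\cE_n$ is linear, so $\cE_n(X)=\sum\pm\cE_n(X_i)$ converges in trace norm to $\cE(X)$. Hence $\cE_n\to\cE$ pointwise in trace norm on all of $\cT(\cH)$.

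Complete positivity is the step needing care, because the supremum is taken on $\cT(\cH)$ and not on the amplified space. I will show that $(\cE\otimes\mathcal I_k)(Y)=\lim_n(\cE_n\otimes\mathcal I_k)(Y)$ in trace norm for $Y\in\cT(\cH\otimes\bC^k)$. Write $Y=\sum_{i,j}Y_{ij}\otimes\ket i\!\bra j$ with $Y_{ij}\in\cT(\cH)$; there are finitely many blocks. The amplification then acts blockwise, so blockwise trace-norm convergence from the previous paragraph gives convergence of the whole. Each $(\cE_n\otimes\mathcal I_k)(Y)$ is positive when $Y\sqgeq 0$, and the positive cone is trace-norm closed, as argued in Proposition~\ref{prop:cptni_trace_class_normality}. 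Therefore the limit is positive, and $\cE$ is CP.

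Finally, $\cE$ is an upper bound of the chain by construction. It is the least upper bound in $\cQ(\cH)$ because any CPTNI upper bound $\mathcal F$ satisfies $\cE_n(\rho)\sqleq\mathcal F(\rho)$ for all $n$, and hence $\cE(\rho)=\sup_n\cE_n(\rho)\sqleq\mathcal F(\rho)$. The zero map is CPTNI and lies below everything, so it is the least element.
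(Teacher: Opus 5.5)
Your proposal is correct and follows essentially the same route as the paper: pointwise supremum on the positive cone via Proposition~\ref{prop:state_cpo_topology} after rescaling, linear extension through the four-positive decomposition with the $2\|X\|_1$ bound, blockwise trace-norm convergence together with trace-norm closedness of the positive cone for complete positivity, and the direct least-upper-bound argument. The only thing the paper adds is the check that $\sqsubseteq$ is antisymmetric on $\cQ(\cH)$, so that it really is a poset; this is immediate, since agreement on $\pardensity{\cH}$ forces agreement on all of $\cT(\cH)$ by rescaling and linearity.
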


\begin{proof}
    The zero map $\bot(A)=0$ is CPTNI and is clearly the least element.

    We first note that $\sqsubseteq$ is indeed a partial order on $\cQ(\cH)$.
    If $\cE\sqsubseteq\mathcal F$ and $\mathcal F\sqsubseteq\cE$, then
    $\cE(\rho)=\mathcal F(\rho)$ for all $\rho\in\pardensity{\cH}$.  For every
    $A\in\cT(\cH)_+$ with $A\neq 0$,
    \[
        \frac{A}{\tr(A)}\in\pardensity{\cH},
    \]
    hence $\cE(A)=\mathcal F(A)$ by linearity.  Since every trace-class
    operator is a linear combination of four positive trace-class operators, we
    obtain $\cE=\mathcal F$ on $\cT(\cH)$.

    Let
    \[
        \cE_0\sqsubseteq \cE_1\sqsubseteq \cdots
    \]
    be an increasing sequence in $\cQ(\cH)$.  For each \(\rho\in\pardensity{\cH}\), the chain assumption
    \(\cE_n\sqsubseteq \cE_{n+1}\) gives
    $\cE_n(\rho)\sqleq \cE_{n+1}(\rho)$.
    Moreover, since each \(\cE_n\) is positive and TNI, each
    \(\cE_n(\rho)\) belongs to \(\pardensity{\cH}\).  By
    Proposition~\ref{prop:state_cpo_topology}, the supremum exists in
    $\pardensity{\cH}$ and
    \[
        \cE_n(\rho)\xrightarrow{\|\cdot\|_1}
        \sup_n \cE_n(\rho).
    \]
    Define
    \[
        \cE(\rho)\coloneqq \sup_n\cE_n(\rho),
        \qquad \rho\in\pardensity{\cH}.
    \]

    We extend $\cE$ to $\cT(\cH)$.  For $A\in\cT(\cH)_+$, set
    $\cE(0)=0$, and if $A\neq 0$, define
    \[
        \cE(A)
        \coloneqq
        \tr(A)\,
        \cE\!\left(\frac{A}{\tr(A)}\right).
    \]
    Then $\cE_n(A)\to\cE(A)$ in trace norm.

    For an arbitrary $X\in\cT(\cH)$, choose a decomposition
    \[
        X=X_1-X_2+iX_3-iX_4,
        \qquad X_j\in\cT(\cH)_+ .
    \]
    Since each sequence $\{\cE_n(X_j)\}_n$ converges in trace norm, the sequence
    $\{\cE_n(X)\}_n$ also converges in trace norm.  We define
    \[
        \cE(X)\coloneqq \lim_{n\to\infty}\cE_n(X).
    \]
    This definition is independent of the chosen decomposition.  Linearity of
    $\cE$ follows by passing to the trace-norm limit in
    \[
        \cE_n(\alpha X+\beta Y)
        =
        \alpha\cE_n(X)+\beta\cE_n(Y).
    \]

        We next prove complete positivity.  Fix a finite-dimensional Hilbert space
    $\cK$ with $\dim\cK=k$, and let
    \[
        X\in\cT(\cH\otimes\cK)_+ .
    \]
    Choose an orthonormal basis $\{\ket{1},\ldots,\ket{k}\}$ of $\cK$.  For
    $1\leq i,j\leq k$, set
    \[
        L_i\coloneqq I_{\cH}\otimes\bra{i}:
        \cH\otimes\cK\to\cH,
        \qquad
        R_j\coloneqq I_{\cH}\otimes\ket{j}:
        \cH\to\cH\otimes\cK,
    \]
    and define the $(i,j)$-block of $X$ by
    \[
        X_{ij}\coloneqq L_iXR_j .
    \]
    Since $L_i$ and $R_j$ are bounded and trace-class operators form a two-sided
    ideal in bounded operators, each $X_{ij}$ belongs to $\cT(\cH)$.  Moreover,
    with respect to the chosen basis of $\cK$,
    \[
        X=\sum_{i,j=1}^{k}X_{ij}\otimes\ket{i}\bra{j}.
    \]

    By the construction of $\cE$ above, for every trace-class operator
    $A\in\cT(\cH)$ we have
    \[
        \cE_n(A)\xrightarrow{\|\cdot\|_1}\cE(A).
    \]
    Applying this to each block $X_{ij}$ gives
    \[
        \cE_n(X_{ij})
        \xrightarrow{\|\cdot\|_1}
        \cE(X_{ij}),
        \qquad 1\leq i,j\leq k .
    \]
    Now put
    \[
        Y_n\coloneqq(\cE_n\otimes\mathcal I_{\cK})(X),
        \qquad
        Y\coloneqq(\cE\otimes\mathcal I_{\cK})(X).
    \]
    Then
    \[
        Y_n-Y
        =
        \sum_{i,j=1}^{k}
        \bigl(\cE_n(X_{ij})-\cE(X_{ij})\bigr)
        \otimes\ket{i}\bra{j}.
    \]
    Hence, using the triangle inequality and
    $\|A\otimes B\|_1=\|A\|_1\|B\|_1$ for trace-class operators,
    \[
    \begin{aligned}
        \|Y_n-Y\|_1
        &\leq
        \sum_{i,j=1}^{k}
        \left\|
            \bigl(\cE_n(X_{ij})-\cE(X_{ij})\bigr)
            \otimes\ket{i}\bra{j}
        \right\|_1                                      \\
        &=
        \sum_{i,j=1}^{k}
        \|\cE_n(X_{ij})-\cE(X_{ij})\|_1\,
        \|\ket{i}\bra{j}\|_1                            \\
        &=
        \sum_{i,j=1}^{k}
        \|\cE_n(X_{ij})-\cE(X_{ij})\|_1
        \xrightarrow[n\to\infty]{}0 .
    \end{aligned}
    \]
    Therefore
    \[
        (\cE_n\otimes\mathcal I_{\cK})(X)
        \xrightarrow{\|\cdot\|_1}
        (\cE\otimes\mathcal I_{\cK})(X).
    \]

    Since each $\cE_n$ is completely positive,
    \[
        (\cE_n\otimes\mathcal I_{\cK})(X)\sqgeq 0
        \qquad\forall n.
    \]
    The positive cone of $\cT(\cH\otimes\cK)$ is closed in trace norm: indeed,
    trace-norm convergence implies operator-norm convergence, and positivity is
    closed in operator norm.  Hence the trace-norm limit is positive:
    \[
        (\cE\otimes\mathcal I_{\cK})(X)\sqgeq 0.
    \]
    Since $\cK$ was arbitrary finite-dimensional, $\cE$ is completely positive.

    We now prove trace non-increase.  Let $A\in\cT(\cH)_+$.  Since
    $\cE_n(A)\to\cE(A)$ in trace norm and the trace is trace-norm continuous,
    \[
        \tr(\cE(A))
        =
        \lim_{n\to\infty}\tr(\cE_n(A))
        \leq
        \tr(A).
    \]
    Hence $\cE$ is TNI.

    Finally, $\cE$ is bounded.  If $X=X^\dagger$ and
    $X=X_+-X_-$ is its Jordan decomposition, then CP implies positivity, and
    TNI gives
    \[
        \|\cE(X)\|_1
        \leq
        \|\cE(X_+)\|_1+\|\cE(X_-)\|_1
        =
        \tr(\cE(X_+))+\tr(\cE(X_-))
        \leq
        \tr(X_+)+\tr(X_-)
        =
        \|X\|_1 .
    \]
    For general $X$, write
    \[
        X=\operatorname{Re}X+i\,\operatorname{Im}X.
    \]
    Since
    \[
        \|\operatorname{Re}X\|_1\leq\|X\|_1,
        \qquad
        \|\operatorname{Im}X\|_1\leq\|X\|_1,
    \]
    we obtain
    \[
        \|\cE(X)\|_1\leq 2\|X\|_1.
    \]
    Thus $\cE:\cT(\cH)\to\cT(\cH)$ is a bounded linear map.

    Therefore $\cE\in\cQ(\cH)$.  Moreover, for every
    $\rho\in\pardensity{\cH}$,
    \[
        \cE_n(\rho)\sqleq \cE(\rho),
    \]
    so $\cE$ is an upper bound of the chain.  If $\mathcal F\in\cQ(\cH)$ is any
    other upper bound, then for every $\rho\in\pardensity{\cH}$,
    \[
        \cE_n(\rho)\sqleq \mathcal F(\rho)
        \quad\forall n,
    \]
    hence
    \[
        \cE(\rho)=\sup_n\cE_n(\rho)\sqleq \mathcal F(\rho).
    \]
    Thus $\cE=\sup_n\cE_n$ in $(\cQ(\cH),\sqsubseteq)$.  The trace-norm
    convergence on $\pardensity{\cH}$ was established above, so
    $(\cQ(\cH),\sqsubseteq)$ is an $\omega$-CPO.
\end{proof}

\begin{remark}[\(\omega\)-CPO Structure for the Complete-Positivity Order]
\label{rem:cp_order_cpo}
    The semantic domain $\cQ(\cH)$ may also be equipped with the stronger
    complete-positivity order
    \[
        \cE\sqleq_{\mathrm{cp}}\mathcal{F}
        \iff
        \mathcal{F}-\cE \text{ is completely positive}.
    \]
    Equivalently,
    \[
        (\cE\otimes\mathcal{I}_k)(X)
        \sqleq
        (\mathcal{F}\otimes\mathcal{I}_k)(X)
    \]
    for every $k\in\mathbb{N}$ and every
    $X\in\cT(\cH\otimes\mathbb{C}^k)_+$.  The proof of
    Lemma~\ref{lem:cpo_domain} shows that trace-norm limits preserve complete positivity through finite block amplifications.  
    Hence every $\sqleq_{\mathrm{cp}}$-increasing sequence has the same pointwise supremum
    as in the pointwise L\"owner-order construction, and this supremum again belongs to $\cQ(\cH)$. 
    Consequently, $(\cQ(\cH), \sqsubseteq_{\mathrm{cp}})$ is also a valid $\omega$-CPO sharing the exact same supremum structure and least element $\bot$.
    This stronger order is recorded only for orientation; the loop semantics below uses the pointwise L\"owner order.
\end{remark}

\subsubsection{Structural Semantics}

Since continuous measurements require integration of trace-class-valued branch
states over measurement outcome spaces, we use the Bochner integral framework.
We distinguish $\Sigma$-measurability from the almost-everywhere
measurability used in Bochner integration.
Under our standing assumption that $\cH$ is separable, the trace-class space
$\cT(\cH)$ is a separable Banach space under the trace norm.  Indeed,
finite-rank operators whose matrix coefficients, relative to a fixed countable orthonormal basis, lie in $\bQ+\ii\bQ$ form a countable
trace-norm dense subset.

\begin{definition}[$\Sigma$-simple and $\mu$-simple functions]
\label{def:simple_functions}
Let $(\Omega,\Sigma,\mu)$ be a $\sigma$-finite measure space.  A function
$s:\Omega\to\cT(\cH)$ is called $\Sigma$-simple if it has the form
\[
    s(\omega)=\sum_{i=1}^k \rho_i\chi_{E_i}(\omega),
\]
where $k\in\mathbb N$, $\rho_i\in\cT(\cH)$, and
$E_i\in\Sigma$ are measurable.

It is called $\mu$-simple if it admits such a representation with
\(\mu(E_i)<\infty\) for every $i$ .
\end{definition}

\begin{definition}[Bochner measurability]
\label{def:bochner_measurability}
Let $(\Omega,\Sigma,\mu)$ be a $\sigma$-finite measure space.  A function
\[
    f:\Omega\to\cT(\cH)
\]
is \emph{Bochner measurable}, or \emph{strongly $\mu$-measurable}, if there
exists a sequence of $\mu$-simple functions $f_n$ such that
\[
    \|f_n(\omega)-f(\omega)\|_1\to 0
\]
for $\mu$-almost every $\omega\in\Omega$.
\end{definition}

We recall a standard completion-based characterization of Bochner
measurability. We denote by $\Sigma^*$ the \emph{Lebesgue extension} (or \emph{completion}) of $\Sigma$, which is the smallest $\sigma$-algebra containing $\Sigma$ and all subsets of $\mu$-null sets.

\begin{proposition}[Completion criterion for Bochner measurability, {\cite[Corollary 1.1.10, Proposition 1.1.16]{hytonen2016analysis};
see also \cite{dunford1958linear}}]
\label{prop:bochner_equivalent_def}
Let $(\Omega,\Sigma,\mu)$ be a
$\sigma$-finite measure space.  A function
\(
    f:\Omega\to\cT(\cH)
\)
is Bochner measurable if and only if it is
\((\Sigma^*,\operatorname{Borel}(\cT(\cH)))\)-measurable\footnote{i.e., the preimage of every open set $U \subseteq \mathcal{T}(\mathcal{H})$ (under the trace-norm topology) is $\Sigma^*$-measurable.}, where $\cT(\cH)$ carries the trace-norm topology.

In particular, since $\Sigma\subseteq\Sigma^*$, every
\((\Sigma,\operatorname{Borel}(\cT(\cH)))\)-measurable function is Bochner measurable.
\end{proposition}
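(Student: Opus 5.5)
The plan is to reduce both directions to the standard theory of strongly measurable functions into a separable Banach space, using the separability of $\cT(\cH)$ noted just before \Cref{def:simple_functions}. Recall Pettis' measurability theorem. A function $f:\Omega\to X$ into a separable Banach space is strongly $\mu$-measurable iff it agrees $\mu$-a.e.\ with a function that is pointwise a limit of $\Sigma$-simple functions. Such pointwise limits are exactly the $(\Sigma,\operatorname{Borel}(X))$-measurable functions.

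For the direction ``Bochner measurable $\Rightarrow$ $\Sigma^*$-measurable'', I take $\mu$-simple $f_n$ with $f_n\to f$ off a $\mu$-null set $N\in\Sigma$. Each $f_n$ is $\Sigma$-measurable. Let $g:=\lim_n f_n\chi_{\Omega\setminus N}$; as a pointwise limit of Borel-measurable maps into a metric space, $g$ is $\Sigma$-measurable. Now $f=g$ off $N$, so for every open $U$ we have $f^{-1}(U)\,\triangle\, g^{-1}(U)\subseteq N$. Hence $f^{-1}(U)\in\Sigma^*$.

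For the converse, let $f$ be $\Sigma^*$-measurable. First I would find a $\Sigma$-measurable $g$ with $f=g$ off a $\Sigma$-null set. To do this, fix a countable dense set $\{d_k\}$ in $\cT(\cH)$ and approximate $f$ by countably-valued functions $f_n(\omega)=d_{k(n,\omega)}$, where $k(n,\omega)$ is the least $k$ with $\|f(\omega)-d_k\|_1<1/n$. The level sets of $f_n$ are Boolean combinations of preimages of open balls, hence in $\Sigma^*$. Each $\Sigma^*$ set differs from a $\Sigma$ set by a subset of a $\Sigma$-null set, and there are only countably many level sets. So I can modify every $f_n$ off a single null set $N\in\Sigma$ to make it $\Sigma$-measurable, and the limit off $N$ gives $g$. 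Second, I convert the countably-valued $\Sigma$-measurable approximants into $\mu$-simple ones. Using $\sigma$-finiteness, write $\Omega=\bigcup_m\Omega_m$ increasing with $\mu(\Omega_m)<\infty$. Let $s_n$ be the truncation of $f_n\chi_{\Omega\setminus N}$ to $\Omega_n$ and to the first $n$ values $d_1,\dots,d_n$, set to $0$ elsewhere. Then $s_n$ is $\mu$-simple, and $s_n(\omega)\to f(\omega)$ for every $\omega\notin N$.

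The last step is the main technical obstacle. The truncation to finitely many values must still converge, so I will use the $n$-dependent index cutoff: for fixed $\omega$, the index $k(n,\omega)$ is eventually below any cutoff $K_n\to\infty$ only if $K_n$ grows suitably. I would choose $K_n$ by a standard diagonal argument, or cite Hyt\"onen et al.\ for this step. The final ``in particular'' clause is immediate, since $\Sigma\subseteq\Sigma^*$.
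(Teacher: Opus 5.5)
Your forward direction is correct, and so is the first half of the converse: you build a $\Sigma$-measurable $g$ with $g=f$ off a $\Sigma$-null set $N$ by pushing the countably many $\Sigma^*$ level sets into $\Sigma$. The gap is in the final step, which you flag yourself but do not close.

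With the first-hit index $k(n,\omega)$, truncating to $\Omega_n$ and to the values $d_1,\dots,d_n$ does \emph{not} give $s_n(\omega)\to f(\omega)$ off $N$. The index $k(n,\omega)$ is nondecreasing in $n$, but it may exceed $n$ for every $n$. For instance, take $f(\omega)=x_0\neq 0$ and an enumeration satisfying $\|d_k-x_0\|_1\ge 1/k$ for all $k$; this is compatible with density. Then $k(n,\omega)>n$ for all $n$, so $s_n(\omega)=0\not\to x_0$.

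Replacing $n$ by a cutoff $K_n$ chosen ``by a standard diagonal argument'' does not repair this. The cutoff $K_n$ must not depend on $\omega$. A diagonal argument only dominates countably many of the sequences $n\mapsto k(n,\omega)$, and over uncountably many $\omega$ these need not be eventually dominated by a single sequence. The missing input is measure-theoretic, not combinatorial.

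Either of the following closes the gap.

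(a) On $\Omega_n\setminus N$ the map $k(n,\cdot)$ is a finite $\Sigma$-measurable function, and $\mu(\Omega_n)<\infty$. By continuity from above you can pick $K_n$ with $\mu(\{\omega\in\Omega_n\setminus N:\ k(n,\omega)>K_n\})<2^{-n}$. By Borel--Cantelli, $\mu$-a.e.\ $\omega$ satisfies $k(n,\omega)\le K_n$ for all large $n$. So the $K_n$-truncations are $\mu$-simple and converge $\mu$-a.e., which is all Bochner measurability asks for.

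(b) More simply, use the nearest-point construction from the standard proof of Pettis' theorem; this is the route behind the Hyt\"onen et al.\ results that the paper cites in lieu of giving a proof. Set $s_n(\omega)\coloneqq\chi_{\Omega_n}(\omega)\,d_{j_n(\omega)}$, where $j_n(\omega)$ is the least $j\le n$ minimizing $\|g(\omega)-d_j\|_1$.
\begin{itemize}
\item Its level sets lie in $\Sigma$, because each map $\omega\mapsto\|g(\omega)-d_j\|_1$ is $\Sigma$-measurable.
\item Each $s_n$ is $\mu$-simple.
\item For every $\omega$, $\|s_n(\omega)-g(\omega)\|_1=\min_{j\le n}\|g(\omega)-d_j\|_1\to 0$ by density (once $\omega\in\Omega_n$), so $s_n\to f$ off $N$.
\end{itemize}

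With either fix your argument becomes a complete, self-contained proof.
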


Direct verification of trace-norm Borel measurability is sometimes
inconvenient.  For trace-class-valued maps, the following scalar criterion is
often easier to apply.

\begin{definition}[Weak $\Sigma$-measurability]
\label{def:weak_measurability}
Let $(\Omega,\Sigma,\mu)$ be a $\sigma$-finite measure space.  By the canonical duality
\(\cT(\cH)^*\cong\cB(\cH)\),
a map
\(
    f:\Omega\to\cT(\cH)
\)
is called \emph{weakly $\Sigma$-measurable} if, for every bounded operator
$B\in\cB(\cH)$, the scalar function
\[
    \omega\longmapsto \tr(Bf(\omega))
\]
is $(\Sigma,\operatorname{Borel}(\bC))$-measurable.
\end{definition}

\begin{theorem}[Weak-to-strong $\Sigma$-measurability for trace-class maps, {\cite[Corollary 1.1.2, Corollary 1.1.10]{hytonen2016analysis}}]
\label{thm:pettis}
Let $(\Omega,\Sigma)$ be a measurable space.
Since $\cT(\cH)$ is separable in the trace norm,
if
\(
    f:\Omega\to\cT(\cH)
\)
is weakly $\Sigma$-measurable, then \(f\) is
\((\Sigma,\operatorname{Borel}(\cT(\cH)))\)-measurable, where \(\cT(\cH)\) carries the trace-norm topology. In addition, if \((\Omega,\Sigma,\mu)\) is a a $\sigma$-finite measure space, by \Cref{prop:bochner_equivalent_def} \(f\) is Bochner measurable.
\end{theorem}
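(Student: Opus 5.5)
The plan is to prove the separable Pettis measurability theorem directly in the concrete setting of \(\cT(\cH)\). The idea is to express the trace norm as a countable supremum of the scalar functionals \(X\mapsto|\tr(BX)|\). Weak measurability then turns into measurability of distances, and separability turns that into Borel measurability.

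\emph{Step 1 (countable norming family).} We use separability of \(\cT(\cH)\) in trace norm, already noted before \Cref{def:simple_functions}, and fix a countable trace-norm dense set \(\{X_k\}_{k\in\bN}\subseteq\cT(\cH)\). For each \(k\), take the polar decomposition \(X_k=V_k|X_k|\) and set \(B_k\coloneqq V_k^\dagger\in\cB(\cH)\). Then \(\|B_k\|\le1\) and \(\tr(B_kX_k)=\tr(|X_k|)=\|X_k\|_1\). We claim that
\[
    \|X\|_1=\sup_{k}|\tr(B_kX)|\qquad\forall X\in\cT(\cH).
\]
The inequality ``\(\ge\)'' follows from \(|\tr(BX)|\le\|B\|\,\|X\|_1\). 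For ``\(\le\)'', given \(\epsilon>0\) we choose \(X_k\) with \(\|X-X_k\|_1<\epsilon\). Then
\[
    |\tr(B_kX)|\ge\|X_k\|_1-\epsilon\ge\|X\|_1-2\epsilon .
\]

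\emph{Step 2 (measurability of distances).} Fix \(Y\in\cT(\cH)\). By Step 1,
\[
    \|f(\omega)-Y\|_1=\sup_k\bigl|\tr(B_kf(\omega))-\tr(B_kY)\bigr| .
\]
Each function inside the supremum is \(\Sigma\)-measurable by weak measurability (\Cref{def:weak_measurability}), and a countable supremum of measurable functions is measurable. Hence \(\omega\mapsto\|f(\omega)-Y\|_1\) is measurable, and so the preimage \(f^{-1}(\{X:\|X-Y\|_1<r\})\) of every open ball lies in \(\Sigma\).

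\emph{Step 3 (from balls to Borel sets).} In a separable metric space, every open set is a countable union of open balls centred at points of the dense set with rational radii. So \(f^{-1}(O)\in\Sigma\) for every open \(O\). Since the open sets generate \(\operatorname{Borel}(\cT(\cH))\), \(f\) is \((\Sigma,\operatorname{Borel}(\cT(\cH)))\)-measurable. The final claim, Bochner measurability when \(\mu\) is \(\sigma\)-finite, then follows immediately from \Cref{prop:bochner_equivalent_def}, because \(\Sigma\subseteq\Sigma^*\).

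The main obstacle is Step 1. One must make sure the norming family is countable and is realised by genuine bounded operators under the paper's pairing \(\tr(BX)\), with the conjugate-linear inner-product convention. The polar-decomposition choice \(B_k=V_k^\dagger\) handles this explicitly, so no appeal to Hahn--Banach is needed. The remaining steps are routine.
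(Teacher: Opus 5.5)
Your proof is correct. The paper gives no argument of its own for this statement. It invokes the abstract Pettis measurability theorem for separable Banach spaces from Hyt\"onen et al., and then uses \Cref{prop:bochner_equivalent_def} for the Bochner part.

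Your Steps 2--3 are the standard proof of that abstract theorem. A countable norming family makes $\omega\mapsto\|f(\omega)-Y\|_1$ a countable supremum of measurable functions, and separability reduces every open set to a countable union of balls. So the real difference lies in Step 1.

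The abstract route obtains its norming functionals from Hahn--Banach. To feed the paper's definition of weak measurability into it, one also needs the identification $\cT(\cH)^*\cong\cB(\cH)$, i.e.\ that every trace-norm continuous functional has the form $X\mapsto\tr(BX)$. Your choice $B_k=V_k^\dagger$, taken from the polar decompositions of a dense sequence, needs only $|\tr(BX)|\le\|B\|\,\|X\|_1$ and $V_k^\dagger X_k=|X_k|$. This is the same polar-decomposition trick the paper itself uses in the uniqueness part of the bind semantics.

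What your route buys:
\begin{itemize}
\item It is self-contained and avoids Hahn--Banach and the duality identification.
\item It proves slightly more: measurability of $\omega\mapsto\tr(B_kf(\omega))$ for one fixed countable family $\{B_k\}$ already implies Borel measurability.
\end{itemize}
The citation is shorter and applies verbatim to any separable Banach space.
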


\begin{proposition}[Bochner integrability criterion, {\cite[Proposition 1.2.2]{hytonen2016analysis}}]
\label{prop:bochner_integrability_criterion}
Let $(\Omega,\Sigma,\mu)$ be a $\sigma$-finite measure space.  If
\(f:\Omega\to\cT(\cH)\)
is Bochner measurable and
\[
    \int_\Omega \|f(\omega)\|_1\,d\mu(\omega)<\infty,
\]
then \(f\) is Bochner integrable.
\end{proposition}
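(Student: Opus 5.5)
The claim is the Bochner integrability criterion: if $f$ is Bochner measurable and $\int_\Omega\|f\|_1\,d\mu<\infty$, then $f$ is Bochner integrable. The plan is to take the definition of Bochner integrability to be the existence of $\mu$-simple functions $g_n$ with $g_n\to f$ $\mu$-a.e. and $\int_\Omega\|g_n-f\|_1\,d\mu\to 0$. The integral $\int f\,d\mu$ is then defined as the trace-norm limit of $\int g_n\,d\mu$. The whole task is to upgrade the a.e.-approximating sequence supplied by Bochner measurability (\Cref{def:bochner_measurability}) into one that also converges in $L^1$. The main tool is dominated convergence for the scalar functions $\|g_n-f\|_1$.

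First, by \Cref{def:bochner_measurability}, fix $\mu$-simple $f_n$ with $\|f_n(\omega)-f(\omega)\|_1\to 0$ off a null set $N$. The scalar map $\omega\mapsto\|f(\omega)\|_1$ is then $\Sigma^*$-measurable, as an a.e. limit of the simple scalars $\|f_n\|_1$ (alternatively, use \Cref{prop:bochner_equivalent_def}). So the hypothesis $\int\|f\|_1<\infty$ makes sense.

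Second, truncate the approximants so that they are dominated. This is the key step and the main obstacle. Set
\[
    g_n(\omega)\coloneqq f_n(\omega)\,\chi_{\{\|f_n(\omega)\|_1\le 2\|f(\omega)\|_1\}}(\omega).
\]
Each $g_n$ is again $\mu$-simple, at least after modifying $\{\|f\|_1>0\}$ on a null set. Its support lies inside that of $f_n$, which has finite measure, and the truncating set is $\Sigma^*$-measurable. If one insists on $\Sigma$-measurable sets, replace the set by a $\Sigma$-measurable set differing from it by a null set; this does not affect a.e. convergence or integrals.

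Then $\|g_n\|_1\le 2\|f\|_1$ everywhere. Pointwise convergence $g_n\to f$ still holds off $N$:
\begin{itemize}
    \item If $f(\omega)\neq 0$, then eventually $\|f_n(\omega)\|_1<2\|f(\omega)\|_1$, so $g_n(\omega)=f_n(\omega)$ for large $n$.
    \item If $f(\omega)=0$, then $g_n(\omega)$ is either $0$ or $f_n(\omega)\to 0$.
\end{itemize}

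Third, we have $\|g_n-f\|_1\le 3\|f\|_1\in L^1(\mu)$ and $\|g_n-f\|_1\to 0$ a.e. Scalar dominated convergence then gives $\int\|g_n-f\|_1\,d\mu\to0$. Consequently
\[
    \Bigl\|\int g_n\,d\mu-\int g_m\,d\mu\Bigr\|_1\le\int\|g_n-g_m\|_1\,d\mu\to 0,
\]
so the integrals of the simple functions form a Cauchy sequence in the Banach space $\cT(\cH)$. They therefore converge, and the limit is independent of the approximating sequence by the same estimate applied to interleaved sequences. This is exactly Bochner integrability of $f$.

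The only delicate points are the measurability of the truncation set and the finite-measure support of $g_n$. Both are handled by working in $\Sigma^*$ and invoking \Cref{prop:bochner_equivalent_def}.
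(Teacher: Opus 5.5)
Your proposal is correct and is essentially the standard proof of the cited result \cite[Proposition~1.2.2]{hytonen2016analysis}, which the paper invokes by citation rather than proving: you truncate the a.e.\ approximants to $g_n=\chi_{\{\|f_n\|_1\le 2\|f\|_1\}}f_n$ and apply scalar dominated convergence with majorant $3\|f\|_1$. One cosmetic point: once you replace the $\Sigma^*$-truncation sets by $\Sigma$-sets, the bound $\|g_n\|_1\le 2\|f\|_1$ and the convergence $g_n\to f$ hold only outside a countable union of null sets rather than everywhere, and this is all that dominated convergence requires.
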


Equipped with these measurability preliminaries, we now state the
well-definedness and structural properties of the denotational semantics.

The structural clauses are first written on the positive trace-class cone
$\cT(\cH)_+$.  This is intentional: measurement branches and loop approximants
are positive, and the loop semantics is constructed using increasing suprema
of positive operators.  The theorem below asserts that these positive-cone
clauses are not merely state transformers; for each fixed meta-parameter
instance, they extend uniquely to CPTNI superoperators on the full trace-class
space.

We shall use the following elementary reduction from positive-cone
measurability to full trace-class measurability.

\begin{lemma}[Positive-cone extension and measurability reduction]
\label{lem:positive_cone_reduction}
Let $(\Omega,\Sigma)$ be a measurable space.  Suppose that, for every
$\omega\in\Omega$, a map
\[
    \Phi^+_\omega:\cT(\cH)_+\longrightarrow\cT(\cH)_+
\]
is cone-linear, i.e.,
\[
    \Phi^+_\omega(\rho+\sigma)
    =
    \Phi^+_\omega(\rho)+\Phi^+_\omega(\sigma),
    \qquad
    \Phi^+_\omega(a\rho)=a\Phi^+_\omega(\rho)
\]
for all $\rho,\sigma\in\cT(\cH)_+$ and all $a\geq 0$.  Assume also that it is
trace-non-increasing on the positive cone:
\[
    \tr(\Phi^+_\omega(\rho))\leq \tr(\rho),
    \qquad
    \forall \rho\in\cT(\cH)_+ .
\]
Then $\Phi^+_\omega$ extends uniquely to a bounded positive linear map
\[
    \Phi_\omega:\cT(\cH)\longrightarrow\cT(\cH)
\]
whose restriction to $\cT(\cH)_+$ is $\Phi^+_\omega$, and this extension is
trace-non-increasing.

If, moreover, for every $k\in\mathbb N$ the amplification
\[
    \Phi_\omega\otimes\mathcal I_k:
    \cT(\cH\otimes\mathbb C^k)
    \longrightarrow
    \cT(\cH\otimes\mathbb C^k)
\]
maps $\cT(\cH\otimes\mathbb C^k)_+$ into itself, then
\[
    \Phi_\omega\in\cQ(\cH).
\]

Finally, if for every $\rho\in\cT(\cH)_+$ the map
$\omega\mapsto\Phi^+_\omega(\rho)$ is
$(\Sigma,\operatorname{Borel}(\cT(\cH)_+))$-measurable,
then for every $X\in\cT(\cH)$ the map
$\omega\mapsto\Phi_\omega(X)$ is
$(\Sigma,\operatorname{Borel}(\cT(\cH)))$-measurable.
Here $\cT(\cH)$ carries the trace-norm topology, and $\cT(\cH)_+$ carries the corresponding subspace topology.
\end{lemma}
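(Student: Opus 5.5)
The plan is the standard Jordan-decomposition extension, applied in three stages: self-adjoint operators, then general operators, then measurability.

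\emph{Self-adjoint elements.} Every self-adjoint $X\in\cT(\cH)$ can be written as $X=\rho-\sigma$ with $\rho,\sigma\in\cT(\cH)_+$; the Jordan parts $X_\pm$ are one such choice. Define
\[
    \Phi_\omega(X)\coloneqq\Phi^+_\omega(\rho)-\Phi^+_\omega(\sigma).
\]
This does not depend on the decomposition. If $\rho-\sigma=\rho'-\sigma'$, then $\rho+\sigma'=\rho'+\sigma$, and additivity of $\Phi^+_\omega$ gives $\Phi^+_\omega(\rho)+\Phi^+_\omega(\sigma')=\Phi^+_\omega(\rho')+\Phi^+_\omega(\sigma)$. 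Real-linearity on self-adjoint elements then follows from cone-linearity. Negative scalars are handled by swapping the roles of $\rho$ and $\sigma$.

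\emph{General elements.} For arbitrary $X$, write $X=\operatorname{Re}X+\ii\operatorname{Im}X$ and set
\[
    \Phi_\omega(X)\coloneqq\Phi_\omega(\operatorname{Re}X)+\ii\,\Phi_\omega(\operatorname{Im}X).
\]
Complex linearity is a routine check. The map is positive by construction and restricts to $\Phi^+_\omega$ on the cone. Boundedness is the estimate already used at the end of Lemma~\ref{lem:cpo_domain}. For self-adjoint $X$,
\[
    \|\Phi_\omega(X)\|_1\le\tr\Phi^+_\omega(X_+)+\tr\Phi^+_\omega(X_-)\le\|X\|_1 ,
\]
and for general $X$ one gets $\|\Phi_\omega(X)\|_1\le2\|X\|_1$. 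Trace non-increase on the cone is given by hypothesis. Uniqueness holds because any positive linear extension is forced on $\rho-\sigma$ and on $\operatorname{Re}X+\ii\operatorname{Im}X$, since $\cT(\cH)_+$ spans $\cT(\cH)$.

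\emph{Membership in $\cQ(\cH)$.} Once the amplification hypothesis is assumed, $\Phi_\omega$ is a bounded linear map that is CP and TNI in the sense of Definition~\ref{def:normal_superoperator}. Hence $\Phi_\omega\in\cQ(\cH)$, and nothing further is needed here.

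\emph{Measurability.} Fix $X\in\cT(\cH)$ and take its fixed four-term positive decomposition $X=X_1-X_2+\ii X_3-\ii X_4$. Then
\[
    \omega\mapsto\Phi_\omega(X)=\textstyle\sum_j c_j\Phi^+_\omega(X_j)
\]
is a finite linear combination of $(\Sigma,\operatorname{Borel}(\cT(\cH)_+))$-measurable maps. Measurability with respect to the subspace Borel structure on $\cT(\cH)_+$ is the same as measurability into $\cT(\cH)$, because $\cT(\cH)_+$ is closed and the subspace Borel sets are traces of Borel sets.

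The only step needing care is that a linear combination of Borel-measurable maps into $\cT(\cH)$ is again Borel measurable. The plan is to use separability of $\cT(\cH)$, noted before Definition~\ref{def:simple_functions}. By separability, $\operatorname{Borel}(\cT(\cH)^4)=\operatorname{Borel}(\cT(\cH))^{\otimes4}$. So the tuple $\omega\mapsto(\Phi^+_\omega(X_j))_j$ is measurable into the product, and composing with the continuous map $(Y_j)\mapsto\sum_jc_jY_j$ preserves measurability.

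Alternatively, one could check weak measurability by pairing with each $B\in\cB(\cH)$ and then invoke Theorem~\ref{thm:pettis}. I expect this separability step to be the main technical point; the remaining steps are routine bookkeeping.
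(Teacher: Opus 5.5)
Your proposal is correct and follows essentially the same route as the paper. Both use the Jordan decomposition with the $P+Q'=P'+Q$ independence argument, extend via real and imaginary parts with the $2\|X\|_1$ bound, get uniqueness because positive operators span, and prove measurability through a fixed decomposition $X=X_1-X_2+\ii X_3-\ii X_4$. The one difference is that you explicitly justify, via separability of $\cT(\cH)$ and $\operatorname{Borel}(\cT(\cH)^4)=\operatorname{Borel}(\cT(\cH))^{\otimes 4}$, why a finite linear combination of measurable $\cT(\cH)$-valued maps is measurable, a step the paper simply asserts.
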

\begin{proof}
Fix $\omega\in\Omega$.  We first construct the extension.  For a self-adjoint
$A\in\cT(\cH)$, choose its Jordan decomposition
\[
    A=A_+-A_-,
    \qquad
    A_+,A_-\in\cT(\cH)_+,
    \qquad
    A_+A_-=0,
\]
and set
\[
    \Phi_\omega(A)
    \coloneqq
    \Phi^+_\omega(A_+)-\Phi^+_\omega(A_-).
\]
This definition is independent of the chosen positive decomposition.  Indeed,
if \(A=P-Q=P'-Q'\) with \(P,Q,P',Q'\in\cT(\cH)_+\), then
\(P+Q'=P'+Q\).  By cone-linearity,
\[
    \Phi^+_\omega(P)+\Phi^+_\omega(Q')
    =
    \Phi^+_\omega(P')+\Phi^+_\omega(Q),
\]
hence
\[
    \Phi^+_\omega(P)-\Phi^+_\omega(Q)
    =
    \Phi^+_\omega(P')-\Phi^+_\omega(Q').
\]
Thus \(\Phi_\omega\) is a well-defined real-linear map on the self-adjoint
trace-class operators.

For a general \(X\in\cT(\cH)\), define
\[
    \Phi_\omega(X)
    \coloneqq
    \Phi_\omega(\operatorname{Re}X)
    +
    i\,\Phi_\omega(\operatorname{Im}X).
\]
This gives a complex-linear extension of \(\Phi^+_\omega\) to all of
\(\cT(\cH)\).  Uniqueness follows because every trace-class operator is a linear
combination of four positive trace-class operators.

The extension is positive by construction.  It is trace-non-increasing because,
for \(\rho\in\cT(\cH)_+\),
\[
    \tr(\Phi_\omega(\rho))
    =
    \tr(\Phi^+_\omega(\rho))
    \leq
    \tr(\rho).
\]
It is bounded.  If \(A=A^\dagger\) and \(A=A_+-A_-\) is its Jordan
decomposition, then positivity and trace non-increase give
\[
\begin{aligned}
    \|\Phi_\omega(A)\|_1
    &\leq
    \|\Phi^+_\omega(A_+)\|_1
    +
    \|\Phi^+_\omega(A_-)\|_1        \\
    &=
    \tr(\Phi^+_\omega(A_+))
    +
    \tr(\Phi^+_\omega(A_-))         \\
    &\leq
    \tr(A_+)+\tr(A_-)
    =
    \|A\|_1 .
\end{aligned}
\]
For general \(X\in\cT(\cH)\),
\[
    X=\operatorname{Re}X+i\,\operatorname{Im}X,
\]
and
\[
    \|\operatorname{Re}X\|_1\leq\|X\|_1,
    \qquad
    \|\operatorname{Im}X\|_1\leq\|X\|_1,
\]
so
\[
    \|\Phi_\omega(X)\|_1\leq 2\|X\|_1.
\]
Thus \(\Phi_\omega:\cT(\cH)\to\cT(\cH)\) is bounded.

If the finite amplifications of \(\Phi_\omega\) preserve positivity, then
\(\Phi_\omega\) is completely positive.  Together with trace non-increase and
bounded linearity, this means \(\Phi_\omega\in\cQ(\cH)\).

It remains to prove the measurability reduction.  Assume that
\(\omega\mapsto\Phi^+_\omega(\rho)\) is
\[
    (\Sigma,\operatorname{Borel}(\cT(\cH)_+))\text{-measurable}
\]
for every \(\rho\in\cT(\cH)_+\).  Fix \(X\in\cT(\cH)\) and choose a fixed
decomposition
\[
    X=X_1-X_2+iX_3-iX_4,
    \qquad
    X_j\in\cT(\cH)_+ .
\]
Then
\[
    \Phi_\omega(X)
    =
    \Phi^+_\omega(X_1)-\Phi^+_\omega(X_2)
    +i\Phi^+_\omega(X_3)-i\Phi^+_\omega(X_4).
\]
Each map \(\omega\mapsto\Phi^+_\omega(X_j)\) is measurable into
\(\cT(\cH)_+\), hence also measurable into \(\cT(\cH)\), because \(\cT(\cH)_+\) carries the
subspace Borel structure. 
The right-hand side of the above equality is a finite linear combination of
$(\Sigma,\operatorname{Borel}(\cT(\cH)))$\text{-measurable}
maps. Therefore \(\omega\mapsto\Phi_\omega(X)\) is
$(\Sigma,\operatorname{Borel}(\cT(\cH)))$-measurable.
\end{proof}

\begin{theorem}[Well-definedness and Structural Properties of Semantics]
\label{thm:semantics_well_defined}
Let $\Gamma$ be a finite classical context, and let $\Gamma\vdash S$ be a
well-formed quantum program phrase.  The structural semantic clauses define,
for each $\omega\in\Omega_\Gamma$, a positive-cone transformer
\[
    \sem{S}^{+}_\omega:
    \cT(\cH)_+\longrightarrow \cT(\cH)_+ .
\]
These transformers satisfy the following two properties.

\begin{enumerate}
    \item \textbf{Physical validity.}
    For every fixed $\omega\in\Omega_\Gamma$, the positive-cone transformer
    $\sem{S}^{+}_\omega$ extends uniquely to a CPTNI superoperator
    \(\sem{S}_\omega:\cT(\cH)\longrightarrow\cT(\cH)\).
    Equivalently, $\sem{S}_\omega\in\cQ(\cH)$ and
    \(\sem{S}^{+}_\omega=\sem{S}_\omega\big|_{\cT(\cH)_+}\).

    \item \textbf{Measurability.}
    For every fixed positive trace-class operator
    $\rho\in\cT(\cH)_+$, the map
    \(\omega\longmapsto \sem{S}^{+}_\omega(\rho)\)
    is
    \((\Sigma_\Gamma,\operatorname{Borel}(\cT(\cH)_+))\text{-measurable}\).
\end{enumerate}

Consequently, by Lemma~\ref{lem:positive_cone_reduction}, for every fixed trace-class operator $X\in\cT(\cH)$, the map
\(\omega\longmapsto \sem{S}_\omega(X)\)
is \((\Sigma_\Gamma,\operatorname{Borel}(\cT(\cH)))\)-measurable.

After this theorem, we write $\sem{S}_\omega$ also for its restriction to
positive trace-class inputs whenever no confusion can arise.  The superscript
``$+$'' is used only inside the well-definedness argument to distinguish the
positive-cone clauses from their CPTNI extensions.
\end{theorem}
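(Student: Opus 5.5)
The proof goes by structural induction on the derivation of $\Gamma\vdash S$, proving physical validity and measurability simultaneously. The induction is over the phrase $S$ with all contexts quantified, so that the bind case may use the hypothesis for the continuation in the extended context $\Gamma,x$. For each constructor I would show three things. First, $\sem{S}^+_\omega$ is cone-linear and trace-non-increasing on $\cT(\cH)_+$. Second, its canonical linear extension has completely positive finite amplifications. Third, $\omega\mapsto\sem{S}^+_\omega(\rho)$ is measurable. \Cref{lem:positive_cone_reduction} then delivers the CPTNI extension in $\cQ(\cH)$ and the full trace-class measurability.

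The base cases are direct. For $\mathbf{skip}$ and $\mathbf{abort}$ the map is constant in $\omega$, and for reset the Kraus family $\{\ket 0_q\bra k\}_k$ gives a trace-preserving CPTNI map by \Cref{thm:kraus_rep} and the remark following it. For the unitary case, complete positivity and trace preservation are immediate. For measurability, I would reduce via \Cref{thm:pettis} to weak measurability of $\omega\mapsto\tr(BU_{\omega'}\rho U_{\omega'}^\dagger)$ with $\omega'=\sem{\seq e}_\Gamma(\omega)$. Writing $\rho=\sum_k\lambda_k\sP_{\u_k}$, this quantity equals $\sum_k\lambda_k\langle U_{\omega'}\u_k,BU_{\omega'}\u_k\rangle$. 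Each term is a measurable function of $\omega'$ by strong measurability, the series converges pointwise, and composition with the measurable map $\sem{\seq e}_\Gamma$ finishes the case.

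Sequencing is handled as follows. Composition of CPTNI maps is CPTNI. For measurability of $\omega\mapsto\sem{S_2}_\omega(\sem{S_1}_\omega(\rho))$, I would argue that $(\omega,\sigma)\mapsto\sem{S_2}_\omega(\sigma)$ is jointly measurable. It is measurable in $\omega$ for fixed $\sigma$ and $1$-Lipschitz in $\sigma$ for the trace norm, uniformly in $\omega$, so it is a Carathéodory map on the separable space $\cT(\cH)$ and hence jointly measurable. Composing with the measurable map $\omega\mapsto(\omega,\sem{S_1}_\omega(\rho))$ gives the claim.

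For the loop, the approximants $\mathbf{while}^{(n)}$ are built from sequencing, the guard Kraus maps, and $\mathbf{skip}$/$\mathbf{abort}$, so they are CPTNI and measurable by the previous cases. They also increase in the pointwise order. \Cref{lem:cpo_domain} then gives a CPTNI supremum with trace-norm convergence $\sem{\mathbf{while}^{(n)}}_\omega(\rho)\to\sem{\mathbf{while}}_\omega(\rho)$, and a pointwise limit of measurable maps into a metric space is measurable.

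The main obstacle is the bind case. Fix $\rho\in\cT(\cH)_+$ and consider $F(\omega,\nu)=\sem{S}_{\omega,\nu}(M_{\omega',\nu}\rho M_{\omega',\nu}^\dagger)$. The finiteness condition on $S(x)$ splits $\Omega_M$ into finitely many measurable pieces $A_i$, so it suffices to treat one fragment $S_i$ at a time; the induction hypothesis on $\Gamma,x$ applies to each. The inner map $(\omega,\nu)\mapsto M_{\omega',\nu}\rho M_{\omega',\nu}^\dagger$ is $\Sigma_\Gamma\otimes\Sigma_M$-measurable by the same spectral-sum and weak-measurability argument as in the unitary case, now using joint strong measurability. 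The Carathéodory argument from the sequencing case then gives joint measurability of $F$ into $\cT(\cH)_+$. Integrability follows from trace non-increase and normalization: $\int\|F(\omega,\nu)\|_1\,d\mu_M\le\int\tr(M_{\omega',\nu}\rho M_{\omega',\nu}^\dagger)\,d\mu_M=\tr\rho$, by Tonelli over the spectral decomposition. \Cref{prop:bochner_integrability_criterion} therefore defines the integral, which is positive (the cone is closed), cone-linear, and trace-non-increasing.

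Two points remain in the bind case. Complete positivity follows by applying the same construction to $M_{\omega',\nu}\otimes I_k$ and $\sem{S}_{\omega,\nu}\otimes\mathcal I_k$, together with the fact that the Bochner integral commutes with the bounded map $\cdot\otimes\mathcal I_k$. Measurability in $\omega$ of $\int F(\omega,\nu)\,d\mu_M(\nu)$ is the delicate step. I would pair the integral with $B\in\cB(\cH)$, so that $\tr(B\int F)=\int\tr(BF(\omega,\nu))\,d\mu_M(\nu)$. After splitting $\tr(BF)$ into positive and negative real and imaginary parts, Fubini--Tonelli for the $\sigma$-finite product gives $\Sigma_\Gamma$-measurability of each partial integral. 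Here it matters that $\mu_M$ is $\sigma$-finite and that measurability holds on the uncompleted product. The conclusion then follows from \Cref{thm:pettis}.
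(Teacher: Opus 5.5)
Your proposal is correct and follows the paper's proof essentially step for step. It uses the same simultaneous induction closed off by \Cref{lem:positive_cone_reduction}, the spectral-sum/Pettis argument for sandwich maps, Carath\'eodory joint measurability for sequencing and for the bind continuation, the Tonelli--normalization integrability bound, amplification for complete positivity, and Fubini--Tonelli on the four scalar parts of $\tr(BF)$ together with \Cref{thm:pettis} for measurability of the integral. The only cosmetic difference is the loop case: you obtain the approximants $\mathbf{while}^{(n)}$ by reusing the sequencing and two-outcome bind cases through an auxiliary induction on $n$, whereas the paper defines the maps $\mathcal F_{n,\omega}$ directly by recursion and repeats the Carath\'eodory argument for them.
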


Before proving the theorem, we record two elementary measurability facts used
to connect primitive operator families with trace-class-valued semantic clauses.
The first one handles pullback along classical expression evaluation; the second one turns strong operator measurability of an operator family into trace-class-valued measurability of the corresponding sandwich map.

In the primitive assumptions, the parameter $\omega$ in $U_\omega$ or
$M_{\omega,\nu}$ ranges over the value space $\Omega_{\seq e}$ of the displayed
expression tuple.  In the theorem above, the parameter $\omega$ in
$\sem{S}_\omega$ ranges over the context space $\Omega_\Gamma$.  The following
lemma is the general mechanism for transferring measurability from the former
level to the latter.

\begin{lemma}[Pullback along expression evaluation]
\label{lem:pullback_primitive_measurability}
Let $\seq e$ be a classical expression tuple well-typed over $\Gamma$, so that
\[
    \sem{\seq e}_\Gamma:
    (\Omega_\Gamma,\Sigma_\Gamma)
    \longrightarrow
    (\Omega_{\seq e},\Sigma_{\seq e})
\]
is measurable.

\begin{enumerate}
    \item Let $(Y,\Sigma_Y)$ be a measurable space.  If
    \(F:\Omega_{\seq e}\to Y\) is
    \((\Sigma_{\seq e},\Sigma_Y)\)-measurable, then
    \[
        \omega_{\mathrm{out}}
        \longmapsto
        F\bigl(\sem{\seq e}_\Gamma(\omega_{\mathrm{out}})\bigr)
    \]
    is \((\Sigma_\Gamma,\Sigma_Y)\)-measurable.

    \item Let $(Y,\Sigma_Y)$ be a measurable space and let
    \((\Omega_M,\Sigma_M)\) be a measurement outcome space.  If
    \(G:\Omega_{\seq e}\times\Omega_M\to Y\) is
    \((\Sigma_{\seq e}\otimes\Sigma_M,\Sigma_Y)\)-measurable, then
    \[
        (\omega_{\mathrm{out}},\nu)
        \longmapsto
        G\bigl(\sem{\seq e}_\Gamma(\omega_{\mathrm{out}}),\nu\bigr)
    \]
    is \((\Sigma_\Gamma\otimes\Sigma_M,\Sigma_Y)\)-measurable.

    \item In particular, if
    \(\{U_\omega\}_{\omega\in\Omega_{\seq e}}\) is strongly operator measurable
    over \((\Omega_{\seq e},\Sigma_{\seq e})\), then for every \(\u\in\cH\) the
    map
    \[
        \omega_{\mathrm{out}}
        \longmapsto
        U_{\sem{\seq e}_\Gamma(\omega_{\mathrm{out}})}\u
    \]
    is \((\Sigma_\Gamma,\operatorname{Borel}(\cH))\)-measurable.

    Similarly, if
    \[
        \{M_{\omega,\nu}\}_{(\omega,\nu)\in
        \Omega_{\seq e}\times\Omega_M}
    \]
    is jointly strongly operator measurable over
    \((\Omega_{\seq e}\times\Omega_M,\Sigma_{\seq e}\otimes\Sigma_M)\), then for
    every \(\u\in\cH\) the map
    \[
        (\omega_{\mathrm{out}},\nu)
        \longmapsto
        M_{\sem{\seq e}_\Gamma(\omega_{\mathrm{out}}),\nu}\u
    \]
    is \((\Sigma_\Gamma\otimes\Sigma_M,\operatorname{Borel}(\cH))\)-measurable.

    Moreover, if the Kraus-density family satisfies
    \[
        \int_{\Omega_M}
        \langle \u,M_{\omega,\nu}^{\dagger}M_{\omega,\nu}\u\rangle
        \,d\mu_M(\nu)
        =
        \langle \u,\u\rangle,
        \qquad
        \forall \omega\in\Omega_{\seq e},\ \forall \u\in\cH,
    \]
    then, for every \(\omega_{\mathrm{out}}\in\Omega_\Gamma\) and every
    \(\u\in\cH\),
    \[
        \int_{\Omega_M}
        \left\langle
            \u,
            M_{\sem{\seq e}_\Gamma(\omega_{\mathrm{out}}),\nu}^{\dagger}
            M_{\sem{\seq e}_\Gamma(\omega_{\mathrm{out}}),\nu}
            \u
        \right\rangle
        \,d\mu_M(\nu)
        =
        \langle \u,\u\rangle .
    \]
\end{enumerate}
\end{lemma}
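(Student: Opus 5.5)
Each item of Lemma~\ref{lem:pullback_primitive_measurability} reduces to the elementary fact that a composite of measurable maps is measurable. The only extra ingredient is the product-$\sigma$-algebra bookkeeping in item (2).

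\textbf{Item (1).} The map $\omega_{\mathrm{out}}\mapsto F(\sem{\seq e}_\Gamma(\omega_{\mathrm{out}}))$ is the composite $F\circ\sem{\seq e}_\Gamma$. Here $\sem{\seq e}_\Gamma$ is $(\Sigma_\Gamma,\Sigma_{\seq e})$-measurable by the interpretation of expression tuples: a tuple of measurable maps into the factors is measurable into the product $\sigma$-algebra, since $\Sigma_{\seq e}$ is generated by rectangles, which pull back to finite intersections of measurable sets. For $E\in\Sigma_Y$ we have $(F\circ\sem{\seq e}_\Gamma)^{-1}(E)=\sem{\seq e}_\Gamma^{-1}(F^{-1}(E))\in\Sigma_\Gamma$.

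\textbf{Item (2).} Write the map as $G\circ\Psi$ with
\[
\Psi(\omega_{\mathrm{out}},\nu)\coloneqq(\sem{\seq e}_\Gamma(\omega_{\mathrm{out}}),\nu)
\]
from $\Omega_\Gamma\times\Omega_M$ to $\Omega_{\seq e}\times\Omega_M$. It suffices to show that $\Psi$ is $(\Sigma_\Gamma\otimes\Sigma_M,\Sigma_{\seq e}\otimes\Sigma_M)$-measurable. The sets $E\in\Sigma_{\seq e}\otimes\Sigma_M$ with $\Psi^{-1}(E)\in\Sigma_\Gamma\otimes\Sigma_M$ form a $\sigma$-algebra, because preimages commute with complements and countable unions. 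This $\sigma$-algebra contains every rectangle $A\times B$, since
\[
\Psi^{-1}(A\times B)=\sem{\seq e}_\Gamma^{-1}(A)\times B,
\]
which is itself a measurable rectangle. It therefore contains the generated $\sigma$-algebra $\Sigma_{\seq e}\otimes\Sigma_M$. Only the raw product $\sigma$-algebras are used, with no completion, which matches the paper's convention.

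\textbf{Item (3).} Fix $\u\in\cH$ and apply item (1) with $Y=\cH$ (Borel) and $F(\omega)=U_\omega\u$; this $F$ is measurable by strong operator measurability. The Kraus-density statement follows in the same way from item (2) with $G(\omega,\nu)=M_{\omega,\nu}\u$, which is jointly measurable by assumption.

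\textbf{Normalization.} This part is pointwise and needs no measurability argument. For fixed $\omega_{\mathrm{out}}$, set $\omega\coloneqq\sem{\seq e}_\Gamma(\omega_{\mathrm{out}})\in\Omega_{\seq e}$ and instantiate the hypothesis, which holds for every $\omega\in\Omega_{\seq e}$. The integrand $\nu\mapsto\|M_{\omega,\nu}\u\|^2$ is measurable in $\nu$: it is a $\nu$-section of the jointly measurable map composed with the continuous norm, and sections of product-measurable maps are measurable. So the integral is defined and equals $\langle\u,\u\rangle$.

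The only step needing any care is the generating-class argument in item (2). Everything else is immediate.
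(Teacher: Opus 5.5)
Your proposal is correct and follows essentially the same route as the paper. Item (1) is composition of preimages. Item (2) factors through $(\omega_{\mathrm{out}},\nu)\mapsto(\sem{\seq e}_\Gamma(\omega_{\mathrm{out}}),\nu)$ and checks rectangles $\sem{\seq e}_\Gamma^{-1}(A)\times C$; you spell out the good-sets argument where the paper cites a standard lemma. Item (3) specializes to the $\cH$-valued maps $U_\omega\u$ and $M_{\omega,\nu}\u$, and normalization is pointwise substitution, exactly as in the paper. Your extra remark on measurability of the integrand is harmless, though the hypothesis already presupposes the integral is defined for each $\omega$.
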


\begin{proof}
The first assertion can be directly verified by pulling back measurable sets along the composition of functions.

For the second assertion, consider the map
\[
    \Theta:\Omega_\Gamma\times\Omega_M
    \longrightarrow
    \Omega_{\seq e}\times\Omega_M,
    \qquad
    \Theta(\omega_{\mathrm{out}},\nu)
    \coloneqq
    \bigl(\sem{\seq e}_\Gamma(\omega_{\mathrm{out}}),\nu\bigr).
\]
We verify its measurability on measurable rectangles.  For
\(A\in\Sigma_{\seq e}\) and \(C\in\Sigma_M\),
\[
\begin{aligned}
    \Theta^{-1}(A\times C)
    &=
    \{(\omega_{\mathrm{out}},\nu)
      \mid
      \sem{\seq e}_\Gamma(\omega_{\mathrm{out}})\in A,\ \nu\in C\}  \\
    &=
    \sem{\seq e}_\Gamma{}^{-1}(A)\times C .
\end{aligned}
\]
Since \(\sem{\seq e}_\Gamma\) is
\((\Sigma_\Gamma,\Sigma_{\seq e})\)-measurable, we have
\(\sem{\seq e}_\Gamma{}^{-1}(A)\in\Sigma_\Gamma\), and hence
\[
    \sem{\seq e}_\Gamma{}^{-1}(A)\times C
    \in
    \Sigma_\Gamma\otimes\Sigma_M .
\]
Because the measurable rectangles generate
\(\Sigma_{\seq e}\otimes\Sigma_M\), it follows that \(\Theta\) is
\((\Sigma_\Gamma\otimes\Sigma_M,
  \Sigma_{\seq e}\otimes\Sigma_M)\)-measurable
(cf.~\cite[Lemma~4.49]{aliprantis2006infinite}).
Composing \(G\) with \(\Theta\) gives the desired measurability.

The strong-operator statements follow by applying the first two assertions to
the \(\cH\)-valued maps
\[
    \omega\longmapsto U_\omega\u
    \qquad\text{and}\qquad
    (\omega,\nu)\longmapsto M_{\omega,\nu}\u .
\]
The normalization statement is pointwise substitution of
\(\omega=\sem{\seq e}_\Gamma(\omega_{\mathrm{out}})\).
\end{proof}

\paragraph*{Primitive-level indices and contextual shorthand.}
In the proof of Theorem~\ref{thm:semantics_well_defined}, primitive symbols are
kept at their primitive level.  Thus a unitary primitive is indexed as
\[
    U_\theta,
    \qquad
    \theta\in\Omega_{\seq e},
\]
and a measurement primitive is indexed as
\[
    M_{\theta,\nu},
    \qquad
    (\theta,\nu)\in\Omega_{\seq e}\times\Omega_M.
\]
When such a primitive occurs inside a program over a surrounding context
$\Gamma$, its actual primitive parameter is obtained by expression evaluation:
\[
    \theta=\sem{\seq e}_\Gamma(\omega),
    \qquad
    \omega\in\Omega_\Gamma.
\]

Outside passages where this primitive/context distinction is being verified
explicitly, we suppress the pullback along expression evaluation.  Thus, for a
primitive occurrence over $\Gamma$, we may write
\[
    U_\omega
    \quad\text{for}\quad
    U_{\sem{\seq e}_\Gamma(\omega)},
\]
and, in a binding construct,
\[
    M_{\omega,\nu}
    \quad\text{for}\quad
    M_{\sem{\seq e}_\Gamma(\omega),\nu}.
\]
This is purely a notational abbreviation.  The primitive-level families remain
indexed by the value space $\Omega_{\seq e}$ of the displayed expression tuple,
and all context-level measurability assertions follow from
\Cref{lem:pullback_primitive_measurability}.

Together with the convention for continuations in a binding construct,
\[
    S_{\omega,\nu}
    =
    S(x)_{\omega[x:=\nu]},
\]
this allows later sections to write expressions such as
\[
    M_{\omega,\nu}\rho M_{\omega,\nu}^{\dagger}
    \qquad\text{and}\qquad
    \mathcal C(S_{\omega,\nu},
    M_{\omega,\nu}\rho M_{\omega,\nu}^{\dagger})
\]
without repeatedly displaying the pullback
$M_{\sem{\seq e}_\Gamma(\omega),\nu}$.

The following lemma deals with a special measurability, which will be used for the cases of parameterized unitary and binding.

\begin{lemma}[Measurability of operator sandwich maps]
\label{lem:operator_sandwich_measurability}
Let $(X,\Sigma_X)$ be a measurable space, and let
\[
    \{A_x\}_{x\in X}\subseteq\cB(\cH)
\]
be strongly operator measurable.  Then for every
$\rho\in\cT(\cH)_+$, the map
\[
    x\longmapsto A_x\rho A_x^\dagger
\]
is $(\Sigma_X,\operatorname{Borel}(\cT(\cH)_+))$-measurable.
\end{lemma}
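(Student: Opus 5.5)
The plan is to reduce trace-norm Borel measurability of \(x\mapsto A_x\rho A_x^\dagger\) to weak measurability, and then invoke \Cref{thm:pettis}. Since \(\cH\) is separable, \(\cT(\cH)\) is trace-norm separable. \Cref{thm:pettis} therefore says it suffices to show that, for every \(B\in\cB(\cH)\), the scalar map
\[
    x\longmapsto \tr\bigl(B A_x\rho A_x^\dagger\bigr)
\]
is \((\Sigma_X,\operatorname{Borel}(\bC))\)-measurable. Measurability into \(\cT(\cH)\) then restricts to measurability into \(\cT(\cH)_+\), because \(\cT(\cH)_+\) carries the subspace Borel structure and the map takes values in \(\cT(\cH)_+\) (as \(A_x\rho A_x^\dagger\sqgeq 0\)).

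To verify weak measurability, I will take a spectral decomposition \(\rho=\sum_k\lambda_k\sP_{\u_k}\) with \(\lambda_k\ge 0\) and \(\sum_k\lambda_k<\infty\). Then
\[
    \tr\bigl(BA_x\rho A_x^\dagger\bigr)=\sum_k\lambda_k\,\langle A_x\u_k,\,B A_x\u_k\rangle .
\]
For each fixed \(k\), the map \(x\mapsto A_x\u_k\) is measurable into \(\cH\) by strong operator measurability. The map \(\v\mapsto\langle\v,B\v\rangle\) is continuous on \(\cH\), so each summand is a measurable scalar function. The series is a pointwise limit of finite partial sums, and pointwise limits of measurable complex functions are measurable, so the whole expression is measurable.

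The one point needing care is that the series converges pointwise in \(x\). Since \(|\langle A_x\u_k,BA_x\u_k\rangle|\le\|B\|\,\|A_x\|^2\), I have to check that for each fixed \(x\) the operator \(A_x\) is bounded. It is, because each \(A_x\in\cB(\cH)\), so the series is absolutely convergent for every \(x\) (no uniform bound in \(x\) is required). Equivalently, I can justify the series identity itself through trace-norm convergence of \(\sum_k\lambda_k A_x\sP_{\u_k}A_x^\dagger\) for fixed \(x\), together with continuity of \(\tr(B\,\cdot)\).

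The main subtlety, rather than a real obstacle, is this appeal to the Pettis-type \Cref{thm:pettis}: it is what upgrades the countably many scalar measurability facts to Borel measurability in the trace-norm topology, and it depends on the separability of \(\cT(\cH)\) noted earlier.
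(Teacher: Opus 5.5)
Your proposal is correct and follows the paper's proof essentially step for step. Both reduce to weak measurability via \Cref{thm:pettis}, expand $\tr(BA_x\rho A_x^\dagger)$ through a spectral decomposition of $\rho$, get measurability of each summand from strong operator measurability, justify pointwise absolute convergence by the bound $\|B\|\,\|A_x\|^2\tr(\rho)$ for each fixed $x$, and finally pass to the subspace Borel structure on $\cT(\cH)_+$.
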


\begin{proof}
By \Cref{thm:pettis}, it suffices to prove weak $\Sigma_X$-measurability as a
$\cT(\cH)$-valued map.  Let $B\in\cB(\cH)$ be arbitrary.  Choose a spectral
decomposition
\[
    \rho=\sum_i\lambda_i\sP_{\u_i},
    \qquad
    \lambda_i\geq 0,\quad
    \sum_i\lambda_i=\tr(\rho)<\infty,
\]
where the vectors $\u_i$ are unit vectors.  For each fixed $x$, trace-norm
convergence of the spectral decomposition gives
\[
    \tr(BA_x\rho A_x^\dagger)
    =
    \sum_i\lambda_i
    \langle A_x\u_i,BA_x\u_i\rangle .
\]
For each $i$, the map $x\mapsto A_x\u_i$ is measurable by strong operator
measurability.  Since $B$ is bounded and the inner product is continuous, it follows that
\(
    x\longmapsto \langle A_x\u_i,BA_x\u_i\rangle
\)
is $(\Sigma_X,\operatorname{Borel}(\bC))$-measurable.

For each fixed $x$, the series is absolutely convergent:
\[
    \sum_i\lambda_i
    |\langle A_x\u_i,BA_x\u_i\rangle|
    \leq
    \|B\|\,\|A_x\|^2\,\tr(\rho)
    <\infty .
\]
Hence $x\mapsto \tr(BA_x\rho A_x^\dagger)$ is the pointwise limit of
measurable partial sums, and is therefore
$(\Sigma_X,\operatorname{Borel}(\bC))$-measurable by
\cite[Lemma~4.29]{aliprantis2006infinite}.  Since $B$ was arbitrary,
$x\mapsto A_x\rho A_x^\dagger$ is weakly $\Sigma_X$-measurable.  By
\Cref{thm:pettis}, it is
$(\Sigma_X,\operatorname{Borel}(\cT(\cH)))$-measurable.  Since the map takes
values in $\cT(\cH)_+$ and $\operatorname{Borel}(\cT(\cH)_+)$ is the subspace
Borel $\sigma$-algebra, it is
$(\Sigma_X,\operatorname{Borel}(\cT(\cH)_+))$-measurable.
\end{proof}

\begin{proof}[Proof of \Cref{thm:semantics_well_defined}]
We prove the theorem by mutual structural induction on the syntax of $S$.
For each well-formed phrase $\Gamma\vdash S$, the induction proves
simultaneously:

\begin{enumerate}
    \item (Physical Validity) for every $\omega\in\Omega_\Gamma$, the positive-cone map
    $\sem{S}^{+}_\omega$ is cone-linear, trace-non-increasing, and its finite
    amplifications preserve positivity; hence it admits a unique CPTNI
    extension by \Cref{lem:positive_cone_reduction};

    \item (Measurability) for every $\rho\in\cT(\cH)_+$, the map
    $\omega\mapsto\sem{S}^{+}_\omega(\rho)$ is
    $(\Sigma_\Gamma,\operatorname{Borel}(\cT(\cH)_+))$-measurable.
\end{enumerate}

Uniqueness of the CPTNI extension and measurability on the full trace-class
space then follow uniformly from
Lemma~\ref{lem:positive_cone_reduction}.

\paragraph{$\bullet$ \textbf{Skip ($\mathbf{skip}$):}}
For every $\omega\in\Omega_\Gamma$ and every $\rho\in\cT(\cH)_+$, define
\[
    \sem{\mathbf{skip}}^{+}_\omega(\rho)\coloneqq \rho .
\]

\emph{Physical Validity (CPTNI):}
The positive-cone map
\[
    \rho\longmapsto \sem{\mathbf{skip}}^{+}_\omega(\rho)=\rho
\]
is cone-linear on \(\cT(\cH)_+\), trace-preserving, and its finite
amplifications preserve positivity.  Hence, by
\Cref{lem:positive_cone_reduction}, it extends uniquely to a CPTNI
superoperator on \(\cT(\cH)\).  This extension is the identity superoperator
\[
    \operatorname{id}_{\cT(\cH)}:\cT(\cH)\to\cT(\cH),
    \qquad
    X\longmapsto X .
\]
Thus \(\sem{\mathbf{skip}}_\omega=\operatorname{id}_{\cT(\cH)}\in\cQ(\cH)\),
and its restriction to \(\cT(\cH)_+\) is exactly
\(\sem{\mathbf{skip}}^{+}_\omega\).

\emph{Measurability:} For fixed $\rho\in\cT(\cH)_+$, the map
$\omega\mapsto\sem{\mathbf{skip}}^{+}_\omega(\rho)=\rho$ is constant as a map
from $\Omega_\Gamma$ to $\cT(\cH)_+$.  Therefore it is trivially 
\((\Sigma_\Gamma,\operatorname{Borel}(\cT(\cH)_+))\)-measurable.

\paragraph{$\bullet$ \textbf{Abort ($\mathbf{abort}$):}}
For every $\omega\in\Omega_\Gamma$ and every $\rho\in\cT(\cH)_+$, define
\[
    \sem{\mathbf{abort}}^{+}_\omega(\rho)\coloneqq 0 .
\]

\emph{Physical Validity (CPTNI):}
The positive-cone map
\[
    \rho\longmapsto \sem{\mathbf{abort}}^{+}_\omega(\rho)=0
\]
is cone-linear on \(\cT(\cH)_+\), trace-non-increasing, and its finite
amplifications preserve positivity.  Hence, by
\Cref{lem:positive_cone_reduction}, it extends uniquely to a CPTNI
superoperator on \(\cT(\cH)\).  This extension is the zero superoperator
\[
    0_{\cT(\cH)}:\cT(\cH)\to\cT(\cH),
    \qquad
    X\longmapsto 0 .
\]
Thus \(\sem{\mathbf{abort}}_\omega=0_{\cT(\cH)}\in\cQ(\cH)\), and its
restriction to \(\cT(\cH)_+\) is exactly
\(\sem{\mathbf{abort}}^{+}_\omega\).

\emph{Measurability:} For fixed $\rho\in\cT(\cH)_+$, the map
$\omega\mapsto\sem{\mathbf{abort}}^{+}_\omega(\rho)=0$ is constant as a map
from $\Omega_\Gamma$ to $\cT(\cH)_+$.  Therefore it is trivially
\((\Sigma_\Gamma,\operatorname{Borel}(\cT(\cH)_+))\)-measurable.

\paragraph{$\bullet$ \textbf{Initialization ($q:=0$):}}
Let $\{|k\rangle_q\}_{k\in J}$ be the fixed orthonormal basis of the Hilbert space
associated with the subsystem $q$, where the index set $J$ is finite or countable.  Set
\[
    E_k\coloneqq |0\rangle_q\langle k|_q\otimes I_{\mathrm{env}} .
\]
For every $\omega\in\Omega_\Gamma$ and every $\rho\in\cT(\cH)_+$, define
\[
    \sem{q:=0}^{+}_\omega(\rho)
    \coloneqq
    \sum_{k\in J} E_k\rho E_k^\dagger .
\]
The series is understood as the trace-norm limit of its finite partial sums.
Indeed, after enumerating \(J\), the partial sums
\[
    S_N(\rho)\coloneqq\sum_{k=1}^{N}E_k\rho E_k^\dagger
\]
form an increasing sequence in \(\cT(\cH)_+\), and
\[
    \tr(S_N(\rho))
    =
    \tr\!\left(\rho\sum_{k=1}^{N}E_k^\dagger E_k\right)
    \leq
    \tr(\rho).
\]
Thus Proposition~\ref{prop:state_cpo_topology}, after scaling by
\(\tr(\rho)\) when \(\rho\neq0\), gives trace-norm convergence.
Operationally, this map discards the
previous content of subsystem $q$, replaces it by
$|0\rangle_q\langle 0|$, and leaves the environment unchanged.

\emph{Physical Validity (CPTNI):}
The positive-cone map
\[
    \rho
    \longmapsto
    \sem{q:=0}^{+}_\omega(\rho)
    =
    \sum_{k\in J}E_k\rho E_k^\dagger
\]
is cone-linear on \(\cT(\cH)_+\).  The Kraus operators satisfy
$\sum_{k\in J}E_k^\dagger E_k=I$ 
in the strong operator topology, so the map is trace-preserving.  Its finite
amplifications are given by the Kraus family
$\{E_k\otimes I_n\}_{k\in J}$,
and therefore preserve positivity on
\(\cT(\cH\otimes\mathbb C^n)_+\) for every \(n\in\mathbb N\).  Hence, by
\Cref{lem:positive_cone_reduction}, the positive-cone reset semantics extends
uniquely to a CPTNI superoperator on \(\cT(\cH)\).  This extension is the reset
channel
\[
    \mathcal R_q:\cT(\cH)\to\cT(\cH),
    \qquad
    \mathcal R_q(X)
    \coloneqq
    \sum_{k\in J}E_kXE_k^\dagger .
\]
Thus \(\sem{q:=0}_\omega=\mathcal R_q\in\cQ(\cH)\), and its restriction to
\(\cT(\cH)_+\) is exactly \(\sem{q:=0}^{+}_\omega\).

\emph{Measurability:} For fixed $\rho\in\cT(\cH)_+$, the value
$\sem{q:=0}^{+}_\omega(\rho)=\mathcal R_q(\rho)$ is independent of
$\omega$.  Hence the map
\(\omega\longmapsto \sem{q:=0}^{+}_\omega(\rho)\)
is constant as a map from $\Omega_\Gamma$ to $\cT(\cH)_+$.  Therefore it is
trivially \((\Sigma_\Gamma,\operatorname{Borel}(\cT(\cH)_+))\)-measurable.

\paragraph*{$\bullet$ \textbf{Parameterized Unitary
($U(\seq e)[\seq q]$):}}
Let $\Gamma\vdash U(\seq e)[\seq q]$.  The primitive unitary family is written
as
\[
    \{U_\omega\}_{\omega\in\Omega_{\seq e}},
\]
where $\omega$ ranges over the value space of the expression tuple $\seq e$.
For an outer meta-parameter instance
$\omega_{\mathrm{out}}\in\Omega_\Gamma$, the expression tuple selects the
primitive parameter value
\[
    \omega=\sem{\seq e}_\Gamma(\omega_{\mathrm{out}})
    \in\Omega_{\seq e}.
\]
Substituting this value into the primitive family, the positive-cone semantic
clause is
\[
    \sem{U(\seq e)[\seq q]}^{+}_{\omega_{\mathrm{out}}}(\rho)
    =
    U_{\omega}\rho U_{\omega}^{\dagger},
    \qquad
    \rho\in\cT(\cH)_+ .
\]

\emph{Physical Validity (CPTNI):}
For fixed $\omega_{\mathrm{out}}\in\Omega_\Gamma$, the positive-cone map
\[
    \rho
    \longmapsto
    \sem{U(\seq e)[\seq q]}^{+}_{\omega_{\mathrm{out}}}(\rho)
    =
    U_{\sem{\seq e}_\Gamma(\omega_{\mathrm{out}})}
    \rho
    U_{\sem{\seq e}_\Gamma(\omega_{\mathrm{out}})}^\dagger
\]
is cone-linear on $\cT(\cH)_+$.  It is trace-preserving, since it is conjugation
by a unitary operator.  Its finite amplifications are conjugations by the
unitaries
\[
    U_{\sem{\seq e}_\Gamma(\omega_{\mathrm{out}})}\otimes I_n,
    \qquad n\in\mathbb N,
\]
and therefore preserve positivity on
$\cT(\cH\otimes\mathbb C^n)_+$.

Hence, by \Cref{lem:positive_cone_reduction}, this positive-cone map extends
uniquely to a CPTNI superoperator on $\cT(\cH)$.  The extension is the unitary
conjugation map
\[
    X
    \longmapsto
    U_{\sem{\seq e}_\Gamma(\omega_{\mathrm{out}})}
    X
    U_{\sem{\seq e}_\Gamma(\omega_{\mathrm{out}})}^\dagger,
    \qquad
    X\in\cT(\cH).
\]
Thus
$\sem{U(\seq e)[\seq q]}_{\omega_{\mathrm{out}}}\in \cQ(\cH)$,
and its restriction to $\cT(\cH)_+$ is exactly
$\sem{U(\seq e)[\seq q]}^{+}_{\omega_{\mathrm{out}}}$.

\emph{Measurability:}
At the primitive level, the primitive assumption guarantees that
$\{U_\omega\}_{\omega\in\Omega_{\seq e}}$ is strongly operator measurable.
By \Cref{lem:operator_sandwich_measurability}, for every
$\rho\in\cT(\cH)_+$ the map
\[
    \omega\longmapsto U_\omega\rho U_\omega^\dagger
\]
is $(\Sigma_{\seq e},\operatorname{Borel}(\cT(\cH)_+))$-measurable.
Applying \Cref{lem:pullback_primitive_measurability} along
$\sem{\seq e}_\Gamma$ gives the required
$(\Sigma_\Gamma,\operatorname{Borel}(\cT(\cH)_+))$-measurability.

\paragraph{$\bullet$ \textbf{Sequential Composition ($S_1;S_2$):}}
Assume that $\Gamma\vdash S_1$ and $\Gamma\vdash S_2$.  For every
$\omega\in\Omega_\Gamma$ and every $\rho\in\cT(\cH)_+$, define
\[
    \sem{S_1;S_2}^{+}_\omega(\rho)
    \coloneqq
    \sem{S_2}^{+}_\omega
    \bigl(
        \sem{S_1}^{+}_\omega(\rho)
    \bigr).
\]

\emph{Physical Validity (CPTNI):}
By the structural induction hypothesis, for each fixed
$\omega\in\Omega_\Gamma$, the positive-cone maps
$\sem{S_1}^{+}_\omega$ and $\sem{S_2}^{+}_\omega$ admit unique CPTNI extensions
\[
    \sem{S_1}_\omega,\sem{S_2}_\omega:\cT(\cH)\to\cT(\cH).
\]
The positive-cone semantics of the sequence is
\[
    \sem{S_1;S_2}^{+}_\omega(\rho)
    =
    \sem{S_2}^{+}_\omega
    \bigl(
        \sem{S_1}^{+}_\omega(\rho)
    \bigr),
    \qquad
    \rho\in\cT(\cH)_+ .
\]
It is cone-linear because both component positive-cone maps are cone-linear.
It is trace-non-increasing because
\[
    \tr\!\left(
        \sem{S_2}^{+}_\omega
        \bigl(
            \sem{S_1}^{+}_\omega(\rho)
        \bigr)
    \right)
    \leq
    \tr\!\left(\sem{S_1}^{+}_\omega(\rho)\right)
    \leq
    \tr(\rho).
\]
Its finite amplifications preserve positivity, since they are the restrictions of
\[
    (\sem{S_2}_\omega\otimes\mathcal I_n)
    \circ
    (\sem{S_1}_\omega\otimes\mathcal I_n),
    \qquad n\in\mathbb N,
\]
and both factors are positive on
$\cT(\cH\otimes\mathbb C^n)_+$.

Hence, by \Cref{lem:positive_cone_reduction},
$\sem{S_1;S_2}^{+}_\omega$ extends uniquely to a CPTNI superoperator on
$\cT(\cH)$.  This extension is the composition
\[
    \sem{S_1;S_2}_\omega
    \coloneqq
    \sem{S_2}_\omega\circ \sem{S_1}_\omega .
\]
Therefore
$\sem{S_1;S_2}_\omega\in\cQ(\cH)$,
and its restriction to $\cT(\cH)_+$ is exactly
$\sem{S_1;S_2}^{+}_\omega$.

\emph{Measurability:}
Fix $\rho\in\cT(\cH)_+$.  Set
\[
    \rho'_\omega
    \coloneqq
    \sem{S_1}^{+}_\omega(\rho).
\]
By the structural induction hypothesis for $S_1$, the map
$\omega\mapsto\rho'_\omega$ is
$(\Sigma_\Gamma,\operatorname{Borel}(\cT(\cH)_+))$-measurable.

Define
\[
    h:\Omega_\Gamma\to\Omega_\Gamma\times\cT(\cH)_+,
    \qquad
    h(\omega)\coloneqq(\omega,\rho'_\omega).
\]
The first component $\omega\mapsto\omega$ is
$(\Sigma_\Gamma,\Sigma_\Gamma)$-measurable, and the second component
$\omega\mapsto\rho'_\omega$ is
$(\Sigma_\Gamma,\operatorname{Borel}(\cT(\cH)_+))$-measurable.  Hence, by
\cite[Lemma~4.49]{aliprantis2006infinite}, the pair map $h$ is
\(
    \bigl(
        \Sigma_\Gamma,
        \Sigma_\Gamma\otimes\operatorname{Borel}(\cT(\cH)_+)
    \bigr)
\)-measurable.

Next define
\[
    \Phi:\Omega_\Gamma\times\cT(\cH)_+\to\cT(\cH)_+,
    \qquad
    \Phi(\omega,\sigma)
    \coloneqq
    \sem{S_2}^{+}_\omega(\sigma).
\]
We verify the Carath\'eodory hypotheses:

\begin{itemize}
    \item By the structural induction hypothesis for $S_2$, for each fixed $\sigma\in\cT(\cH)_+$, the map
    \[
        \omega\longmapsto \Phi(\omega,\sigma)
        =
        \sem{S_2}^{+}_\omega(\sigma)
    \]
    is
    $(\Sigma_\Gamma,\operatorname{Borel}(\cT(\cH)_+))$-measurable.

    \item For each fixed $\omega\in\Omega_\Gamma$, the map
    \[
        \sigma\longmapsto \Phi(\omega,\sigma)
        =
        \sem{S_2}^{+}_\omega(\sigma)
    \]
    is trace-norm continuous on $\cT(\cH)_+$.  Indeed,
    $\sem{S_2}^{+}_\omega$ is the restriction of the bounded linear CPTNI map $\sem{S_2}_\omega:\cT(\cH)\to\cT(\cH)$.
\end{itemize}

Since $\cT(\cH)_+$ is separable and metrizable in the trace-norm subspace
topology, the Carath\'eodory joint measurability theorem
\cite[Lemma~4.51]{aliprantis2006infinite} implies that $\Phi$ is
\(
    \bigl(
        \Sigma_\Gamma\otimes\operatorname{Borel}(\cT(\cH)_+),
        \operatorname{Borel}(\cT(\cH)_+)
    \bigr)
\)-measurable.

We now pull back this joint measurability along $h$.  Let
$W\subseteq\cT(\cH)_+$ be relatively trace-norm open.  Since $\Phi$ is jointly
measurable,
\[
    \Phi^{-1}(W)
    \in
    \Sigma_\Gamma\otimes\operatorname{Borel}(\cT(\cH)_+).
\]
Since $h$ is measurable into this product measurable space,
\[
\begin{aligned}
    \{\omega\in\Omega_\Gamma
      \mid
      \sem{S_1;S_2}^{+}_\omega(\rho)\in W\}
    &=
    \{\omega\in\Omega_\Gamma
      \mid
      \Phi(\omega,\rho'_\omega)\in W\}        \\
    &=
    (\Phi\circ h)^{-1}(W)                    \\
    &=
    h^{-1}\bigl(\Phi^{-1}(W)\bigr)
    \in\Sigma_\Gamma .
\end{aligned}
\]
Therefore
$\omega\longmapsto\sem{S_1;S_2}^{+}_\omega(\rho)$
is
$(\Sigma_\Gamma,\operatorname{Borel}(\cT(\cH)_+))$-measurable.
This proves the required positive-input measurability for sequential
composition.

\paragraph*{$\bullet$ \textbf{Binding
($\bind(M(\seq e),x.S(x))$):}}
Assume
\[
    \Gamma\vdash \bind(M(\seq e),x.S(x)).
\]
The continuation \(S(x)\) is well-formed over the extended context
\(\Gamma,x\), where the variable \(x\) has the measurement outcome space
\((\Omega_M,\Sigma_M,\mu_M)\).  The primitive Kraus-density family is written as
\[
    \{M_{\omega,\nu}\}_{(\omega,\nu)\in
    \Omega_{\seq e}\times\Omega_M},
\]
where \(\omega\) ranges over the value space of the displayed expression tuple
\(\seq e\).  In this proof, the outer context parameter is denoted by
\(\omega_{\mathrm{out}}\in\Omega_\Gamma\).  This explicit distinction is used
only to verify measurability; later predicate-transformer rules again package all
non-integrated parameters into a single symbol \(\omega\).

Fix \(\rho\in\cT(\cH)_+\).  For
\((\omega_{\mathrm{out}},\nu)\in\Omega_\Gamma\times\Omega_M\), define the
single-outcome branch expression
\[
    F_\rho(\omega_{\mathrm{out}},\nu)
    \coloneqq
    \sem{S}^{+}_{\omega_{\mathrm{out}},\nu}
    \!\left(
        M_{\sem{\seq e}_\Gamma(\omega_{\mathrm{out}}),\nu}
        \rho
        M_{\sem{\seq e}_\Gamma(\omega_{\mathrm{out}}),\nu}^{\dagger}
    \right).
\]

\emph{Pettis-style semantic clause:}
For fixed \(\omega_{\mathrm{out}}\in\Omega_\Gamma\), the binding output is
specified weakly by requiring that, for every \(B\in\cB(\cH)\),
\begin{equation}
\label{eq:bind-pettis-clause}
    \tr\!\left(
        B\,\sem{\bind(M(\seq e),x.S(x))}^{+}_{\omega_{\mathrm{out}}}(\rho)
    \right)
    =
    \int_{\Omega_M}
    \tr\!\left(
        B\,F_\rho(\omega_{\mathrm{out}},\nu)
    \right)
    \,d\mu_M(\nu).
\end{equation}
We next verify that this weak prescription is well-defined and is represented by
a Bochner integral.

\emph{Well-definedness:}
We first prove uniqueness of the operator specified by
\eqref{eq:bind-pettis-clause}.  Suppose that
\(\tau_1,\tau_2\in\cT(\cH)\) both satisfy the trace-duality identity
\eqref{eq:bind-pettis-clause}; that is, for every \(B\in\cB(\cH)\),
\[
    \tr(B\tau_1)=
    \int_{\Omega_M}
    \tr\!\left(BF_\rho(\omega_{\mathrm{out}},\nu)\right)
    \,d\mu_M(\nu)
    =
    \tr(B\tau_2).
\]
Set \(D\coloneqq \tau_1-\tau_2\).  Then
\[
    \tr(BD)=0,
    \qquad
    \forall B\in\cB(\cH).
\]
Let $D=V|D|$
be the polar decomposition of the trace-class operator \(D\), where
\(V\in\cB(\cH)\) is a partial isometry.  Since \(V^*V\) is the support projection
of \(|D|\), we have
\[
    V^*D=V^*V|D|=|D|.
\]
Taking \(B=V^*\) gives
\[
    0=\tr(V^*D)=\tr(|D|)=\|D\|_1.
\]
Hence \(D=0\), and therefore \(\tau_1=\tau_2\).  Thus any trace-class operator
satisfying the Pettis-style identity \eqref{eq:bind-pettis-clause}, if it exists,
is unique.

We now prove existence.  Start at the primitive measurement level and define
\[
    G_\rho(\omega,\nu)
    \coloneqq
    M_{\omega,\nu}\rho M_{\omega,\nu}^{\dagger},
    \qquad
    (\omega,\nu)\in\Omega_{\seq e}\times\Omega_M.
\]
By joint strong operator measurability of
$\{M_{\omega,\nu}\}_{(\omega,\nu)\in\Omega_{\seq e}\times\Omega_M}$ and
\Cref{lem:operator_sandwich_measurability}, the map
\[
    G_\rho(\omega,\nu)
    =
    M_{\omega,\nu}\rho M_{\omega,\nu}^{\dagger}
\]
is $(\Sigma_{\seq e}\otimes\Sigma_M,
     \operatorname{Borel}(\cT(\cH)_+))$-measurable.

By \Cref{lem:pullback_primitive_measurability}, the pulled-back branch
preparation satisfies
\begin{equation}
\label{eq:bind-pulled-back-branch-preparation}
    (\omega_{\mathrm{out}},\nu)
    \longmapsto
    G_\rho\bigl(\sem{\seq e}_\Gamma(\omega_{\mathrm{out}}),\nu\bigr)
    \quad\text{is}\quad
    (\Sigma_\Gamma\otimes\Sigma_M,
     \operatorname{Borel}(\cT(\cH)_+))\text{-measurable}.
\end{equation}

Written out, this pulled-back branch preparation is exactly
\[
    G_\rho\bigl(\sem{\seq e}_\Gamma(\omega_{\mathrm{out}}),\nu\bigr)
    =
    M_{\sem{\seq e}_\Gamma(\omega_{\mathrm{out}}),\nu}
    \rho
    M_{\sem{\seq e}_\Gamma(\omega_{\mathrm{out}}),\nu}^{\dagger},
\]
and the actual single-outcome branch of the binding construct is obtained by
feeding this prepared branch state into the continuation:
\(
    F_\rho(\omega_{\mathrm{out}},\nu)
    \coloneqq
    \sem{S}^{+}_{\omega_{\mathrm{out}},\nu}
    \!\left(
        G_\rho\bigl(\sem{\seq e}_\Gamma(\omega_{\mathrm{out}}),\nu\bigr)
    \right)
\).

Next define
\[
    \Psi:
    \Omega_\Gamma\times\Omega_M\times\cT(\cH)_+
    \longrightarrow
    \cT(\cH)_+
\]
by
\[
    \Psi(\omega_{\mathrm{out}},\nu,\sigma)
    \coloneqq
    \sem{S}^{+}_{\omega_{\mathrm{out}},\nu}(\sigma).
\]
We verify the Carath\'eodory hypotheses:

\begin{itemize}
    \item For each fixed \(\sigma\in\cT(\cH)_+\), the map
    \[
        (\omega_{\mathrm{out}},\nu)
        \longmapsto
        \Psi(\omega_{\mathrm{out}},\nu,\sigma)
        =
        \sem{S}^{+}_{\omega_{\mathrm{out}},\nu}(\sigma)
    \]
    is
    $(\Sigma_\Gamma\otimes\Sigma_M,\operatorname{Borel}(\cT(\cH)_+))$-measurable by the structural induction hypothesis applied to
    \(\Gamma,x\vdash S(x)\), with
    \[
        \Omega_{\Gamma,x}=\Omega_\Gamma\times\Omega_M,
        \qquad
        \Sigma_{\Gamma,x}=\Sigma_\Gamma\otimes\Sigma_M.
    \]
    If \(S(x)\) is given by finitely many measurable cases
    \((A_i,S_i(x))_{i=1}^m\), the same conclusion follows by applying the induction
    hypothesis to each \(S_i(x)\) and pasting along the measurable partition
    \(A_i\).
    
    \item For each fixed \((\omega_{\mathrm{out}},\nu)\), the map
    \[
        \sigma\longmapsto
        \Psi(\omega_{\mathrm{out}},\nu,\sigma)
        =
        \sem{S}^{+}_{\omega_{\mathrm{out}},\nu}(\sigma)
    \]
    is trace-norm continuous on \(\cT(\cH)_+\).  Indeed, by the structural
    induction hypothesis, the continuation semantics admits a CPTNI extension
    on \(\cT(\cH)\).
\end{itemize}

Since \(\cT(\cH)_+\) is separable and metrizable in the trace-norm subspace
topology, the Carath\'eodory joint measurability theorem
\cite[Lemma~4.51]{aliprantis2006infinite} gives that \(\Psi\) is
\[
    \bigl(
        \Sigma_\Gamma\otimes\Sigma_M
        \otimes\operatorname{Borel}(\cT(\cH)_+),
        \operatorname{Borel}(\cT(\cH)_+)
    \bigr)\text{-measurable}.
\]

Then for fixed $\rho\in\cT(\cH)_{+}$, we define
\[
    h:\Omega_\Gamma\times\Omega_M
    \longrightarrow
    \Omega_\Gamma\times\Omega_M\times\cT(\cH)_+
\]
by
\[
    h(\omega_{\mathrm{out}},\nu)
    \coloneqq
    \left(
        \omega_{\mathrm{out}},
        \nu,
        G_\rho\bigl(\sem{\seq e}_\Gamma(\omega_{\mathrm{out}}),\nu\bigr)
    \right).
\]
The first two components are coordinate projections, and the third component is
measurable by \eqref{eq:bind-pulled-back-branch-preparation}.  Hence, by
\cite[Lemma~4.49]{aliprantis2006infinite}, the map \(h\) is $\big(\Sigma_\Gamma\otimes\Sigma_M,
(\Sigma_\Gamma\otimes\Sigma_M)\otimes\operatorname{Borel}(\cT(\cH)_+)\big)$-measurable.  

Therefore, from $F_\rho(\omega_{\mathrm{out}},\nu)=\Psi\circ h$, we can demonstrate that for every
relatively trace-norm open set \(W\subseteq\cT(\cH)_+\),
\[
\begin{aligned}
    F_\rho^{-1}(W)
    &=
    \{(\omega_{\mathrm{out}},\nu)
      \mid
      \Psi(h(\omega_{\mathrm{out}},\nu))\in W\}  \\
    &=
    h^{-1}\bigl(\Psi^{-1}(W)\bigr)
    \in
    \Sigma_\Gamma\otimes\Sigma_M .
\end{aligned}
\]
Therefore
\[
    F_\rho:
    \Omega_\Gamma\times\Omega_M\longrightarrow\cT(\cH)_+
\]
is $(\Sigma_\Gamma\otimes\Sigma_M,
     \operatorname{Borel}(\cT(\cH)_+))$-measurable.

Fix \(\omega_{\mathrm{out}}\in\Omega_\Gamma\).  Since \(F_\rho\) is product
measurable, the section
\[
    \nu\longmapsto F_\rho(\omega_{\mathrm{out}},\nu)
\]
is \((\Sigma_M,\operatorname{Borel}(\cT(\cH)_+))\)-measurable (see \cite[Lemma 4.48]{aliprantis2006infinite}), hence Bochner
measurable by \Cref{prop:bochner_equivalent_def}.  Moreover,
\(F_\rho(\omega_{\mathrm{out}},\nu)\in\cT(\cH)_+\), and therefore
\[
    \|F_\rho(\omega_{\mathrm{out}},\nu)\|_1
    =
    \tr(F_\rho(\omega_{\mathrm{out}},\nu)).
\]
By the TNI property of the continuation semantics,
\[
    \tr(F_\rho(\omega_{\mathrm{out}},\nu))
    \leq
    \tr\!\left(
        M_{\sem{\seq e}_\Gamma(\omega_{\mathrm{out}}),\nu}
        \rho
        M_{\sem{\seq e}_\Gamma(\omega_{\mathrm{out}}),\nu}^{\dagger}
    \right).
\]
Integrating over \(\Omega_M\), and using the pulled-back normalization
condition from \Cref{lem:pullback_primitive_measurability}, we obtain
\[
\begin{aligned}
    \int_{\Omega_M}
    \|F_\rho(\omega_{\mathrm{out}},\nu)\|_1\,d\mu_M(\nu)
    &\leq
    \int_{\Omega_M}
    \tr\!\left(
        M_{\sem{\seq e}_\Gamma(\omega_{\mathrm{out}}),\nu}
        \rho
        M_{\sem{\seq e}_\Gamma(\omega_{\mathrm{out}}),\nu}^{\dagger}
    \right)
    d\mu_M(\nu)                                          \\
    &=
    \tr(\rho)<\infty .
\end{aligned}
\]
Here the last equality is justified by writing
\(\rho=\sum_i\lambda_i\sP_{\u_i}\) and applying Tonelli's theorem to the
non-negative scalar function
\[
    (i,\nu)
    \longmapsto
    \lambda_i
    \left\|
        M_{\sem{\seq e}_\Gamma(\omega_{\mathrm{out}}),\nu}\u_i
    \right\|^2
\]
on \(\mathbb N\times\Omega_M\), followed by the normalization of the
pulled-back instrument; see \cite[Proposition 5.2.1]{cohn2013measure}.
Thus the section
\(\nu\mapsto F_\rho(\omega_{\mathrm{out}},\nu)\) is Bochner integrable.

By Hille's theorem for the Bochner integral
(see \cite[Theorem~2.2.6]{diestel1977vector}
and \cite[Theorem~III.2.19(c)]{dunford1958linear}), for every \(B\in\cB(\cH)\),
\[
    \tr\!\left(
        B\int_{\Omega_M}F_\rho(\omega_{\mathrm{out}},\nu)\,d\mu_M(\nu)
    \right)
    =
    \int_{\Omega_M}
    \tr\!\left(BF_\rho(\omega_{\mathrm{out}},\nu)\right)
    d\mu_M(\nu).
\]
Hence the Pettis-style prescription \eqref{eq:bind-pettis-clause} is represented
by the Bochner integral
\[
    \sem{\bind(M(\seq e),x.S(x))}^{+}_{\omega_{\mathrm{out}}}(\rho)
    =
    \int_{\Omega_M}
    F_\rho(\omega_{\mathrm{out}},\nu)\,d\mu_M(\nu).
\]

Finally, this Bochner integral belongs to \(\cT(\cH)_+\).  Indeed, for every
\(\u\in\cH\), Hille's theorem applied to the bounded functional
\(A\mapsto\tr(\sP_{\u}A)\) gives
\[
    \left\langle
        \u,
        \int_{\Omega_M}F_\rho(\omega_{\mathrm{out}},\nu)\,d\mu_M(\nu)
        \,\u
    \right\rangle
    =
    \int_{\Omega_M}
    \langle\u,F_\rho(\omega_{\mathrm{out}},\nu)\u\rangle
    d\mu_M(\nu)
    \geq 0.
\]
Thus the output is a positive trace-class operator.  The same trace estimate also
gives
\[
    \tr\!\left(
        \sem{\bind(M(\seq e),x.S(x))}^{+}_{\omega_{\mathrm{out}}}(\rho)
    \right)
    \leq
    \tr(\rho).
\]

\emph{Physical Validity (CPTNI):}
Fix \(\omega_{\mathrm{out}}\in\Omega_\Gamma\).  We verify the hypotheses of
\Cref{lem:positive_cone_reduction} for the positive-cone binding semantics.

First, the map
\[
    \rho
    \longmapsto
    \sem{\bind(M(\seq e),x.S(x))}^{+}_{\omega_{\mathrm{out}}}(\rho)
\]
is cone-linear on \(\cT(\cH)_+\).  This follows from the linearity of the
prepared branch
\[
    \rho\longmapsto
    M_{\sem{\seq e}_\Gamma(\omega_{\mathrm{out}}),\nu}
    \rho
    M_{\sem{\seq e}_\Gamma(\omega_{\mathrm{out}}),\nu}^{\dagger},
\]
the cone-linearity of the continuation semantics
\(\sem{S}^{+}_{\omega_{\mathrm{out}},\nu}\) by the induction hypothesis, and the
linearity of the Bochner integral.

Second, the positive-cone map is trace-non-increasing.  By the trace estimate
established in the well-definedness part, for every \(\rho\in\cT(\cH)_+\),
\[
\begin{aligned}
    \tr\!\left(
        \sem{\bind(M(\seq e),x.S(x))}^{+}_{\omega_{\mathrm{out}}}(\rho)
    \right)
    &=
    \int_{\Omega_M}
    \tr(F_\rho(\omega_{\mathrm{out}},\nu))\,d\mu_M(\nu)  \\
    &\leq
    \tr(\rho).
\end{aligned}
\]
Hence, by the first part of \Cref{lem:positive_cone_reduction}, the
positive-cone binding semantics has a unique bounded positive TNI extension to
\(\cT(\cH)\).  We denote this extension by
\[
    \sem{\bind(M(\seq e),x.S(x))}_{\omega_{\mathrm{out}}}.
\]

It remains to verify complete positivity of this extension.  Let \(k\in\mathbb N\)
and let \(Z\in\cT(\cH\otimes\bC^k)_+\).  For each outcome \(\nu\), consider the
amplified branch
\[
    F^{(k)}_Z(\omega_{\mathrm{out}},\nu)
    \coloneqq
    \bigl(\sem{S}_{\omega_{\mathrm{out}},\nu}\otimes\mathcal I_k\bigr)
    \!\left(
        (M_{\sem{\seq e}_\Gamma(\omega_{\mathrm{out}}),\nu}\otimes I_k)
        Z
        (M_{\sem{\seq e}_\Gamma(\omega_{\mathrm{out}}),\nu}^{\dagger}
         \otimes I_k)
    \right),
\]
where \(\sem{S}_{\omega_{\mathrm{out}},\nu}\) is the CPTNI extension of the
continuation semantics.  This branch is positive for every \(\nu\).  The same measurability and trace-estimate argument as above, applied on
\(\cH\otimes\bC^k\) (finite amplification preserves strong operator
measurability, and the continuation is handled blockwise by the induction
hypothesis), shows that
\[
    \nu\longmapsto F^{(k)}_Z(\omega_{\mathrm{out}},\nu)
\]
is Bochner integrable.  Therefore
\[
    \int_{\Omega_M}
    F^{(k)}_Z(\omega_{\mathrm{out}},\nu)\,d\mu_M(\nu)
    \in
    \cT(\cH\otimes\bC^k)_+ .
\]
By the blockwise definition of finite amplification, together with the same
trace-duality/Hille argument used for the Pettis-style clause, this integral is
precisely
\[
    \bigl(
        \sem{\bind(M(\seq e),x.S(x))}_{\omega_{\mathrm{out}}}
        \otimes\mathcal I_k
    \bigr)(Z).
\]
Thus the \(k\)-th amplification preserves positivity.  Since \(k\in\mathbb N\)
was arbitrary, the extension is completely positive.

Consequently, the positive-cone binding semantics is cone-linear,
trace-non-increasing, and its finite amplifications preserve positivity.  By
\Cref{lem:positive_cone_reduction}, it extends uniquely to a CPTNI superoperator
on \(\cT(\cH)\).

\emph{Measurability:}
Fix \(\rho\in\cT(\cH)_+\).  We prove that
\[
    \omega_{\mathrm{out}}
    \longmapsto
    \sem{\bind(M(\seq e),x.S(x))}^{+}_{\omega_{\mathrm{out}}}(\rho)
\]
is \((\Sigma_\Gamma,\operatorname{Borel}(\cT(\cH)_+))\)-measurable.  By
\Cref{thm:pettis}, it suffices to prove weak \(\Sigma_\Gamma\)-measurability
as a \(\cT(\cH)\)-valued map.

Let \(B\in\cB(\cH)\) be arbitrary and define
\[
    g_B(\omega_{\mathrm{out}},\nu)
    \coloneqq
    \tr\!\left(BF_\rho(\omega_{\mathrm{out}},\nu)\right).
\]
Since \(F_\rho\) is
\((\Sigma_\Gamma\otimes\Sigma_M,\operatorname{Borel}(\cT(\cH)_+))\)-measurable
and \(A\mapsto\tr(BA)\) is trace-norm continuous on \(\cT(\cH)\), the scalar
function \(g_B\) is
\((\Sigma_\Gamma\otimes\Sigma_M,\operatorname{Borel}(\bC))\)-measurable.
Moreover,
\[
    |g_B(\omega_{\mathrm{out}},\nu)|
    \leq
    \|B\|\,\|F_\rho(\omega_{\mathrm{out}},\nu)\|_1,
\]
and the trace estimate above gives, for every
\(\omega_{\mathrm{out}}\in\Omega_\Gamma\),
\[
    \int_{\Omega_M}
    |g_B(\omega_{\mathrm{out}},\nu)|\,d\mu_M(\nu)
    \leq
    \|B\|\,\tr(\rho)<\infty .
\]
Therefore, applying Tonelli's theorem and the measurability theorem for
parameterized integrals to the positive and negative parts of the real and
imaginary parts of \(g_B\), the marginal function
\[
    \omega_{\mathrm{out}}
    \longmapsto
    \int_{\Omega_M}g_B(\omega_{\mathrm{out}},\nu)\,d\mu_M(\nu)
\]
is \((\Sigma_\Gamma,\operatorname{Borel}(\bC))\)-measurable
\cite[Proposition 5.2.1]{cohn2013measure}.  By the Pettis-style identity
\eqref{eq:bind-pettis-clause}, this marginal is exactly
\[
    \omega_{\mathrm{out}}
    \longmapsto
    \tr\!\left(
        B\,\sem{\bind(M(\seq e),x.S(x))}^{+}_{\omega_{\mathrm{out}}}(\rho)
    \right).
\]
Since \(B\in\cB(\cH)\) was arbitrary, the map
\[
    \omega_{\mathrm{out}}
    \longmapsto
    \sem{\bind(M(\seq e),x.S(x))}^{+}_{\omega_{\mathrm{out}}}(\rho)
\]
is weakly \(\Sigma_\Gamma\)-measurable.  Hence, by \Cref{thm:pettis}, it is
\((\Sigma_\Gamma,\operatorname{Borel}(\cT(\cH)))\)-measurable.  Since the map
takes values in \(\cT(\cH)_+\), and
\(\operatorname{Borel}(\cT(\cH)_+)\) is the subspace Borel \(\sigma\)-algebra, it
is also \((\Sigma_\Gamma,\operatorname{Borel}(\cT(\cH)_+))\)-measurable.

This proves the required positive-input measurability for the binding construct.
The corresponding measurability for arbitrary \(X\in\cT(\cH)\) is supplied
uniformly by \Cref{lem:positive_cone_reduction}.

\paragraph*{$\bullet$ \textbf{While Loop ($\mathbf{while}\ M(\seq e)[\seq q]=1\ \mathbf{do}\ S\ \mathbf{od}$):}}
We define the loop semantics by finite unrolling approximants.  Fix
$\omega\in\Omega_\Gamma$.  By the structural induction hypothesis, the body
semantics admits a CPTNI extension
\[
    \sem{S}_\omega:\cT(\cH)\to\cT(\cH).
\]
We recursively define maps $\mathcal F_{n,\omega}:\cT(\cH)\to\cT(\cH)$ by
\[
    \mathcal F_{0,\omega}\coloneqq \bot
\]
and, for $n\geq 0$,
\begin{equation}
\label{eq:induction-def-of-while-semantics}
    \mathcal F_{n+1,\omega}(X)
    \coloneqq
    M_{\omega,0}XM_{\omega,0}^\dagger
    +
    \mathcal F_{n,\omega}
    \bigl(
        \sem{S}_\omega(M_{\omega,1}XM_{\omega,1}^\dagger)
    \bigr),
    \qquad
    X\in\cT(\cH).
\end{equation}
The index convention is that $\mathcal F_{0,\omega}$ is the zero
approximant, while $\mathcal F_{n+1,\omega}$ accounts for terminating paths
with at most $n$ executions of the loop body.

We shall prove below that each $\mathcal F_{n,\omega}$ is CPTNI and that the
chain is increasing in the pointwise L\"owner order. The loop denotation is then defined as the supremum of this chain in
$\cQ(\cH)$:
\[
    \mathcal F_\omega
    \coloneqq
    \sup_{n\in\mathbb N}\mathcal F_{n,\omega}.
\]
The positive-cone semantics of the loop will be the restriction of
$\mathcal F_\omega$ to $\cT(\cH)_+$.

\emph{Physical Validity (CPTNI):}
By the structural induction hypothesis applied to the loop body $S$, for
each fixed $\omega\in\Omega_\Gamma$ the body semantics admits a CPTNI
extension
\[
    \sem{S}_\omega:\cT(\cH)\to\cT(\cH).
\]
For this fixed $\omega$, a straightforward induction on $n$ shows that each
finite approximant $\mathcal F_{n,\omega}$ belongs to $\cQ(\cH)$.  Indeed,
$\mathcal F_{0,\omega}=\bot$ is the zero CPTNI map.  If
$\mathcal F_{n,\omega}$ is CPTNI, then
\[
    \mathcal F_{n+1,\omega}
    =
    \mathcal M_0
    +
    \mathcal F_{n,\omega}\circ \sem{S}_\omega\circ \mathcal M_1,
    \qquad
    \mathcal M_b(X)\coloneqq M_{\omega,b}XM_{\omega,b}^\dagger ,
\]
is completely positive, since it is a sum and composition of completely
positive maps.  It is also trace-non-increasing: for
$\rho\in\cT(\cH)_+$,
\[
    \tr(\mathcal F_{n+1,\omega}(\rho))
    \le
    \tr(M_{\omega,0}\rho M_{\omega,0}^\dagger)
    +
    \tr(M_{\omega,1}\rho M_{\omega,1}^\dagger)
    =
    \tr(\rho),
\]
where the inequality uses the TNI property of $\mathcal F_{n,\omega}$ and
$\sem{S}_\omega$, and the equality uses
$M_{\omega,0}^\dagger M_{\omega,0}+M_{\omega,1}^\dagger M_{\omega,1}=I$.  Hence
$\mathcal F_{n+1,\omega}$ is CPTNI, completing the induction.

Next, we prove by induction on $n$ that the finite approximants form an
increasing chain in the pointwise L\"owner order.  The base case is immediate:
for every $\rho\in\cT(\cH)_+$,
\[
    \mathcal F_{0,\omega}(\rho)=0
    \sqleq
    M_{\omega,0}\rho M_{\omega,0}^\dagger
    =
    \mathcal F_{1,\omega}(\rho).
\]
For the inductive step, assume that
\[
    \mathcal F_{n,\omega}(\rho)
    \sqleq
    \mathcal F_{n+1,\omega}(\rho),
    \qquad
    \forall \rho\in\cT(\cH)_+ .
\]
Since $\sem{S}_\omega$ is positive, the operator
\(\sem{S}_\omega(M_{\omega,1}\rho M_{\omega,1}^\dagger)\)
belongs to $\cT(\cH)_+$.  Applying the induction hypothesis to this positive
input gives
\[
\begin{aligned}
    \mathcal F_{n+2,\omega}(\rho)
    &=
    M_{\omega,0}\rho M_{\omega,0}^\dagger
    +
    \mathcal F_{n+1,\omega}
    \bigl(\sem{S}_\omega(M_{\omega,1}\rho M_{\omega,1}^\dagger)\bigr)       \\
    &\sqgeq
    M_{\omega,0}\rho M_{\omega,0}^\dagger
    +
    \mathcal F_{n,\omega}
    \bigl(\sem{S}_\omega(M_{\omega,1}\rho M_{\omega,1}^\dagger)\bigr)       \\
    &=
    \mathcal F_{n+1,\omega}(\rho).
\end{aligned}
\]
Thus, by induction,
\[
    \mathcal F_{0,\omega}
    \sqsubseteq
    \mathcal F_{1,\omega}
    \sqsubseteq
    \cdots
    \sqsubseteq
    \mathcal F_{n,\omega}
    \sqsubseteq
    \cdots
\]
is an increasing chain in $\cQ(\cH)$ with respect to the pointwise
L\"owner order $\sqsubseteq$.

There is a small order-theoretic subtlety here.  The above pointwise
L\"owner order is the order used in Lemma~\ref{lem:cpo_domain}, and it is
sufficient for constructing the loop denotation.  However, the approximant
chain is in fact increasing in the stronger complete-positivity order as
well.  Indeed,
\[
    \mathcal F_{1,\omega}-\mathcal F_{0,\omega}
    =
    \mathcal M_0,
    \qquad
    \mathcal M_b(X)\coloneqq M_{\omega,b}XM_{\omega,b}^\dagger ,
\]
is completely positive, and for every $n\geq 0$,
\[
    \mathcal F_{n+2,\omega}-\mathcal F_{n+1,\omega}
    =
    (\mathcal F_{n+1,\omega}-\mathcal F_{n,\omega})
    \circ
    \sem{S}_\omega
    \circ
    \mathcal M_1 .
\]
Hence a second induction on $n$ shows that
\[
    \mathcal F_{0,\omega}
    \sqsubseteq_{\mathrm{cp}}
    \mathcal F_{1,\omega}
    \sqsubseteq_{\mathrm{cp}}
    \cdots
    \sqsubseteq_{\mathrm{cp}}
    \mathcal F_{n,\omega}
    \sqsubseteq_{\mathrm{cp}}
    \cdots .
\]
By Remark~\ref{rem:cp_order_cpo}, the CP-order supremum of this chain agrees
with the pointwise L\"owner supremum.  We therefore use the pointwise
$\omega$-CPO structure of Lemma~\ref{lem:cpo_domain} to define the loop
semantics, while the CP-order observation records that no complete-positivity
structure is lost in this construction.

By Lemma~\ref{lem:cpo_domain}, the supremum
\[
    \mathcal F_{\omega}
    \coloneqq
    \sup_{n\in\mathbb N}\mathcal F_{n,\omega}
\]
exists in $\cQ(\cH)$.  We define the positive-cone semantics of the loop by
\[
    \sem{\mathbf{while}\ M(\seq e)[\seq q]=1\ \mathbf{do}\ S\ \mathbf{od}}^{+}_\omega
    (\rho)
    \coloneqq
    \mathcal F_\omega(\rho),
    \qquad
    \rho\in\cT(\cH)_+ .
\]
Since $\mathcal F_\omega\in\cQ(\cH)$, its restriction to $\cT(\cH)_+$ is
cone-linear, trace-non-increasing, and its finite amplifications preserve
positivity.  Hence it satisfies the hypotheses of
\Cref{lem:positive_cone_reduction}. 

\emph{Measurability:}
Fix $\rho\in\cT(\cH)_+$.  We first prove by induction on $n$ that, for every
fixed $\rho'\in\cT(\cH)_+$, the map
\[
    \omega\longmapsto \mathcal F_{n,\omega}(\rho')
\]
is $(\Sigma_\Gamma,\operatorname{Borel}(\cT(\cH)_+))$-measurable.

For $n=0$, this is the constant zero map.  Suppose the assertion holds for $n$. 
For fixed $\rho'\in\cT(\cH)_+$, the recurrence gives
\[
    \mathcal F_{n+1,\omega}(\rho')
    =
    M_{\omega,0}\rho' M_{\omega,0}^\dagger
    +
    \mathcal F_{n,\omega}
    \bigl(
        \sem{S}^{+}_\omega(M_{\omega,1}\rho' M_{\omega,1}^\dagger)
    \bigr).
\]
By Lemma~\ref{lem:operator_sandwich_measurability} and the strong operator measurability of $\omega\mapsto M_{\omega,0}$, the first term is measurable. Set
\[
    \rho''_\omega
    \coloneqq
    \sem{S}^{+}_\omega(M_{\omega,1}\rho' M_{\omega,1}^\dagger).
\]
By Lemma~\ref{lem:operator_sandwich_measurability}, the map
$\omega\mapsto M_{\omega,1}\rho' M_{\omega,1}^\dagger$ is measurable, and by
the structural induction hypothesis for $S$ together with the
Carath\'eodory joint-measurability theorem \cite[Lemma 4.51]{aliprantis2006infinite}, the map
$\omega\mapsto \rho''_\omega$ is
$(\Sigma_\Gamma,\operatorname{Borel}(\cT(\cH)_+))$-measurable.

It remains to prove that
\[
    \omega\longmapsto
    \mathcal F_{n,\omega}(\rho''_\omega)
\]
is measurable.  Define
\[
    h:\Omega_\Gamma\to\Omega_\Gamma\times\cT(\cH)_+,
    \qquad
    h(\omega)\coloneqq(\omega,\rho''_\omega).
\]
The first component $\omega\mapsto\omega$ is
$(\Sigma_\Gamma,\Sigma_\Gamma)$-measurable, and the second component
$\omega\mapsto\rho''_\omega$ is
$(\Sigma_\Gamma,\operatorname{Borel}(\cT(\cH)_+))$-measurable.  Hence, by
\cite[Lemma 4.49]{aliprantis2006infinite}, the pair map $h$ is
\(
    \bigl(
        \Sigma_\Gamma,
        \Sigma_\Gamma\otimes\operatorname{Borel}(\cT(\cH)_+)
    \bigr)
\)-measurable.

Next define
\[
    \Phi_n:\Omega_\Gamma\times\cT(\cH)_+\to\cT(\cH)_+,
    \qquad
    \Phi_n(\omega,\sigma)\coloneqq
    \mathcal F_{n,\omega}(\sigma).
\]
We verify the Carath\'eodory hypotheses:

\begin{itemize}
    \item By the induction hypothesis on $n$, for each fixed $\sigma\in\cT(\cH)_+$, the map
    \[
        \omega\longmapsto \Phi_n(\omega,\sigma)
        =
        \mathcal F_{n,\omega}(\sigma)
    \]
    is $(\Sigma_\Gamma,\operatorname{Borel}(\cT(\cH)_+))$-measurable.

    \item For each fixed $\omega\in\Omega_\Gamma$, the map
    \[
        \sigma\longmapsto \Phi_n(\omega,\sigma)
        =
        \mathcal F_{n,\omega}(\sigma)
    \]
    is trace-norm continuous on $\cT(\cH)_+$.  Indeed,
    $\mathcal F_{n,\omega}$ is the restriction of a bounded linear CPTNI map on
    $\cT(\cH)$.
\end{itemize}

Since $\cT(\cH)_+$ is separable and metrizable in the trace-norm subspace
topology, the Carath\'eodory joint measurability theorem
\cite[Lemma~4.51]{aliprantis2006infinite} gives that $\Phi_n$ is
\(
    \bigl(
        \Sigma_\Gamma\otimes\operatorname{Borel}(\cT(\cH)_+),
        \operatorname{Borel}(\cT(\cH)_+)
    \bigr)
\)-measurable.

Now we verify the pullback explicitly.  Let
$W\subseteq\cT(\cH)_+$ be relatively trace-norm open.  Since $\Phi_n$ is jointly
measurable,
\[
    \Phi_n^{-1}(W)
    \in
    \Sigma_\Gamma\otimes\operatorname{Borel}(\cT(\cH)_+).
\]
Since $h$ is measurable into this product measurable space,
\[
\begin{aligned}
    \{\omega\in\Omega_\Gamma
      \mid
      \mathcal F_{n,\omega}(\rho''_\omega)\in W\}
    &=
    (\Phi_n\circ h)^{-1}(W)        \\
    &=
    h^{-1}\bigl(\Phi_n^{-1}(W)\bigr)
    \in \Sigma_\Gamma .
\end{aligned}
\]
Therefore
\[
    \omega\longmapsto
    \mathcal F_{n,\omega}(\rho''_\omega)
\]
is $(\Sigma_\Gamma,\operatorname{Borel}(\cT(\cH)_+))$-measurable.
Adding the measurable term $\omega\mapsto M_{\omega,0}\rho'M_{\omega,0}^\dagger$ preserves measurability, so
\[
    \omega\longmapsto
    \mathcal F_{n+1,\omega}(\rho')
\]
is also $(\Sigma_\Gamma,\operatorname{Borel}(\cT(\cH)_+))$-measurable.
This completes the induction on $n$.

We now pass to the limit.  By the definition of $\mathcal F_\omega$,
\[
    \mathcal F_\omega(\rho)
    =
    \sup_{n\in\mathbb N}\mathcal F_{n,\omega}(\rho).
\]
For $\rho\in\pardensity{\cH}$, Lemma~\ref{lem:cpo_domain} gives trace-norm
convergence
\[
    \mathcal F_{n,\omega}(\rho)
    \xrightarrow[n\to\infty]{\|\cdot\|_1}
    \mathcal F_\omega(\rho).
\]
For a general $\rho\in\cT(\cH)_+$, the same conclusion follows by scaling:
if $\rho\neq0$, then $\rho/\tr(\rho)\in\pardensity{\cH}$, and linearity gives
\[
    \mathcal F_{n,\omega}(\rho)
    =
    \tr(\rho)\,
    \mathcal F_{n,\omega}
    \!\left(\frac{\rho}{\tr(\rho)}\right)
    \longrightarrow
    \tr(\rho)\,
    \mathcal F_\omega
    \!\left(\frac{\rho}{\tr(\rho)}\right)
    =
    \mathcal F_\omega(\rho)
\]
in trace norm.  The case $\rho=0$ is trivial.

Thus
\[
    \omega\longmapsto
    \sem{\mathbf{while}\ M(\seq e)[\seq q]=1\ \mathbf{do}\ S\ \mathbf{od}}^{+}_\omega
    (\rho)
    =
    \mathcal F_\omega(\rho)
\]
is the pointwise trace-norm limit of the maps
\(
    \omega\longmapsto \mathcal F_{n,\omega}(\rho)
\),
where these maps are all 
$(\Sigma_\Gamma,\operatorname{Borel}(\cT(\cH)_+))$-measurable.
Since $\cT(\cH)_+$ is metrizable in the trace-norm subspace topology, the pointwise limit is again
$(\Sigma_\Gamma,\operatorname{Borel}(\cT(\cH)_+))$-measurable
by \cite[Lemma~4.29]{aliprantis2006infinite}.  This proves the required positive-input measurability for the while loop.
\end{proof}

\begin{remark}[Programs without free classical parameters]
After the structural semantics has been defined, a program phrase whose free
classical parameter context is empty is interpreted over the one-point
meta-parameter space.  Its denotation is therefore an ordinary CPTNI
superoperator.  In a binding construct $\bind(M(\seq e),x.S(x))$, the bound
variable $x$ is local to the continuation and does not count as a free
classical parameter of the whole binding phrase.
\end{remark}

\subsubsection*{Correspondence Theorem}

The structural theorem above shows that every well-formed program phrase
denotes a physically valid quantum operation, once the classical
meta-parameters are fixed.  We now record the corresponding closed-program
statement and the converse expressiveness result.

For a program phrase \(S\), let \(\operatorname{Meta}(S)\) denote the set of
its free classical meta-variables, namely the classical parameters occurring in
parameter expressions of primitive operations and not bound by any measurement
binding construct.  We say that \(S\) is \emph{classically closed} if
\[
    \operatorname{Meta}(S)=\emptyset .
\]
Equivalently, \(S\) is well-formed over the empty classical context
\[
    \emptyset\vdash S .
\]

\begin{theorem}[Correspondence between Closed Programs and Superoperators]
\label{thm:semantics_correspondence}
Let $\cH$ be a separable Hilbert space.
\begin{enumerate}
    \item \textbf{Soundness.}
    If $\emptyset\vdash S$ is a quantum program phrase with no free classical
    meta-parameters, then its denotation is a CPTNI superoperator:
    $\sem{S}\in\cQ(\cH)$.
    \item \textbf{Universality.}
    Conversely, under the primitive-signature convention, for every CPTNI superoperator $\cE\in\cQ(\cH)$, there exists
    a quantum program phrase $\emptyset\vdash S$ in the syntax such that
    $\sem{S}=\cE$.
\end{enumerate}
\end{theorem}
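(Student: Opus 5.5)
For \textbf{soundness}, I will specialize \Cref{thm:semantics_well_defined} to the empty context. If $\emptyset\vdash S$, then $\Omega_\emptyset=\{\ast\}$. Part~(1) of that theorem says $\sem{S}^{+}_\ast$ extends uniquely to a CPTNI superoperator $\sem{S}_\ast$, so $\sem{S}=\sem{S}_\ast\in\cQ(\cH)$. The measurability part is vacuous over a one-point space. Nothing beyond this specialization should be needed.

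For \textbf{universality}, I will encode an arbitrary $\cE\in\cQ(\cH)$ as a single discrete binding construct followed by trivial continuations.

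\emph{Kraus family.} By \Cref{thm:kraus_rep}, choose a finite or countable Kraus family $\{K_j\}_{j\in J}\subseteq\cB(\cH)$ with $\sum_j K_j^\dagger K_j\sqleq I$ strongly and $\cE(\rho)=\sum_jK_j\rho K_j^\dagger$ in trace norm. The strong sum $G\coloneqq\sum_jK_j^\dagger K_j$ is a bounded positive operator with $G\sqleq I$. I will therefore set
\[
    K_\bot\coloneqq (I-G)^{1/2}
\]
by the continuous functional calculus.

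\emph{Outcome space and measurement.} I take the outcome space $\Omega_M\coloneqq J\sqcup\{\bot\}$ with its power set and counting measure. This space is countable, hence $\sigma$-finite, and I assume a classical parameter variable $x$ with this value space is declared. The primitive Kraus-density family (unparameterized, $\Omega_{\seq e}=\{\ast\}$) is $M_\nu\coloneqq K_\nu$, acting on the register consisting of all quantum variables, so no cylinder extension is needed. The required properties follow directly:
\begin{itemize}
    \item Joint strong measurability is automatic, since every map out of a discrete measurable space is measurable.
    \item Normalization holds because $\int\|M_\nu\u\|^2\,d\mu_M=\sum_j\|K_j\u\|^2+\langle\u,(I-G)\u\rangle=\|\u\|^2$.
\end{itemize}

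\emph{Program.} Let
\[
    S\equiv\bind(M,x.S(x)),
\]
where $S(x)$ is given by the measurable partition $A_1=J$, $A_2=\{\bot\}$, with fragment $\mathbf{skip}$ on $A_1$ and $\mathbf{abort}$ on $A_2$. This respects the finiteness condition on continuations, and $\emptyset\vdash S$ by the \textsc{Bind}, \textsc{Skip} and \textsc{Abort} rules.

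\emph{Computation of $\sem{S}$.} For $\rho\in\cT(\cH)_+$, the branch function is $F_\rho(j)=K_j\rho K_j^\dagger$ for $j\in J$ and $F_\rho(\bot)=0$. A Bochner integral against counting measure of an absolutely summable function is its trace-norm convergent series. Indeed, a Bochner-integrable function on a countable space is approximated by finitely supported simple functions, and dominated convergence gives the sum; alternatively, the Pettis-style clause \eqref{eq:bind-pettis-clause} identifies it directly with the series. Hence $\sem{S}^{+}(\rho)=\sum_jK_j\rho K_j^\dagger=\cE(\rho)$. Uniqueness of the CPTNI extension in \Cref{lem:positive_cone_reduction} then gives $\sem{S}=\cE$ on all of $\cT(\cH)$.

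\emph{Main obstacle.} I expect the main obstacle to be purely bookkeeping: making sure the construction really is an admissible syntactic instance. First, the trace-deficit must be absorbed by an explicit $\bot$ outcome, because the syntax demands exact normalization ($=I$) rather than $\sqleq I$; this is why $K_\bot$ and the \textbf{abort} branch are introduced. Second, the discrete outcome space must be legitimately available as the value space of a declared parameter variable. If one preferred to avoid a discrete outcome space, the alternative I would try is embedding $J\sqcup\{\bot\}$ into $\bR$ with the measure $\sum_\nu\delta_\nu$. I expect the discrete version to suffice.
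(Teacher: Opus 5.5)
Your proposal is correct and matches the paper's proof. Soundness is the specialization of \Cref{thm:semantics_well_defined} to the one-point context, and universality uses the same construction: a Kraus family, the residual operator $(I-\sum_j K_j^\dagger K_j)^{1/2}$, a counting-measure bind over $J\sqcup\{\bot\}$ with \textbf{skip} on $J$ and \textbf{abort} on $\{\bot\}$, and extension from the positive cone to all of $\cT(\cH)$.
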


\begin{proof}
\textbf{Soundness.}
If $\emptyset\vdash S$, then the classical parameter space is the one-point
space.  By \Cref{thm:semantics_well_defined}, the unique denotation of \(S\)
is a CPTNI superoperator on \(\cT(\cH)\).  We write this denotation simply as
\(\sem{S}\).  Hence \(\sem{S}\in\cQ(\cH)\).

\textbf{Universality.}
Let \(\cE\in\cQ(\cH)\).  By \Cref{thm:kraus_rep}, there is a finite or countably
infinite index set \(J\) and operators
\[
    \{K_j\}_{j\in J}\subseteq\cB(\cH)
\]
such that
\[
    \cE(X)=\sum_{j\in J}K_jXK_j^\dagger,
    \qquad
    X\in\cT(\cH),
\]
with trace-norm convergence, and
\[
    \sum_{j\in J}K_j^\dagger K_j\sqleq I
\]
in the strong operator topology.

Let $A\coloneqq \sum_{j\in J}K_j^\dagger K_j$
be this strong limit, and define the residual Kraus operator
\[
    K_\bot\coloneqq (I-A)^{1/2}.
\]
Set
\[
    \Omega_M\coloneqq J\sqcup\{\bot\},
    \qquad
    \Sigma_M\coloneqq 2^{\Omega_M},
    \qquad
    \mu_M\coloneqq \text{counting measure}.
\]
Define a discrete instrument \(M\) by
\[
    M_j\coloneqq K_j \quad (j\in J),
    \qquad
    M_\bot\coloneqq K_\bot .
\]
Then
\[
    \sum_{\xi\in\Omega_M}M_\xi^\dagger M_\xi
    =
    \sum_{j\in J}K_j^\dagger K_j+K_\bot^\dagger K_\bot
    =
    A+(I-A)
    =
    I
\]
in the strong operator topology.  Thus \(M\) is a normalized discrete
instrument. By the primitive-signature convention, this normalized discrete instrument may be used as a primitive measurement symbol of the language. Since \(\Omega_M\) is finite or countable with the discrete
\(\sigma\)-algebra, its measurability requirements are automatic.  This is the
counting-measure special case of the general \(\bind\)-semantics, where the
Bochner integral reduces to a trace-norm convergent series.

Choose a fresh bound classical parameter variable \(x\) with value space
\((\Omega_M,\Sigma_M,\mu_M)\).  Define the program
\[
    S_{\cE}
    \coloneqq
    \bind(M,x.S(x)),
\]
where the continuation is specified by the finite measurable partition
\[
    \Omega_M=J\sqcup\{\bot\}
\]
as
\[
    S(x)=
    \begin{cases}
        \mathbf{skip}, & x\in J,\\
        \mathbf{abort}, & x=\bot .
    \end{cases}
\]
This is a valid finite continuation specification: although \(J\) may be
infinite, only two program fragments are used.  The variable \(x\) is bound
locally by the \(\bind\)-construct and is not a free classical meta-parameter of
the whole program.  Since the instrument is not parameterized by any classical
expression, \(S_{\cE}\) is well-formed over the empty classical context.

For every \(\rho\in\cT(\cH)_+\), using the convention after
\Cref{thm:semantics_well_defined} that \(\sem{S}\) also denotes the restriction
of the CPTNI denotation to positive inputs, we compute
\[
\begin{aligned}
    \sem{S_{\cE}}(\rho)
    &=
    \int_{\Omega_M}
    \sem{S(\xi)}
    \bigl(M_\xi\rho M_\xi^\dagger\bigr)
    \,d\mu_M(\xi)                                      \\
    &=
    \sum_{j\in J}
    \sem{\mathbf{skip}}
    \bigl(K_j\rho K_j^\dagger\bigr)
    +
    \sem{\mathbf{abort}}
    \bigl(K_\bot\rho K_\bot^\dagger\bigr)              \\
    &=
    \sum_{j\in J}K_j\rho K_j^\dagger
    =
    \cE(\rho).
\end{aligned}
\]
The series converges in trace norm by \Cref{thm:kraus_rep}.  Thus
\(\sem{S_{\cE}}\) and \(\cE\) agree on \(\cT(\cH)_+\).  Since both are bounded
linear maps on \(\cT(\cH)\), and every trace-class operator is a linear
combination of four positive trace-class operators, they agree on all of
\(\cT(\cH)\).  Therefore
\[
    \sem{S_{\cE}}=\cE .
\]
This proves universality.
\end{proof}
\section{Logic: Predicates and Weakest Preconditions}
\label{sec:logic}

\subsection{Quantum Predicates and Logical Validity}
\label{subsec:predicates_validity}

To rigorously verify quantum programs in infinite-dimensional spaces, we require a notion of predicates that generalizes bounded observables to include unbounded quantities and strict subspace constraints.

\paragraph*{Predicates as Linear Relations.}
We define the set of quantum predicates, denoted by $\Pred$, using the language of Linear Relations (LRs), which provides the necessary algebraic closure for logical operations.

\begin{definition}[Quantum Predicates]
    A \emph{quantum predicate} $P \in \Pred$ is a \emph{positive self-adjoint linear relation} on the global Hilbert space $\cH$.
    
    This definition generalizes standard quantum logic in two ways:
    \begin{enumerate}
        \item \textbf{Unboundedness:} It includes unbounded observables (e.g., energy, position).
        \item \textbf{Singularity:} It includes relations with non-dense domains, allowing us to represent \emph{hard constraints} (e.g., states confined strictly to a subspace).
    \end{enumerate}
    
    The set $\Pred$ is equipped with extended L\"owner order $\sqsubseteq$. As shown in \Cref{thm:monotone_convergence}, $(\Pred, \sqsubseteq)$ forms an $\omega$-CPO.
\end{definition}

\paragraph*{Computational Representation (Quadratic Forms).}
While LRs provide the structural definition, calculation of weakest preconditions requires summing and composing these objects. For this purpose, we utilize the isomorphism established in Theorem~\ref{thm:kato_first}:
\begin{displayquote}
    Every predicate $P \in \Pred$ is uniquely characterized by a \emph{positive closed quadratic form} $\mathfrak{t}_P : \cH \to [0, +\infty]$.
\end{displayquote}
In the remainder of this paper, we will interchangeably refer to the predicate $P$ and its form $\mathfrak{t}_P$. Specifically, we use the form representation to handle sums and limits via the Representation Theorem and Monotone Convergence Theorem.

\paragraph*{Expectation Values via Extended Trace.}
The logical validity of a program depends on the expected value of predicates. We employ the \emph{extended trace} defined in \Cref{def:extended_trace}.
For any $P \in \Pred$ and state $\rho$, the value $\Tr(P \rho) \in [0, +\infty]$ is computed utilizing the associated form $\mathfrak{t}_P$.
This approach circumvents the difficulty of defining traces directly on multivalued relations.

\begin{definition}[Partial Correctness Validity]\label{def:partial-correctness}
    A triple $\{P\}S\{Q\}$ is valid, denoted $\models_{par} \{P\}S\{Q\}$, if for all $\rho \in \pardensity{\cH}$:
    \[
        \Tr(P \rho) \ge \Tr(Q \sem{S}(\rho)).
    \]
\end{definition}

This "greater than or equal to" inequality marks our framework as an \emph{Upper-Bound Logic}. It physically requires the precondition $P$ to act as a safe upper bound for the initial resources or expected yield, strictly covering the actual outcome satisfying $Q$ after the program's execution.

\subsection{Weakest Preconditions via Form Pullbacks}
\label{subsec:wp}

To establish a calculus for partial correctness, we first define the weakest liberal precondition purely semantically, based on the validity of Hoare triples. In our upper-bound logic, a ``weaker'' precondition corresponds to a smaller form in the extended L\"owner order (requiring less initial resource).

\begin{definition}[Semantic Weakest Liberal Precondition]
    \label{def:wlp}
    For any quantum program $S$ and post-condition predicate $Q \in \Pred$, the semantic weakest liberal precondition $\wlp(S, Q)$ is defined as the minimum valid precondition in $\Pred$. That is, $\wlp(S, Q)$ satisfies:
    \begin{enumerate}[(1)]
        \item \textbf{Validity:} $\models_{par} \{\wlp(S, Q)\} S \{Q\}$.
        \item \textbf{Minimality:} For any $P \in \Pred$, if $\models_{par} \{P\} S \{Q\}$, then $\wlp(S, Q) \sqsubseteq P$.
    \end{enumerate}
\end{definition}

We now define the weakest liberal precondition calculus. Since standard operator composition $S^\dagger P S$ is ill-defined for unbounded operators, we define Weakest liberal preconditions $\wlp$ directly at the level of quadratic forms.

\begin{theorem}[Structure of Weakest Liberal Preconditions]
    \label{thm:wlp_well_defined}
    Let $S$ be a quantum program with semantics $\sem{S}(\rho) = \sum_{i=1}^{\infty} K_i \rho K_i^\dagger$ (where $\{K_i\}$ is the countable Kraus representation from Theorem~\ref{thm:kraus_rep}).
    For any predicate $Q \in \Pred$, there exists a unique predicate $\wlp(S, Q) \in \Pred$ whose form is defined by:
    \begin{equation}\label{eq:wlp-def}
        \QF_{\wlp(S, Q)}[\u] \coloneqq \sum_{i=1}^\infty \QF_Q[K_i \u], \quad \forall \u \in \cH.
    \end{equation}
    The domain of this form is $\Dom(\QF_{\wlp}) = \{\u \in \cH \mid K_i \u \in \Dom(\QF_Q) \forall i \text{ and } \sum \QF_Q[K_i \u] < \infty\}$. The functional $\QF_{\wlp(S, Q)}$ defined above satisfies:
    \begin{enumerate}[(1)]
        \item \textbf{Closure:} $\QF_{\wlp(S, Q)}$ is a positive, closed quadratic form.
        \item \textbf{Independence:} $\QF_{\wlp(S, Q)}$ depends only on the semantic map $\sem{S}$ and is independent of the choice of the Kraus representation $\{K_i\}$.
        \item \textbf{Exact Duality:} For any partial density operator $\rho \in \pardensity{\cH}$, $\Tr(\wlp(S, Q) \rho) = \Tr(Q \sem{S}(\rho))$. Furthermore, $\wlp(S, Q)$ is the \emph{unique} predicate in $\Pred$ satisfying this exact trace equality for all $\rho$.
        \item \textbf{Semantic Minimality:} $\QF_{\wlp(S, Q)}$ exactly satisfies \Cref{def:wlp}, which concurrently establishes its uniqueness.
    \end{enumerate}
\end{theorem}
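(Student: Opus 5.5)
My plan is to establish the four items in the order (1) Closure, (3) Exact Duality, (2) Independence, (4) Minimality, because independence and minimality both follow cheaply once the trace identity is in hand.

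\textbf{Closure.} For each $i$, the map $\u\mapsto\QF_Q[K_i\u]$ is the pullback of the closed form $\QF_Q$ along the bounded operator $K_i$. It is a positive quadratic form: its finite domain $\{\u: K_i\u\in\Dom(\QF_Q)\}$ is a linear subspace, and $\QF_Q[K_i\u,K_i\v]$ is a positive sesquilinear form on it. It is lower semicontinuous because $K_i$ is norm continuous and $\QF_Q$ is LSC (\Cref{thm:closed_iff_lsc}). The partial sums $\sum_{i\le N}\QF_Q[K_i\u]$ are therefore closed positive forms, since finite sums of LSC functions are LSC and the finite domains intersect to a subspace on which the sesquilinear forms add. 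These partial sums increase in $N$. By \Cref{thm:monotone_convergence}, their pointwise supremum, which is exactly \eqref{eq:wlp-def}, is a closed positive form, and its domain is the one displayed in the statement. \Cref{thm:kato_first} then yields the predicate $\wlp(S,Q)\in\Pred$.

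\textbf{Exact Duality.} This is the main step. By \Cref{prop:spectral_approximation}, choose bounded $Q_n\uparrow Q$ with $\QF_{Q_n}\uparrow\QF_Q$ pointwise.

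For bounded $Q_n$, the bounded Schr\"odinger--Heisenberg duality (\Cref{prop:schrodinger_heisenberg_dual}, \Cref{thm:kraus_rep}) gives $\tr(Q_n\sem{S}(\rho))=\tr(\sem{S}^\dagger(Q_n)\rho)$. On pure states this reads $\langle\u,\sem{S}^\dagger(Q_n)\u\rangle=\sum_i\QF_{Q_n}[K_i\u]$, which holds because ultraweak convergence of the Kraus series, tested against $\sP_{\u}$, is exactly convergence of this nonnegative series.

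Taking $\sup_n$ and exchanging it with $\sum_i$ by monotone convergence for series gives $\QF_{\wlp(S,Q)}[\u]=\sup_n\langle\u,\sem{S}^\dagger(Q_n)\u\rangle$. Hence $\sem{S}^\dagger(Q_n)\uparrow\wlp(S,Q)$ as forms, so by \Cref{cor:mct_traces} $\Tr(\wlp(S,Q)\rho)=\sup_n\tr(\sem{S}^\dagger(Q_n)\rho)$.

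On the other side, $\sup_n\tr(Q_n\sem{S}(\rho))=\Tr(Q\sem{S}(\rho))$, again by \Cref{cor:mct_traces} applied to the state $\sem{S}(\rho)$, which is a partial density operator because $\sem{S}$ is CPTNI. Equating the two suprema gives the identity.

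The obstacle I expect here is justifying the pure-state bounded identity with only strong/ultraweak convergence of $\sum_iK_i^\dagger Q_nK_i$, and making sure no $\infty-\infty$ issues arise; all quantities are nonnegative, so working only with sups of nonnegative series should avoid this. Uniqueness follows by evaluating at $\rho=\sP_{\u}/\|\u\|^2$: the trace identity fixes $\QF[\u]$ for every $\u\neq0$, and forms are determined pointwise (\Cref{prop:order_characterization}).

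\textbf{Independence and Minimality.} Independence is immediate: the right-hand side $\Tr(Q\sem{S}(\rho))$ depends only on $\sem{S}$, so uniqueness in the duality forces any two Kraus representations to produce the same form. For minimality, validity holds with equality by the duality. If $\models_{par}\{P\}S\{Q\}$, then $\Tr(P\rho)\ge\Tr(Q\sem{S}(\rho))=\Tr(\wlp(S,Q)\rho)$ for all $\rho$, and \Cref{prop:order_characterization} gives $\wlp(S,Q)\sqleq P$. Uniqueness of the minimum is antisymmetry of $\sqleq$.
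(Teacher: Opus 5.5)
Your proposal is correct and follows essentially the same route as the paper. Both arguments approximate $Q$ by bounded spectral truncations $Q_n\uparrow Q$, apply the bounded Schr\"odinger--Heisenberg duality, use monotone convergence to interchange $\sup_n$ with the Kraus and spectral sums, obtain uniqueness by evaluating on pure states, and obtain minimality from \Cref{prop:order_characterization}.

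The remaining differences are cosmetic. You get closure by applying \Cref{thm:monotone_convergence} to the partial sums, where the paper argues directly with the parallelogram law and lower semicontinuity. You also derive independence from uniqueness of the duality rather than from Kraus-independence of the bounded dual. That is legitimate, because your duality computation uses only trace-norm convergence of $\sum_j L_j\sP_{\u}L_j^\dagger$ and therefore applies verbatim to any Kraus family.
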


Notice that this construction merely evaluates the expected yield of the post-condition $Q$ over the surviving execution branches, without explicitly penalizing non-terminating traces. Therefore, it strictly represents the weakest liberal precondition ($\wlp$) corresponding to partial correctness.

\begin{proof}
    \textbf{1. Closure.}
    It is clear that $\wlp(S,Q)$ as defined by \Cref{eq:wlp-def} is a non-negative functional.
    We first verify that it is indeed a positive quadratic form in the sense of \Cref{def:Positive_QF}.
    Each term $f_i(\u) = \QF_Q[K_i \u]$ is a positive quadratic form, as it is the pullback of the form $\QF_Q$ by the bounded linear operator $K_i$.
    Consequently, every finite partial sum $S_N(\u) = \sum_{i=1}^N f_i(\u)$ is also a positive quadratic form.
    In particular, each $S_N$ satisfies the parallelogram law, and its finite-value domain $\Dom(S_N)$ is a linear subspace of $\cH$.

    The target form is the increasing pointwise supremum $\QF_{\wlp}[\u] = \sup_{N} S_N(\u)$.
    For any $\u, \v \in \Dom(\QF_{\wlp})$, we have $S_N(\u), S_N(\v) < \infty$ for all $N$, and by the parallelogram law for $S_N$,
    \[
        S_N(\u+\v) \le 2 S_N(\u) + 2 S_N(\v) \le 2 \QF_{\wlp}[\u] + 2 \QF_{\wlp}[\v] < \infty.
    \]
    Taking the supremum over $N$ shows $\QF_{\wlp}[\u+\v] < \infty$, so $\Dom(\QF_{\wlp})$ is a linear subspace.
    Restricting to $\Dom(\QF_{\wlp})$, the parallelogram law passes to the supremum:
    \[
        \QF_{\wlp}[\u+\v] + \QF_{\wlp}[\u-\v] = 2 \QF_{\wlp}[\u] + 2 \QF_{\wlp}[\v].
    \]
    By the polarization identity, we recover the associated sesquilinear form
    \[
        \mathfrak{s}(\u,\v) = \frac{1}{4}\bigl(\QF_{\wlp}[\u+\v] - \QF_{\wlp}[\u-\v] - i\QF_{\wlp}[\u+i\v] + i\QF_{\wlp}[\u-i\v]\bigr),
    \]
    which satisfies $\QF_{\wlp}[\u] = \mathfrak{s}(\u,\u)$. Hence $\QF_{\wlp}$ is a positive quadratic form.

    To prove that it is \emph{closed}, according to Theorem~\ref{thm:closed_iff_lsc} it suffices to show that $\QF_{\wlp}$ is lower semicontinuous (LSC) on $\cH$.
    Consider the term $f_i(\u) \coloneqq \QF_Q[K_i \u]$. This functional is the composition of the LSC form $\QF_Q$ with the continuous operator $K_i$, and is therefore LSC.
    The target form is the countable sum $\QF_{\wlp}[\u] = \sum_{i=1}^\infty f_i(\u)$.
    
    We prove directly that the sum of non-negative LSC functions is LSC. Let $S_N(\u) = \sum_{i=1}^N f_i(\u)$ be the partial sums. Since the sum of two LSC functions is LSC (by the superadditivity $\liminf_k (f_i(\u_k)+f_j(\u_k)) \ge \liminf_k f_i(\u_k) + \liminf_k f_j(\u_k)$), each $S_N$ is LSC.
    Since $f_i \ge 0$, the sequence $S_N$ is non-decreasing, and $\QF_{\wlp}(\u) = \sup_{N \in \mathbb{N}} S_N(\u)$.
    Recall that a function $g$ is LSC if and only if the strict suplevel set $\{\u \mid g(\u) > \lambda\}$ is open for all $\lambda \in \mathbb{R}$.
    Observe that:
    \[
        \{\u \mid \sup_N S_N(\u) > \lambda\} = \bigcup_{N \in \mathbb{N}} \{\u \mid S_N(\u) > \lambda\}.
    \]
    Since each $S_N$ is LSC, the sets on the RHS are open. Since the union of open sets is open, the set on the LHS is open. Thus, $\QF_{\wlp}$ is LSC and defines a closed form.

    \textbf{2. Independence.}
    Let $\{K_i\}$ and $\{L_j\}$ be two countable Kraus representations of $\sem{S}$. We show that $\sum_i \QF_Q[K_i \u] = \sum_j \QF_Q[L_j \u]$ for any $u \in \cH$.

    We invoke \Cref{prop:spectral_approximation} (Spectral Approximation). There exists a sequence of \emph{bounded} predicates $\{Q_n\}_{n \in \mathbb{N}} $ such that $Q_n \sqsubseteq Q_{n+1}$ and $\QF_Q[\v] = \sup_n \QF_{Q_n}[\v]$ for all $\v$.
    Let $A_n \in \mathcal{B}(\cH)$ be the bounded operator associated with each $Q_n$ (i.e., $\QF_{Q_n}[\v] = \< \v, A_n \v \>$ holds for any $\v\in\cH$).

    Substituting this into the sum:
    \[
        \sum_i \QF_Q[K_i \u] = \sum_i \sup_n \QF_{Q_n}[K_i \u] = \sum_i \sup_n \< K_i \u, A_n K_i \u \>.
    \]
    Since the terms $a_{i,n} = \< K_i \u, A_n K_i \u \>$ are non-negative and non-decreasing in $n$, the Monotone Convergence Theorem for series allows us to interchange the summation and the supremum (see standard analysis texts, e.g., \cite{rudin1987real}):
    \[
        \sum_i \sup_n \< \u, K_i^\dagger A_n K_i \u \> = \sup_n \sum_i \< \u, K_i^\dagger A_n K_i \u \>.
    \]
    The inner term $\sum_i K_i^\dagger A_n K_i$ represents the Heisenberg picture evolution of the \emph{bounded} operator $A_n$. For bounded operators, this sum is uniquely determined by the trace duality $\tr(\sem{S}(\rho) A_n)$ and is independent of the Kraus decomposition. Thus:
    \[
        \sum_i K_i^\dagger A_n K_i = \sum_j L_j^\dagger A_n L_j.
    \]
    Reversing the order of limits for the second representation (again justified by MCT):
    \[
        \sup_n \sum_j \< \u, L_j^\dagger A_n L_j \u \> = \sum_j \sup_n \QF_{Q_n}[L_j \u] = \sum_j \QF_Q[L_j \u].
    \]
    This proves that the weakest precondition is unique.

    \textbf{3. Exact Duality.}
    Let $\rho \in \pardensity{\cH}$ have the spectral decomposition $\rho = \sum_k \lambda_k \sP_{\u_k}$.
    By the spectral approximation $Q = \bigsqcup_n Q_n$ used above, and the continuity of the trace over positive forms, we have:
    \[
        \Tr(Q \sem{S}(\rho)) = \sup_n \Tr\Big(Q_n \sem{S}(\rho)\Big).
    \]
    For the bounded operator $Q_n$, the trace evaluates standardly:
    \[
        \Tr\Big(Q_n \sem{S}(\rho)\Big) = \sum_k \lambda_k \Tr\Big(Q_n \sem{S}(\sP_{\u_k})\Big) = \sum_k \lambda_k \sum_i \QF_{Q_n}[K_i \u_k].
    \]
    Taking the supremum over $n$, we apply the Monotone Convergence Theorem to swap $\sup_n$ with the summations over $k$ and $i$ (since all terms $\lambda_k \QF_{Q_n}[K_i \u_k]$ are non-negative and monotonically increasing with $n$):
    \begin{align*}
        \Tr(Q \sem{S}(\rho)) &= \sup_n \sum_k \lambda_k \sum_i \QF_{Q_n}[K_i \u_k] \\
        &= \sum_k \lambda_k \sum_i \sup_n \QF_{Q_n}[K_i \u_k] \\
        &= \sum_k \lambda_k \sum_i \QF_Q[K_i \u_k].
    \end{align*}
    By \Cref{eq:wlp-def}, the inner sum is exactly $\QF_{\wlp(S, Q)}[\u_k]$. Thus, the expression simplifies to $\sum_k \lambda_k \QF_{\wlp(S, Q)}[\u_k]$, which is precisely $\Tr(\wlp(S, Q) \rho)$. This establishes the exact duality.

    To see why this equality uniquely determines the predicate, suppose another predicate $P \in \Pred$ also satisfies $\Tr(P \rho) = \Tr(Q \sem{S}(\rho))$ for all $\rho$. Then we must have $\Tr(P \rho) = \Tr(\wlp(S, Q) \rho)$. Restricting $\rho$ to an arbitrary pure state $\sP_{\u}$, we obtain $\QF_P[\u] = \QF_{\wlp(S, Q)}[\u]$ for all $\u \in \cH$. Since a closed positive quadratic form uniquely determines its corresponding linear relation in $\Pred$, it follows strictly that $P = \wlp(S, Q)$.

    \textbf{4. Semantic Minimality and Uniqueness.}
    We now prove that the constructed form satisfies both conditions of \Cref{def:wlp}.
    \begin{itemize}
        \item \emph{Validity:} By the Exact Duality established above, $\Tr(\wlp(S, Q)\rho) = \Tr(Q\sem{S}(\rho)) \ge \Tr(Q\sem{S}(\rho))$. The inequality is trivially satisfied, ensuring $\models_{par} \{\wlp(S, Q)\} S \{Q\}$.
        \item \emph{Minimality:} Suppose $P \in \Pred$ is any valid precondition such that $\models_{par} \{P\} S \{Q\}$. By validity, $\Tr(P\rho) \ge \Tr(Q\sem{S}(\rho))$ for all $\rho$. Substituting the exact duality into the RHS gives $\Tr(P\rho) \ge \Tr(\wlp(S, Q)\rho)$ for all $\rho$. Restricting $\rho$ to an arbitrary pure state $\sP_{\u}$, we obtain $\QF_P[\u] \ge \QF_{\wlp(S, Q)}[\u]$ for all $\u$, which algebraically means $\wlp(S, Q) \sqsubseteq P$.
    \end{itemize}
    Finally, since $\wlp(S, Q)$ satisfies the minimality condition, if any other predicate $P'$ also satisfies both validity and minimality, we would have $P' \sqsubseteq \wlp(S, Q)$ and $\wlp(S, Q) \sqsubseteq P'$. By the antisymmetry of the L\"owner order, $P' = \wlp(S, Q)$, ensuring semantic uniqueness.
\end{proof}

We define the predicate transformer $\wlp(S, \cdot): \Pred \to \Pred$ by this construction.

\subsection{Structural Properties of Weakest Preconditions}
\label{subsec:wp_properties}

The utility of this logic for verifying loops depends on how the weakest precondition transformer preserves the algebraic and topological structure of the predicate domain.

\begin{theorem}[Structural Properties of wlp]
    \label{thm:wp_properties}
    Let $S$ be a quantum program. The transformer $\wlp(S, \cdot): \Pred \to \Pred$ satisfies:
    \begin{enumerate}
        \item \textbf{Monotonicity:} If $P \sqsubseteq Q$, then $\wlp(S, P) \sqsubseteq \wlp(S, Q)$.
        \item \textbf{Countable Additivity ($\sigma$-Linearity):} For any countable family $\{P_j\}$ and $c_j \ge 0$:
        \[
            \wlp\left(S, \sum_{j=1}^\infty c_j P_j\right) = \sum_{j=1}^\infty c_j \wlp(S, P_j).
        \]
        (Here, the sum $\sum P_j$ corresponds to the pointwise sum of forms $\sum \QF_{P_j}$).
        \item \textbf{Scott-Continuity:} For any increasing chain of predicates $P_1 \sqsubseteq P_2 \sqsubseteq \dots$,
        \[
            \wlp\left(S, \bigsqcup_n P_n\right) = \bigsqcup_n \wlp(S, P_n).
        \]
        Here, $\bigsqcup_n P_n$ denotes the supremum in the CPO $(\Pred, \sqsubseteq)$, which is characterized by the pointwise supremum of forms: $\QF_{\bigsqcup P_n}[\u] = \sup_n \QF_{P_n}[\u]$.
    \end{enumerate}
\end{theorem}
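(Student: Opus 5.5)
All three properties follow from the explicit Kraus-form description of Theorem~\ref{thm:wlp_well_defined}. We fix a countable Kraus family $\{K_i\}$ of $\sem{S}$, so that $\QF_{\wlp(S,Q)}[\u]=\sum_i\QF_Q[K_i\u]$ for every $\u\in\cH$. We then check each claim pointwise in $\u$, as an identity or inequality of extended functions $\cH\to[0,\infty]$. This is enough, since predicates are determined by their extended forms (Theorem~\ref{thm:kato_first} together with Definition~\ref{def:extended-lowner-order}). The only analytic input is the interchange of suprema and sums of nonnegative extended reals.

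\textbf{Monotonicity.} Suppose $P\sqsubseteq Q$. Then $\QF_P[\v]\le\QF_Q[\v]$ for all $\v$, in particular for $\v=K_i\u$. Summing over $i$ gives $\QF_{\wlp(S,P)}[\u]\le\QF_{\wlp(S,Q)}[\u]$, which is exactly $\wlp(S,P)\sqsubseteq\wlp(S,Q)$ in the extended L\"owner order.

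\textbf{Countable additivity.} First we confirm that $\sum_j c_jP_j$ is a legitimate predicate. Its form is the increasing supremum of the finite partial sums $\sum_{j\le N}c_j\QF_{P_j}$. Each partial sum is a closed positive form: a finite sum of LSC functions is LSC, the domain is the intersection of the individual domains, and the result is closed by Theorem~\ref{thm:closed_iff_lsc}. Hence the full sum is closed by Theorem~\ref{thm:monotone_convergence}, using the convention $0\cdot\infty=0$ for $c_j=0$. The same argument shows that the right-hand side $\sum_j c_j\wlp(S,P_j)$ is a predicate. The identity itself is
\[
\sum_i\sum_j c_j\QF_{P_j}[K_i\u]=\sum_j c_j\sum_i\QF_{P_j}[K_i\u],
\]
which is Tonelli for nonnegative double series with values in $[0,\infty]$. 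Scalars $c_j\ge0$ pass through each inner sum, including the case where a sum is $+\infty$.

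\textbf{Scott continuity.} By Corollary~\ref{cor:omega-CPO-for-LRs}, the supremum satisfies $\QF_{\bigsqcup P_n}[\v]=\sup_n\QF_{P_n}[\v]$. Monotonicity, already proved, makes $\wlp(S,P_n)$ an increasing chain, so its supremum also has the pointwise-supremum form. It then remains to show
\[
\sum_i\sup_n\QF_{P_n}[K_i\u]=\sup_n\sum_i\QF_{P_n}[K_i\u].
\]
This is the monotone convergence theorem for series, valid because the terms are nonnegative and nondecreasing in $n$.

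The main point needing care is the additivity step. One must make sure the infinite sum of forms is itself closed, and that the extended-arithmetic conventions, in particular $0\cdot\infty$, are handled consistently. Both are settled by Theorem~\ref{thm:monotone_convergence}, which applies to the partial sums.
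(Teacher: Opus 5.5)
Your proposal is correct and follows essentially the same route as the paper: fix a Kraus family, reduce each claim to a pointwise statement about $\sum_i \QF_Q[K_i\u]$, and use termwise comparison for monotonicity, Tonelli for nonnegative double series for additivity, and monotone convergence for series for Scott continuity. You also check two things the paper leaves implicit. First, finite partial sums of closed forms are closed, via lower semicontinuity, before Theorem~\ref{thm:monotone_convergence} is invoked. Second, $\{\wlp(S,P_n)\}_n$ is increasing, so its supremum is the pointwise one.
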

\begin{proof}
    All proofs rely on the form-based definition $\QF_{\wlp(S, P)}[\u] = \sum_i \QF_P[K_i \u]$, where $\{K_i\}$ is a Kraus representation of $\sem{S}$.

    \textbf{1. Monotonicity.}
    Let $P \sqsubseteq Q$. By definition, $\Dom(\QF_Q) \subseteq \Dom(\QF_P)$ and $\QF_P[\v] \le \QF_Q[\v]$ for all $\v \in \Dom(\QF_Q)$.
    For the domain, if $\u \in \Dom(\QF_{\wlp(S, Q)})$, then $K_i \u \in \Dom(\QF_Q) \subseteq \Dom(\QF_P)$ for all $i$, and the sum converges. Thus $\Dom(\QF_{\wlp(S, Q)}) \subseteq \Dom(\QF_{\wlp(S, P)})$.
    For the value, since $\QF_P[K_i \u] \le \QF_Q[K_i \u]$, summing over $i$ preserves the inequality.

    \textbf{2. Countable Additivity.}
    Let $P = \sum_j c_j P_j$. By \Cref{thm:monotone_convergence}, the form $\QF_P$ is defined by $\QF_P[\v] = \sum_j c_j \QF_{P_j}[\v]$ and is a closed form.
    Then for any $\u$:
    \begin{align*}
        \QF_{\wlp(S, P)}[\u] &= \sum_i \QF_P[K_i \u] \\
                          &= \sum_i \left( \sum_j c_j \QF_{P_j}[K_i \u] \right).
    \end{align*}
    Since all terms $c_j \QF_{P_j}[K_i \u]$ are non-negative, by Tonelli's theorem for series (infinite associativity and commutativity of non-negative sums, \cite{rudin1987real}), we can exchange the order of summation:
    \begin{align*}
        \QF_{\wlp(S, P)}[\u] &= \sum_j c_j \left( \sum_i \QF_{P_j}[K_i \u] \right) \\
                          &= \sum_j c_j \QF_{\wlp(S, P_j)}[\u].
    \end{align*}
    This establishes the equality of forms.

    \textbf{3. Scott-Continuity.}
    Let $P_n$ be an increasing sequence with supremum $P = \bigsqcup_n P_n$. By \Cref{thm:monotone_convergence}, $\QF_P(\v) = \sup_n \QF_{P_n}(\v)$.
    Then:
    \[
        \QF_{\wlp(S, P)}[\u] = \sum_i \sup_n \QF_{P_n}[K_i \u].
    \]
    Since the sequence $\{\QF_{P_n}(K_i \u)\}_n$ is non-decreasing for each $i$, we can apply the Monotone Convergence Theorem for series (interchanging sum and supremum):
    \[
        \sum_i \sup_n \QF_{P_n}[K_i \u] = \sup_n \sum_i \QF_{P_n}[K_i \u] = \sup_n \QF_{\wlp(S, P_n)}[\u].
    \]
    Thus, $\wlp(S, P) = \bigsqcup_n \wlp(S, P_n)$.
\end{proof}

With the semantic well-definedness and structural properties of $\wlp$ established, we can now formulate the practical weakest liberal precondition calculus. This provides a syntax-driven, inductive method to compute $\wlp$ structurally, which is essential for automated verification and deductive reasoning.
\begin{definition}[Structural Rules for $\wlp$]
    \label{def:wlp_rules}
    For any quantum program $S$ and post-condition predicate $Q \in \Pred$, the syntactic weakest liberal precondition $\wlp(S, Q)$ is computed inductively over the structure of $S$ via its quadratic form $\QF_{\wlp(S, Q)}[\u]$:
    \begin{itemize}
        \item \textbf{Skip:} $\QF_{\wlp(\mathbf{skip}, Q)}[\u] \coloneqq \QF_Q[\u]$.

        \item \textbf{Abort:} $\QF_{\wlp(\mathbf{abort}, Q)}[\u] \coloneqq 0$.
        
        \item \textbf{Initialization} ($S \equiv q := 0$):
        Let $E_k = |0\rangle_q \langle k|_q \otimes I_{\text{env}}$ be the Kraus operators for initialization.
        \[
            \QF_{\wlp(S, Q)}[\u] \coloneqq \sum_{k} \QF_Q[E_k \u].
        \]
        
        \item \textbf{Unitary} ($S \equiv U(\seq{e})[\seq{q}]$):
        \[
            \QF_{\wlp(S, Q)}[\u] \coloneqq \QF_Q[U_\omega \u].
        \]
        
        \item \textbf{Sequential Composition} ($S \equiv S_1 ; S_2$):
        \[
            \QF_{\wlp(S_1 ; S_2, Q)}[\u] \coloneqq \QF_{\wlp(S_1, \wlp(S_2, Q))}[\u].
        \]
        
        \item \textbf{Continuous Measurement} ($S \equiv \mathbf{bind}(M(e), x. S(x))$):
        \[
        \QF_{\wlp(S, Q)}[\u] \coloneqq \int_{\Omega_M} \QF_{\wlp(S_{\nu}, Q)}[M_{\nu} \u] \, d\mu(\nu).
        \]
        
        \item \textbf{While Loop} ($S \equiv \mathbf{while} \ M \ \mathbf{do} \ S_{body} \ \mathbf{od}$):
        Let $M = \{M_0, M_1\}$, where $M_0$ exits the loop. We define a sequence of predicates $\{X_n\}_{n=0}^\infty$ representing the accumulated yield up to $n$ iterations. Let $X_0 \coloneqq 0$ (the identically zero form). For $n \ge 0$:
        \[
            \QF_{X_{n+1}}[\u] \coloneqq \QF_Q[M_0 \u] + \QF_{\wlp(S_{body}, X_n)}[M_1 \u].
        \]
        The weakest liberal precondition is the supremum of this sequence:
        \[
            \wlp(S, Q) \coloneqq \bigsqcup_{n=0}^\infty X_n, \quad \text{i.e., } \QF_{\wlp(S, Q)}[\u] = \sup_{n \ge 0} \QF_{X_n}[\u].
        \]
    \end{itemize}
\end{definition}

To guarantee the soundness of this calculus, we strictly prove that these inductive syntactic rules generate the unique predicate satisfying the Exact Duality equality from \Cref{thm:wlp_well_defined}.

\begin{theorem}[Soundness of the $\wlp$ Calculus]
    \label{thm:wlp_soundness}
    Under the standing well-formedness and measurability assumptions of
    \Cref{subsec:semantics}, every structural clause in
    \Cref{def:wlp_rules} defines a predicate
    \(P_{syn}\in\Pred\).  Moreover, for every
    \(\rho\in\pardensity{\cH}\),
    \begin{equation}\label{eq:syn-sem-trace-equality}
        \Tr(P_{syn}\rho)=\Tr(Q\sem{S}(\rho)).
    \end{equation}
    Consequently, by (3) of \Cref{thm:wlp_well_defined},
    \(P_{syn}=\wlp(S,Q)\).
\end{theorem}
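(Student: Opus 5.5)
We argue by structural induction on $S$, proving for every postcondition $Q\in\Pred$ simultaneously that the clause of \Cref{def:wlp_rules} yields a closed positive form $P_{syn}$ and that \eqref{eq:syn-sem-trace-equality} holds. Since the extended trace is computed from spectral decompositions (\Cref{def:extended_trace}) and $\Tr(Q\,\cdot)$ is additive and positively homogeneous on the positive trace-class cone, it suffices in each case to check \eqref{eq:syn-sem-trace-equality} on pure states $\rho=\sP_{\u}$, i.e.\ $\QF_{P_{syn}}[\u]=\Tr(Q\sem{S}(\sP_{\u}))$, and then sum with the spectral weights; the left side then becomes $\Tr(P_{syn}\rho)$ by definition.

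\textbf{Base cases.} For $\mathbf{skip}$, $\mathbf{abort}$, initialization and unitaries, the clause is literally the Kraus-pullback formula \eqref{eq:wlp-def} for the Kraus families $\{I\}$, $\emptyset$, $\{E_k\}_k$ and $\{U_\omega\}$ respectively. Closedness and the trace identity are therefore exactly parts (1) and (3) of \Cref{thm:wlp_well_defined}.

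\textbf{Sequencing.} By the induction hypothesis for $S_2$, $R\coloneqq P_{syn}(S_2,Q)\in\Pred$ satisfies $\Tr(R\sigma)=\Tr(Q\sem{S_2}(\sigma))$. Applying the hypothesis for $S_1$ with postcondition $R$, and then using it at $\sigma=\sem{S_1}(\rho)\in\pardensity{\cH}$, gives $\Tr(P_{syn}\rho)=\Tr(R\sem{S_1}(\rho))=\Tr(Q\sem{S_2}\sem{S_1}(\rho))$.

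\textbf{While loop.} First show by induction on $n$ that each $X_n\in\Pred$ and $\Tr(X_n\rho)=\Tr(Q\mathcal F_{n,\omega}(\rho))$, where $\mathcal F_{n,\omega}$ are the approximants of \eqref{eq:induction-def-of-while-semantics}. The step $n\to n+1$ is a sum of two pullbacks, $M_0^\dagger QM_0$ and $M_1^\dagger\,\wlp(S_{body},X_n)\,M_1$. Each is handled by \Cref{thm:wlp_well_defined} (single Kraus operator) together with the body hypothesis at $X_n$; sums of closed forms are closed because sums of LSC functions are LSC. Monotonicity $X_n\sqleq X_{n+1}$ then follows from the trace identity, since $\mathcal F_{n,\omega}\sqsubseteq\mathcal F_{n+1,\omega}$ and \Cref{prop:order_characterization} applies. \Cref{cor:omega-CPO-for-LRs} puts the supremum in $\Pred$, and \Cref{cor:mct_traces} gives $\Tr(\sup_nX_n\,\rho)=\sup_n\Tr(Q\mathcal F_{n,\omega}(\rho))$. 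To identify the right side with $\Tr(Q\mathcal F_\omega(\rho))$, approximate $Q$ by bounded $Q_m$ (\Cref{prop:spectral_approximation}). For bounded $Q_m$, $\Tr(Q_m\,\cdot)$ is trace-norm continuous, and the convergence $\mathcal F_{n,\omega}(\rho)\to\mathcal F_\omega(\rho)$ is monotone in trace norm (\Cref{lem:cpo_domain}). Exchanging the two suprema over $m$ and $n$ finishes the case.

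\textbf{Bind (main obstacle).} Let $P_\nu\coloneqq P_{syn}(S_\nu,Q)$, which by induction satisfies $\QF_{P_\nu}[\v]=\Tr(Q\sem{S_\nu}(\sP_{\v}))$. The delicate point is $M$-admissibility, i.e.\ measurability of $\nu\mapsto\QF_{P_\nu}[M_\nu\u]=\Tr(Q\sem{S_\nu}(\sP_{M_\nu\u}))$. I would approximate $Q$ by bounded $Q_m$, so this map becomes $\sup_m\tr(Q_mF(\nu))$ with $F(\nu)=\sem{S_\nu}(M_\nu\sP_{\u}M_\nu^\dagger)$. The proof of \Cref{thm:semantics_well_defined} already shows $F$ is trace-norm measurable, so each $\nu\mapsto\tr(Q_mF(\nu))$ is measurable, and a countable supremum of measurable functions is measurable. \Cref{prop:continuous_branch_trace_identity} then gives $P_{syn}\in\Pred$ and $\Tr(P_{syn}\rho)=\int\Tr(P_\nu M_\nu\rho M_\nu^\dagger)\,d\mu$. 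By the induction hypothesis the integrand equals $\Tr(Q\,F_\rho(\nu))$. Finally, we must commute $\Tr(Q\,\cdot)$ with the Bochner integral defining $\sem{S}(\rho)$. For bounded $Q_m$ this is Hille's theorem; for unbounded $Q$ it follows by taking $\sup_m$ on both sides, using monotone convergence on the integral side and \Cref{cor:mct_traces} on the trace side.

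Having established \eqref{eq:syn-sem-trace-equality} in every case, uniqueness in part (3) of \Cref{thm:wlp_well_defined} gives $P_{syn}=\wlp(S,Q)$.
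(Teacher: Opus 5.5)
Your proposal is correct and follows the paper's proof essentially step for step. Both use structural induction, with the bind case handled by bounded spectral approximants $Q_m$, measurability of $\nu\mapsto\sup_m\tr(Q_mF(\nu))$, Hille's theorem and monotone convergence, and the while case handled through the finite-unrolling trace identity. The only deviation is minor. In the loop case you work directly with the semantic approximants $\mathcal F_{n,\omega}$, obtain $X_n\sqleq X_{n+1}$ from the trace identity, and justify the limit by exchanging $\sup_m$ and $\sup_n$. The paper instead rewrites $\sem{T}(\rho)$ as the series $\sum_k\mathcal E_{exit}(\mathcal E_{body}^k(\rho))$ and simply asserts $\sigma$-additivity of the extended trace; your exchange of suprema supplies the justification for that step.
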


\begin{proof}
    We prove a slightly stronger statement by structural induction on \(S\):
    the structurally generated object is a predicate in \(\Pred\), and \Cref{eq:syn-sem-trace-equality} holds for every positive trace-class input
    \(\rho\in\cT(\cH)_+\).  The extension from partial densities to arbitrary
    positive trace-class operators is harmless, because all semantic maps and
    all extended trace evaluations are positively homogeneous; if
    \(\rho\neq0\), apply the partial-density statement to
    \(\rho/\tr(\rho)\) and then rescale.

    To avoid circular notation inside the proof, write
    \(\wlp_{\mathrm{syn}}(T,R)\) for the predicate generated by the structural
    rules for a subprogram \(T\) and postcondition \(R\).  After the proof, this
    predicate is identified with the semantic \(\wlp(T,R)\) by uniqueness.

    \textbf{1. Basic operations.}
    For \(\mathbf{skip}\), the generated predicate is \(P_{syn}=Q\), and
    \(\sem{\mathbf{skip}}(\rho)=\rho\). Hence
    \[
        \Tr(P_{syn}\rho)=\Tr(Q\rho)
        =\Tr(Q\sem{\mathbf{skip}}(\rho)).
    \]

    For \(\mathbf{abort}\), the generated predicate is \(0\), and
    \(\sem{\mathbf{abort}}(\rho)=0\). Hence
    \[
        \Tr(P_{syn}\rho)=0
        =\Tr(Q\sem{\mathbf{abort}}(\rho)).
    \]

    For initialization \(q:=0\), let
    \(E_k=|0\rangle_q\langle k|_q\otimes I_{\mathrm{env}}\).  The generated
    form is
    \[
        \QF_{P_{syn}}[\u]
        =
        \sum_k \QF_Q[E_k\u].
    \]
    This is a countable sum of bounded pullbacks of the closed positive form
    \(\QF_Q\); hence it is again a closed positive quadratic form by the same
    monotone-form argument used in \Cref{thm:wlp_well_defined}.  Thus
    \(P_{syn}\in\Pred\).  Since
    \[
        \sem{q:=0}(\rho)=\sum_k E_k\rho E_k^\dagger,
    \]
    the exact-duality calculation in \Cref{thm:wlp_well_defined}, applied to
    this Kraus family, gives
    \[
        \Tr(P_{syn}\rho)
        =
        \Tr\!\left(Q\sum_k E_k\rho E_k^\dagger\right)
        =
        \Tr(Q\sem{q:=0}(\rho)).
    \]

    For a unitary command \(U(\seq e)[\seq q]\), the generated form is the
    bounded pullback
    \[
        \QF_{P_{syn}}[\u]=\QF_Q[U_\omega\u],
    \]
    hence it is a closed positive form and defines a predicate.  If
    \(\rho=\sum_j\lambda_j\sP_{\u_j}\) is a spectral decomposition of
    \(\rho\in\cT(\cH)_+\), then
    \[
    \begin{aligned}
        \Tr(P_{syn}\rho)
        &=
        \sum_j\lambda_j \QF_Q[U_\omega\u_j]                                      \\
        &=
        \Tr\!\left(Q\,U_\omega\rho U_\omega^\dagger\right)
        =
        \Tr(Q\sem{U_\omega}(\rho)).
    \end{aligned}
    \]

    \textbf{2. Sequential composition.}
    Let \(S\equiv S_1;S_2\), and set
    \[
        R\coloneqq \wlp_{\mathrm{syn}}(S_2,Q),
        \qquad
        P_{syn}\coloneqq \wlp_{\mathrm{syn}}(S_1,R).
    \]
    By the induction hypothesis for \(S_2\), \(R\in\Pred\).  Applying the
    induction hypothesis first to \(S_1\) with postcondition \(R\), and then to
    \(S_2\) with postcondition \(Q\), gives
    \[
    \begin{aligned}
        \Tr(P_{syn}\rho)
        &=
        \Tr\!\left(R\,\sem{S_1}(\rho)\right)                                      \\
        &=
        \Tr\!\left(Q\,\sem{S_2}(\sem{S_1}(\rho))\right)                            \\
        &=
        \Tr\!\left(Q\,\sem{S_1;S_2}(\rho)\right).
    \end{aligned}
    \]
    Thus the sequential rule is well-defined and sound.

    \textbf{3. Continuous measurement / binding.}
    Let
    \[
        T\equiv \mathbf{bind}(M(\seq e),x.S(x)),
    \]
    and omit the outer classical parameter in the notation.  The measurement
    outcome space is denoted by
    \((\Omega_M,\Sigma_M,\mu)\), the Kraus-density operator at outcome \(\nu\)
    by \(M_\nu\), and the continuation with \(x=\nu\) by \(S_\nu\).

    For each \(\nu\), define
    \[
        P_\nu\coloneqq \wlp_{\mathrm{syn}}(S_\nu,Q).
    \]
    By the induction hypothesis applied to the continuation \(S_\nu\),
    \(P_\nu\in\Pred\), and for every \(\sigma\in\cT(\cH)_+\),
    \begin{equation}
        \label{eq:bind-ih-nu}
        \Tr(P_\nu\sigma)
        =
        \Tr(Q\sem{S_\nu}(\sigma))
        \qquad
        \forall \sigma\in\cT(\cH)_+ .
    \end{equation}

    The structural rule proposes the form
    \begin{equation}\label{eq:integral-of-bind-quadratic}
        \mathfrak p[\u]
        \coloneqq
        \int_{\Omega_M}
            \QF_{P_\nu}[M_\nu\u]
        \,d\mu(\nu).
    \end{equation}
    We first prove that this integral is a well-defined extended Lebesgue
    integral.

    Fix \(\u\in\cH\), and let
    \[
        F_{\u}(\nu)
        \coloneqq
        \sem{S_\nu}\!\left(M_\nu\sP_{\u}M_\nu^\dagger\right).
    \]
    By the measurability part of the denotational semantics, in particular the
    binding case of \Cref{thm:semantics_well_defined}, the branch map
    \[
        \nu\longmapsto F_{\u}(\nu)
    \]
    is $(\Sigma_M,\operatorname{Borel}(\cT(\cH)_{+}))$-measurable.  Concretely, this uses
    the strong operator measurability of \(\nu\mapsto M_\nu\), the measurability
    of the sandwich map
    \(\nu\mapsto M_\nu\sP_{\u}M_\nu^\dagger\), and the Carath\'eodory
    measurability of the continuation semantics
    \((\nu,\sigma)\mapsto\sem{S_\nu}(\sigma)\).

    Choose, by \Cref{prop:spectral_approximation}, an increasing sequence of
    bounded positive predicates \(Q_m\in\cB(\cH)_+\) such that
    \[
        Q_m\sqsubseteq Q_{m+1},
        \qquad
        \bigsqcup_m Q_m=Q,
        \qquad
        \QF_Q[\v]=\sup_m \QF_{Q_m}[\v].
    \]
    For every \(m\), the functional
    \[
        \ell_m:\cT(\cH)\to\bC,
        \qquad
        \ell_m(\tau)=\tr(Q_m\tau),
    \]
    is bounded and trace-norm continuous.  Hence
    \[
        \nu\longmapsto
        \tr(Q_mF_{\u}(\nu))
    \]
    is $(\Sigma_M,\operatorname{Borel}(\bR))$-measurable.  Since \(Q_m\uparrow Q\), the monotone convergence theorem
    for extended traces gives
    \[
    \begin{aligned}
        \QF_{P_\nu}[M_\nu\u]
        &=
        \Tr(P_\nu M_\nu\sP_{\u}M_\nu^\dagger)                         \\
        &=
        \Tr(QF_{\u}(\nu))                                             \\
        &=
        \sup_m \tr(Q_mF_{\u}(\nu)).
    \end{aligned}
    \]
    Therefore
    \[
        \nu\longmapsto \QF_{P_\nu}[M_\nu\u]
    \]
    is a measurable \([0,\infty]\)-valued function.  Thus the integral in
    \Cref{eq:integral-of-bind-quadratic} is well-defined for every \(\u\).

    Next we show that \(\mathfrak p\) is a closed positive quadratic form.
    For each fixed \(\nu\), the map
    \[
        \u\longmapsto \QF_{P_\nu}[M_\nu\u]
    \]
    is a closed positive quadratic form: it is the bounded pullback of the
    closed positive form \(\QF_{P_\nu}\).  The finite-value domain of
    \(\mathfrak p\) is linear.  Indeed, for \(\u,\v\in\Dom(\mathfrak p)\) and
    scalars \(a,b\),
    \[
        \QF_{P_\nu}[M_\nu(a\u+b\v)]
        \le
        2|a|^2\QF_{P_\nu}[M_\nu\u]
        +
        2|b|^2\QF_{P_\nu}[M_\nu\v],
    \]
    and integration gives \(\mathfrak p[a\u+b\v]<\infty\).  Homogeneity and the
    parallelogram law pass from the pointwise forms to the integral, so
    \(\mathfrak p\) is a positive quadratic form.

    It remains to prove closedness.  By \Cref{thm:closed_iff_lsc}, it suffices
    to prove lower semicontinuity in the Hilbert norm.  Let \(\u_n\to\u\) in
    \(\cH\).  For each fixed \(\nu\), boundedness of \(M_\nu\) gives
    \(M_\nu\u_n\to M_\nu\u\), and lower semicontinuity of
    \(\QF_{P_\nu}\) gives
    \[
        \QF_{P_\nu}[M_\nu\u]
        \le
        \liminf_{n\to\infty}\QF_{P_\nu}[M_\nu\u_n].
    \]
    Fatou's lemma then yields
    \[
    \begin{aligned}
        \mathfrak p[\u]
        &=
        \int_{\Omega_M}\QF_{P_\nu}[M_\nu\u]\,d\mu(\nu)                  \\
        &\le
        \int_{\Omega_M}
            \liminf_{n\to\infty}
            \QF_{P_\nu}[M_\nu\u_n]
        \,d\mu(\nu)                                                    \\
        &\le
        \liminf_{n\to\infty}
        \int_{\Omega_M}
            \QF_{P_\nu}[M_\nu\u_n]
        \,d\mu(\nu)                                                    \\
        &=
        \liminf_{n\to\infty}\mathfrak p[\u_n].
    \end{aligned}
    \]
    Hence \(\mathfrak p\) is closed and defines a unique predicate
    \(P_{syn}\in\Pred\).

    We now compute the trace of \(P_{syn}\).  Let
    \(\rho=\sum_j\lambda_j\sP_{\u_j}\) be a spectral decomposition of
    \(\rho\in\cT(\cH)_+\).  Using the definition of the extended trace and
    Tonelli's theorem for non-negative functions, we obtain
    \begin{equation}
        \label{eq:bind-trace-psyn}
        \begin{aligned}
            \Tr(P_{syn}\rho)
            &=
            \sum_j\lambda_j\,\mathfrak p[\u_j]                              \\
            &=
            \sum_j\lambda_j
            \int_{\Omega_M}\QF_{P_\nu}[M_\nu\u_j]\,d\mu(\nu)                 \\
            &=
            \int_{\Omega_M}
                \sum_j\lambda_j\QF_{P_\nu}[M_\nu\u_j]
            \,d\mu(\nu)                                                     \\
            &=
            \int_{\Omega_M}
                \Tr\!\left(P_\nu M_\nu\rho M_\nu^\dagger\right)
            \,d\mu(\nu).
        \end{aligned}
    \end{equation}

    Finally we compare this with the denotational semantics.  Define the
    positive branch state
    \begin{equation}
        \label{eq:bind-branch-state}
        F_\rho(\nu)
        \coloneqq
        \sem{S_\nu}\!\left(M_\nu\rho M_\nu^\dagger\right).
    \end{equation}
    By the Bochner-integral clause for binding,
    \begin{equation}
        \label{eq:bind-bochner-semantics}
        \sem{T}(\rho)
        =
        \int_{\Omega_M} F_\rho(\nu)\,d\mu(\nu)
        \quad\text{in }\cT(\cH).
    \end{equation}
    For bounded predicates \(Q_m\), Hille's theorem for Bochner integrals gives
    \begin{equation}
        \label{eq:bind-hille-bounded}
        \begin{aligned}
            \tr(Q_m\sem{T}(\rho))
            &=
            \tr\!\left(
                Q_m\int_{\Omega_M}F_\rho(\nu)\,d\mu(\nu)
            \right)                                                        \\
            &=
            \int_{\Omega_M}\tr(Q_mF_\rho(\nu))\,d\mu(\nu).
        \end{aligned}
    \end{equation}

    Set
    \[
        f_m(\nu)\coloneqq \tr(Q_mF_\rho(\nu)).
    \]
    Since \(F_\rho\) is Bochner measurable and
    \(\tau\mapsto\tr(Q_m\tau)\) is a bounded trace-norm continuous
    functional on \(\cT(\cH)\), each \(f_m\) is measurable.  Moreover,
    \(f_m\ge0\), and \(Q_m\sqsubseteq Q_{m+1}\) implies
    \(f_m(\nu)\le f_{m+1}(\nu)\) for all \(\nu\).  Hence
    \(f_m\uparrow\sup_m f_m\).

    Taking the supremum over \(m\), using the monotone convergence of
    traces over linear relations (\Cref{cor:mct_traces}), applying
    Hille's theorem for Bochner integrals as in
    \Cref{eq:bind-hille-bounded}, and then applying Levi's monotone
    convergence theorem for non-negative measurable functions, we obtain
    \begin{equation}
        \label{eq:bind-unbounded-mct}
        \begin{aligned}
            \Tr(Q\sem{T}(\rho))
            &=
            \sup_m \tr(Q_m\sem{T}(\rho))                                    \\
            &=
            \sup_m
            \int_{\Omega_M}\tr(Q_mF_\rho(\nu))\,d\mu(\nu)                   \\
            &=
            \int_{\Omega_M}
                \sup_m \tr(Q_mF_\rho(\nu))
            \,d\mu(\nu)                                                     \\
            &=
            \int_{\Omega_M}\Tr(QF_\rho(\nu))\,d\mu(\nu).
        \end{aligned}
    \end{equation}
    In \Cref{eq:bind-unbounded-mct}, the first equality is
    \(\Cref{cor:mct_traces}\) applied to the increasing sequence
    \(Q_m\uparrow Q\) and the positive trace-class operator
    \(\sem{T}(\rho)\).  The second equality is
    \Cref{eq:bind-hille-bounded}.  The third equality is Levi's
    monotone convergence theorem applied to the non-negative measurable
    sequence \(f_m\).  The last equality is again
    \(\Cref{cor:mct_traces}\), applied pointwise in \(\nu\) to
    \(Q_m\uparrow Q\) and \(F_\rho(\nu)\in\cT(\cH)_+\).
    
    Applying the induction hypothesis \Cref{eq:bind-ih-nu} with
    \(\sigma=M_\nu\rho M_\nu^\dagger\), and using the definition of
    \(F_\rho\) from \Cref{eq:bind-branch-state}, we have
    \begin{equation}
        \label{eq:bind-branch-duality}
        \Tr(QF_\rho(\nu))
        =
        \Tr\!\left(P_\nu M_\nu\rho M_\nu^\dagger\right).
    \end{equation}
    Substituting \Cref{eq:bind-branch-duality} into
    \Cref{eq:bind-unbounded-mct} and comparing with
    \Cref{eq:bind-trace-psyn} yields
    \begin{equation}
        \label{eq:bind-final-duality}
        \Tr(P_{syn}\rho)
        =
        \Tr(Q\sem{T}(\rho)).
    \end{equation}
    This proves both well-definedness and soundness of the binding rule.  Notice
    that at no point do we commute an unbounded trace functional directly with a
    Bochner integral; the commutation is performed only for the bounded
    approximants \(Q_m\), and the unbounded case is recovered by monotone convergence.

    \textbf{4. While loop.}
    Let
    \[
        T\equiv
        \mathbf{while}\ M\ \mathbf{do}\ S_{body}\ \mathbf{od},
    \]
    with measurement operators \(M_0,M_1\), where \(M_0\) exits the loop.  Define
    \(X_0=0\), and
    \[
        \QF_{X_{n+1}}[\u]
        =
        \QF_Q[M_0\u]
        +
        \QF_{\wlp_{\mathrm{syn}}(S_{body},X_n)}[M_1\u].
    \]
    By induction, if \(X_n\in\Pred\), then
    \(\wlp_{\mathrm{syn}}(S_{body},X_n)\in\Pred\).  The two summands above are
    bounded pullbacks of closed positive forms, and their sum is again a closed
    positive form.  Hence \(X_{n+1}\in\Pred\).  Moreover, the sequence
    \((X_n)_n\) is increasing: this follows from \(X_0\sqsubseteq X_1\) and from
    monotonicity of the already established \(\wlp\)-transformer on the body.
    Therefore
    \[
        P_{syn}\coloneqq\bigsqcup_{n=0}^{\infty}X_n
    \]
    exists in \(\Pred\) by \Cref{thm:monotone_convergence}, with
    \[
        \Tr(P_{syn}\rho)=\sup_n\Tr(X_n\rho)
    \]
    by \Cref{cor:mct_traces}.

    Define the positive maps
    \[
        \mathcal E_{exit}(\sigma)\coloneqq M_0\sigma M_0^\dagger,
        \qquad
        \mathcal E_{body}(\sigma)
        \coloneqq
        \sem{S_{body}}\!\left(M_1\sigma M_1^\dagger\right).
    \]
    The denotational semantics of the loop is the increasing sum of terminating
    branches:
    \[
        \sem{T}(\rho)
        =
        \sum_{k=0}^{\infty}
        \mathcal E_{exit}\!\left(\mathcal E_{body}^k(\rho)\right)
    \]
    in the positive trace-class cone.  The extended trace is
    \(\sigma\)-additive on positive trace-class series, hence
    \[
        \Tr(Q\sem{T}(\rho))
        =
        \sum_{k=0}^{\infty}
        \Tr\!\left(
            Q\,\mathcal E_{exit}(\mathcal E_{body}^k(\rho))
        \right).
    \]

    We claim that for every \(N\ge0\) and every
    \(\sigma\in\cT(\cH)_+\),
    \begin{equation}\label{eq:wlp-loop-truncation}
        \Tr(X_N\sigma)
        =
        \sum_{k=0}^{N-1}
        \Tr\!\left(
            Q\,\mathcal E_{exit}(\mathcal E_{body}^k(\sigma))
        \right),
    \end{equation}
    where the sum is empty for \(N=0\).
    where the sum is empty for \(N=0\).  The base case follows from \(X_0=0\).
    Assume \Cref{eq:wlp-loop-truncation} holds for \(N\). Then
    \[
    \begin{aligned}
        \Tr(X_{N+1}\rho)
        &=
        \Tr(QM_0\rho M_0^\dagger)
        +
        \Tr\!\left(
            \wlp_{\mathrm{syn}}(S_{body},X_N)
            M_1\rho M_1^\dagger
        \right)                                                        \\
        &=
        \Tr(Q\mathcal E_{exit}(\rho))
        +
        \Tr\!\left(
            X_N\sem{S_{body}}(M_1\rho M_1^\dagger)
        \right)                                                        \\
        &=
        \Tr(Q\mathcal E_{exit}(\rho))
        +
        \Tr\!\left(X_N\mathcal E_{body}(\rho)\right).
    \end{aligned}
    \]
    Applying the induction claim \((5)\) to the positive input
    \(\mathcal E_{body}(\rho)\) gives
    \[
        \Tr\!\left(X_N\mathcal E_{body}(\rho)\right)
        =
        \sum_{k=1}^{N}
        \Tr\!\left(
            Q\,\mathcal E_{exit}(\mathcal E_{body}^k(\rho))
        \right).
    \]
    Adding the \(k=0\) term proves \Cref{eq:wlp-loop-truncation} for \(N+1\).

    Taking the supremum over \(N\), and using monotone convergence, gives
    \[
    \begin{aligned}
        \Tr(P_{syn}\rho)
        &=
        \sup_N\Tr(X_N\rho)                                             \\
        &=
        \sum_{k=0}^{\infty}
        \Tr\!\left(
            Q\,\mathcal E_{exit}(\mathcal E_{body}^k(\rho))
        \right)                                                        \\
        &=
        \Tr(Q\sem{T}(\rho)).
    \end{aligned}
    \]
    Thus the while rule is well-defined and sound.

    We have now shown, for every structural rule, that the generated predicate
    \(P_{syn}\in\Pred\) satisfies
    \[
        \Tr(P_{syn}\rho)=\Tr(Q\sem{S}(\rho))
        \qquad
        \forall \rho\in\cT(\cH)_+.
    \]
    In particular this holds for all partial density operators.  By the
    uniqueness of the predicate satisfying Exact Duality in
    \Cref{thm:wlp_well_defined}, the structurally generated predicate is exactly
    the semantic weakest liberal precondition:
    \[
        P_{syn}=\wlp(S,Q),
    \]
    which completes the proof.
\end{proof}

\subsection{The Cost Predicate and Foundations of Total Correctness}
\label{Subsec:weakest_precondition}

To rigorously enforce termination within the upper-bound semantics, we must explicitly track the expected operational cost. Instead of introducing an ad-hoc tracking metric, we formally define the cost functional $\mathcal{C}(S, \rho) \in [0, \infty]$ by structural induction on the syntax of $S$, strictly relying on the semantic map $\sem{\cdot}$ to sequence the intermediate states. 

\emph{Assumption (Uniform Base Cost):} We assume that the execution of any single elementary logical step (e.g., a basic unitary, a measurement, or evaluating a loop guard) consumes a strictly positive, state-independent minimal base cost $\epsilon > 0$. Because the unnormalized semantic state $\rho$ encodes the execution branch probability via its trace, the expected cost contributed by executing an elementary step on state $\rho$ is precisely $\epsilon \Tr(\rho)$. This uniform lower bound physically represents an indivisible quantum resource and mathematically precludes Zeno behavior, ensuring that an infinite sequence of non-terminating steps strictly incurs an infinite operational cost.

\begin{definition}[Expected Operational Cost via Semantics]
\label{def:operational_cost}
    Fix a one-step cost constant $\epsilon>0$.  Let $\Gamma\vdash S$ be a
    well-formed program phrase and let $\omega\in\Omega_\Gamma$ be a classical
    valuation.  The cost is defined on the positive trace-class cone
    $\cT(\cH)_+$,
    rather than only on $\pardensity{\cH}$, since a continuous-measurement
    branch
    $M_{\omega,\nu}\rho M_{\omega,\nu}^{\dagger}$
    need not be a partial density operator pointwise.

    For $\rho\in\cT(\cH)_+$, we define
    $\mathcal C(S_\omega,\rho)\in[0,\infty]$
    structurally as follows.

    \begin{itemize}
        \item \textbf{Basic operations}
        ($\mathbf{skip}$, initialization $q:=0$, and parameterized unitary
        commands $U(\seq e)[\seq q]$):
        \[
            \mathcal C(S_\omega,\rho)
            \coloneqq
            \epsilon\Tr(\rho).
        \]

        \item \textbf{Abort}.
        If $S\equiv\mathbf{abort}$, then
        \[
            \mathcal C(S_\omega,\rho)
            \coloneqq
            \epsilon\Tr(\rho)\cdot\infty,
        \]
        with the convention
        \[
            0\cdot\infty=0,
            \qquad
            a\cdot\infty=+\infty
            \quad(a>0).
        \]

        \item \textbf{Sequential composition}.
        If
        \[
            S\equiv S_1;S_2,
        \]
        then
        \[
            \mathcal C(S_\omega,\rho)
            \coloneqq
            \mathcal C((S_1)_\omega,\rho)
            +
            \mathcal C
            \bigl(
                (S_2)_\omega,
                \sem{S_1}_{\omega}(\rho)
            \bigr).
        \]

        \item \textbf{Continuous measurement}.
        If
        \[
            S\equiv \bind(M(\seq e),x.S(x)),
        \]
        then
        \[
            \mathcal C(S_\omega,\rho)
            \coloneqq
            \epsilon\Tr(\rho)
            +
            \int_{\Omega_M}
                \mathcal C
                \bigl(
                    (S(x))_{\omega,\nu},
                    M_{\omega,\nu}\rho M_{\omega,\nu}^{\dagger}
                \bigr)
            \,d\mu_M(\nu).
        \]
        Here $M_{\omega,\nu}$ is understood in the contextual shorthand of bind assumption in Section \ref{subsec:syntax}, and $(S(x))_{\omega,\nu}$ denotes the continuation under the extended
        valuation $(\omega,\nu)\in\Omega_\Gamma\times\Omega_M$.
        The legitimacy of this extended non-negative integral will be proved in
        \Cref{prop:cost_quadratic_form}.

        In the bind case, we write $S_{\omega,\nu}$ for $(S(x))_{\omega,\nu}$ whenever no confusion is possible.

        \item \textbf{While loop}.
        If
        \[
            S\equiv
            \mathbf{while}\ M(\seq e)[\seq q]=1\ \mathbf{do}\ S_{body}\ \mathbf{od},
        \]
        let $M=\{M_{\omega,0},M_{\omega,1}\}$, where $M_{\omega,1}$ is the continuation outcome.  Define
        \[
            \rho_0\coloneqq\rho,
            \qquad
            \rho_{n+1}
            \coloneqq
            \sem{S_{body}}_{\omega}
            (M_{\omega,1}\rho_n M_{\omega,1}^\dagger).
        \]
        Then
        \[
            \mathcal C(S_\omega,\rho)
            \coloneqq
            \sum_{n=0}^{\infty}
            \left(
                \epsilon\Tr(\rho_n)
                +
                \mathcal C
                \bigl(
                    (S_{body})_\omega,
                    M_{\omega,1}\rho_n M_{\omega,1}^\dagger
                \bigr)
            \right).
        \]
    \end{itemize}
\end{definition}

\begin{remark}[Standing convention for cost notation and domains]
\label{rem:cost_closed_program_notation}
    The contextual notation
    \(\mathcal C(S_\omega,\rho)\)
    in \Cref{def:operational_cost} is auxiliary to the construction of the cost
    functional.  It is used only in the immediately following well-definedness
    and quadraticity result \Cref{prop:cost_quadratic_form} and in its proof, where the structural induction must
    simultaneously track the external classical valuation and the measurability
    of branchwise cost expressions.  For this purpose the cost is typed on the
    full positive trace-class cone \(\cT(\cH)_+\).  This enlargement is a
    technical device for branchwise reasoning: even when the incoming state
    \(\rho\in\pardensity{\cH}\), a pointwise continuous-measurement branch
    \(M_{\omega,\nu}\rho M_{\omega,\nu}^{\dagger}\)
    is required only to be a positive trace-class operator, and need not be a
    partial density operator pointwise.

    Once that construction and proof have been completed, all total-correctness
    assertions are read with the external classical valuation fixed.  If \(S\) is
    classically closed, this fixed valuation is the unique
    \(\omega_\ast\in\Omega_\emptyset\).  More generally, for a phrase
    \(\Gamma\vdash S\), an ambient valuation \(\omega\in\Omega_\Gamma\) is fixed
    as part of the interpretation.  We then suppress this valuation and write
    \[
        \mathcal C(S,\rho)
        \coloneqq
        \mathcal C(S_\omega,\rho),
        \qquad
        \sem S(\rho)
        \coloneqq
        \sem S_\omega(\rho),
    \]
    with \(\omega=\omega_\ast\) in the classically closed case.
    This convention suppresses only the ambient external valuation.  Classical variables
    introduced by a binding construct remain explicit integration variables;
    any shortened notation for the corresponding branches is declared locally in
    the relevant clause or proof.

    In that development, expected costs are evaluated on
    partial-density initial states.  The trace-class domain above does not change
    this semantic domain; it only supports the branchwise construction before the
    resulting average-cost functional is restricted to partial density inputs.
\end{remark}

\begin{proposition}[Well-definedness and Quadraticity of Cost]
\label{prop:cost_quadratic_form}
    Let $\Gamma\vdash S$ be a well-formed program phrase.  The clauses in
    \Cref{def:operational_cost} define a well-typed function
    \[
        \mathcal C(S_\omega,\cdot):
        \cT(\cH)_+
        \longrightarrow
        [0,\infty]
    \]
    for every $\omega\in\Omega_\Gamma$.  Moreover, the following properties
    hold.

    \begin{enumerate}
        \item \textbf{Admissibility.}
        The map
        \[
            (\omega,\rho)
            \longmapsto
            \mathcal C(S_\omega,\rho)
        \]
        is
        $(\Sigma_\Gamma\otimes\operatorname{Borel}(\cT(\cH)_+),
        \operatorname{Borel}([0,\infty]))$-measurable,
        which implies the bind integral
        is a legitimate extended non-negative Lebesgue integral.

        \item \textbf{Finite conic linearity on the positive trace-class cone.}
        For every $\rho,\sigma\in\cT(\cH)_+$ and every
        $a,b\in[0,\infty)$,
        \[
            \mathcal C(S_\omega,a\rho+b\sigma)
            =
            a\mathcal C(S_\omega,\rho)
            +
            b\mathcal C(S_\omega,\sigma),
        \]
        with the convention $0\cdot\infty=0$.

        \item \textbf{Lower semi-continuity.}
        For every fixed $\omega\in\Omega_\Gamma$, the map
        \[
            \rho\longmapsto\mathcal C(S_\omega,\rho)
        \]
        is lower semi-continuous on $\cT(\cH)_+$.        
    \end{enumerate}

    Consequently, for every fixed $\omega\in\Omega_\Gamma$, the pure-state cost
    \[
        \QF_{\mathcal C(S_\omega)}[\u]
        \coloneqq
        \mathcal C(S_\omega,\sP_\u)
    \]
    is a closed positive quadratic form on $\cH$. That is, for every fixed valuation $\omega$, there exists a unique
    positive self-adjoint linear relation
    \(C_{S,\omega}\in\Pred\)
    such that
    \(\QF_{C_{S,\omega}}=\QF_{\mathcal C(S_\omega)}\).
    Hence, by the definition of the extended trace,
    \(\Tr(C_{S,\omega}\rho)=\mathcal C(S_{\omega},\rho)\) holds for all $\rho\in\cT(\cH)_{+}$.
    For classically closed programs, we simply write $C_S$.
\end{proposition}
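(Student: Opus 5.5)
The proof goes by structural induction on $S$. For each $S$ and every context $\Gamma$ with $\Gamma\vdash S$ we establish (1)–(3) together; the closed-form conclusion then follows at the end. The key move is to strengthen the induction hypothesis so that it is a trace-duality statement: for each fixed $\omega$ there is a predicate $C_{S,\omega}\in\Pred$ with $\mathcal C(S_\omega,\rho)=\Tr(C_{S,\omega}\rho)$ for all $\rho\in\cT(\cH)_+$. Once this representation holds, conic linearity follows from the additivity and positive homogeneity of the extended trace (\Cref{app:trace_proof}). Lower semicontinuity follows by writing $C_{S,\omega}=\sup_m A_m$ with bounded $A_m$ (\Cref{prop:spectral_approximation}) and using \Cref{cor:mct_traces}: then $\rho\mapsto\Tr(C_{S,\omega}\rho)=\sup_m\tr(A_m\rho)$ is a supremum of trace-norm continuous functions. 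The same expression gives joint measurability in $(\omega,\rho)$ provided each $(\omega,\rho)\mapsto\tr(A_m\rho)$ is measurable. Since the spectral truncations of $C_{S,\omega}$ depend on $\omega$, I will instead prove measurability directly by induction: every construction below is a countable supremum or sum of maps that are Carathéodory (measurable in $\omega$, continuous in $\rho$), and such a supremum is jointly measurable by \cite[Lemma~4.51]{aliprantis2006infinite}.

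The base cases are straightforward. For $\mathbf{skip}$, initialization and unitaries, $C=\epsilon I$. For $\mathbf{abort}$, $C=\epsilon\cdot\top$, which is the closed form equal to $0$ at $\mathbf 0$ and $+\infty$ elsewhere; here $\Tr(\top\rho)=\infty\cdot\tr\rho$, and this map is LSC and Borel in $\rho$. For sequencing, $\mathcal C((S_2)_\omega,\sem{S_1}_\omega(\rho))=\Tr(C_{S_2,\omega}\sem{S_1}_\omega(\rho))=\Tr(\sem{S_1}_\omega^\dagger(C_{S_2,\omega})\rho)$ by \Cref{thm:wlp_well_defined}, so $C_{S_1;S_2}=C_{S_1}+\wlp(S_1,C_{S_2})$. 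A sum of two closed forms is closed by LSC. Joint measurability is obtained by composing the measurable map $(\omega,\rho)\mapsto(\omega,\sem{S_1}_\omega(\rho))$, which is Carathéodory by \Cref{thm:semantics_well_defined}, with the induction hypothesis for $S_2$. For bind, the integrand $(\omega,\nu,\rho)\mapsto\mathcal C(S_{\omega,\nu},M_{\omega,\nu}\rho M_{\omega,\nu}^\dagger)$ is jointly measurable: combine the induction hypothesis on context $\Gamma,x$ with \Cref{lem:operator_sandwich_measurability}, upgraded to a Carathéodory statement in $(\omega,\nu,\rho)$. Tonelli then makes $(\omega,\rho)\mapsto\int\cdots\,d\mu_M$ measurable. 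The pure-state form $\int\QF_{C_{S,\omega,\nu}}[M_\nu\u]\,d\mu_M$ is closed by the same Fatou argument as in \Cref{thm:wlp_soundness}, and the trace identity for mixed $\rho$ follows via Tonelli over a spectral decomposition, exactly as in \Cref{eq:bind-trace-psyn}. Finally add $\epsilon I$.

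The while loop is the main obstacle, because the cost is an infinite sum along a state-dependent trajectory. I would first show by induction on $N$ that the truncated sums $\mathcal C_N(\rho)=\sum_{n<N}(\cdots)$ satisfy $\mathcal C_N(\rho)=\Tr(Y_N\rho)$, where
\[
Y_0=0,\qquad Y_{N+1}=\epsilon I+C_{S_{body}}^{M_1}+M_1^\dagger\wlp(S_{body},Y_N)M_1 .
\]
Here $C_{S_{body}}^{M_1}$ denotes the pullback $\u\mapsto\QF_{C_{S_{body}}}[M_1\u]$. The induction step reuses the peeling argument of \Cref{eq:wlp-loop-truncation}, with $\rho_{n+1}=\mathcal E_{body}(\rho_n)$, and the duality $\Tr(Y_N\mathcal E_{body}(\sigma))=\Tr(\wlp(S_{body},Y_N)M_1\sigma M_1^\dagger)$. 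Each $Y_N$ is closed, and the sequence is increasing because the summands are nonnegative (formally, by monotonicity of $\wlp$, \Cref{thm:wp_properties}). So $C_{\mathrm{while}}=\bigsqcup_N Y_N$ exists by \Cref{thm:monotone_convergence}, and \Cref{cor:mct_traces} gives $\mathcal C(\rho)=\sup_N\Tr(Y_N\rho)=\Tr(C\rho)$. LSC then holds as a supremum of LSC maps. Measurability holds because each $\mathcal C_N$ is a finite sum of measurable terms: $\rho_n$ depends Carathéodory-measurably on $(\omega,\rho)$ by iterating the sequencing argument, and a countable supremum of measurable maps is measurable. The delicate points I expect to check carefully are that the $\infty$ conventions ($0\cdot\infty=0$ from abort inside the body) preserve linearity and closedness, and that the $\rho$-continuity needed for Carathéodory is replaced by joint Borel measurability wherever $\mathcal C$ is only LSC rather than continuous.
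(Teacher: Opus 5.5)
Your proof is correct, but it runs the induction in the opposite direction from the paper. The paper never constructs an intermediate predicate. It propagates only scalar properties of $\rho\mapsto\mathcal C(S_\omega,\rho)$ through the induction: joint measurability, finite conic linearity on $\cT(\cH)_+$, and lower semicontinuity, each checked clause by clause, with Fatou for bind and a supremum of LSC partial sums for while. Only at the end does it extract the closed form, uniformly in $S$, from the rank-one identities $\sP_{c\u}=|c|^2\sP_\u$ and $\sP_{\u+\v}+\sP_{\u-\v}=2\sP_\u+2\sP_\v$ together with \Cref{thm:closed_iff_lsc}.

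You instead make ``$\mathcal C(S_\omega,\cdot)=\Tr(C_{S,\omega}\,\cdot)$ for a compositionally built $C_{S,\omega}\in\Pred$'' the inductive invariant. Conic linearity and lower semicontinuity then become corollaries of the linearity of the extended trace and of \Cref{prop:spectral_approximation} with \Cref{cor:mct_traces}. Your recursions are exactly the $Q=0$ instance of the syntactic $\wp$ calculus: $C_{S_1;S_2}=C_{S_1}+\wlp(S_1,C_{S_2})$, the form integral for bind, and the chain $Y_N$ for while. So you are in effect proving that case of \Cref{thm:wp_equivalence} at the same time. Measurability you handle exactly as the paper does (composition with the jointly measurable hypothesis, Tonelli, countable suprema), and you correctly avoid relying on the $\omega$-dependent spectral truncations.

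The trade-off is this. Your route leans on heavier machinery proved earlier: exact duality for $\wlp$ (\Cref{thm:wlp_well_defined}), closedness of form integrals via Fatou, monotone convergence of forms, and decomposition-independence and countable additivity of the extended trace (\Cref{app:trace_proof}). The paper's argument is elementary and stays at the scalar level.

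In return, you get an explicit formula for $C_{S,\omega}$, and the mixed-state identity $\Tr(C_{S,\omega}\rho)=\mathcal C(S_\omega,\rho)$ comes for free. In the paper that identity is asserted ``by the definition of the extended trace''. It actually needs countable additivity of $\mathcal C(S_\omega,\cdot)$ along a spectral decomposition $\rho=\sum_i p_i\sP_{\u_i}$, which follows from (2) and (3): conic linearity gives $\mathcal C(\rho)\ge\mathcal C(\rho_N)$ for the partial sums $\rho_N$, and lower semicontinuity gives the reverse inequality in the limit.
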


\begin{proof}
    We prove the three listed assertions simultaneously by structural
    induction on $S$.  The asserted quadraticity of the pure-state cost is then
    derived uniformly from finite conic linearity and lower semi-continuity.

    All product $\sigma$-algebras below are the raw product
    $\sigma$-algebras, as in Section~\ref{SEC:syntax_semantics}.

    We use the following typed facts from Section~\ref{SEC:syntax_semantics}.

    First, for every well-formed phrase $\Gamma\vdash R$, the semantic map
    \[
        (\omega,\sigma)
        \longmapsto
        \sem{R}_{\omega}(\sigma)
    \]
    from $\Omega_\Gamma\times\cT(\cH)_+$ to $\cT(\cH)_+$ is
    $(\Sigma_\Gamma\otimes\operatorname{Borel}(\cT(\cH)_+),
    \operatorname{Borel}(\cT(\cH)_+))$-measurable.  Moreover, for every fixed
    $\omega$, the map $\sem{R}_\omega:\cT(\cH)_+\to\cT(\cH)_+$ is positive
    linear and trace-norm continuous.  Here we use the convention in
    \Cref{thm:semantics_well_defined} that $\sem{R}_\omega$ also denotes its
    restriction to positive trace-class inputs.

    Second, for every primitive continuous measurement occurrence
    $M(\seq e)$ in context $\Gamma$, the contextual branch-preparation map
    \[
        H_M(\omega,\rho,\nu)
        \coloneqq
        M_{\omega,\nu}\rho M_{\omega,\nu}^{\dagger}
    \]
    from $\Omega_\Gamma\times\cT(\cH)_+\times\Omega_M$ to $\cT(\cH)_+$ is
    $(\Sigma_\Gamma\otimes
    \operatorname{Borel}(\cT(\cH)_+)\otimes\Sigma_M,
    \operatorname{Borel}(\cT(\cH)_+))$-measurable.  For every fixed
    $(\omega,\nu)$, the map
    \[
        \rho\longmapsto M_{\omega,\nu}\rho M_{\omega,\nu}^{\dagger}
    \]
    is positive linear and trace-norm continuous.  The notation
    $M_{\omega,\nu}$ is the contextual shorthand from Section \ref{SEC:syntax_semantics} for the
    pulled-back primitive Kraus density
    $M_{\sem{\seq e}_\Gamma(\omega),\nu}$.

    Third, we shall use Tonelli's theorem in the form of
    \cite[Proposition~5.2.1]{cohn2013measure}: if
    \(F:X\times\Omega_M\to[0,\infty]\) is
    \((\mathcal A\otimes\Sigma_M,\operatorname{Borel}([0,\infty]))\)-measurable,
    then every section \(\nu\mapsto F(x,\nu)\) is
    \((\Sigma_M,\operatorname{Borel}([0,\infty]))\)-measurable, and the
    marginal integral
    \[
        x\longmapsto
        \int_{\Omega_M}F(x,\nu)\,d\mu_M(\nu)
    \]
    is \((\mathcal A,\operatorname{Borel}([0,\infty]))\)-measurable.  All such
    integrals are understood as extended non-negative Lebesgue integrals.

    Finally, we shall repeatedly use the elementary facts that finite
    non-negative linear combinations of lower semi-continuous
    $[0,\infty]$-valued functions are lower semi-continuous, that pointwise
    suprema of lower semi-continuous functions are lower semi-continuous, and
    that composing a lower semi-continuous function with a continuous map
    preserves lower semi-continuity.

    Now we start the structural induction.

    \paragraph{$\bullet$ Basic commands.}
    If $S$ is $\mathbf{skip}$, an initialization, or a parameterized unitary
    command, then
    \[
        \mathcal C(S_\omega,\rho)=\epsilon\Tr(\rho).
    \]
    Admissibility and lower semi-continuity follow from continuity of
    $\rho\mapsto\Tr(\rho)$ on $\cT(\cH)_+$, and finite conic linearity follows
    from linearity of the trace.

    \paragraph{$\bullet$ Abort.}
    If $S\equiv\mathbf{abort}$, then
    \[
        \mathcal C(S_\omega,\rho)=\epsilon\Tr(\rho)\cdot\infty,
    \]
    with $0\cdot\infty=0$ and $a\cdot\infty=+\infty$ for $a>0$.  Thus the
    cost is $0$ on $\{\Tr(\rho)=0\}$ and $+\infty$ on
    $\{\Tr(\rho)>0\}$.  Since $\rho\mapsto\Tr(\rho)$ is continuous,
    admissibility and lower semi-continuity follow.

    Finite conic linearity is immediate from the same dichotomy: for
    $a,b\in[0,\infty)$, the trace of $a\rho+b\sigma$ is zero exactly when
    $a\Tr(\rho)=0$ and $b\Tr(\sigma)=0$; otherwise both sides of the asserted
    conic-linearity identity are $+\infty$.

    \paragraph{$\bullet$ Sequential composition.}
    Suppose
    \[
        S\equiv S_1;S_2.
    \]

    We first prove admissibility. 
    Since
    \[
        \mathcal C(S_\omega,\rho)
        =
        \mathcal C((S_1)_\omega,\rho)+
        \mathcal C
        \bigl(
            (S_2)_\omega,\sem{S_1}_{\omega}(\rho)
        \bigr),
    \]
    we can define
    \[
    \begin{aligned}
        &F_1(\omega,\rho)\coloneqq\mathcal C((S_1)_\omega,\rho)\qquad
        F_2(\omega,\rho)\coloneqq
        \mathcal C
        \bigl(
            (S_2)_\omega,\sem{S_1}_{\omega}(\rho)
        \bigr);\\
        &\mathcal C_2(\omega,\sigma)
        \coloneqq
        \mathcal C((S_2)_\omega,\sigma),
        \qquad
        R(\omega,\rho)
        \coloneqq
        \sem{S_1}_{\omega}(\rho).
    \end{aligned}
    \]
    By the induction hypothesis for $S_2$, the map
    \[
        \mathcal C_2:
        \Omega_\Gamma\times\cT(\cH)_+
        \longrightarrow
        [0,\infty]
    \]
    is
    $(\Sigma_\Gamma\otimes\operatorname{Borel}(\cT(\cH)_+),
    \operatorname{Borel}([0,\infty]))$-measurable.  By the semantic
    measurability fact recalled above, the map
    \[
        R:
        \Omega_\Gamma\times\cT(\cH)_+
        \longrightarrow
        \cT(\cH)_+
    \]
    is
    $(\Sigma_\Gamma\otimes\operatorname{Borel}(\cT(\cH)_+),
    \operatorname{Borel}(\cT(\cH)_+))$-measurable.

    The second summand $F_2$ in the sequential cost is obtained by the following
    factorization:
    \[
    \begin{tikzcd}[column sep=large,row sep=large]
        \Omega_\Gamma\times\cT(\cH)_+
        \arrow[r,"h"]
        \arrow[dr,swap,"F_2"]
        &
        \Omega_\Gamma\times\cT(\cH)_+
        \arrow[d,"\mathcal C_2"]
        \\
        &
        {[0,\infty]}
    \end{tikzcd}
    \]
    where
    \[
        h(\omega,\rho)
        \coloneqq
        (\omega,R(\omega,\rho)),
        \qquad
        F_2(\omega,\rho)
        \coloneqq
        \mathcal C_2(h(\omega,\rho)).
    \]
    The first component of $h$ is the canonical coordinate projection, and the
    second component is $R$.  Hence $h$ is
    $(\Sigma_\Gamma\otimes\operatorname{Borel}(\cT(\cH)_+),
    \Sigma_\Gamma\otimes\operatorname{Borel}(\cT(\cH)_+))$-measurable.
    Therefore
    \[
        F_2(\omega,\rho)
        =
        \mathcal C
        \bigl(
            (S_2)_\omega,
            \sem{S_1}_{\omega}(\rho)
        \bigr)
    \]
    is
    $(\Sigma_\Gamma\otimes\operatorname{Borel}(\cT(\cH)_+),
    \operatorname{Borel}([0,\infty]))$-measurable.

    The first summand
    \[
        F_1(\omega,\rho)
        \coloneqq
        \mathcal C((S_1)_\omega,\rho)
    \]
    is measurable by the induction hypothesis for $S_1$.  Since
    \[
        \mathcal C((S_1;S_2)_\omega,\rho)
        =
        F_1(\omega,\rho)+F_2(\omega,\rho),
    \]
    and addition on $[0,\infty]$ is measurable, the sequential cost is
    admissible.

    For finite conic linearity, fix $\omega$, $\rho,\sigma\in\cT(\cH)_+$, and
    $a,b\in[0,\infty)$.  The first summand
    \[
        \rho\longmapsto \mathcal C((S_1)_\omega,\rho)
    \]
    is finite conic linear by the induction hypothesis for $S_1$.  For the
    second summand, positive linearity of $\sem{S_1}_\omega$ gives
    \[
        R(\omega,a\rho+b\sigma)
        =
        aR(\omega,\rho)+bR(\omega,\sigma),
    \]
    and the induction hypothesis for $S_2$ gives
    \[
        \mathcal C
        \bigl(
            (S_2)_\omega,
            R(\omega,a\rho+b\sigma)
        \bigr)=
        a\mathcal C((S_2)_\omega,R(\omega,\rho))
        +
        b\mathcal C((S_2)_\omega,R(\omega,\sigma)).
    \]
    Therefore the sum of the two summands,
    \[
        \rho\longmapsto
        \mathcal C((S_1;S_2)_\omega,\rho),
    \]
    is finite conic linear.

    For lower semi-continuity, fix $\omega$ and let
    $\rho_n\to\rho$ in $\cT(\cH)_+$.  Since $\sem{S_1}_\omega$ is trace-norm
    continuous,
    \[
        \sem{S_1}_\omega(\rho_n)
        \longrightarrow
        \sem{S_1}_\omega(\rho).
    \]
    By the induction hypotheses for $S_1$ and $S_2$,
    \[
    \begin{aligned}
        \liminf_{n\to\infty}
        \mathcal C((S_1;S_2)_\omega,\rho_n)
        &=\liminf_{n\to\infty}
        \Big(\mathcal C((S_1)_\omega,\rho_n)+
        \mathcal{C}\bigl((S_2)_\omega,\sem{S_1}_\omega(\rho_n)\bigr)\Big)
        \\
        &\ge
        \liminf_{n\to\infty}\mathcal C((S_1)_\omega,\rho_n)+
        \liminf_{n\to\infty}
        \mathcal C
        \bigl(
            (S_2)_\omega,
            \sem{S_1}_\omega(\rho_n)
        \bigr)
        \\
        &\ge
        \mathcal C((S_1)_\omega,\rho)
        +
        \mathcal C
        \bigl(
            (S_2)_\omega,
            \sem{S_1}_\omega(\rho)
        \bigr)
        \\
        &=
        \mathcal C((S_1;S_2)_\omega,\rho).
    \end{aligned}
    \]

    \paragraph{$\bullet$ Continuous measurement.}
    Suppose
    \[
        S\equiv \bind(M(\seq e),x.S(x)).
    \]
    The continuation is interpreted over
    \[
        \Omega_{\Gamma,x}
        =
        \Omega_\Gamma\times\Omega_M,
        \qquad
        \Sigma_{\Gamma,x}
        =
        \Sigma_\Gamma\otimes\Sigma_M.
    \]

    We first prove the admissibility statement. Write
    $S_{\omega,\nu}$
    for the continuation $(S(x))_{\omega,\nu}$, and define
    \[
        \mathcal C_x((\omega,\nu),\sigma)
        \coloneqq
        \mathcal C(S_{\omega,\nu},\sigma).
    \]
    By the induction hypothesis applied to the continuation $S(x)$ in the
    extended context $\Gamma,x$, the map
    \[
        \mathcal C_x:
        (\Omega_\Gamma\times\Omega_M)\times\cT(\cH)_+
        \longrightarrow
        [0,\infty]
    \]
    is
    $((\Sigma_\Gamma\otimes\Sigma_M)\otimes
    \operatorname{Borel}(\cT(\cH)_+),
    \operatorname{Borel}([0,\infty]))$-measurable.

    Define the branch-state map
    \[
        H(\omega,\rho,\nu)
        \coloneqq
        M_{\omega,\nu}\rho M_{\omega,\nu}^{\dagger}.
    \]
    By the branch-preparation measurability fact recalled above, equivalently
    the $\bind$ case in the proof of
    \Cref{thm:semantics_well_defined}, the map
    \[
        H:
        \Omega_\Gamma\times\cT(\cH)_+\times\Omega_M
        \longrightarrow
        \cT(\cH)_+
    \]
    is
    $(\Sigma_\Gamma\otimes
    \operatorname{Borel}(\cT(\cH)_+)\otimes\Sigma_M,
    \operatorname{Borel}(\cT(\cH)_+))$-measurable.

    The branch-cost integrand is obtained by the following factorization:
    \[
    \begin{tikzcd}[column sep=large,row sep=large]
        \Omega_\Gamma\times\cT(\cH)_+\times\Omega_M
        \arrow[r,"h"]
        \arrow[dr,swap,"F"]
        &
        (\Omega_\Gamma\times\Omega_M)\times\cT(\cH)_+
        \arrow[d,"\mathcal C_x"]
        \\
        &
        {[0,\infty]}
    \end{tikzcd}
    \]
    where
    \[
        h(\omega,\rho,\nu)
        \coloneqq
        \bigl((\omega,\nu),H(\omega,\rho,\nu)\bigr),
        \qquad
        F(\omega,\rho,\nu)
        \coloneqq
        \mathcal{C}(S_{\omega,\nu},  M_{\omega,\nu}\rho M_{\omega,\nu}^{\dagger})=
        \mathcal C_x(h(\omega,\rho,\nu)).
    \]
    The first component $(\omega,\rho,\nu)\mapsto(\omega,\nu)$ is the
    canonical coordinate projection, and the second component is $H$.
    Therefore $h$ is
    $(\Sigma_\Gamma\otimes
    \operatorname{Borel}(\cT(\cH)_+)\otimes\Sigma_M,
    (\Sigma_\Gamma\otimes\Sigma_M)\otimes
    \operatorname{Borel}(\cT(\cH)_+))$-measurable.  Since $\mathcal C_x$ is
    measurable by the induction hypothesis, the composite
    \[
        F(\omega,\rho,\nu)
        =
        \mathcal C
        \bigl(
            S_{\omega,\nu},
            M_{\omega,\nu}\rho M_{\omega,\nu}^{\dagger}
        \bigr)
    \]
    is
    $(\Sigma_\Gamma\otimes
    \operatorname{Borel}(\cT(\cH)_+)\otimes\Sigma_M,
    \operatorname{Borel}([0,\infty]))$-measurable.

    Applying the Tonelli theorem with
    \[
        X=\Omega_\Gamma\times\cT(\cH)_+,
        \qquad
        \mathcal A=\Sigma_\Gamma\otimes\operatorname{Borel}(\cT(\cH)_+),
    \]
    gives two conclusions.  First, for every fixed
    \[
        (\omega,\rho)\in\Omega_\Gamma\times\cT(\cH)_+,
    \]
    the section
    $\nu\longmapsto F(\omega,\rho,\nu)$
    is
    $(\Sigma_M,\operatorname{Borel}([0,\infty]))$-measurable, so the integral
    \[
        \int_{\Omega_M}F(\omega,\rho,\nu)\,d\mu_M(\nu)
    \]
    is a legitimate extended non-negative Lebesgue integral.  Second, the
    marginal integral
    \[
        G(\omega,\rho)
        \coloneqq
        \int_{\Omega_M}F(\omega,\rho,\nu)\,d\mu_M(\nu)
    \]
    is
    $(\Sigma_\Gamma\otimes\operatorname{Borel}(\cT(\cH)_+),
    \operatorname{Borel}([0,\infty]))$-measurable.  Hence
    \[
        (\omega,\rho)
        \longmapsto
        \mathcal C(S_\omega,\rho)
        =
        \epsilon\Tr(\rho)+G(\omega,\rho)
    \]
    is
    $(\Sigma_\Gamma\otimes\operatorname{Borel}(\cT(\cH)_+),
    \operatorname{Borel}([0,\infty]))$-measurable.  This proves both the
    legitimacy of the bind integral and the closure of the admissibility
    invariant under bind.

        For finite conic linearity, fix $\omega\in\Omega_\Gamma$,
    $\rho,\sigma\in\cT(\cH)_+$, and $a,b\in[0,\infty)$.  Since
    $H(\omega,\cdot,\nu)$ is positive linear for every $\nu$,
    \[
        H(\omega,a\rho+b\sigma,\nu)
        =
        aH(\omega,\rho,\nu)+bH(\omega,\sigma,\nu).
    \]
    Applying the induction hypothesis to the continuation gives, for every
    $\nu$,
    \[
    \begin{aligned}
        F(\omega,a\rho+b\sigma,\nu)
        &=
        aF(\omega,\rho,\nu)+bF(\omega,\sigma,\nu).
    \end{aligned}
    \]
    Linearity of the trace and of the extended non-negative integral then
    gives
    \[
        \mathcal C(S_\omega,a\rho+b\sigma)
        =
        a\mathcal C(S_\omega,\rho)
        +
        b\mathcal C(S_\omega,\sigma).
    \]

    For lower semi-continuity, fix $\omega$ and let
    $\rho_n\to\rho$ in $\cT(\cH)_+$.  For each fixed $\nu$, trace-norm
    continuity of $H(\omega,\cdot,\nu)$ gives
    \[
        H(\omega,\rho_n,\nu)
        \longrightarrow
        H(\omega,\rho,\nu).
    \]
    Hence, by the lower semi-continuity induction hypothesis for the
    continuation,
    \[
        \liminf_{n\to\infty}
        F(\omega,\rho_n,\nu)
        \ge
        F(\omega,\rho,\nu)
        \qquad
        \text{for every }\nu\in\Omega_M.
    \]
    Since the functions $\nu\mapsto F(\omega,\rho_n,\nu)$ are measurable and
    non-negative by the admissibility part, Fatou's lemma gives
    \[
    \begin{aligned}
        \liminf_{n\to\infty}\mathcal C(S_\omega,\rho_n)
        &\ge
        \epsilon\Tr(\rho)
        +
        \int_{\Omega_M}
            \liminf_{n\to\infty}F(\omega,\rho_n,\nu)
        \,d\mu_M(\nu)
        \\
        &\ge
        \epsilon\Tr(\rho)
        +
        \int_{\Omega_M}F(\omega,\rho,\nu)\,d\mu_M(\nu)
        \\
        &=
        \mathcal C(S_\omega,\rho).
    \end{aligned}
    \]

    \paragraph{$\bullet$ While loop.}
    Suppose
    \[
        S\equiv
        \mathbf{while}\ M(\seq e)[\seq q]=1\ \mathbf{do}\ S_{body}\ \mathbf{od},
    \]
    where, for each valuation $\omega$, the loop guard is
    \(M_\omega=\{M_{\omega,0},M_{\omega,1}\}\).

    For admissibility, define recursively
    \[
        R_0(\omega,\rho)\coloneqq\rho,
        \qquad
        R_{n+1}(\omega,\rho)
        \coloneqq
        \sem{S_{body}}_{\omega}
        \bigl(M_{\omega,1}R_n(\omega,\rho)M_{\omega,1}^\dagger\bigr).
    \]
    The map $R_0$ is
    $(\Sigma_\Gamma\otimes\operatorname{Borel}(\cT(\cH)_+),
    \operatorname{Borel}(\cT(\cH)_+))$-measurable.  If $R_n$ is measurable,
    then
    \[
        (\omega,\rho)
        \longmapsto
        M_{\omega,1}R_n(\omega,\rho)M_{\omega,1}^\dagger
    \]
    is measurable by Lemma~\ref{lem:operator_sandwich_measurability} together
    with the Carath\'eodory joint-measurability theorem.  Composing this with
    the semantic measurability of $S_{body}$ shows that $R_{n+1}$ is
    measurable.  Hence each $R_n$ is measurable.

    Define the $n$-th loop contribution
    \[
        c_n(\omega,\rho)
        \coloneqq
        \epsilon\Tr(R_n(\omega,\rho))
        +
        \mathcal C
        \bigl(
            (S_{body})_\omega,
            M_{\omega,1}R_n(\omega,\rho)M_{\omega,1}^\dagger
        \bigr).
    \]
    The first term is measurable by continuity of the trace.  The second term
    is measurable by the induction hypothesis for $S_{body}$, composed with
    the measurable map
    \[
        (\omega,\rho)
        \longmapsto
        \bigl(
            \omega,
            M_{\omega,1}R_n(\omega,\rho)M_{\omega,1}^\dagger
        \bigr).
    \]
    Therefore each $c_n$ is
    $(\Sigma_\Gamma\otimes\operatorname{Borel}(\cT(\cH)_+),
    \operatorname{Borel}([0,\infty]))$-measurable.  Since all terms are
    non-negative,
    \[
        (\omega,\rho)
        \longmapsto
        \sum_{n=0}^{\infty}c_n(\omega,\rho)
        =
        \sup_N\sum_{n=0}^{N}c_n(\omega,\rho)
    \]
    is measurable as the pointwise supremum of measurable functions.  This
    proves admissibility and well-definedness of the loop cost.

    For finite conic linearity, fix $\omega\in\Omega_\Gamma$,
    $\rho,\sigma\in\cT(\cH)_+$, and $a,b\in[0,\infty)$.  With $\omega$ fixed,
    induction on $n$ shows that the map
    \[
        \rho\longmapsto R_n(\omega,\rho)
    \]
    is positive linear for every $n$, since both the guard-sandwich map and
    $\sem{S_{body}}_\omega$ are positive linear.  Hence, by the induction
    hypothesis for $S_{body}$, each map
    \[
        \rho\longmapsto c_n(\omega,\rho)
    \]
    is finite conic linear.

    Therefore
    \[
    \begin{aligned}
        \mathcal C(S_\omega,a\rho+b\sigma)
        &=
        \sup_N
        \sum_{n=0}^{N}
        c_n(\omega,a\rho+b\sigma)
        \\
        &=
        \sup_N
        \left(
            a\sum_{n=0}^{N}c_n(\omega,\rho)
            +
            b\sum_{n=0}^{N}c_n(\omega,\sigma)
        \right)
        \\
        &=
        a\sup_N\sum_{n=0}^{N}c_n(\omega,\rho)
        +
        b\sup_N\sum_{n=0}^{N}c_n(\omega,\sigma)
        \\
        &=
        a\mathcal C(S_\omega,\rho)
        +
        b\mathcal C(S_\omega,\sigma).
    \end{aligned}
    \]
    Here the third equality uses monotone convergence of non-negative
    numerical sequences, with the convention $0\cdot\infty=0$.

    For lower semi-continuity, fix $\omega\in\Omega_\Gamma$ and let
    $\rho_m\to\rho$ in $\cT(\cH)_+$.  For each fixed $n$, induction on $n$
    gives
    \[
        R_n(\omega,\rho_m)
        \xrightarrow{\|\cdot\|_1}
        R_n(\omega,\rho),
    \]
    using trace-norm continuity of the guard-sandwich map and of
    $\sem{S_{body}}_\omega$.

    Hence the map
    $\rho\longmapsto\epsilon\Tr(R_n(\omega,\rho))$
    is continuous.  Moreover, the map
    \[
        \rho
        \longmapsto
        \mathcal C
        \bigl(
            (S_{body})_\omega,
            M_{\omega,1}R_n(\omega,\rho)M_{\omega,1}^\dagger
        \bigr)
    \]
    is lower semi-continuous by the induction hypothesis for $S_{body}$,
    composed with the continuous map
    $\rho\longmapsto M_{\omega,1}R_n(\omega,\rho)M_{\omega,1}^\dagger$.
    Therefore each map
    \[
        \rho\longmapsto c_n(\omega,\rho)
    \]
    is lower semi-continuous.

    It follows that, for every $N$, the finite partial sum
    \[
        \rho
        \longmapsto
        \sum_{n=0}^{N}c_n(\omega,\rho)
    \]
    is lower semi-continuous.  Since
    \[
        \mathcal C(S_\omega,\rho)
        =
        \sup_N
        \sum_{n=0}^{N}c_n(\omega,\rho),
    \]
    the loop cost is lower semi-continuous as the pointwise supremum of lower
    semi-continuous functions.

    This completes the simultaneous induction for admissibility, finite conic linearity, and lower semi-continuity.

    It remains to prove the asserted quadraticity of the pure-state cost.  Fix
    $\omega\in\Omega_\Gamma$ and recall that
    \[
        \QF_{\mathcal C(S_\omega)}[\u]
        \coloneqq
        \mathcal C(S_\omega,\sP_\u),
        \qquad
        \u\in\cH.
    \]
    Since $\mathcal C$ takes values in $[0,\infty]$,
    $\QF_{\mathcal C(S_\omega)}$ is positive.

    The rank-one operators satisfy
    \[
        \sP_{c\u}=|c|^2\sP_\u.
    \]
    Hence finite conic linearity of
    $\rho\mapsto\mathcal C(S_\omega,\rho)$ gives
    \[
        \QF_{\mathcal C(S_\omega)}[c\u]
        =
        |c|^2
        \QF_{\mathcal C(S_\omega)}[\u].
    \]

    Next, for all $\u,\v\in\cH$, the rank-one operators satisfy the algebraic
    identity
    \[
        \sP_{\u+\v}+\sP_{\u-\v}
        =
        2\sP_\u+2\sP_\v.
    \]
    Applying finite conic linearity again yields
    \[
    \begin{aligned}
        \QF_{\mathcal C(S_\omega)}[\u+\v]
        +
        \QF_{\mathcal C(S_\omega)}[\u-\v]
        &=
        \mathcal C(S_\omega,\sP_{\u+\v})
        +
        \mathcal C(S_\omega,\sP_{\u-\v})
        \\
        &=
        \mathcal C
        \bigl(
            S_\omega,
            \sP_{\u+\v}+\sP_{\u-\v}
        \bigr)
        \\
        &=
        \mathcal C
        \bigl(
            S_\omega,
            2\sP_\u+2\sP_\v
        \bigr)
        \\
        &=
        2\QF_{\mathcal C(S_\omega)}[\u]
        +
        2\QF_{\mathcal C(S_\omega)}[\v].
    \end{aligned}
    \]
    Thus $\QF_{\mathcal C(S_\omega)}$ satisfies the parallelogram identity.

    Therefore
    \[
        \Dom(\QF_{\mathcal C(S_\omega)})
        \coloneqq
        \{
            \u\in\cH:
            \QF_{\mathcal C(S_\omega)}[\u]<\infty
        \}
    \]
    is a linear subspace: scalar closure follows from homogeneity, and if
    $\u,\v\in\Dom(\QF_{\mathcal C(S_\omega)})$, then the parallelogram
    identity gives
    \[
        \QF_{\mathcal C(S_\omega)}[\u+\v]
        \le
        2\QF_{\mathcal C(S_\omega)}[\u]
        +
        2\QF_{\mathcal C(S_\omega)}[\v]
        <\infty.
    \]
    By the polarization identity,
    $\QF_{\mathcal C(S_\omega)}$ is the diagonal of a positive sesquilinear
    form on $\Dom(\QF_{\mathcal C(S_\omega)})$.  Hence
    $\QF_{\mathcal C(S_\omega)}$ is a positive quadratic form.

    Finally, if $\u_n\to\u$ in $\cH$, then
    $\sP_{\u_n}\to\sP_\u$ in trace norm.  Since
    $\rho\mapsto\mathcal C(S_\omega,\rho)$ is lower semi-continuous on
    $\cT(\cH)_+$, it follows that
    $\QF_{\mathcal C(S_\omega)}$ is lower semi-continuous on $\cH$.  By
    \Cref{thm:closed_iff_lsc}, $\QF_{\mathcal C(S_\omega)}$ is a closed
    positive quadratic form. Thus the existence and uniqueness of $C_{S,\omega}$ follow from \Cref{thm:kato_first}. This proves the proposition.
\end{proof}

\begin{definition}[Total Correctness (Cost-Bounded Termination)]
\label{def:total_correctness}
    A quantum Hoare triple is totally correct in the upper-bound sense, denoted as $\models_{tot} \{P\} S \{Q\}$, if for all $\rho \in \pardensity{\cH}$:
    \[
        \Tr(P \rho) \ge \Tr(Q \sem{S}(\rho)) + \mathcal{C}(S, \rho).
    \]
\end{definition}

To justify the physical significance of our upper-bound logic, we establish the fundamental relationship between finite operational cost and program termination.

\begin{proposition}[Finite Expected Cost Implies Almost-Sure Termination]
\label{prop:finite_cost_termination}
    For any quantum program $S$ and any initial partial density operator $\rho \in \pardensity{\cH}$,
    if the expected operational cost is finite, i.e.\ $\mathcal{C}(S, \rho) < \infty$, then $S$ terminates
    almost surely from $\rho$:
    \[
        \Tr(\sem{S}(\rho)) = \Tr(\rho).
    \]
\end{proposition}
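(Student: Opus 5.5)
We argue by structural induction on $S$ that a stronger, quantitative statement holds: for every $\rho\in\cT(\cH)_+$,
\[
    \epsilon\bigl(\Tr(\rho)-\Tr(\sem{S}(\rho))\bigr)\ \le\ \mathcal C(S,\rho)\quad\text{whenever } \mathcal C(S,\rho)<\infty,
\]
or more simply the implication $\mathcal C(S,\rho)<\infty\Rightarrow \Tr(\sem{S}(\rho))=\Tr(\rho)$, stated on the whole positive trace-class cone (needed because bind branches $M_\nu\rho M_\nu^\dagger$ are not normalized). Working on $\cT(\cH)_+$ is harmless by positive homogeneity of both $\sem{S}$ and $\mathcal C(S,\cdot)$ (\Cref{prop:cost_quadratic_form}, finite conic linearity). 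The base cases are immediate: $\mathbf{skip}$, initialization and unitaries are trace-preserving by the proof of \Cref{thm:semantics_well_defined}, and for $\mathbf{abort}$ finiteness of $\epsilon\Tr(\rho)\cdot\infty$ forces $\Tr(\rho)=0$, so $\Tr(\sem{\mathbf{abort}}(\rho))=0=\Tr(\rho)$.

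For sequencing, $\mathcal C(S_1;S_2,\rho)<\infty$ gives both $\mathcal C(S_1,\rho)<\infty$ and $\mathcal C(S_2,\sem{S_1}(\rho))<\infty$; applying the induction hypothesis twice gives $\Tr(\sem{S_2}(\sem{S_1}(\rho)))=\Tr(\sem{S_1}(\rho))=\Tr(\rho)$. For bind, finiteness of the integral forces $\mathcal C(S_\nu,M_\nu\rho M_\nu^\dagger)<\infty$ for $\mu_M$-a.e.\ $\nu$, hence by induction $\Tr(F_\rho(\nu))=\Tr(M_\nu\rho M_\nu^\dagger)$ a.e. Then, by Hille's theorem applied to the bounded functional $\tr$ (as in the bind case of \Cref{thm:semantics_well_defined}), $\Tr(\sem{B}(\rho))=\int\Tr(F_\rho(\nu))\,d\mu_M=\int\Tr(M_\nu\rho M_\nu^\dagger)\,d\mu_M=\Tr(\rho)$, the last step being the normalization of the instrument (via Tonelli over a spectral decomposition of $\rho$).

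The while loop is the main step. Here $\mathcal C(W,\rho)\ge\sum_n\epsilon\Tr(\rho_n)$, so finiteness gives $\Tr(\rho_n)\to 0$, and also $\mathcal C(S_{body},M_1\rho_nM_1^\dagger)<\infty$ for every $n$, whence by induction $\Tr(\rho_{n+1})=\Tr(M_1\rho_nM_1^\dagger)$. Combined with $M_0^\dagger M_0+M_1^\dagger M_1=I$, this gives the telescoping identity
\[
    \Tr(\rho_n)=\Tr(M_0\rho_nM_0^\dagger)+\Tr(\rho_{n+1}).
\]
On the semantic side, unfolding the recursion \eqref{eq:induction-def-of-while-semantics} shows $\mathcal F_{N+1}(\rho)=\sum_{k=0}^{N}M_0\rho_kM_0^\dagger$. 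Summing the telescoping identity gives $\Tr(\mathcal F_{N+1}(\rho))=\Tr(\rho)-\Tr(\rho_{N+1})$. Passing to the limit using the trace-norm convergence of \Cref{lem:cpo_domain}, and using $\Tr(\rho_{N+1})\to0$, yields $\Tr(\sem{W}(\rho))=\Tr(\rho)$.

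The delicate points are, first, the identification of $\rho_n$ in the cost clause with the iterates in the semantic approximants, which needs a small induction on $N$; and second, the fact that $\epsilon>0$ is exactly what converts a finite cost into vanishing residual mass.
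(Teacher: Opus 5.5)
Your proposal is correct and follows essentially the same route as the paper: structural induction in which abort is forced by $\epsilon\Tr(\rho)\cdot\infty<\infty$; bind is handled by almost-everywhere finiteness of the branch costs, Hille's theorem and the normalization of the instrument; and the loop is handled by the telescoping identity $\Tr(\rho_n)=\Tr(M_0\rho_nM_0^\dagger)+\Tr(\rho_{n+1})$ together with $\Tr(\rho_n)\to 0$, which follows from $\epsilon>0$. The differences are cosmetic: you run the induction on all of $\cT(\cH)_+$ rather than rescaling inside the bind case, and you make explicit the identity $\mathcal F_{N+1}(\rho)=\sum_{k=0}^{N}M_0\rho_kM_0^\dagger$ that the paper uses implicitly. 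Note, though, that your opening ``stronger, quantitative'' inequality is actually weaker than the plain implication, since it does not force the traces to be equal, so you are right to fall back on the implication itself.
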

\begin{proof}
    We proceed by structural induction on $S$.
    
    \textbf{Base cases:} $\mathbf{skip}$, $\mathbf{abort}$, $q:=0$, $U(\seq{e})[\seq{q}]$.
    For $\mathbf{skip}$, $q:=0$ and unitary operations, the semantics is trace-preserving, so
    $\Tr(\sem{S}(\rho)) = \Tr(\rho)$ holds trivially, regardless of the cost.
    For $\mathbf{abort}$, the definition $\mathcal{C}(\mathbf{abort}, \rho) < \infty$ forces $\Tr(\rho) = 0$,
    and in that case $\Tr(\sem{\mathbf{abort}}(\rho)) = 0 = \Tr(\rho)$. Thus the statement holds.
    
    \textbf{Induction hypothesis:} Assume that for every proper subprogram $S'$ of $S$ and every \(\sigma\in\pardensity{\cH}\),, $\mathcal{C}(S',\sigma) < \infty$ implies $\Tr(\sem{S'}(\sigma)) = \Tr(\sigma)$.
    
    \textbf{Induction step:} We consider the three composite constructs.
    
    \begin{itemize}
        \item \emph{Sequential composition $S_1; S_2$.}
        By Definition~\ref{def:operational_cost},
        \[
            \mathcal{C}(S_1;S_2, \rho) = \mathcal{C}(S_1, \rho) + \mathcal{C}(S_2, \sem{S_1}(\rho)).
        \]
        Finiteness of the left-hand side implies $\mathcal{C}(S_1, \rho) < \infty$ and
        $\mathcal{C}(S_2, \sem{S_1}(\rho)) < \infty$.
        By the induction hypothesis applied to $S_1$, we have $\Tr(\sem{S_1}(\rho)) = \Tr(\rho)$.
        Applying the induction hypothesis to $S_2$ with initial state $\sem{S_1}(\rho)$ gives
        $\Tr(\sem{S_2}(\sem{S_1}(\rho))) = \Tr(\sem{S_1}(\rho)) = \Tr(\rho)$.
        Thus $\Tr(\sem{S_1;S_2}(\rho)) = \Tr(\rho)$.
        
        \item \emph{Continuous measurement
        \(T\equiv\bind(M(\seq e),x.S(x))\).}
        Fix the ambient external valuation.  Write
        \((\Omega_M,\Sigma_M,\mu)\) for the outcome space, and use the local
        abbreviations
        \[
            M_\nu\coloneqq M_{\omega,\nu},
            \qquad
            S_\nu\coloneqq (S(x))_{\omega,\nu},
            \qquad
            \sigma_\nu\coloneqq M_\nu\rho M_\nu^\dagger .
        \]
        By the cost clause for binding,
        \[
            \mathcal C(T,\rho)
            =
            \epsilon\Tr(\rho)
            +
            \int_{\Omega_M}
                \mathcal C(S_\nu,\sigma_\nu)
            \,d\mu(\nu)
            <\infty .
        \]
        Since the integrand is non-negative, this implies
        \[
            \mathcal C(S_\nu,\sigma_\nu)<\infty
            \qquad
            \text{for \(\mu\)-almost every }\nu .
        \]

                For such \(\nu\), we have
        \(\Tr(\sem{S_\nu}(\sigma_\nu))=\Tr(\sigma_\nu)\).  Indeed, if
        \(\Tr(\sigma_\nu)=0\), then \(\sigma_\nu=0\) and the claim follows by
        linearity.  Otherwise
        \(\widehat\sigma_\nu\coloneqq\sigma_\nu/\Tr(\sigma_\nu)\) is a partial
        density operator; finite conic linearity of the cost functional gives
        \(\mathcal C(S_\nu,\widehat\sigma_\nu)<\infty\), so the induction
        hypothesis applies to \(S_\nu\) at \(\widehat\sigma_\nu\), and rescaling
        back by linearity gives the desired equality for \(\sigma_\nu\).

        The denotational semantics of binding is the Bochner integral
        \[
            \sem{T}(\rho)
            =
            \int_{\Omega_M}
                \sem{S_\nu}(\sigma_\nu)
            \,d\mu(\nu).
        \]
        Since the trace is a bounded linear functional on \(\cT(\cH)\), Hille's
        theorem gives
        \[
        \begin{aligned}
            \Tr(\sem{T}(\rho))
            &=
            \int_{\Omega_M}
                \Tr(\sem{S_\nu}(\sigma_\nu))
            \,d\mu(\nu)                                           \\
            &=
            \int_{\Omega_M}
                \Tr(\sigma_\nu)
            \,d\mu(\nu).
        \end{aligned}
        \]

        It remains to identify the last integral.  Let
        \(\rho=\sum_i\lambda_i\sP_{\u_i}\) be a spectral decomposition, with
        \(\lambda_i\ge0\), \(\sum_i\lambda_i\le1\), and
        \(\{\u_i\}\) orthonormal.  By Tonelli's theorem and the normalization of
        the instrument,
        \[
        \begin{aligned}
            \int_{\Omega_M}
                \Tr(M_\nu\rho M_\nu^\dagger)
            \,d\mu(\nu)
            &=
            \sum_i\lambda_i
            \int_{\Omega_M}
                \|M_\nu\u_i\|^2
            \,d\mu(\nu)                                      \\
            &=
            \sum_i\lambda_i\|\u_i\|^2                         \\
            &=
            \Tr(\rho).
        \end{aligned}
        \]
        Therefore
        \[
            \Tr(\sem{T}(\rho))=\Tr(\rho).
        \]
        
        \item \emph{While loop $\mathbf{while}\, M \,\mathbf{do}\, S_{body} \,\mathbf{od}$.}
        Let $\rho_0 \coloneqq \rho$ and $\rho_{k+1} \coloneqq \sem{S_{body}}(M_1 \rho_k M_1^\dagger)$ as in the
        semantics. The total cost is
        \[
            \mathcal{C}(S, \rho) = \sum_{k=0}^\infty \Bigl( \epsilon \Tr(\rho_k) + \mathcal{C}(S_{body}, M_1 \rho_k M_1^\dagger) \Bigr) < \infty .
        \]
        Consequently, each individual term satisfies $\mathcal{C}(S_{body}, M_1 \rho_k M_1^\dagger) < \infty$.
        Applying the induction hypothesis to the subprogram $S_{body}$ with initial state
        $M_1 \rho_k M_1^\dagger$, we obtain
        \[
            \Tr(\rho_{k+1}) = \Tr(\sem{S_{body}}(M_1 \rho_k M_1^\dagger)) = \Tr(M_1 \rho_k M_1^\dagger) .
        \]
        Moreover, the convergence of $\sum_{k} \epsilon \Tr(\rho_k)$ forces $\Tr(\rho_k) \to 0$.
        Using the completeness relation $M_0^\dagger M_0 + M_1^\dagger M_1 = I$, we expand the initial
        trace:
        \begin{align*}
            \Tr(\rho) &= \Tr(M_0 \rho_0 M_0^\dagger) + \Tr(M_1 \rho_0 M_1^\dagger) \\
            &= \Tr(M_0 \rho_0 M_0^\dagger) + \Tr(\rho_1)
               \qquad (\text{by the equality above with } k=0)\\
            &= \Tr(M_0 \rho_0 M_0^\dagger) + \Tr(M_0 \rho_1 M_0^\dagger) + \Tr(\rho_2) \\
            &\;\;\vdots \\
            &= \sum_{k=0}^{K-1} \Tr(M_0 \rho_k M_0^\dagger) + \Tr(\rho_K) .
        \end{align*}
        Letting $K \to \infty$, the remainder $\Tr(\rho_K)$ vanishes because the series $\sum \Tr(\rho_k)$
        converges. Hence
        \[
            \Tr(\rho) = \sum_{k=0}^\infty \Tr(M_0 \rho_k M_0^\dagger) = \Tr(\sem{S}(\rho)),
        \]
        which is exactly the probability of termination.
    \end{itemize}
    This completes the induction.
\end{proof}

\begin{remark}[Expected Cost vs. Absolute Termination]
    It is crucial to note that the converse of Proposition~\ref{prop:finite_cost_termination} does not generally hold. A program might terminate with probability 1 but still exhibit an infinite expected operational cost ($\QF_{C_S}[\u] = \infty$). This is a well-known phenomenon in probability theory, analogous to a symmetric one-dimensional random walk returning to the origin: the return probability is exactly 1, but the expected hitting time diverges. 
    Consequently, our upper-bound logic strictly enforces \emph{Positive Almost-Sure Termination} (termination in expected finite time), providing a much stronger and safer verification guarantee than mere probabilistic termination.
\end{remark}

\subsection{Weakest Preconditions for Total Correctness: Semantics and Rules}
\label{subsec:semantic_total_wp}

With the quadratic nature of the operational cost functional firmly established, we can now provide the definitive semantic characterization of the weakest precondition for total correctness. 

\begin{definition}[Semantic Weakest Precondition for Total Correctness] 
\label{def:semantic_wp}
    For any quantum program $S$ and post-condition predicate $Q \in \Pred$, the semantic weakest precondition $\wp(S, Q)$ is defined as the unique predicate in $\Pred$ satisfying the following two conditions:
    \begin{enumerate}[(1)]
        \item \textbf{Validity:} $\models_{tot} \{\wp(S, Q)\} S \{Q\}$.
        \item \textbf{Minimality:} For any $P \in \Pred$, if $\models_{tot} \{P\} S \{Q\}$, then $\wp(S, Q) \sqsubseteq P$.
    \end{enumerate}
\end{definition}

\begin{theorem}[Algebraic Representation of $\wp$]
\label{thm:wp_algebraic_rep}
    For any quantum program $S$ and predicate $Q \in \Pred$, the exact algebraic representation of $\wp(S, Q)$ is uniquely determined by the form addition of the partial correctness weakest liberal precondition and the expected operational cost. That is, for all $\u \in \cH$:
    \[
        \QF_{\wp(S, Q)}[\u] = \QF_{\wlp(S, Q)}[\u] + \QF_{C_S}[\u],
    \]
    where $\wlp(S, Q)$ is the partial correctness operator (\Cref{def:wlp}) and $C_S \in \Pred$ is the cost predicate derived in \Cref{prop:cost_quadratic_form}.
\end{theorem}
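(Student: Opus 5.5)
The plan is to define a candidate predicate $W$ by form addition, $\QF_W[\u]\coloneqq\QF_{\wlp(S,Q)}[\u]+\QF_{C_S}[\u]$, and to show that $W$ satisfies both conditions of \Cref{def:semantic_wp}. Uniqueness of $\wp(S,Q)$ then forces $\wp(S,Q)=W$, by the antisymmetry argument already used at the end of \Cref{thm:wlp_well_defined}.

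The first step is to check that $W\in\Pred$. Both summands are closed positive quadratic forms:
\begin{itemize}
\item $\QF_{\wlp(S,Q)}$ by \Cref{thm:wlp_well_defined};
\item $\QF_{C_S}$ by \Cref{prop:cost_quadratic_form}.
\end{itemize}
The finite-value domain of the sum is the intersection of the two domains, which is a linear subspace. The parallelogram law and homogeneity hold summandwise, so the sum is a positive quadratic form. A sum of two nonnegative lower semicontinuous functionals is lower semicontinuous, so by \Cref{thm:closed_iff_lsc} the sum is closed. (Alternatively, this is the two-term case of the monotone-convergence construction in \Cref{thm:monotone_convergence}.) By \Cref{thm:kato_first}, $W$ is then a unique element of $\Pred$.

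The second step is to compute the extended trace of $W$. Write $\rho=\sum_k p_k\sP_{\u_k}$ as a spectral decomposition. By \Cref{def:extended_trace} and additivity of nonnegative series, $\Tr(W\rho)=\Tr(\wlp(S,Q)\rho)+\Tr(C_S\rho)$. The exact duality in \Cref{thm:wlp_well_defined} rewrites the first term as $\Tr(Q\sem{S}(\rho))$. The last clause of \Cref{prop:cost_quadratic_form} rewrites the second as $\mathcal C(S,\rho)$. Hence
\[
\Tr(W\rho)=\Tr(Q\sem{S}(\rho))+\mathcal C(S,\rho)\qquad\forall\rho\in\pardensity{\cH}.
\]
Validity, $\models_{tot}\{W\}S\{Q\}$, follows immediately, with equality. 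For minimality, take any $P$ with $\models_{tot}\{P\}S\{Q\}$. Then $\Tr(P\rho)\ge\Tr(W\rho)$ for all $\rho$, and \Cref{prop:order_characterization}, applied on pure states, gives $W\sqleq P$.

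The main point needing care is the identity $\Tr(C_S\rho)=\mathcal C(S,\rho)$ for mixed $\rho$. The cost functional is defined structurally on all of $\cT(\cH)_+$, whereas $C_S$ is built only from its values on pure states. I will rely on the stated conclusion of \Cref{prop:cost_quadratic_form}. If that needs justification, I would derive it from finite conic linearity together with lower semicontinuity of $\rho\mapsto\mathcal C(S,\rho)$:
\begin{itemize}
\item Finite conic linearity gives $\mathcal C(S,\rho_N)=\Tr(C_S\rho_N)$ for the finite truncations $\rho_N$ of the spectral sum.
\item Lower semicontinuity gives $\mathcal C(S,\rho)\le\liminf_N\mathcal C(S,\rho_N)$.
\item Monotonicity, which follows from conic additivity since $\rho-\rho_N\ge0$, gives $\mathcal C(S,\rho_N)\le\mathcal C(S,\rho)$.
\end{itemize}
Taking the supremum over $N$ then yields equality. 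A secondary point is the arithmetic convention $a+\infty=\infty$ for $\Tr(Q\sem{S}(\rho))+\mathcal C(S,\rho)$ when some terms are infinite. This is harmless because every quantity is in $[0,\infty]$.
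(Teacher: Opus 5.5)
Your proposal is correct and follows the same route as the paper. You form the sum $\wlp(S,Q)+C_S$, note that it lies in $\Pred$, and read off validity and minimality from the exact duality of \Cref{thm:wlp_well_defined} and the cost trace identity of \Cref{prop:cost_quadratic_form}. You are more careful than the paper: it checks only pure states and asserts validity ``when equality holds'', and it attributes $\Tr(C_S\rho)=\mathcal C(S,\rho)$ merely to the definition of the extended trace. Your explicit mixed-state trace computation, and your derivation of that identity from conic additivity, monotonicity and lower semicontinuity, supply exactly the steps the paper leaves implicit.
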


\begin{proof}
    By the definition of total correctness ($\models_{tot} \{P\} S \{Q\}$), evaluating any valid pre-condition $P$ on a pure state $\sP_{\u}$ yields:
    \[
        \QF_P[\u] \ge \Tr\Big(Q \sem{S}(\sP_{\u})\Big) + \mathcal{C}(S, \sP_{\u}) = \QF_{\wlp(S, Q)}[\u] + \QF_{C_S}[\u], \quad \forall \u \in \cH,
    \]
    where the exact equality invokes \Cref{thm:wlp_well_defined} and \Cref{prop:cost_quadratic_form}. 
    Since the form addition of two closed positive predicates natively constitutes a valid predicate in $\Pred$, this point-wise inequality translates directly to the L\"owner order $\wlp(S, Q) + C_S \sqsubseteq P$. This algebraically guarantees both the \textbf{Validity} (when equality holds) and the strict \textbf{Minimality} condition of \Cref{def:semantic_wp}.
\end{proof}

This formulation elegantly unifies the concepts of logical yield and operational resource consumption into a single semantic algebraic object. Building upon this foundational definition, we first establish the core structural properties of the semantic $\wp$. Subsequently, we will introduce the syntactic $\wp$ rules (the $\wp$ calculus) to enable compositional reasoning over program structures. The culmination of this section will be to establish the soundness of this calculus, rigorously proving that our syntax-directed rules flawlessly compute the exact semantic weakest preconditions.

\begin{proposition}[Structural Properties of $\wp$]
\label{prop:wp_properties}
    Let $S$ be a quantum program. The weakest precondition for total correctness, viewed as a predicate transformer $\wp(S, \cdot): \Pred \to \Pred$, satisfies the following fundamental properties:
    \begin{enumerate}[(1)]
        \item \textbf{Monotonicity:} For any $Q_1, Q_2 \in \Pred$, if $Q_1 \sqsubseteq Q_2$, then $\wp(S, Q_1) \sqsubseteq \wp(S, Q_2)$.
        \item \textbf{$\omega$-Continuity:} For any directed (monotonically increasing) sequence of predicates $\{Q_n\}_{n \in \mathbb{N}}$ in $\Pred$, 
        \[
            \wp\left(S, \bigsqcup_{n} Q_n\right) = \bigsqcup_{n} \wp(S, Q_n).
        \]
    \end{enumerate}
\end{proposition}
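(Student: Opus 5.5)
The plan is to reduce both properties to the corresponding properties of $\wlp$ established in \Cref{thm:wp_properties}, using the algebraic representation of \Cref{thm:wp_algebraic_rep}. For every $Q\in\Pred$ and every $\u\in\cH$ that theorem gives
\[
    \QF_{\wp(S,Q)}[\u]=\QF_{\wlp(S,Q)}[\u]+\QF_{C_S}[\u].
\]
The cost term $C_S$ depends only on $S$ (and the fixed valuation), not on the postcondition. So both properties become statements about adding a fixed extended-valued function to a transformer that is already known to be monotone and $\omega$-continuous.

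For \textbf{monotonicity}, take $Q_1\sqsubseteq Q_2$. By \Cref{thm:wp_properties}(1), $\QF_{\wlp(S,Q_1)}[\u]\le\QF_{\wlp(S,Q_2)}[\u]$ pointwise in $[0,\infty]$. Adding the same value $\QF_{C_S}[\u]\in[0,\infty]$ to both sides preserves the inequality in the extended reals, where $a+\infty=\infty$. By \Cref{def:extended-lowner-order} in its extended-form formulation, this pointwise inequality is exactly $\wp(S,Q_1)\sqsubseteq\wp(S,Q_2)$.

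For \textbf{$\omega$-continuity}, let $Q_n$ be increasing with supremum $Q$. The sequence $\wp(S,Q_n)$ is increasing by the monotonicity just shown. By \Cref{cor:omega-CPO-for-LRs} together with \Cref{thm:monotone_convergence}, its supremum is the predicate whose form is the pointwise $\sup_n$ of the forms. I then compute
\[
    \QF_{\wp(S,Q)}[\u]=\QF_{\wlp(S,Q)}[\u]+\QF_{C_S}[\u]=\sup_n\QF_{\wlp(S,Q_n)}[\u]+\QF_{C_S}[\u],
\]
using Scott-continuity of $\wlp$ from \Cref{thm:wp_properties}(3). The key step is to pull the constant inside the supremum: $\sup_n a_n+c=\sup_n(a_n+c)$ for a nondecreasing sequence $a_n\in[0,\infty]$ and a constant $c\in[0,\infty]$. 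If $c<\infty$ this is ordinary continuity of addition along monotone limits. If $c=\infty$, both sides equal $\infty$. The right-hand side is then $\sup_n\QF_{\wp(S,Q_n)}[\u]$, which is the form of $\bigsqcup_n\wp(S,Q_n)$. Since a predicate is determined by its closed form (\Cref{thm:kato_first}), the two predicates coincide.

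The only point needing care is the extended-real bookkeeping at $\u$ where $\QF_{C_S}[\u]=\infty$, together with checking that the supremum in $\Pred$ really is the pointwise supremum of forms rather than a closure of it. The latter is guaranteed by \Cref{thm:monotone_convergence}, because the pointwise supremum is already closed. I expect no genuine obstacle, since all the analytic work is carried by \Cref{thm:wp_properties} and \Cref{prop:cost_quadratic_form}.
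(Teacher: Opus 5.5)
Your proposal is correct and follows essentially the same route as the paper. The paper also writes $\wp(S,Q)=\wlp(S,Q)+C_S$ via \Cref{thm:wp_algebraic_rep} and inherits monotonicity and $\omega$-continuity from \Cref{thm:wp_properties}, because adding the fixed closed form $C_S$ preserves the extended L\"owner order and monotone suprema; your treatment of points where $\QF_{C_S}[\u]=\infty$, and your check that the supremum in $\Pred$ is the pointwise supremum of forms, spell out details the paper leaves implicit.
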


\begin{proof}
    As established in \Cref{thm:wp_algebraic_rep}, the semantic weakest precondition is exactly determined by the algebraic form addition: $\wp(S, Q) = \wlp(S, Q) + C_S$. 
    By the structural properties of partial correctness (\Cref{thm:wp_properties}), $\wlp(S, \cdot)$ is strictly monotonic and $\omega$-continuous. Since the addition of a fixed closed positive form $C_S$ inherently preserves both the L\"owner order and directed suprema (\Cref{thm:monotone_convergence}), $\wp(S, \cdot)$ trivially inherits Monotonicity and $\omega$-Continuity.
\end{proof}

\begin{remark}[The Affine Nature of $\wp$]
    It is crucial to note that while $\wlp(S, \cdot)$ is a strictly linear predicate transformer (e.g., satisfying countable additivity), $\wp(S, \cdot)$ is inherently \emph{affine} due to the strictly positive cost term $C_S$. Specifically, $\wp(S, Q_1 + Q_2) \neq \wp(S, Q_1) + \wp(S, Q_2)$, as the right-hand side would incorrectly accumulate the operational cost twice. This affine nature correctly captures the physical intuition that resource consumption depends purely on the program's control flow, independent of the logical post-condition being verified.
\end{remark}

Equipped with the rigorous semantic foundation and structural properties, we now operationalize $\wp$ for practical verification. To avoid computing global semantic fixed points manually, we present a syntax-directed calculus. By leveraging the algebraic compositionality of quadratic forms, the following rules construct $\wp(S, Q)$ inductively over the structure of $S$.

\begin{definition}[Weakest Precondition Rules via Quadratic Forms]
\label{def:wp_quadratic_form}
    For any program $S$ and post-condition predicate $Q \in \Pred$, the weakest precondition $\wp(S, Q) \in \Pred$ is syntactically defined by its quadratic form $\QF_{\wp(S, Q)}: \cH \to [0, \infty]$ as follows. Let $\epsilon > 0$ be the uniform minimal base cost.
    \begin{itemize}
        \item \textbf{Basic Operations:} 
        For the empty statement $S \equiv \mathbf{skip}$, $\QF_{\wp(S, Q)}[\u] \coloneqq \QF_Q[\u] + \epsilon \|\u\|^2$.

        For the abort statement $S \equiv \mathbf{abort}$,
        \[
            \QF_{\wp(\mathbf{abort}, Q)}[\u] \coloneqq 
            \begin{cases}
                0, & \text{if } \u = 0, \\
                +\infty, & \text{otherwise}.
            \end{cases}
        \]

        For initialization ($S \equiv q := 0$), let $E_k = \ket{0}_q \bra{k}_q \otimes I_{\text{env}}$ be the global Kraus operators for initialization:         
        \[
            \QF_{\wp(S, Q)}[\u] \coloneqq \sum_{k} \QF_Q[E_k \u] + \epsilon \|\u\|^2.
        \]
        
        For a unitary transformation ($S \equiv U(\seq{e})[\seq{q}]$):
        \[
            \QF_{\wp(S, Q)}[\u] \coloneqq \QF_Q[U_\omega \u] + \epsilon \|\u\|^2.
        \]
        
        \item \textbf{Sequential Composition} ($S \equiv S_1 ; S_2$):
        \[
            \QF_{\wp(S, Q)}[\u] \coloneqq \QF_{\wp(S_1, \wp(S_2, Q))}[\u].
        \]
        
        \item \textbf{Continuous Measurement} ($S \equiv \mathbf{bind}(M(e), x. S(x))$):
        \[
            \QF_{\wp(S, Q)}[\u] \coloneqq \epsilon \|\u\|^2 + \int_{\Omega_M} \QF_{\wp(S_{\omega, \nu}, Q)}[M_{\omega, \nu} \u] \, d\mu(\nu).
        \]
        
        \item \textbf{While Loop} ($S \equiv \mathbf{while} \ M \ \mathbf{do} \ S_{body} \ \mathbf{od}$):
        Let $M = \{M_0, M_1\}$, where $M_0$ exits the loop and $M_1$ continues. We define an ascending chain of predicates $\{X_n\}_{n=0}^\infty \subset \Pred$ representing the $n$-step bounded cost-yield combination. Let the base form be identically zero: $\QF_{X_0}[\u] \coloneqq 0$. For $n \ge 0$:
        \[
            \QF_{X_{n+1}}[\u] \coloneqq \QF_Q[M_0 \u] + \epsilon \|\u\|^2 + \QF_{\wp(S_{body}, X_n)}[M_1 \u].
        \]
        The weakest precondition of the loop is the supremum of this directed sequence:
        \[
            \QF_{\wp(S, Q)}[\u] \coloneqq \sup_{n \ge 0} \QF_{X_{n}}[\u].
        \]
        By \Cref{thm:monotone_convergence}, this supremum over an increasing sequence of closed positive forms uniquely defines a valid predicate in $\Pred$.
    \end{itemize}
\end{definition}

\begin{theorem}[Equivalence of Syntactic and Semantic Total Weakest Preconditions]
\label{thm:wp_equivalence}
    For any quantum program $S$ and post-condition predicate $Q \in \Pred$, the weakest precondition $\wp(S, Q)$ computed via the structural syntactic rules (\Cref{def:wp_quadratic_form}) strictly coincides with the semantic total weakest precondition (\Cref{def:semantic_wp}). 
    That is, for all $\u \in \cH$:
    \[
        \QF_{\wp(S, Q)}[\u] = \QF_{\wlp(S, Q)}[\u] + \QF_{C_S}[\u].
    \]
\end{theorem}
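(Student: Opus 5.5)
The proof will be a structural induction on $S$. The claim to carry through the induction is the following: for every postcondition $Q\in\Pred$, the syntactically generated form $\QF_{\wp(S,Q)}$ is a closed positive form, and it satisfies
\[
    \QF_{\wp(S,Q)}[\u]=\QF_{\wlp(S,Q)}[\u]+\QF_{C_S}[\u]
    \qquad\text{for all }\u\in\cH.
\]
By \Cref{thm:wp_algebraic_rep}, the right-hand side is the semantic $\wp(S,Q)$, so this identity is exactly the theorem. The induction hypothesis is needed for all postconditions at once, since the sequence and loop cases apply it to $\wp(S_2,Q)$ and to the iterates $X_n$. Before starting, I will record one observation. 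Tested against pure states, the identity is equivalent to the trace identity
\[
    \Tr(\wp(S,Q)\rho)=\Tr(Q\sem{S}(\rho))+\mathcal C(S,\rho)
\]
for all $\rho\in\cT(\cH)_+$. This equivalence uses \Cref{thm:wlp_well_defined}(3), \Cref{prop:cost_quadratic_form} and \Cref{def:extended_trace}, so I may work in whichever formulation is more convenient in each case.

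\textbf{Base cases.} For skip, initialization and unitary, I compare the syntactic clause with the $\wlp$ clauses of \Cref{def:wlp_rules}, which \Cref{thm:wlp_soundness} identifies with the semantic $\wlp$. The extra summand $\epsilon\|\u\|^2$ equals $\epsilon\Tr(\sP_\u)=\QF_{C_S}[\u]$. For abort, $\wlp=0$ and $\QF_{C_S}[\u]=\epsilon\|\u\|^2\cdot\infty$, which is $0$ at $\u=0$ and $+\infty$ otherwise; this is exactly the syntactic clause. In each case the generated form is closed because it is a bounded pullback or countable sum of closed forms, as in the proof of \Cref{thm:wlp_well_defined}.

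\textbf{Sequence.} Put $R=\wp(S_2,Q)$. The induction hypothesis gives $R=\wlp(S_2,Q)+C_{S_2}$, and it also gives $\wp(S_1,R)=\wlp(S_1,R)+C_{S_1}$. Using countable additivity of $\wlp(S_1,\cdot)$ from \Cref{thm:wp_properties}, together with $\wlp(S_1,\wlp(S_2,Q))=\wlp(S_1;S_2,Q)$ from \Cref{thm:wlp_soundness}, it remains to identify $\wlp(S_1,C_{S_2})+C_{S_1}$ with $C_{S_1;S_2}$. On a pure state $\sP_\u$, exact duality gives $\Tr(\wlp(S_1,C_{S_2})\sP_\u)=\Tr(C_{S_2}\sem{S_1}(\sP_\u))=\mathcal C(S_2,\sem{S_1}(\sP_\u))$. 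Adding $\mathcal C(S_1,\sP_\u)$ produces exactly the sequential cost clause of \Cref{def:operational_cost}.

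\textbf{Bind.} Here I work at the form level, for each fixed $\u$. The induction hypothesis turns the integrand into $\QF_{\wlp(S_\nu,Q)}[M_\nu\u]+\mathcal C(S_\nu,M_\nu\sP_\u M_\nu^\dagger)$. Both summands are non-negative and measurable in $\nu$: the first by the measurability argument in the bind case of \Cref{thm:wlp_soundness}, the second by admissibility in \Cref{prop:cost_quadratic_form}. The integral therefore splits additively. The first integral is $\QF_{\wlp(T,Q)}[\u]$ by \Cref{thm:wlp_soundness}, and the second integral plus $\epsilon\|\u\|^2$ is the bind cost clause. Closedness of the syntactic form follows because it is the sum of two closed forms.

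\textbf{While loop, the main obstacle.} I will prove by inner induction on $n$ that
\[
    \Tr(X_n\sigma)=\sum_{k=0}^{n-1}\Bigl(\Tr\bigl(Q\,\mathcal E_{exit}(\rho_k)\bigr)+\epsilon\Tr(\rho_k)+\mathcal C(S_{body},M_1\rho_kM_1^\dagger)\Bigr)
\]
for all $\sigma\in\cT(\cH)_+$, with $\rho_0=\sigma$ and $\rho_{k+1}=\mathcal E_{body}(\rho_k)$, mirroring \eqref{eq:wlp-loop-truncation}. In the inductive step I apply the outer induction hypothesis to the body with postcondition $X_n$, which gives
\[
    \Tr\bigl(\wp(S_{body},X_n)M_1\sigma M_1^\dagger\bigr)=\Tr\bigl(X_n\,\mathcal E_{body}(\sigma)\bigr)+\mathcal C(S_{body},M_1\sigma M_1^\dagger),
\]
and then use the inner hypothesis at the shifted input $\mathcal E_{body}(\sigma)$. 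The delicate points are these. First, the summands may be $+\infty$, so the computation must stay in $[0,\infty]$ with the convention $0\cdot\infty=0$. Second, I must check that $(X_n)_n$ is increasing; this follows from monotonicity of $\wp(S_{body},\cdot)$ in \Cref{prop:wp_properties} together with $X_0\sqsubseteq X_1$. Once the identity for each $X_n$ is established, I take the supremum over $n$. By \Cref{cor:mct_traces}, the left side becomes $\Tr(\wp(T,Q)\sigma)$, and by monotone convergence of the non-negative series the right side becomes $\Tr(Q\sem{T}(\sigma))+\mathcal C(T,\sigma)$, as in the loop case of \Cref{thm:wlp_soundness} and the loop clause of \Cref{def:operational_cost}. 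Evaluating at pure states yields the claimed form identity.
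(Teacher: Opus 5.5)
Your proposal is correct and follows essentially the same structural induction as the paper's proof: direct matching for the base cases, additivity of $\wlp(S_1,\cdot)$ plus exact duality for sequencing, splitting the measurable integrand for bind, and an inner induction on the loop approximants $X_n$ that invokes the outer hypothesis for the body at postcondition $X_n$. Your extra bookkeeping (carrying closedness of the syntactic forms through the induction, and stating the inner loop identity for arbitrary $\sigma\in\cT(\cH)_+$ rather than only pure states) makes explicit what the paper uses implicitly when it applies its inner hypothesis at the mixed state $\mathcal E_{body}(\sP_{\u})$.
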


\begin{proof}
    The proof proceeds by structural induction on the syntax of the program $S$. We evaluate both sides on an arbitrary pure state $\u \in \cH$, utilizing the Exact Duality of $\wlp$ (\Cref{thm:wlp_well_defined}) and the structural definition of the cost functional $\mathcal{C}(S, \rho)$ (\Cref{def:operational_cost}).

    \textbf{1. Basic Operations:}
    For $\mathbf{skip}$, the semantic definition yields $\QF_{\wlp}[\u] = \QF_Q[\u]$ and the operational cost is $\QF_{C_{\mathbf{skip}}}[\u] = \epsilon\|\u\|^2$. Their sum is $\QF_Q[\u] + \epsilon\|\u\|^2$, which exactly matches the syntactic rule $\QF_{\wp(\mathbf{skip}, Q)}[\u]$.

    For $\mathbf{abort}$, the $\wlp$ rule gives $\QF_{\wlp(\mathbf{abort}, Q)}[\u] = 0$, while the operational cost is $\QF_{C_{\mathbf{abort}}}[\u] = 0$ if $\u = 0$ and $+\infty$ otherwise. Their sum equals the syntactic rule $\QF_{\wp(\mathbf{abort}, Q)}[\u]$.
    
    For Unitary $U$, $\QF_{\wlp}[\u] = \QF_Q[U\u]$ and the cost is $\epsilon\|\u\|^2$. The sum matches the syntactic rule $\QF_{\wp(U, Q)}[\u] = \QF_Q[U\u] + \epsilon\|\u\|^2$. 
    
    The same trivial equivalence holds for Initialization.

    \textbf{2. Sequential Composition ($S_1 ; S_2$):}
    By the syntactic rule and the induction hypothesis (IH) on $S_1$, we have:
    \begin{align*}
        \QF_{\wp(S_1; S_2, Q)}[\u] &= \QF_{\wp(S_1, \wp(S_2, Q))}[\u] \\
        &= \QF_{\wlp(S_1, \wp(S_2, Q))}[\u] + \QF_{C_{S_1}}[\u]. \quad \text{(by IH on } S_1 \text{)}
    \end{align*}
    Applying the IH on $S_2$, we know $\wp(S_2, Q) = \wlp(S_2, Q) + C_{S_2}$. Because the semantic $\wlp$ transformer is strictly linear over form addition (\Cref{thm:wp_properties}.2), it distributes over the sum:
    \begin{align*}
        \QF_{\wlp(S_1, \wlp(S_2, Q) + C_{S_2})}[\u] &= \QF_{\wlp(S_1, \wlp(S_2, Q))}[\u] + \QF_{\wlp(S_1, C_{S_2})}[\u] \\
        &= \QF_{\wlp(S_1; S_2, Q)}[\u] + \Tr\Big(C_{S_2} \sem{S_1}(\sP_{\u})\Big).
    \end{align*}
    The first term is exactly the semantic partial correctness yield for the composed program. The second term, by the Exact Duality, evaluates to the expected cost of $S_2$ executed on the intermediate state $\sem{S_1}(\sP_{\u})$, which is $\mathcal{C}(S_2, \sem{S_1}(\sP_{\u}))$.
    Substituting this back, the total syntactic form becomes:
    \[
        \QF_{\wlp(S_1; S_2, Q)}[\u] + \Big( \mathcal{C}(S_2, \sem{S_1}(\sP_{\u})) + \mathcal{C}(S_1, \sP_{\u}) \Big).
    \]
    By \Cref{def:operational_cost}, the bracketed sum is exactly the total operational cost $\mathcal{C}(S_1 ; S_2, \sP_{\u}) = \QF_{C_{S_1;S_2}}[\u]$. Thus, $\wp(S_1;S_2, Q)$ computes the exact sum.

        \textbf{3. Continuous Measurement
    (\(T\equiv\bind(M(\seq e),x.S(x))\)).}
    Fix the ambient external valuation and write locally
    \[
        M_\nu\coloneqq M_{\omega,\nu},
        \qquad
        S_\nu\coloneqq (S(x))_{\omega,\nu}.
    \]
    For each fixed \(\u\), the branchwise scalar maps
    \[
        \nu\mapsto \QF_{\wlp(S_\nu,Q)}[M_\nu\u],
        \qquad
        \nu\mapsto \QF_{C_{S_\nu}}[M_\nu\u]
    \]
    are extended non-negative measurable functions; this is exactly the
    measurability supplied by the bind clause for \(\wlp\) and by the
    cost-predicate construction.  Hence the following integrals are
    well-defined as extended non-negative integrals, and their additivity may be
    used without any separate finiteness assumption.

    By the syntactic definition of \(\wp\) and the induction hypothesis for the
    continuation,
    \[
    \begin{aligned}
        \QF_{\wp(T,Q)}[\u]
        &=
        \epsilon\|\u\|^2
        +
        \int_{\Omega_M}
            \QF_{\wp(S_\nu,Q)}[M_\nu\u]
        \,d\mu(\nu)                                                \\
        &=
        \epsilon\|\u\|^2
        +
        \int_{\Omega_M}
        \left(
            \QF_{\wlp(S_\nu,Q)}[M_\nu\u]
            +
            \QF_{C_{S_\nu}}[M_\nu\u]
        \right)
        \,d\mu(\nu)                                                \\
        &=
        \int_{\Omega_M}
            \QF_{\wlp(S_\nu,Q)}[M_\nu\u]
        \,d\mu(\nu)
        +
        \left(
            \epsilon\|\u\|^2
            +
            \int_{\Omega_M}
                \QF_{C_{S_\nu}}[M_\nu\u]
            \,d\mu(\nu)
        \right).
    \end{aligned}
    \]
    The first term is precisely \(\QF_{\wlp(T,Q)}[\u]\).  For the second term,
    the cost-predicate construction gives
    \[
        \QF_{C_{S_\nu}}[M_\nu\u]
        =
        \mathcal C(S_\nu,M_\nu\sP_{\u}M_\nu^\dagger),
    \]
    with the right-hand side interpreted through the trace-class extension
    used in the cost construction.  Therefore the second parenthesis is exactly
    the bind clause for the operational cost:
    \[
        \epsilon\|\u\|^2
        +
        \int_{\Omega_M}
            \QF_{C_{S_\nu}}[M_\nu\u]
        \,d\mu(\nu)
        =
        \mathcal C(T,\sP_\u)
        =
        \QF_{C_T}[\u].
    \]
    Thus
    \[
        \QF_{\wp(T,Q)}[\u]
        =
        \QF_{\wlp(T,Q)}[\u]
        +
        \QF_{C_T}[\u].
    \]

    \textbf{4. While Loop ($\mathbf{while}$):}
    By definition, the semantic total weakest precondition is exactly $\wlp + C_S$. Both components are defined as infinite limits of their $N$-step truncations. 
    Let the single-iteration CPTNI map be $\mathcal{E}_{body}(\rho) \coloneqq \sem{S_{body}}(M_1 \rho M_1^\dagger)$.
    The $N$-step truncated semantic yield on state $\sP_{\u}$ is $Y_N = \sum_{k=0}^{N-1} \Tr\Big(Q M_0 \mathcal{E}_{body}^k(\sP_{\u}) M_0^\dagger\Big)$.
    The $N$-step truncated operational cost is $C_N = \sum_{k=0}^{N-1} \Big[ \epsilon \Tr(\mathcal{E}_{body}^k(\sP_{\u})) + \mathcal{C}(S_{body}, M_1 \mathcal{E}_{body}^k(\sP_{\u}) M_1^\dagger) \Big]$.
    The true semantic value is the supremum $\QF_{\wp}[\u] = \sup_N (Y_N + C_N)$.
    
    We must show that the syntactic sequence $\QF_{X_N}[\u]$ identically tracks $Y_N + C_N$. We proceed by mathematical induction on $N$.
    Base case $N=0$: $X_0 = 0$, and the $0$-step semantic sum is trivially $0$.
    Assume $\QF_{X_N}[\v]$ exactly equals the $N$-step semantic sum (yield + cost) for any input state $\v$. For step $N+1$, the syntactic rule dictates:
    \[
        \QF_{X_{N+1}}[\u] = \QF_Q[M_0 \u] + \epsilon\|\u\|^2 + \QF_{\wp(S_{body}, X_N)}[M_1 \u].
    \]
    By the outer structural induction hypothesis on $S_{body}$, $\wp(S_{body}, X_N)$ is equivalent to its semantic counterpart:
    \begin{align*}
        \QF_{\wp(S_{body}, X_N)}[M_1 \u] &= \QF_{\wlp(S_{body}, X_N)}[M_1 \u] + \QF_{C_{S_{body}}}[M_1 \u] \\
        &= \Tr\Big(X_N \sem{S_{body}}(M_1 \sP_{\u} M_1^\dagger)\Big) + \mathcal{C}(S_{body}, M_1 \sP_{\u} M_1^\dagger) \quad \text{(by Exact Duality)} \\
        &= \Tr\Big(X_N \mathcal{E}_{body}(\sP_{\u})\Big) + \mathcal{C}(S_{body}, M_1 \sP_{\u} M_1^\dagger).
    \end{align*}
    By the inner induction hypothesis on $N$, the term $\Tr(X_N \mathcal{E}_{body}(\sP_{\u}))$ computes exactly the $N$-step yield and cost starting from the intermediate state $\mathcal{E}_{body}(\sP_{\u})$ (which corresponds to iterations $k=1$ to $N$).
    Substituting this back into $X_{N+1}$:
    \[
        \QF_{X_{N+1}}[\u] = \underbrace{\Big[ \QF_Q[M_0 \u] \Big]}_{0\text{-th iter yield}} + \underbrace{\Big[ \epsilon\|\u\|^2 + \mathcal{C}(S_{body}, M_1 \sP_{\u} M_1^\dagger) \Big]}_{0\text{-th iter cost}} + \underbrace{\Big[ \Tr(X_N \mathcal{E}_{body}(\sP_{\u})) \Big]}_{\text{iters } 1 \text{ to } N \text{ (yield+cost)}}.
    \]
    This sum matches $Y_{N+1} + C_{N+1}$ exactly. Since the $N$-step truncations are algebraically identical, their suprema in the $\omega$-CPO coincide:
    \[
        \QF_{\wp(S, Q)}[\u] = \sup_N \QF_{X_N}[\u] = \sup_N (Y_N + C_N).
    \]
    This completes the structural induction, proving the unconditional consistency of the calculus.
\end{proof}

\section{Proof System for Partial Correctness}
\label{sec:proof_system}

In this section, we present the inference rules for the extended quantum Hoare logic and establish its metatheoretical properties for partial correctness. 
The proof system is designed to derive valid triples of the form $\vdash \{P\} S \{Q\}$, where $P, Q \in \Pred$.

\paragraph*{Notational Convention: Pullbacks and Integration of Predicates.}
In the subsequent logical rules and proof systems (Sec.~5 and 6), we frequently express predicate transformations using operator algebraic notation, such as $U^\dagger P U$ or the integral $\int_{\Omega_M} M_\nu^\dagger P_\nu M_\nu \, d\mu(\nu)$. Because a predicate $P \in \Pred$ is generally an unbounded, multivalued linear relation, standard operator multiplication and operator-valued integration are ill-defined due to domain mismatches and topological ambiguities. 

Therefore, these expressions must be strictly understood as syntactic shorthands for operations on \emph{quadratic forms}. Specifically:
\begin{enumerate}
    \item \textbf{Bounded Pullback:} For any bounded operator $K \in \cB(\cH)$, the notation $K^\dagger P K$ denotes the unique predicate in $\Pred$ defined algebraically by its quadratic form:
    \[
        \QF_{K^\dagger P K}[\u] \coloneqq \QF_P[K \u], \quad \forall \u \in \cH.
    \]
    By exact linearity, this notation extends seamlessly to countable Kraus representations, denoted as $\sum_i K_i^\dagger P K_i$.
    
    \item \textbf{Predicate Integration:} For continuous measurements over a measure space $(\Omega_M, \Sigma_M, \mu)$, assume that the predicate family $\{P_\nu\}_{\nu\in\Omega_M}\subseteq\Pred$ is $M$-admissible: for every $\u\in\cH$, the scalar map
    \[
        \nu \longmapsto \QF_{P_\nu}[M_\nu\u]
    \]
    is $(\Sigma_M,\operatorname{Borel}([0,\infty]))$-measurable as an $[0,\infty]$-valued map. Under this side condition, the integral notation $\int_{\Omega_M} M_\nu^\dagger P_\nu M_\nu \, d\mu(\nu)$ denotes the unique predicate in $\Pred$ defined point-wise by the Lebesgue integral of the scalar quadratic forms:
    \[
        \QF_{\int_{\Omega_M} M_\nu^\dagger P_\nu M_\nu \, d\mu(\nu)}[\u] \coloneqq \int_{\Omega_M} \QF_{P_\nu}[M_\nu \u] \, d\mu(\nu), \quad \forall \u \in \cH.
    \]
\end{enumerate}
As established in \Cref{def:wlp_rules} and \Cref{thm:wlp_soundness}, these syntactic notations are well-defined and rigorously grounded in our $\omega$-CPO predicate space.

\subsection{Inference Rules}
\label{subsec:inference_rules}

The proof system, denoted by $\mathcal{L}_{par}$, consists of the following axioms and inference rules. 
The validity of these rules relies on the structurally defined $\wlp$-calculus established in \Cref{sec:logic}.

\begin{enumerate}
    \item \textbf{Axiom of Skip:}
    \[
        \inferrule{ }{ \{ Q \} \ \mathbf{skip} \ \{ Q \} }
    \]

    \item \textbf{Axiom of Abort:}
    \[
        \inferrule{ }{ \{ \mathbf{0} \} \ \mathbf{abort} \ \{ Q \} }
    \]
    Here $\mathbf{0}$ denotes the zero predicate.
    
    \item \textbf{Axiom of Initialization ($q:=0$):}
    Let $E_k = |0\rangle_q \langle k|_q \otimes I_{\text{env}}$ be the initialization Kraus operators.
    \[
        \inferrule{ }{ \left\{ \sum_{k} E_k^\dagger P E_k \right\} \ q:=0 \ \{ P \} }
    \]
    (Note: The sum is defined pointwise via quadratic forms: $\QF_{pre}[\u] = \sum_k \QF_P[E_k \u]$).

    \item \textbf{Axiom of Unitary Transformation ($U[\vec{q}]$):}
    \[
        \inferrule{ }{ \{ U^\dagger P U \} \ U[\vec{q}] \ \{ P \} }
    \]
    
    \item \textbf{Rule of Sequential Composition:}
    \[
        \inferrule{ \{P\} \ S_1 \ \{R\} \\ \{R\} \ S_2 \ \{Q\} }{ \{P\} \ S_1 ; S_2 \ \{Q\} }
    \]

    \item \textbf{Rule of Continuous Measurement ($\mathbf{bind}$):}
    For a continuous measurement $M$ over a measure space $(\Omega_M, \Sigma_M, \mu)$:
    the predicate family $\{P_\nu\}_{\nu\in\Omega_M}$ is required to be $M$-admissible, i.e., $\nu\mapsto\QF_{P_\nu}[M_\nu\u]$ is $\Sigma_M$-measurable for every $\u\in\cH$.
    \[
        \inferrule{ \text{for every } \nu \in \Omega_M, \ \{ P_\nu \} \ S_\nu \ \{ Q \} }{ \left\{ \int_{\Omega_M} M_\nu^\dagger P_\nu M_\nu \, d\mu(\nu) \right\} \ \mathbf{bind}(M, x.S) \ \{ Q \} }
    \]
    (The integral predicate is defined via its form: $\QF_{pre}[\u] = \int_{\Omega_M} \QF_{P_\nu}[M_\nu \u] \, d\mu(\nu)$.)

    \item \textbf{Rule of While Loop (Invariant Rule):}
    For a loop $S \equiv \mathbf{while} \ M \ \mathbf{do} \ S_{body} \ \mathbf{od}$ with measurement $M = \{M_0, M_1\}$, where $M_0$ exits the loop.
    \[
        \inferrule{ P \sqsupseteq M_0^\dagger Q M_0 + M_1^\dagger R M_1 \\ \{ R \} \ S_{body} \ \{ P \} }{ \{ P \} \ \mathbf{while} \ M \ \mathbf{do} \ S_{body} \ \mathbf{od} \ \{ Q \} }
    \]
    Here, $P \in \Pred$ serves as the \emph{loop invariant} for partial correctness. It must safely upper-bound the combined yield of exiting the loop immediately and executing the body for another iteration.

    \item \textbf{Rule of Consequence:}
    \[
        \inferrule{ P \sqsupseteq P' \\ \{P'\} \ S \ \{Q'\} \\ Q' \sqsupseteq Q }{ \{P\} \ S \ \{Q\} }
    \]
    (Recall that in our Upper-Bound Logic, $P \sqsupseteq P'$ means $P$ requires \emph{more} initial resource than $P'$, hence it is a logically weaker/safer precondition.)
\end{enumerate}

\subsection{Soundness}

\begin{theorem}[Soundness of $\mathcal{L}_{par}$]
\label{Thm:soundness}
    The proof system $\mathcal{L}_{par}$ is sound. If $\vdash \{P\} S \{Q\}$ is derivable, then it is semantically valid: $\models_{par} \{P\} S \{Q\}$.
\end{theorem}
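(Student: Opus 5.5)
We prove soundness by induction on the height of the derivation of $\vdash\{P\}S\{Q\}$, reducing everything to the single inequality $\wlp(S,Q)\sqsubseteq P$. By \Cref{thm:wlp_well_defined} (Exact Duality and Semantic Minimality), $\models_{par}\{P\}S\{Q\}$ holds iff $\Tr(P\rho)\ge\Tr(\wlp(S,Q)\rho)$ for all $\rho\in\pardensity{\cH}$. By \Cref{prop:order_characterization}, this is in turn equivalent to $\wlp(S,Q)\sqsubseteq P$. So for each rule we show that this order inequality for the conclusion follows from the corresponding inequalities for the premises. We use freely the structural clauses of \Cref{def:wlp_rules}, which \Cref{thm:wlp_soundness} identifies with the semantic $\wlp$.

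\textbf{Axioms.} For skip, abort, initialization and unitary, the displayed precondition is literally the form given by the corresponding clause of \Cref{def:wlp_rules}, so equality holds.

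\textbf{Consequence.} From $\wlp(S,Q')\sqsubseteq P'\sqsubseteq P$ and $Q\sqsubseteq Q'$, monotonicity (\Cref{thm:wp_properties}) gives $\wlp(S,Q)\sqsubseteq\wlp(S,Q')\sqsubseteq P$.

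\textbf{Sequencing.} From $\wlp(S_2,Q)\sqsubseteq R$ and $\wlp(S_1,R)\sqsubseteq P$, monotonicity gives $\wlp(S_1;S_2,Q)=\wlp(S_1,\wlp(S_2,Q))\sqsubseteq\wlp(S_1,R)\sqsubseteq P$.

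\textbf{Bind.} The induction hypothesis gives $\QF_{\wlp(S_\nu,Q)}[\v]\le\QF_{P_\nu}[\v]$ for every $\nu$ and every $\v$. Substituting $\v=M_\nu\u$ and integrating, which is monotonicity of the extended Lebesgue integral, yields the claim. Both integrands are measurable: the left one by the bind case of \Cref{thm:wlp_soundness}, the right one by $M$-admissibility. This is only needed to make sense of the integrals and requires no further argument.

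\textbf{While.} Here there is real content, and this is where I expect the main obstacle. Let $X_n$ be the approximants of \Cref{def:wlp_rules}. We show by induction on $n$ that $X_n\sqsubseteq P$. The base case $X_0=0$ is trivial. For the step, assume $X_n\sqsubseteq P$. The premise $\{R\}S_{body}\{P\}$ gives $\wlp(S_{body},P)\sqsubseteq R$ by the outer induction hypothesis. Monotonicity then gives $\wlp(S_{body},X_n)\sqsubseteq R$. Hence, for every $\u$,
\[
\QF_{X_{n+1}}[\u]=\QF_Q[M_0\u]+\QF_{\wlp(S_{body},X_n)}[M_1\u]\le \QF_Q[M_0\u]+\QF_R[M_1\u]\le\QF_P[\u],
\]
where the last step is the side condition. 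Taking the supremum over $n$, \Cref{cor:omega-CPO-for-LRs} gives $\wlp(\mathbf{while},Q)=\bigsqcup_nX_n\sqsubseteq P$.

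The only delicate point is to make sure the form-level side condition $P\sqsupseteq M_0^\dagger QM_0+M_1^\dagger RM_1$ is read pointwise as extended forms, including $+\infty$ values. It is an inequality of extended forms, so the chain of inequalities above is legitimate even where the values are infinite.
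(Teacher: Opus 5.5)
Your proposal is correct and follows the paper's proof essentially step for step. You reduce validity to $\wlp(S,Q)\sqsubseteq P$ via Semantic Minimality, discharge the axioms by coincidence with the structural $\wlp$ clauses, and use monotonicity for sequencing and consequence. For bind you integrate the pointwise form inequalities, and for while you show $X_n\sqsubseteq P$ by induction before passing to the supremum. The only cosmetic difference is in the while case: you chain $\wlp(S_{body},X_n)\sqsubseteq\wlp(S_{body},P)\sqsubseteq R$ directly, whereas the paper first substitutes the induction hypothesis into the side condition, but this is the same argument.
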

\begin{proof}
    By the Semantic Minimality established in \Cref{thm:wlp_well_defined}, $\models_{par} \{P\} S \{Q\}$ is mathematically equivalent to $P \sqsupseteq \wlp(S, Q)$. We prove soundness by induction on the structure of the derivation rules, showing that every derived precondition $P$ covers the semantic $\wlp$.

    \textbf{1. Basic Axioms:}
    For $\mathbf{skip}$, Unitary, and Initialization, the precondition given in the axioms exactly coincides with the structural $\wlp$ rules (\Cref{def:wlp_rules}). Thus, equality $P = \wlp(S, Q)$ holds trivially, which implies $P \sqsupseteq \wlp(S, Q)$.

    For $\mathbf{abort}$, the axiom gives $P = \mathbf{0}$ and the structural $\wlp$ rule gives $\wlp(\mathbf{abort}, Q) = \mathbf{0}$. Hence $\mathbf{0} \sqsupseteq \mathbf{0}$ trivially.

    \textbf{2. Sequential Composition:}
    Assume $P \sqsupseteq \wlp(S_1, R)$ and $R \sqsupseteq \wlp(S_2, Q)$. By the monotonicity of the $\wlp$ transformer (\Cref{thm:wp_properties}.1), $\wlp(S_1, R) \sqsupseteq \wlp(S_1, \wlp(S_2, Q))$. Thus $P \sqsupseteq \wlp(S_1 ; S_2, Q)$.

    \textbf{3. Continuous Measurement ($\mathbf{bind}$):}
    Assume $\models_{par} \{ P_\nu \} S_\nu \{ Q \}$ for almost all $\nu \in \Omega_M$. By induction hypothesis, $P_\nu \sqsupseteq \wlp(S_\nu, Q)$, which means $\QF_{P_\nu}[\v] \ge \QF_{\wlp(S_\nu, Q)}[\v]$ for all $\v$.
    The derived precondition is $P_{pre} = \int_{\Omega_M} M_\nu^\dagger P_\nu M_\nu \, d\mu(\nu)$.
    Evaluating its quadratic form for any $\u \in \cH$:
    \begin{align*}
        \QF_{P_{pre}}[\u] &= \int_{\Omega_M} \QF_{P_\nu}[M_\nu \u] \, d\mu(\nu) \\
        &\ge \int_{\Omega_M} \QF_{\wlp(S_\nu, Q)}[M_\nu \u] \, d\mu(\nu) \\
        &= \QF_{\wlp(\mathbf{bind}, Q)}[\u].
    \end{align*}
    Thus $P_{pre} \sqsupseteq \wlp(\mathbf{bind}, Q)$, proving the rule sound.

    \textbf{4. While Loop ($\mathbf{while}$):}
    Assume $P \sqsupseteq M_0^\dagger Q M_0 + M_1^\dagger R M_1$ and $\models_{par} \{ R \} S_{body} \{ P \}$. 
    By the induction hypothesis, $R \sqsupseteq \wlp(S_{body}, P)$. 
    Substituting this into the invariant inequality, we obtain:
    \[
        P \sqsupseteq M_0^\dagger Q M_0 + M_1^\dagger \wlp(S_{body}, P) M_1.
    \]
    We must show $P \sqsupseteq \wlp(S, Q) = \bigsqcup_n X_n$. We proceed by induction on $n$.
    For $n=0$, $X_0 = 0 \sqsubseteq P$ is trivial since $P$ is positive.
    Assume $X_n \sqsubseteq P$. By the monotonicity of $\wlp$, $\wlp(S_{body}, X_n) \sqsubseteq \wlp(S_{body}, P)$. 
    For the next iteration:
    \begin{align*}
        X_{n+1} &= M_0^\dagger Q M_0 + M_1^\dagger \wlp(S_{body}, X_n) M_1 \\
        &\sqsubseteq M_0^\dagger Q M_0 + M_1^\dagger \wlp(S_{body}, P) M_1 \\
        &\sqsubseteq P.
    \end{align*}
    Since $P$ is an upper bound for the entire sequence $\{X_n\}$, it must bound their supremum. Thus, $P \sqsupseteq \bigsqcup_n X_n = \wlp(S, Q)$, ensuring semantic validity.

    \textbf{5. Consequence:}
    Assume $P \sqsupseteq P'$, $P' \sqsupseteq \wlp(S, Q')$, and $Q' \sqsupseteq Q$. 
    By the monotonicity of $\wlp$, $Q' \sqsupseteq Q \implies \wlp(S, Q') \sqsupseteq \wlp(S, Q)$. 
    Transitivity of the L\"owner order yields $P \sqsupseteq P' \sqsupseteq \wlp(S, Q') \sqsupseteq \wlp(S, Q)$, maintaining soundness.
\end{proof}

\subsection{Relative Completeness}

\begin{theorem}[Relative Completeness of $\mathcal{L}_{par}$]
\label{Thm:completeness}
    The proof system $\mathcal{L}_{par}$ is complete relative to the underlying mathematics of linear relations. If $\models_{par} \{P\} S \{Q\}$, then $\vdash \{P\} S \{Q\}$ is derivable in $\mathcal{L}_{par}$.
\end{theorem}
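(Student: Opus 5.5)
The plan is the standard weakest-precondition argument. By the Semantic Minimality part of \Cref{thm:wlp_well_defined}, the hypothesis $\models_{par}\{P\}S\{Q\}$ is equivalent to $P\sqsupseteq\wlp(S,Q)$. So it suffices to prove the following \emph{key lemma}:
\[
    \vdash\{\wlp(S,Q)\}\,S\,\{Q\}\quad\text{for every program } S \text{ and every } Q\in\Pred .
\]
The Rule of Consequence, with $P\sqsupseteq\wlp(S,Q)$ and $Q'=Q$, then gives $\vdash\{P\}S\{Q\}$. Throughout I identify the structurally generated predicate with the semantic one via \Cref{thm:wlp_soundness}. This lets me compute $\wlp$ by the clauses of \Cref{def:wlp_rules}.

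The key lemma goes by structural induction on $S$, for all postconditions $Q$ simultaneously.
\begin{itemize}
    \item \textbf{Primitives.} For $\mathbf{skip}$, $\mathbf{abort}$, initialization and unitary, the axioms have exactly the structural $\wlp$ as precondition. The abort axiom gives $\mathbf 0$, and $\wlp(\mathbf{abort},Q)=\mathbf 0$.
    \item \textbf{Sequencing.} Let $R\coloneqq\wlp(S_2,Q)$. The induction hypothesis gives $\vdash\{\wlp(S_1,R)\}S_1\{R\}$ and $\vdash\{R\}S_2\{Q\}$. The sequential rule then yields $\vdash\{\wlp(S_1;S_2,Q)\}S_1;S_2\{Q\}$.
    \item \textbf{Bind.} Take $P_\nu\coloneqq\wlp(S_\nu,Q)$. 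The induction hypothesis applies to every continuation $S_\nu$, so the premise holds for every $\nu$. The only side condition is $M$-admissibility of $\{P_\nu\}$, i.e.\ measurability of $\nu\mapsto\QF_{\wlp(S_\nu,Q)}[M_\nu\u]$. This is exactly what was shown in the bind case of \Cref{thm:wlp_soundness}, via the spectral approximants $Q_m$ and measurability of the branch states $F_{\u}(\nu)$. The conclusion of the bind rule is then precisely the structural $\wlp$ of bind.
\end{itemize}

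\textbf{While loop.} Set $P\coloneqq\wlp(T,Q)=\bigsqcup_n X_n$ and $R\coloneqq\wlp(S_{body},P)$. The induction hypothesis on the body, applied to the postcondition $P$, gives $\vdash\{R\}S_{body}\{P\}$. It remains to check the invariant side condition
\[
    P\sqsupseteq M_0^\dagger QM_0+M_1^\dagger\wlp(S_{body},P)M_1 .
\]
I will prove equality, i.e.\ that $P$ is a fixed point. Start from $X_{n+1}=M_0^\dagger QM_0+M_1^\dagger\wlp(S_{body},X_n)M_1$. Take suprema, and use Scott-continuity of $\wlp(S_{body},\cdot)$ from \Cref{thm:wp_properties}. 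Finally, exchange the supremum with the bounded pullback by $M_1$ and with the addition of the fixed form $\QF_Q[M_0\u]$. Both exchanges are pointwise monotone limits of forms, justified by \Cref{thm:monotone_convergence}. This gives
\[
    \QF_P[\u]=\QF_Q[M_0\u]+\QF_{\wlp(S_{body},P)}[M_1\u].
\]
The while rule then gives $\vdash\{P\}T\{Q\}$.

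\textbf{Expected obstacles.} The main delicate points are two. The first is the measurability requirement in the bind case, which must be recovered from the semantic measurability of \Cref{thm:semantics_well_defined} rather than assumed. The second is the loop fixed-point equation, where the supremum must be commuted through the body transformer; Scott-continuity handles this. Note also that the completeness argument requires the induction hypothesis for arbitrary postconditions, since the body is instantiated with the postcondition $P$ that depends on the loop.
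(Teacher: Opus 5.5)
Your proof is correct and follows the paper's argument. Semantic Minimality reduces completeness to deriving $\{\wlp(S,Q)\}\,S\,\{Q\}$, and Consequence then recovers $\{P\}\,S\,\{Q\}$. A structural induction over all postconditions finishes the job, with the loop case taking $\wlp(S,Q)$ itself as the invariant and $R=\wlp(S_{body},\wlp(S,Q))$, justified by the Scott-continuity fixed-point equation. You also make explicit two points the paper only asserts: the $M$-admissibility side condition of the bind rule, taken from the measurability argument in the $\wlp$ soundness theorem, and the pointwise form-level check that the loop supremum is a fixed point.
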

\begin{proof}
    Suppose $\models_{par} \{P\} S \{Q\}$. By Semantic Minimality, we have $P \sqsupseteq \wlp(S, Q)$. 
    If we can prove that the specific triple $\vdash \{\wlp(S, Q)\} S \{Q\}$ is always derivable, we can simply apply the \textbf{Rule of Consequence} (with $P' = \wlp(S, Q)$ and $Q' = Q$) to immediately derive $\vdash \{P\} S \{Q\}$. 
    Thus, it suffices to prove $\vdash \{\wlp(S, Q)\} S \{Q\}$ for all programs $S$ by structural induction.

    \textbf{1. Basic Axioms and Composition:}
    For $\mathbf{skip}$, Initialization, Unitary, and Sequential Composition, the derivability follows identically to classical Hoare logic by directly applying the respective axioms/rules backwards from $Q$.
    
     For $\mathbf{abort}$, the semantic weakest liberal precondition is $\wlp(\mathbf{abort}, Q) = \mathbf{0}$ (by \Cref{def:wlp_rules}). The triple $\vdash \{\mathbf{0}\} \ \mathbf{abort} \ \{Q\}$ is exactly the \textbf{Axiom of Abort}, hence trivially derivable.
     
    \textbf{2. Continuous Measurement ($\mathbf{bind}$):}
    Let $P_\nu \coloneqq \wlp(S_\nu, Q)$. By the induction hypothesis, $\vdash \{P_\nu\} S_\nu \{Q\}$ is derivable for all $\nu \in \Omega_M$.
    Applying the \textbf{Rule of Continuous Measurement}, we immediately derive:
    \[
        \vdash \left\{ \int_{\Omega_M} M_\nu^\dagger \wlp(S_\nu, Q) M_\nu \, d\mu(\nu) \right\} \ \mathbf{bind}(M, x.S_x) \ \{ Q \}.
    \]
    By \Cref{def:wlp_rules}, the derived precondition is strictly equal to $\wlp(\mathbf{bind}, Q)$.

    \textbf{3. While Loop ($\mathbf{while}$):}
    Let $S \equiv \mathbf{while} \ M \ \mathbf{do} \ S_{body} \ \mathbf{od}$. We must derive $\vdash \{\wlp(S, Q)\} S \{Q\}$.
    Let $P_{inv} \coloneqq \wlp(S, Q) = \bigsqcup_n X_n$. 
    Because the $\wlp$ transformer is Scott-continuous on the $\omega$-CPO of predicates (\Cref{thm:wp_properties}), the supremum $P_{inv}$ acts as the least fixed point of the semantic unfolding equation:
    \[
        P_{inv} = M_0^\dagger Q M_0 + M_1^\dagger \wlp(S_{body}, P_{inv}) M_1.
    \]
    Let $R \coloneqq \wlp(S_{body}, P_{inv})$. By the structural induction hypothesis for $S_{body}$, the triple $\vdash \{R\} S_{body} \{P_{inv}\}$ is derivable.
    We also possess the exact algebraic equality (which trivially satisfies the $\sqsupseteq$ condition in the rule premise):
    \[
        P_{inv} \sqsupseteq M_0^\dagger Q M_0 + M_1^\dagger R M_1.
    \]
    Taking $P_{inv}$ as the loop invariant and $R$ as the intermediate predicate, we apply the \textbf{Rule of While Loop} to directly derive:
    \[
        \vdash \{P_{inv}\} \ \mathbf{while} \ M \ \mathbf{do} \ S_{body} \ \mathbf{od} \ \{Q\}.
    \]
    Since $P_{inv}$ is precisely $\wlp(S, Q)$, the completeness is established.
\end{proof}
\section{Proof System for Total Correctness}
\label{sec:proof_system_total}

In this section, we present the proof system $\mathcal{L}_{tot}$ for total correctness, establishing strict guarantees for both the expected yield of the post-condition and the Positive Almost-Sure Termination (AST) of the quantum program. 
The system derives valid triples denoted by $\vdash_{tot} \{P\} S \{Q\}$, where $P$ acts as a unified upper bound bounding both the logical yield and the operational cost.

\subsection{Inference Rules}
\label{subsec:inference_rules_tot}

The rules of $\mathcal{L}_{tot}$ closely mirror those of partial correctness ($\mathcal{L}_{par}$), reflecting the structural harmony of our calculus. The critical departure lies in the `while` loop, which now mandates the inclusion of a strict ranking witness (potential function) to enforce termination. 
The validity of these rules is grounded in the total weakest precondition calculus ($\wp$) established earlier, where $\wp(S, Q) \coloneqq \wlp(S, Q) + \mathcal{C}(S)$.

\begin{enumerate}
    \item \textbf{Axiom of Skip:}
    \[
        \inferrule{ }{ \vdash_{tot} \{ Q + \epsilon I \} \ \mathbf{skip} \ \{ Q \} }
    \]

    \item \textbf{Axiom of Abort:}
    \[
        \inferrule{ }{ \vdash_{tot} \{ \mathbf{\infty} \} \ \mathbf{abort} \ \{ Q \} }
    \]
    where $\mathbf{\infty}$ is defined via its quadratic form as in \Cref{def:wp_quadratic_form}: 
    $\QF_{\mathbf{\infty}}[\u] = 0$ if $\u = 0$, and $+\infty$ otherwise.
    
    \item \textbf{Axiom of Initialization ($q:=0$):}
    Let $E_k = |0\rangle_q \langle k|_q \otimes I_{\text{env}}$ be the initialization Kraus operators.
    \[
        \inferrule{ }{ \vdash_{tot} \left\{ \epsilon I + \sum_{k} E_k^\dagger P E_k \right\} \ q:=0 \ \{ P \} }
    \]

    \item \textbf{Axiom of Unitary Transformation ($U[\vec{q}]$):}
    \[
        \inferrule{ }{ \vdash_{tot} \{ \epsilon I + U^\dagger P U \} \ U[\vec{q}] \ \{ P \} }
    \]
    
    \item \textbf{Rule of Sequential Composition:}
    \[
        \inferrule{ \vdash_{tot} \{P\} \ S_1 \ \{R\} \\ \vdash_{tot} \{R\} \ S_2 \ \{Q\} }{ \vdash_{tot} \{P\} \ S_1 ; S_2 \ \{Q\} }
    \]

    \item \textbf{Rule of Continuous Measurement ($\mathbf{bind}$):}
    For a continuous measurement $M$ over a measure space $(\Omega_M, \Sigma_M, \mu)$ with base execution cost $\epsilon > 0$:
    the predicate family $\{P_\nu\}_{\nu\in\Omega_M}$ is required to be $M$-admissible, i.e., $\nu\mapsto\QF_{P_\nu}[M_\nu\u]$ is $\Sigma_M$-measurable for every $\u\in\cH$.
    \[
        \inferrule{ \forall \nu \in \Omega_M, \ \vdash_{tot} \{ P_\nu \} \ S_\nu \ \{ Q \} }{ \vdash_{tot} \left\{ \epsilon I + \int_{\Omega_M} M_\nu^\dagger P_\nu M_\nu \, d\mu(\nu) \right\} \ \mathbf{bind}(M, x.S_x) \ \{ Q \} }
    \]

    \item \textbf{Rule of While Loop (Invariant and Potential Rule):}
    For a loop $S \equiv \mathbf{while} \ M \ \mathbf{do} \ S_{body} \ \mathbf{od}$ with measurement $M = \{M_0, M_1\}$, where $M_0$ exits the loop, and $\epsilon > 0$ is the base cost of guard evaluation.
    \[
        \inferrule{ P \sqsupseteq M_0^\dagger Q M_0 + \epsilon I + M_1^\dagger R M_1 \\ \vdash_{tot} \{ R \} \ S_{body} \ \{ P \} }{ \vdash_{tot} \{ P \} \ \mathbf{while} \ M \ \mathbf{do} \ S_{body} \ \mathbf{od} \ \{ Q \} }
    \]
    
    \begin{remark}[Unified Invariant and Ranking Witness]
        This rule captures the essence of total correctness in our upper-bound logic. The predicate $P \in \Pred$ serves simultaneously as a \emph{logical loop invariant} (bounding the yield $Q$) and a \emph{decreasing potential function} (bounding the termination cost). The explicit inclusion of the coercive term $\epsilon I$ algebraically guarantees that each iteration strictly consumes at least $\epsilon$ physical resource. By \Cref{prop:finite_cost_termination}, if $\QF_P[\u] < \infty$, the expected cost is bounded, forcing the probability of infinite non-termination to strictly vanish (Positive AST).
    \end{remark}

    \item \textbf{Rule of Consequence:}
    \[
        \inferrule{ P \sqsupseteq P' \\ \vdash_{tot} \{P'\} \ S \ \{Q'\} \\ Q' \sqsupseteq Q }{ \vdash_{tot} \{P\} \ S \ \{Q\} }
    \]
\end{enumerate}

\subsection{Soundness}

\begin{theorem}[Soundness of $\mathcal{L}_{tot}$]
\label{thm:soundness_tot}
    The proof system $\mathcal{L}_{tot}$ is sound. If $\vdash_{tot} \{P\} S \{Q\}$ is derivable, then it is semantically valid for total correctness: $P \sqsupseteq \wp(S, Q)$.
\end{theorem}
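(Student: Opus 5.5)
The proof goes by induction on the derivation of $\vdash_{tot}\{P\}S\{Q\}$. In every case we show $P \sqsupseteq \wp(S,Q)$, where $\wp$ is computed by the structural rules of \Cref{def:wp_quadratic_form}. By \Cref{thm:wp_equivalence} this predicate equals $\wlp(S,Q)+C_S$. By \Cref{thm:wp_algebraic_rep} together with \Cref{def:semantic_wp}, $P\sqsupseteq\wp(S,Q)$ is equivalent to $\models_{tot}\{P\}S\{Q\}$. The whole argument therefore reduces to order comparisons of closed forms. The monotonicity of $\wp(S,\cdot)$ from \Cref{prop:wp_properties} does the work that monotonicity of $\wlp$ did in \Cref{Thm:soundness}.

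\textbf{Axioms and simple rules.} For skip, initialization and unitary, the precondition in the axiom is literally the syntactic $\wp$ clause. The summand $\epsilon I$ has form $\epsilon\|\u\|^2$, so equality holds. For abort, the predicate $\mathbf{\infty}$ coincides with $\QF_{\wp(\mathbf{abort},Q)}$. For sequencing, the hypotheses are $P\sqsupseteq\wp(S_1,R)$ and $R\sqsupseteq\wp(S_2,Q)$. Monotonicity gives $P\sqsupseteq\wp(S_1,\wp(S_2,Q))$, which is the sequential clause. For consequence, we chain $P\sqsupseteq P'\sqsupseteq\wp(S,Q')\sqsupseteq\wp(S,Q)$, again using monotonicity in the postcondition. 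For bind, the induction hypothesis gives $\QF_{P_\nu}[\v]\ge\QF_{\wp(S_\nu,Q)}[\v]$ for every $\nu$ and every $\v$. The measurability needed on each side comes from $M$-admissibility of $\{P_\nu\}$ and from the bind case of \Cref{thm:wp_equivalence}. We then set $\v=M_\nu\u$, integrate the nonnegative extended functions, and add $\epsilon\|\u\|^2$. This reproduces the bind clause with $\ge$.

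\textbf{While loop, the main step.} Suppose $P\sqsupseteq M_0^\dagger QM_0+\epsilon I+M_1^\dagger RM_1$ and $R\sqsupseteq\wp(S_{body},P)$, the latter by the induction hypothesis. Substituting gives the fixed-point-style inequality
\[
P\sqsupseteq M_0^\dagger QM_0+\epsilon I+M_1^\dagger\wp(S_{body},P)M_1 .
\]
Here $M_1^\dagger(\cdot)M_1$ is the bounded pullback, which is monotone because it acts pointwise on forms. We then show $X_n\sqsubseteq P$ for the approximants $X_n$ of \Cref{def:wp_quadratic_form}, by induction on $n$. The case $X_0=0$ is immediate. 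For the step, $X_n\sqsubseteq P$ gives $\wp(S_{body},X_n)\sqsubseteq\wp(S_{body},P)$ by monotonicity, hence $X_{n+1}\sqsubseteq P$. Since $\wp(S,Q)=\sup_nX_n$ is the least upper bound in the $\omega$-CPO (\Cref{cor:omega-CPO-for-LRs}), we conclude $P\sqsupseteq\wp(S,Q)$.

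The delicate point is that, unlike the partial case, $\wp(S_{body},\cdot)$ is only affine. Soundness nevertheless uses only monotonicity and the least-upper-bound property, so affinity causes no trouble. One must also check that $+\infty$ values are handled consistently: each comparison is a pointwise inequality in $[0,\infty]$, which is stable under the bounded pullback, addition, integration and suprema. The termination content then follows a posteriori from \Cref{prop:finite_cost_termination}; it is not needed in the proof itself.
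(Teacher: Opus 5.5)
Your proposal is correct and follows the paper's proof essentially step for step: induction over the derivation, with the axioms matching the $\wp$ clauses exactly, monotonicity of $\wp(\cdot)$ in the postcondition for sequencing and consequence, pointwise integration of the branch inequalities for bind, and for the loop the induction $X_n\sqsubseteq P$ followed by taking the least upper bound. Your extra remarks only add bookkeeping to the same argument: measurability of the branch integrands, and the reduction of $\models_{tot}$ to $P\sqsupseteq\wp(S,Q)$ via $\wp(S,Q)=\wlp(S,Q)+C_S$.
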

\begin{proof}
    The proof proceeds by structural induction on the derivation rules, ensuring that every derived precondition $P$ covers the total semantic $\wp$, which inherently encapsulates both partial correctness and expected operational cost.

   \textbf{1. Basic Operations:}
    For $\mathbf{skip}$, the derived precondition is $P_{syn} = Q + \epsilon I$. Evaluating its quadratic form yields $\QF_{P_{syn}}[\u] = \QF_Q[\u] + \epsilon\norm{\u}^2$. Since $\sem{\mathbf{skip}}(\sP_{\u}) = \sP_{\u}$ and its operational cost is $\epsilon\norm{\u}^2$, this exactly matches $\QF_{\wp(\mathbf{skip}, Q)}[\u]$. Thus $P_{syn} \sqsupseteq \wp(\mathbf{skip}, Q)$.

    For $\mathbf{abort}$, the axiom gives $P_{syn} = \wp(\mathbf{abort}, Q)$ directly.
    
    For Initialization $q:=0$, the semantics is $\sem{S}(\sP_{\u}) = \sum_k \sP_{E_k \u}$ with cost $\epsilon\norm{\u}^2$. The rule gives $P_{syn} = \epsilon I + \sum_k E_k^\dagger P E_k$. Its quadratic form is $\epsilon\norm{\u}^2 + \sum_k \QF_P[E_k \u]$, strictly matching $\QF_{\wp(q:=0, P)}[\u]$.
    
    For Unitary $U$, the derived precondition is $P_{syn} = \epsilon I + U^\dagger P U$. Its form is $\epsilon\norm{\u}^2 + \QF_P[U\u]$. This matches the semantic yield $\QF_P[U\u]$ plus the operation cost $\epsilon\norm{\u}^2$, which exactly equals $\QF_{\wp(U, P)}[\u]$. 
    Therefore, all basic axioms unconditionally satisfy $P_{syn} \sqsupseteq \wp(S, Q)$.

    \textbf{2. Sequential Composition ($S_1;S_2$):}
    Assume the premises $\vdash_{tot} \{P\} S_1 \{R\}$ and $\vdash_{tot} \{R\} S_2 \{Q\}$.
    By the induction hypothesis, we have $P \sqsupseteq \wp(S_1, R)$ and $R \sqsupseteq \wp(S_2, Q)$.
    From the monotonicity of the $\wp$ transformer (\Cref{prop:wp_properties}~(1)), we deduce
    \[
        \wp(S_1, R) \sqsupseteq \wp(S_1, \wp(S_2, Q)).
    \]
    By the structural equivalence established in \Cref{thm:wp_equivalence} (or Definition~\ref{def:wp_quadratic_form}),
    the weakest precondition for sequential composition satisfies
    $\wp(S_1;S_2, Q) = \wp(S_1, \wp(S_2, Q))$.
    Combining these inequalities via the transitivity of the L\"owner order,
    \[
        P \sqsupseteq \wp(S_1, R) \sqsupseteq \wp(S_1, \wp(S_2, Q)) = \wp(S_1;S_2, Q),
    \]
    which proves the soundness of the sequential composition rule.
    
    \textbf{3. Continuous Measurement ($\mathbf{bind}$):}
    Assume $P_\nu \sqsupseteq \wp(S_\nu, Q)$ for almost all $\nu \in \Omega_M$.
    The derived precondition is $P_{pre} = \epsilon I + \int_{\Omega_M} M_\nu^\dagger P_\nu M_\nu \, d\mu(\nu)$.
    Evaluating its quadratic form for any $\u \in \cH$ yields:
    \begin{align*}
        \QF_{P_{pre}}[\u] &= \epsilon \norm{\u}^2 + \int_{\Omega_M} \QF_{P_\nu}[M_\nu \u] \, d\mu(\nu) \\
        &\ge \epsilon \norm{\u}^2 + \int_{\Omega_M} \QF_{\wp(S_\nu, Q)}[M_\nu \u] \, d\mu(\nu) \\
        &= \QF_{\wp(\mathbf{bind}, Q)}[\u].
    \end{align*}
    Thus $P_{pre} \sqsupseteq \wp(\mathbf{bind}, Q)$, maintaining soundness.

    \textbf{4. While Loop ($\mathbf{while}$):}
    Assume $P \sqsupseteq M_0^\dagger Q M_0 + \epsilon I + M_1^\dagger R M_1$ and $\vdash_{tot} \{ R \} S_{body} \{ P \}$. 
    By the induction hypothesis, $R \sqsupseteq \wp(S_{body}, P)$. 
    Substituting this into the unified invariant inequality, we obtain the one-step unfolding bound:
    \[
        P \sqsupseteq M_0^\dagger Q M_0 + \epsilon I + M_1^\dagger \wp(S_{body}, P) M_1.
    \]
    Recall that the total weakest precondition is exactly the supremum of the $n$-step bounds: $\wp(S, Q) = \bigsqcup_n X_n$, where $X_0 = 0$ and $X_{n+1} = M_0^\dagger Q M_0 + \epsilon I + M_1^\dagger \wp(S_{body}, X_n) M_1$.
    We prove $X_n \sqsubseteq P$ by mathematical induction on $n$.
    For $n=0$, $X_0 = 0 \sqsubseteq P$ is trivially satisfied as $P$ is a positive linear relation.
    Assume $X_n \sqsubseteq P$. By the strict monotonicity of the $\wp$ transformer, $\wp(S_{body}, X_n) \sqsubseteq \wp(S_{body}, P)$. 
    Evaluating the recurrence for $X_{n+1}$:
    \begin{align*}
        X_{n+1} &= M_0^\dagger Q M_0 + \epsilon I + M_1^\dagger \wp(S_{body}, X_n) M_1 \\
        &\sqsubseteq M_0^\dagger Q M_0 + \epsilon I + M_1^\dagger \wp(S_{body}, P) M_1 \\
        &\sqsubseteq P.
    \end{align*}
    Since $P$ acts as a strict upper bound for the entire directed sequence $\{X_n\}$, it must rigorously bound their supremum in the $\omega$-CPO. Thus, $P \sqsupseteq \bigsqcup_n X_n = \wp(S, Q)$. This ensures that $P$ is a valid total correctness precondition.

    \textbf{5. Consequence:}
    Assume $P \sqsupseteq P'$, $P' \sqsupseteq \wp(S, Q')$, and $Q' \sqsupseteq Q$. 
    By the monotonicity of $\wp$, $Q' \sqsupseteq Q \implies \wp(S, Q') \sqsupseteq \wp(S, Q)$. 
    Transitivity of the L\"owner order gives $P \sqsupseteq P' \sqsupseteq \wp(S, Q') \sqsupseteq \wp(S, Q)$, preserving soundness.
\end{proof}

\subsection{Relative Completeness}

\begin{theorem}[Relative Completeness of $\mathcal{L}_{tot}$]
\label{Thm:completeness_tot}
    The proof system $\mathcal{L}_{tot}$ is complete relative to the underlying mathematics of linear relations. If a precondition $P$ semantically guarantees total correctness, i.e., $P \sqsupseteq \wp(S, Q)$, then $\vdash_{tot} \{P\} S \{Q\}$ is structurally derivable in $\mathcal{L}_{tot}$.
\end{theorem}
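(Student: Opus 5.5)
The argument follows the partial-correctness completeness proof (\Cref{Thm:completeness}). Suppose $P \sqsupseteq \wp(S,Q)$. By the \textbf{Rule of Consequence} with $P'=\wp(S,Q)$ and $Q'=Q$, it suffices to derive $\vdash_{tot}\{\wp(S,Q)\}S\{Q\}$ for every program $S$ and every $Q\in\Pred$. We prove this by structural induction on $S$. Throughout, $\wp$ is identified with its syntactic form from \Cref{def:wp_quadratic_form}, which is legitimate by \Cref{thm:wp_equivalence}.

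\textbf{Base cases and sequential composition.} For $\mathbf{skip}$, initialization and unitaries, the preconditions in the axioms coincide as quadratic forms with the clauses of \Cref{def:wp_quadratic_form}. This uses $\QF_{\epsilon I}[\u]=\epsilon\|\u\|^2$ and the form-level reading of $\sum_k E_k^\dagger P E_k$ and $U^\dagger P U$. For $\mathbf{abort}$, the axiom's $\mathbf{\infty}$ is exactly $\wp(\mathbf{abort},Q)$. For $S_1;S_2$, set $R\coloneqq\wp(S_2,Q)$. The induction hypothesis gives $\vdash_{tot}\{R\}S_2\{Q\}$ and also $\vdash_{tot}\{\wp(S_1,R)\}S_1\{R\}$; note that the hypothesis is applied to $S_1$ with the postcondition $R$, so the induction must be stated uniformly in the postcondition. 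The sequence rule then yields $\wp(S_1,R)=\wp(S_1;S_2,Q)$.

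\textbf{Bind.} Put $P_\nu\coloneqq\wp(S_\nu,Q)$. The induction hypothesis gives $\vdash_{tot}\{P_\nu\}S_\nu\{Q\}$ for every $\nu$. The premise that still needs checking is that the family $\{P_\nu\}$ is $M$-admissible. By \Cref{thm:wp_equivalence}, $\QF_{P_\nu}[M_\nu\u]=\QF_{\wlp(S_\nu,Q)}[M_\nu\u]+\mathcal C(S_\nu,M_\nu\sP_\u M_\nu^\dagger)$. The first summand is measurable in $\nu$ by the bind case of \Cref{thm:wlp_soundness}, and the second by the admissibility part of \Cref{prop:cost_quadratic_form}, composed with the measurable branch-preparation map. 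Their sum is therefore measurable. The bind rule then produces exactly the $\wp$ clause for bind.

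\textbf{While loop (main obstacle).} Let $P_{inv}\coloneqq\wp(S,Q)=\bigsqcup_n X_n$, where $X_{n+1}=M_0^\dagger QM_0+\epsilon I+M_1^\dagger\wp(S_{body},X_n)M_1$. The key step is the fixed-point identity
\[
P_{inv}=M_0^\dagger QM_0+\epsilon I+M_1^\dagger\wp(S_{body},P_{inv})M_1 .
\]
To prove it, apply $\omega$-continuity of $\wp(S_{body},\cdot)$ (\Cref{prop:wp_properties}) to obtain $\wp(S_{body},\bigsqcup_n X_n)=\bigsqcup_n\wp(S_{body},X_n)$. Pullback by the bounded operator $M_1$ commutes with increasing suprema of forms, since the supremum is taken pointwise and $\QF[M_1\u]$ is evaluated pointwise. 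Adding the fixed forms $M_0^\dagger QM_0+\epsilon I$ also preserves suprema, by \Cref{thm:monotone_convergence}. Taking suprema on both sides of the recurrence then gives the identity; this requires checking that $X_n$ is increasing, which follows by induction from monotonicity of $\wp$. Finally set $R\coloneqq\wp(S_{body},P_{inv})$. The induction hypothesis for $S_{body}$ with postcondition $P_{inv}$ gives $\vdash_{tot}\{R\}S_{body}\{P_{inv}\}$. The identity above supplies the side condition (with equality), and the while rule derives $\vdash_{tot}\{P_{inv}\}S\{Q\}$.
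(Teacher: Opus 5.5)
Your proof is correct and follows the paper's argument. You reduce via Consequence to deriving $\vdash_{tot}\{\wp(S,Q)\}S\{Q\}$, then run a structural induction that closes the loop case with $P_{inv}=\wp(S,Q)$ and $R=\wp(S_{body},P_{inv})$. You also discharge two points the paper leaves implicit: the $M$-admissibility premise of the bind rule, and an explicit derivation of the fixed-point identity from $\omega$-continuity of $\wp(S_{body},\cdot)$ (you correctly cite \Cref{prop:wp_properties}, where the paper points to the $\wlp$ result). These additions only strengthen the argument.
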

\begin{proof}
    By the Semantic Minimality of total $\wp$, any valid precondition $P$ satisfies $P \sqsupseteq \wp(S, Q)$. 
    If we can structurally derive the exact minimal triple $\vdash_{tot} \{\wp(S, Q)\} S \{Q\}$, we can immediately invoke the \textbf{Rule of Consequence} (using $P' = \wp(S, Q)$ and $Q' = Q$) to derive $\vdash_{tot} \{P\} S \{Q\}$. 
    Thus, we only need to prove that $\vdash_{tot} \{\wp(S, Q)\} S \{Q\}$ is derivable for all programs $S$ via structural induction.

    \textbf{1. Basic Operations:}
    For $\mathbf{skip}$, $\mathbf{abort}$, Initialization, Unitary and Sequential Composition, the derivability exactly mirrors the partial correctness completeness proof (\Cref{Thm:completeness}), relying on the structural equivalence between the semantic pullback and the syntactic rules.
    
    \textbf{2. Continuous Measurement ($\mathbf{bind}$):}
    Let $P_\nu \coloneqq \wp(S_\nu, Q)$. By the induction hypothesis, $\vdash_{tot} \{P_\nu\} S_\nu \{Q\}$ is derivable for all $\nu \in \Omega_M$.
    Applying the \textbf{Rule of Continuous Measurement}, we directly derive:
    \[
        \vdash_{tot} \left\{ \epsilon I + \int_{\Omega_M} M_\nu^\dagger \wp(S_\nu, Q) M_\nu \, d\mu(\nu) \right\} \ \mathbf{bind}(M, x.S_x) \ \{ Q \}.
    \]
    By the structural definition of the total weakest precondition, the derived precondition is strictly equal to $\wp(\mathbf{bind}, Q)$.
    
    \textbf{3. While Loop ($\mathbf{while}$):}
    Let $S \equiv \mathbf{while} \ M \ \mathbf{do} \ S_{body} \ \mathbf{od}$. We seek to derive $\vdash_{tot} \{\wp(S, Q)\} S \{Q\}$.
    Let the minimal total precondition be $P_{inv} \coloneqq \wp(S, Q) = \bigsqcup_n X_n$. 
    Because the $\wp$ transformer is Scott-continuous on the extended $\omega$-CPO of predicates (\Cref{thm:wp_properties}), the supremum $P_{inv}$ acts exactly as the least fixed point of the semantic unfolding equation augmented with the guard cost:
    \[
        P_{inv} = M_0^\dagger Q M_0 + \epsilon I + M_1^\dagger \wp(S_{body}, P_{inv}) M_1.
    \]
    Define the intermediate predicate $R \coloneqq \wp(S_{body}, P_{inv})$. By the structural induction hypothesis for the loop body $S_{body}$, the triple $\vdash_{tot} \{R\} S_{body} \{P_{inv}\}$ is strictly derivable.
    Substituting $R$ into the fixed-point equation, we obtain the exact algebraic equality:
    \[
        P_{inv} = M_0^\dagger Q M_0 + \epsilon I + M_1^\dagger R M_1.
    \]
    This exact equality trivially satisfies the inequality condition ($\sqsupseteq$) required by the premise of the While rule.
    Treating $P_{inv}$ as the unified invariant-potential predicate and $R$ as the intermediate predicate, we invoke the \textbf{Rule of While Loop} to syntactically derive:
    \[
        \vdash_{tot} \{P_{inv}\} \ \mathbf{while} \ M \ \mathbf{do} \ S_{body} \ \mathbf{od} \ \{Q\}.
    \]
    Since $P_{inv}$ is exactly $\wp(S, Q)$, the completeness of the calculus is definitively established.
\end{proof}
\section{Case Study I: Symmetric Random Walk as a Quantum Program}
\label{Sec:random_walk}

To illustrate the expressive power and the necessity of our quantitative cost logic, we analyze a quantum program that realizes the classical symmetric random walk on $\mathbb{Z}$.
We prove that this program terminates almost surely (AST), but that its expected running time is infinite.
The AST assertion itself is a bounded, topological termination property; the divergent first moment is the genuinely quantitative feature that bounded quantum Hoare logics cannot express as a finite cost predicate.
Thus this example separates almost-sure termination from finite expected operational cost in an infinite-dimensional state space.

\subsection{Setting and Program Definition}

The program uses two quantum variables:
\begin{itemize}
    \item $q$: a qubit coin, with Hilbert space $\cH_q = \mathbb{C}^2$;
    \item $w$: a quantum walker on the integer lattice, with Hilbert space $\cH_w = \ell^2(\mathbb{Z})$ and computational basis $\{\ket{n}\}_{n\in\mathbb{Z}}$.
\end{itemize}
Let $U_{\mathrm{sh}}$ be the bilateral shift on $\ell^2(\mathbb{Z})$, defined by $U_{\mathrm{sh}}\ket{n}=\ket{n+1}$.
The controlled walk unitary $U_{\mathrm{walk}}$ on $\cH_q \otimes \cH_w$ is defined on the computational basis by
\[
U_{\mathrm{walk}}\ket{0,n} = \ket{0,n+1}, \qquad
U_{\mathrm{walk}}\ket{1,n} = \ket{1,n-1}.
\]
It is unitary because it permutes the canonical orthonormal basis.
Let $H$ be the Hadamard gate on $\cH_q$, so that $H\ket{0}=\ket{+}$ and $H\ket{1}=\ket{-}$.
The loop guard is the binary projective measurement $M=\{M_0,M_1\}$, where
\[
\Pi_0 \coloneqq \ket{0}_w\!\bra{0},
\qquad
\Pi_{\neq 0} \coloneqq I_w-\Pi_0,
\qquad
M_0 = I_q\otimes \Pi_0,
\qquad
M_1 = I_q\otimes \Pi_{\neq 0}.
\]
Outcome $0$ exits the loop and outcome $1$ continues.
The full program $S$ initializes the walker at $1$ and then runs the loop $W$:
\begin{align*}
    S \equiv\ &w := \ket{1};\ W,\qquad\text{where}\\
    W \equiv\ &\mathbf{while}\ M[w] = 1\ \mathbf{do}\\
    &\qquad q := \ket{0};\\
    &\qquad q := Hq;\\
    &\qquad q,w := U_{\mathrm{walk}}(q,w);\\
    &\mathbf{od}.
\end{align*}
Throughout this case study we set the uniform one-step cost to $\epsilon=1$.
The leading initialization contributes only one finite unit of cost and is trace preserving, so the divergence and AST arguments reduce to the loop-entry state
\[
\rho_1 \coloneqq \ket{0}\!\bra{0}_q\otimes \ket{1}\!\bra{1}_w.
\]

\subsection{Expected Runtime is Infinite}

We first compute the total weakest precondition of the loop body.
Let
\[
S_{\mathrm{body}} \equiv q:=\ket{0};\ q:=Hq;\ q,w:=U_{\mathrm{walk}}(q,w).
\]

\begin{lemma}\label{lem:wp_body}
For every positive predicate of the form $X=I_q\otimes X_w$, the total weakest precondition of the loop body is
\[
\mathsf{wp}(S_{\mathrm{body}},X)
= I_q\otimes\bigl(3I_w+\mathcal{E}^*(X_w)\bigr),
\]
where
\[
\mathcal{E}^*(Y)
=\frac{1}{2}U_{\mathrm{sh}}^\dagger YU_{\mathrm{sh}}
+\frac{1}{2}U_{\mathrm{sh}}YU_{\mathrm{sh}}^\dagger
\]
is the Heisenberg-picture transformer of one unbiased classical step.
For unbounded $X_w$, the equality is understood as an equality of closed positive quadratic forms.
\end{lemma}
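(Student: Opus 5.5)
We compute the total weakest precondition by three applications of the sequential rule of \Cref{def:wp_quadratic_form}, working from the last command backwards. At each stage we evaluate the quadratic form on an arbitrary vector $\u\in\cH_q\otimes\cH_w$, expanded in the coin basis as $\u=\ket{0}\otimes\u_0+\ket{1}\otimes\u_1$ with $\u_0,\u_1\in\ell^2(\mathbb Z)$. Throughout we only use bounded pullbacks $K^\dagger X K$, read as $\QF_X[K\u]$, so no operator products of unbounded objects appear. We also use the fact that for $X=I_q\otimes X_w$ one has $\QF_X[\ket{0}\otimes\v_0+\ket{1}\otimes\v_1]=\QF_{X_w}[\v_0]+\QF_{X_w}[\v_1]$. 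This decoupling across orthogonal coin components follows from the tensor structure of the form, since $I_q\otimes X_w$ is the orthogonal sum of two copies of $X_w$. With $\epsilon=1$, each of the three elementary commands contributes $\|\u\|^2$, which gives the term $3I$.

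\emph{Step 1 (walk).} By the unitary clause, $\QF_{\wp(U_{\mathrm{walk}},X)}[\u]=\|\u\|^2+\QF_X[U_{\mathrm{walk}}\u]$. Since $U_{\mathrm{walk}}(\ket{0}\otimes\u_0+\ket{1}\otimes\u_1)=\ket{0}\otimes U_{\mathrm{sh}}\u_0+\ket{1}\otimes U_{\mathrm{sh}}^\dagger\u_1$, this equals $\|\u\|^2+\QF_{X_w}[U_{\mathrm{sh}}\u_0]+\QF_{X_w}[U_{\mathrm{sh}}^\dagger\u_1]$. Call the resulting predicate $Y_1$.

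\emph{Step 2 (Hadamard).} Next, $\QF_{\wp(H,Y_1)}[\u]=\|\u\|^2+\QF_{Y_1}[(H\otimes I)\u]$. Here $(H\otimes I)\u$ has coin components $(\u_0+\u_1)/\sqrt2$ and $(\u_0-\u_1)/\sqrt2$, and $\|(H\otimes I)\u\|=\|\u\|$.

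\emph{Step 3 (reset).} By the initialization clause with $E_k=\ket{0}\!\bra{k}\otimes I_w$, we have $\QF[\u]=\|\u\|^2+\sum_{k=0,1}\QF_{Y_2}[\ket{0}\otimes\u_k]$, where $Y_2=\wp(H\otimes I,Y_1)$. Plugging in $\ket{0}\otimes\u_k$, the Hadamard produces coin components $\u_k/\sqrt2,\u_k/\sqrt2$, and the walk then gives
\[
\QF_{Y_2}[\ket{0}\otimes\u_k]=\|\u_k\|^2+\|\u_k\|^2+\tfrac12\QF_{X_w}[U_{\mathrm{sh}}\u_k]+\tfrac12\QF_{X_w}[U_{\mathrm{sh}}^\dagger\u_k],
\]
using the homogeneity $\QF[c\v]=|c|^2\QF[\v]$. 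Summing over $k$ and adding the reset cost $\|\u\|^2=\|\u_0\|^2+\|\u_1\|^2$ yields
\[
\sum_k\Bigl(3\|\u_k\|^2+\QF_{\mathcal E^*(X_w)}[\u_k]\Bigr),
\]
where $\mathcal E^*(X_w)$ is defined at form level by $\QF_{\mathcal E^*(X_w)}[\v]=\tfrac12\QF_{X_w}[U_{\mathrm{sh}}\v]+\tfrac12\QF_{X_w}[U_{\mathrm{sh}}^\dagger\v]$. By the decoupling fact, this sum is exactly the form of $I_q\otimes(3I_w+\mathcal E^*(X_w))$.

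\emph{Main obstacle.} The delicate point is the decoupling identity for $I_q\otimes X_w$ when $X_w$ is unbounded or singular, since the paper explicitly avoids tensor products of linear relations. We would handle it by \emph{defining} $I_q\otimes X_w$ through the form $\QF_{X_w}[\v_0]+\QF_{X_w}[\v_1]$, which is closed as a sum of LSC pullbacks by the coordinate projections. We then check that, for bounded $X_w$, this definition agrees with the operator tensor product. For general $X_w$, we would pass to spectral truncations via \Cref{prop:spectral_approximation} and use the $\omega$-continuity of $\wp$ (\Cref{prop:wp_properties}), which gives the equality for unbounded $X_w$ by monotone convergence. A secondary check is that all infinite values are handled consistently ($a+\infty=\infty$, $0\cdot\infty$ not arising); this holds because homogeneity with $|c|^2=1/2>0$ preserves the value $+\infty$.
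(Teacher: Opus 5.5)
Your proposal is correct and follows the paper's proof. Both run the same backward pass through the walk, Hadamard and reset clauses with $\epsilon=1$, and your evaluation of the forms on $\ket{0}\otimes\u_k$ is exactly the paper's extraction of the diagonal coin block $\tfrac12(B_0+B_1)$ after Hadamard conjugation.

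The one difference is in how unbounded $X_w$ is handled. The paper does the block-matrix computation for bounded $X_w$ and then passes to the unbounded case by spectral truncation and $\omega$-continuity. Your vector-wise form computation already holds verbatim for unbounded or singular $X_w$ once $I_q\otimes X_w$ is defined by the form $\QF_{X_w}[\v_0]+\QF_{X_w}[\v_1]$, so your planned truncation step is redundant, though harmless.
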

\begin{proof}
We first prove the identity for bounded $X_w$; the unbounded case then follows by spectral truncation and monotone convergence of the $\mathsf{wp}$ transformer.
Let
\[
E_i \coloneqq \ket{0}_q\!\bra{i}\otimes I_w\qquad (i=0,1)
\]
be the Kraus operators for the reset command $q:=\ket{0}$.
Set $Y\coloneqq U_{\mathrm{walk}}^\dagger XU_{\mathrm{walk}}$.
By the syntax-directed total-$\mathsf{wp}$ rules with $\epsilon=1$,
\begin{align*}
\mathsf{wp}(q,w:=U_{\mathrm{walk}},X)&=I+Y,\\
\mathsf{wp}(q:=Hq;\ q,w:=U_{\mathrm{walk}},X)
    &=2I+(H^\dagger\otimes I_w)Y(H\otimes I_w),\\
\mathsf{wp}(S_{\mathrm{body}},X)
    &=3I+
      \sum_{i=0}^1 E_i^\dagger
      (H^\dagger\otimes I_w)Y(H\otimes I_w)E_i.
\end{align*}
The first term is exactly the additive cost of the three elementary commands.
It remains to compute $Y$.
For $X=I_q\otimes X_w$, the controlled shift gives
\[
U_{\mathrm{walk}}^\dagger XU_{\mathrm{walk}}
=
\ket{0}\!\bra{0}_q\otimes U_{\mathrm{sh}}^\dagger X_wU_{\mathrm{sh}}
+
\ket{1}\!\bra{1}_q\otimes U_{\mathrm{sh}}X_wU_{\mathrm{sh}}^\dagger.
\]
Put
\[
B_0=U_{\mathrm{sh}}^\dagger X_wU_{\mathrm{sh}},
\qquad
B_1=U_{\mathrm{sh}}X_wU_{\mathrm{sh}}^\dagger .
\]
Then the predicate obtained after the controlled shift is
\[
A=\ket0_q\!\bra0\otimes B_0+\ket1_q\!\bra1\otimes B_1 .
\]
Equivalently, with respect to the coin decomposition
\(\mathcal H_q\otimes\mathcal H_w
=(\ket0_q\otimes\mathcal H_w)\oplus(\ket1_q\otimes\mathcal H_w)\),
\[
A=
\begin{pmatrix}
B_0&0\\
0&B_1
\end{pmatrix}.
\]
Since
\[
H=\frac1{\sqrt2}
\begin{pmatrix}
1&1\\
1&-1
\end{pmatrix},
\]
a direct multiplication in the two-dimensional coin factor gives
\[
C:=(H^\dagger\otimes I_w)A(H\otimes I_w)
=
\frac12
\begin{pmatrix}
B_0+B_1&B_0-B_1\\
B_0-B_1&B_0+B_1
\end{pmatrix}.
\]
For the reset assignment \(q:=\ket0\), the Kraus operators are
\[
E_i=\ket0_q\!\bra i\otimes I_w,\qquad i=0,1.
\]
Therefore
\[
E_i^\dagger C E_i
=
\ket i_q\!\bra i\otimes \frac{B_0+B_1}{2},
\]
and hence
\[
\sum_{i=0}^1E_i^\dagger C E_i
=
I_q\otimes \frac{B_0+B_1}{2}.
\]
Substituting the definitions of \(B_0\) and \(B_1\), we obtain
\[
\sum_{i=0}^1
E_i^\dagger(H^\dagger\otimes I_w)A(H\otimes I_w)E_i
=
I_q\otimes
\left(
\frac12U_{\mathrm{sh}}^\dagger X_wU_{\mathrm{sh}}
+
\frac12U_{\mathrm{sh}}X_wU_{\mathrm{sh}}^\dagger
\right)
=
I_q\otimes\mathcal E^*(X_w).
\]
Combining this with the additive cost $3I$ proves the formula for bounded $X_w$.
For a general positive predicate $X_w$, let $X_w^{(m)}$ be its increasing bounded spectral truncations.
The bounded calculation applies to each $X_w^{(m)}$, and taking the supremum over $m$ gives the claimed equality of closed forms by the $\omega$-continuity of $\mathsf{wp}$ and of the normal Heisenberg transformer $\mathcal{E}^*$.
\end{proof}

\begin{theorem}\label{thm:infinite_expected_time}
For the loop-entry state $\rho_1=\ket{0}\!\bra{0}_q\otimes \ket{1}\!\bra{1}_w$, the expected operational cost of $W$ is infinite.
Consequently, the expected running time of the full program $S$ is also infinite.
\end{theorem}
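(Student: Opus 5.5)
We will compute $T\coloneqq\wp(W,0)$ with the loop clause of \Cref{def:wp_quadratic_form}, taking $Q=0$ and $\epsilon=1$. By \Cref{thm:wp_equivalence} and \Cref{thm:wlp_well_defined}, $\wlp(W,0)=0$, so the form of $T$ is exactly the cost form $C_W$. Hence $\Tr(T\rho_1)=\mathcal C(W,\rho_1)$, and it suffices to show $\QF_T[\ket0_q\otimes\ket1_w]=+\infty$. The approximants are $X_0=0$ and
\[
X_{n+1}=I+M_1^\dagger\,\wp(S_{\mathrm{body}},X_n)\,M_1 .
\]

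First we show by induction on $n$ that each $X_n$ has the form $I_q\otimes D_n$, with $D_n=\sum_k t^{(n)}_k\ket k\!\bra k$ a diagonal closed form and $t^{(n)}_k\in[0,\infty]$.
\begin{itemize}
\item The case $X_0$ is trivial.
\item If $X_n=I_q\otimes D_n$, then \Cref{lem:wp_body} gives $\wp(S_{\mathrm{body}},X_n)=I_q\otimes(3I_w+\mathcal E^*(D_n))$, and $\mathcal E^*$ preserves diagonality: $(\mathcal E^*D)_{kk}=\tfrac12(d_{k+1}+d_{k-1})$.
\item Sandwiching by $M_1=I_q\otimes\Pi_{\neq0}$ zeroes the $k=0$ entry.
\end{itemize}
This yields the recursion
\[
t^{(n+1)}_0=1,\qquad t^{(n+1)}_k=4+\tfrac12\bigl(t^{(n)}_{k+1}+t^{(n)}_{k-1}\bigr)\quad(k\neq0).
\]
The supremum $T=I_q\otimes D$, with $t_k=\sup_n t^{(n)}_k$, is again diagonal by \Cref{thm:monotone_convergence}. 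Passing to the limit, by monotone convergence of each coordinate, gives the fixed-point equations $t_0=1$ and $t_k=4+\frac12(t_{k+1}+t_{k-1})$ for $k\ne0$, valid in $[0,\infty]$.

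The main step is ruling out finiteness. Suppose $t_1<\infty$. Then $t_2\le 2t_1<\infty$, and inductively every $t_k$ with $k\ge0$ is finite, since the equation at $k$ forces $t_{k+1}=2t_k-t_{k-1}-8$ to be finite. Setting $\Delta_k=t_k-t_{k-1}$ gives $\Delta_{k+1}=\Delta_k-8$, hence
\[
t_n=t_0+n\Delta_1-4n(n-1),
\]
which is negative for large $n$. This contradicts $t_n\ge0$, so $t_1=+\infty$ and $\QF_T[\ket0\otimes\ket1]=t_1=+\infty$.

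The one delicate point is justifying that the limit satisfies the fixed-point equations, including the $+\infty$ values. Because the recursion is monotone and the $\sup$ commutes with positive linear combinations in $[0,\infty]$, the diagonal entries converge monotonically and the equation holds in the limit. Alternatively, one can argue with finite approximants directly: a lower bound $t^{(n)}_1\ge$ (expected capped hitting time) grows unboundedly, but the contradiction argument above suffices.

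For the full program $S=w:=\ket1;W$, the sequential cost clause gives
\[
\mathcal C(S,\rho)=\mathcal C(w:=\ket1,\rho)+\mathcal C(W,\sem{w:=\ket1}(\rho))\ge\mathcal C(W,\cdot),
\]
evaluated at the loop-entry state. This is $+\infty$ for the initialized state $\rho_1$, and for any input whose coin is $\ket0$ the reset produces $\rho_1$ up to trace. Hence the expected running time of $S$ is infinite.
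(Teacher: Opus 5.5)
Your proposal is correct and follows the paper's proof essentially step for step: the Kleene chain for $\wp(W,0)$ with coin-independent diagonal approximants, the coefficient equations $t_0=1$ and $t_k=4+\tfrac12(t_{k+1}+t_{k-1})$ for $k\neq0$, and the contradiction $t_n=t_0+n\Delta_1-4n(n-1)<0$, which forces $t_1=+\infty$. The only minor difference is how you reach the limiting equations: you pass to the limit coordinatewise in the recursion for the (bounded) approximants, so you need \Cref{lem:wp_body} only for bounded $X_n$, whereas the paper substitutes the possibly unbounded limit $T$ into the fixed-point equation and invokes the lemma through spectral truncation.
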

\begin{proof}
Let
\[
T\coloneqq \mathsf{wp}(W,0)
\]
be the total cost predicate of the while loop with zero terminal postcondition.
By the total-$\mathsf{wp}$ rule for while loops, $T$ is the least fixed point of
\[
\Phi(X)
= I_q\otimes I_w + M_1\,\mathsf{wp}(S_{\mathrm{body}},X)\,M_1.
\]
The Kleene chain starts from $X_0=0$.
Because $X_0$ has the form $I_q\otimes D_0$ with $D_0$ diagonal in the walker basis, and because Lemma~\ref{lem:wp_body} and the projections $M_0,M_1$ preserve this form, every approximant has the form
\[
X_n=I_q\otimes D_n,
\qquad
D_n=\sum_{k\in\mathbb{Z}} t_k^{(n)}\ket{k}_w\!\bra{k}.
\]
Passing to the monotone limit gives
\[
T=I_q\otimes D,
\qquad
D=\sum_{k\in\mathbb{Z}} t_k\ket{k}_w\!\bra{k},
\qquad
 t_k\in[0,\infty],
\]
in the sense of closed diagonal forms.
Since \(T\) is a fixed point of \(\Phi\), we have
\[
T
=
I_q\otimes I_w
+
M_1\,\mathsf{wp}(S_{\mathrm{body}},T)\,M_1 .
\]
Substituting the diagonal form \(T=I_q\otimes D\) into this equation, and using
Lemma~\ref{lem:wp_body} on the body term, gives
\[
\mathsf{wp}(S_{\mathrm{body}},T)
=
\mathsf{wp}(S_{\mathrm{body}},I_q\otimes D)
=
I_q\otimes(3I_w+\mathcal E^*(D)).
\]
Here, if \(D\) is unbounded, the equality is understood by applying the lemma
first to the bounded truncations by $n$ and then taking the monotone
limit. Therefore
\[
\begin{aligned}
I_q\otimes D
&=
I_q\otimes I_w
+
(I_q\otimes\Pi_{\neq0})
\bigl(I_q\otimes(3I_w+\mathcal E^*(D))\bigr)
(I_q\otimes\Pi_{\neq0})                                                  \\
&=
I_q\otimes
\Bigl(
I_w+\Pi_{\neq0}(3I_w+\mathcal E^*(D))\Pi_{\neq0}
\Bigr).
\end{aligned}
\]
Taking diagonal matrix elements in the walker basis yields
\[
t_0
=
1+
\bra0\Pi_{\neq0}(3I_w+\mathcal E^*(D))\Pi_{\neq0}\ket0
=
1.
\]
For \(k\neq0\), since \(\Pi_{\neq0}\ket k=\ket k\), we obtain
\[
\begin{aligned}
t_k
&=
1+
\bra k(3I_w+\mathcal E^*(D))\ket k                                      \\
&=
4+
\bra k\mathcal E^*(D)\ket k                                             \\
&=
4+
\bra k
\left(
\frac12U_{\mathrm{sh}}^\dagger DU_{\mathrm{sh}}
+
\frac12U_{\mathrm{sh}}DU_{\mathrm{sh}}^\dagger
\right)
\ket k                                                                 \\
&=
4+\frac12t_{k+1}+\frac12t_{k-1}.
\end{aligned}
\]
The equality is in the extended nonnegative reals. In particular, whenever
\(t_k<\infty\) and \(t_{k-1}<\infty\), it forces \(t_{k+1}<\infty\) and may be
rewritten as
\begin{equation}\label{eq:poisson}
t_{k+1}-2t_k+t_{k-1}=-8 .
\end{equation}

We show that \(t_1\) cannot be finite.
Assume, for contradiction, that \(t_1<\infty\).
Since \(t_0=1<\infty\), the equation for \(k=1\) forces \(t_2<\infty\).
Inductively, all \(t_k\) with \(k\ge0\) are finite, so
\eqref{eq:poisson} holds as an equality for every \(k\ge1\).

Define the first differences
\[
\Delta_k\coloneqq t_k-t_{k-1},\qquad k\ge1.
\]
From \eqref{eq:poisson},
\(\Delta_{k+1}=\Delta_k-8\),
and therefore
\(\Delta_k=\Delta_1-8(k-1)\).
Summing these identities gives
\[
t_n
=
t_0+\sum_{j=1}^n\Delta_j
=
t_0+n\Delta_1-4n(n-1).
\]
The right-hand side tends to \(-\infty\) as \(n\to\infty\), contradicting
\(t_n\ge0\). Hence \(t_1=\infty\).
It follows that 
\[ \mathcal{C}(W,\rho_1)=\Tr(T\rho_1)=t_1=\infty. \] 
The full program $S$ differs from $W$ only by the leading initialization $w:=\ket{1}$, which contributes a finite cost and preserves trace; adding this finite prefix cannot change the divergence.
\end{proof}

\subsection{Almost-Sure Termination in Bounded QHL}

In contrast with the infinite expected cost, almost-sure termination is a bounded total-correctness assertion.
Let $\mathsf{wp}_{0}$ denote the ordinary bounded weakest-precondition transformer obtained from our total-cost transformer by erasing all operational-cost summands.
For a trace-nonincreasing program $C$, the bounded effect $\mathsf{wp}_{0}(C,I)$ is the termination-probability predicate:
\[
\Tr\bigl(\mathsf{wp}_{0}(C,I)\rho\bigr)=\Tr(\sem{C}(\rho))
\qquad(\rho\ge0,\ \Tr\rho<\infty).
\]
Thus, under the usual total-correctness convention for bounded QHL, the triple $\{I\}\,C\,\{I\}$ means $I\sqsubseteq \mathsf{wp}_{0}(C,I)$.
Since always $\mathsf{wp}_{0}(C,I)\sqsubseteq I$, this is equivalent to $\mathsf{wp}_{0}(C,I)=I$, i.e. AST.
This should not be confused with the partial-correctness or liberal-precondition reading, where postcondition $I$ makes $\{I\}\,C\,\{I\}$ vacuous.

\begin{theorem}\label{thm:ast_bounded}
The loop $W$ satisfies the bounded total-correctness triple
\[
\models_{\mathrm{tot},\mathrm{bnd}}
\{I_q\otimes I_w\}\ W\ \{I_q\otimes I_w\}.
\]
Equivalently,
\[
\mathsf{wp}_{0}(W,I_q\otimes I_w)=I_q\otimes I_w.
\]
Consequently $W$ is almost surely terminating from every normalized loop-entry state, and the full program $S$ is almost surely terminating after its trace-preserving initialization.
\end{theorem}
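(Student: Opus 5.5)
The plan is to compute $H\coloneqq\mathsf{wp}_{0}(W,I_q\otimes I_w)$ as the supremum of the zero-cost Kleene chain $H_0=0$, $H_{n+1}=M_0^\dagger(I_q\otimes I_w)M_0+M_1\,\mathsf{wp}_0(S_{\mathrm{body}},H_n)\,M_1$. I will then identify this supremum via a bounded harmonic-function argument. All predicates involved lie between $0$ and $I$, so everything stays inside the bounded calculus. First I record the zero-cost analogue of Lemma~\ref{lem:wp_body}: for $X=I_q\otimes X_w$ with $X_w$ bounded, $\mathsf{wp}_0(S_{\mathrm{body}},X)=I_q\otimes\mathcal E^*(X_w)$. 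This follows from the same Kraus and Hadamard computation with the additive $3I$ term erased. By induction on $n$, each $H_n$ then has the form $I_q\otimes\sum_k h^{(n)}_k\ket{k}\!\bra{k}$ with $0\le h^{(n)}_k\le1$. The bound $h^{(n)}_k\le 1$ is preserved because $\mathcal E^*$ is unital and $M_0^\dagger M_0+M_1^\dagger M_1=I$. The coefficients are nondecreasing in $n$, so their limits $h_k\in[0,1]$ exist. By $\omega$-continuity (Proposition~\ref{prop:wp_properties}), $H=I_q\otimes\sum_k h_k\ket{k}\!\bra{k}$ is a fixed point of the unfolding equation.

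Next I take diagonal matrix elements of the fixed-point equation, exactly as in the proof of Theorem~\ref{thm:infinite_expected_time}. This gives $h_0=1$, and $h_k=\tfrac12(h_{k+1}+h_{k-1})$ for $k\neq0$. On $k\ge0$ the sequence is therefore affine, $h_k=1+k(h_1-1)$. Since $h_k\in[0,1]$ for all $k$, the slope must vanish, so $h_k=1$ for $k\ge0$. The same argument on $k\le0$ gives $h_k=1$ there. Hence $H=I_q\otimes I_w$. The opposite inequality $H\sqsubseteq I$ holds in general, so the two triple formulations in the statement are equivalent and both hold.

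Finally, the displayed duality $\Tr(\mathsf{wp}_0(W,I)\rho)=\Tr(\sem{W}(\rho))$ (Theorem~\ref{thm:wlp_well_defined} with $Q=I$) gives $\Tr(\sem W(\rho))=\Tr(\rho)$ for every loop-entry state. For $S$, I compose with the trace-preserving initialization $w:=\ket1$ using the sequential rule, so $\mathsf{wp}_0(S,I)=\mathsf{wp}_0(w:=\ket1,I)=I$.

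The main obstacle is justifying that the supremum is a genuine fixed point whose diagonal coefficients satisfy the recurrence. Here I would use normality of the bounded Heisenberg dual to pass the limit through $\mathcal E^*$ coordinatewise, which is legitimate because all $h^{(n)}_k\le1$. A second point to check is that the diagonal ansatz is preserved, which uses the facts that $\Pi_{\neq0}$ is diagonal and that $\mathcal E^*$ maps diagonal operators to diagonal operators.
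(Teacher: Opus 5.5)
Your proposal is correct and follows essentially the same route as the paper: zero-cost body transformer, Kleene chain of coin-independent diagonal approximants with coefficients in $[0,1]$, the bounded harmonic recurrence with $h_0=1$ whose affine half-line solutions are forced to be constant, and then trace duality plus the trace-preserving initialization. Two minor deviations: the continuity fact you need is the Scott-continuity of the zero-cost (liberal) transformer rather than the cost-bearing $\mathsf{wp}$ proposition you cite, and passing to the limit in $h^{(n+1)}_k=\tfrac12\bigl(h^{(n)}_{k+1}+h^{(n)}_{k-1}\bigr)$ is a plain scalar limit, so normality is not actually needed.
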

\begin{proof}
The zero-cost version of Lemma~\ref{lem:wp_body} gives, for every bounded walker effect $Y$,
\[
\mathsf{wp}_{0}(S_{\mathrm{body}},I_q\otimes Y)
=I_q\otimes \mathcal E^*(Y),
\]
where $\mathcal E^*$ is the Heisenberg transformer of one unbiased nearest-neighbour step.
Let
\[
H\coloneqq \mathsf{wp}_{0}(W,I_q\otimes I_w).
\]
By the bounded total-\textsc{While} rule, $H$ is the least fixed point in the effect algebra $[0,I_q\otimes I_w]$ of
\[
\Gamma(X)
=M_0^\dagger(I_q\otimes I_w)M_0
 +M_1^\dagger\mathsf{wp}_{0}(S_{\mathrm{body}},X)M_1 .
\]
Let $H^{(0)}=0$ and $H^{(n+1)}=\Gamma(H^{(n)})$ be the Kleene chain.
The zero-cost body transformer and the guard projections preserve the coin-independent diagonal form, hence
\[
H^{(n)}=I_q\otimes\sum_{k\in\mathbb Z}h_k^{(n)}\ket{k}_w\!\bra{k},
\qquad 0\le h_k^{(n)}\le1.
\]
The monotone limit has the same form,
\[
H=I_q\otimes\sum_{k\in\mathbb Z}h_k\ket{k}_w\!\bra{k},
\qquad h_k\coloneqq\sup_n h_k^{(n)},
\qquad 0\le h_k\le1.
\]
Since $H$ is a fixed point of $\Gamma$, the scalar coefficients satisfy
\[
h_0=1,
\qquad
h_k=\frac12(h_{k+1}+h_{k-1})\quad(k\ne0).
\]
Thus the termination-probability pre-expectation is the bounded harmonic solution for the classical symmetric walk absorbed at $0$.
For $k\ge1$, put $\Delta_k=h_k-h_{k-1}$.
The harmonic equation gives $\Delta_{k+1}=\Delta_k$, and hence
\[
h_k=h_0+k\Delta_1=1+k\Delta_1\qquad(k\ge0).
\]
The bound $0\le h_k\le1$ for all $k\ge0$ forces $\Delta_1=0$.
The same argument on the negative half-line gives $h_k=1$ for all $k\le0$.
Therefore $h_k=1$ for every $k\in\mathbb Z$, so
\[
H=I_q\otimes I_w.
\]
This is exactly the bounded total-correctness assertion $\{I_q\otimes I_w\}\,W\,\{I_q\otimes I_w\}$.
For every positive trace-class state $\rho$,
\[
\Tr(\sem{W}(\rho))
=\Tr(H\rho)
=\Tr(\rho),
\]
so $W$ is AST. The leading initialization $w:=\ket1$ in $S$ is trace preserving, so composing it before $W$ preserves AST.
\end{proof}

The fixed-point equation used in the bounded proof is exactly the cost-free pre-expectation counterpart of Theorem~\ref{thm:infinite_expected_time}.
For termination probabilities, erasing costs gives the bounded harmonic system
\[
h_0=1,
\qquad
h_k=\frac12(h_{k+1}+h_{k-1})\quad(k\ne0),
\qquad 0\le h_k\le1,
\]
whose only bounded solution is $h_k\equiv1$.
For expected operational cost, the same transition kernel carries the positive one-iteration cost offset
\[
t_0=1,
\qquad
t_k=4+\frac12(t_{k+1}+t_{k-1})\quad(k\ne0),
\]
and this discrete Poisson equation has no finite non-negative solution on the half-line.
Thus bounded QHL captures the topological fact $\{I\}W\{I\}$, while the unbounded cost logic detects the divergent first moment through the infinite least cost precondition.

\section{Case Study II: Verification of GKP Error Correction}
\label{Sec:gkp_verification}

The Gottesman-Kitaev-Preskill (GKP) code \cite{gottesman2001encoding} encodes a logical qubit into the infinite-dimensional Hilbert space of a harmonic oscillator. 
Due to its resilience against displacement errors, it has been implemented in various experimental platforms, including superconducting circuits \cite{campagne2020quantum} and trapped ions \cite{fluhmann2019encoding}.

The formal verification of GKP codes presents a structural challenge: ideal GKP states are non-normalizable superpositions of position eigenstates (Dirac combs) with infinite energy. Physical implementations rely on \emph{approximate GKP states} with finite energy \cite{matsuura2020equivalence, terhal2020towards}. 
Verifying protocols over such states requires reasoning about unbounded observables (such as position $\hat{q}$ and momentum $\hat{p}$) and their variances. Standard quantum Hoare logics, defined over bounded operators on finite-dimensional spaces, cannot natively express or verify these unbounded properties.

In this section, we apply our quantitative cost logic to verify the variance reduction property of the continuous GKP error correction protocol. We demonstrate how our framework formalizes continuous-variable semantics—specifically continuous homodyne syndrome extraction via the $\textbf{bind}$ construct—and unbounded physical costs. By algebraically evaluating closed quadratic forms, we prove that the error correction loop bounds the system's position variance without requiring artificial domain truncations or spectral discretization.

\subsection{Mathematical Setup}
\label{subsec:gkp_setup}

\subsubsection{Hilbert Space and Spectral Measures}
The total system consists of a logical system mode ($sys$) and an ancilla mode ($anc$). The state space is the tensor product of two infinite-dimensional separable Hilbert spaces:
\[
    \mathcal{H}_{total} = \mathcal{H}_{sys} \otimes \mathcal{H}_{anc} \cong L^2(\mathbb{R}_s) \otimes L^2(\mathbb{R}_a) \cong L^2(\mathbb{R}^2).
\]
To make the physical observables mathematically rigorous, we adopt the standard Schr\"odinger representation. For a single continuous-variable mode represented on $L^2(\mathbb{R})$, the unbounded position operator $\hat{q}$ and momentum operator $\hat{p}$ are concretely constructed as the coordinate multiplication operator and the differential operator, respectively:
\[
    (\hat{q}\psi)(x) = x\psi(x), \quad (\hat{p}\psi)(x) = -i \frac{d}{dx}\psi(x),
\]
where we set the reduced Planck constant $\hbar = 1$. These are densely defined unbounded self-adjoint operators. Their respective domains are $(\hat{q}) = \{\psi \in L^2(\mathbb{R}) \mid \int_{\mathbb{R}} x^2 |\psi(x)|^2 dx < \infty\}$ and the Sobolev space $D(\hat{p}) = H^1(\mathbb{R})$ consisting of absolutely continuous functions with $L^2$ derivatives. 

While the observables $\hat{q}$ and $\hat{p}$ are closed and self-adjoint, their spectra span the entire real line $(-\infty, \infty)$. Consequently, they are not positive operators and cannot directly serve as predicates in our logic. However, in the verification of continuous-variable programs (such as bounding the energy or tracking the variance of GKP states), the properties of interest are predominantly governed by their squares, $\hat{q}^2$ and $\hat{p}^2$.
 These squared operators are strictly positive and self-adjoint. By Kato's Representation Theorem, they naturally induce closed, positive semi-definite quadratic forms. Crucially, the domain of the quadratic form associated with $\hat{q}^2$ is exactly the domain of its square root, $D(|\hat{q}|)$, which precisely coincides with the operator domain $D(\hat{q})$. This elegant topological coincidence allows us to seamlessly construct physically meaningful predicates---such as the Hamiltonian or GKP stabilizers---as valid elements within our $\omega$-CPO of quadratic forms, without domain mismatches.

For any state $\psi$ in their common dense domain, such as the Schwartz space $\mathcal{S}(\mathbb{R})$ of rapidly decreasing functions, evaluating the commutator explicitly yields:
\[
    [\hat{q}, \hat{p}]\psi(x) = \hat{q}(-i\psi'(x)) - \hat{p}(x\psi(x)) = -ix\psi'(x) + i\frac{d}{dx}(x\psi(x)) = i\psi(x).
\]
This concrete construction naturally establishes the Canonical Commutation Relation (CCR):
\[
    [\hat{q}, \hat{p}] = i I.
\]

In our bipartite quantum program, these operators are extended to act locally on their respective modes. The logical system observables are $\hat{q}_{sys}\otimes I_{anc} $ and $\hat{p}_{sys}\otimes I_{anc} $, acting non-trivially only on $\mathcal{H}_{sys}$. Analogously, the ancilla observables are $I_{sys}\otimes\hat{q}_{anc}$ and $I_{sys}\otimes\hat{p}_{anc}$, acting nontrivial on $\mathcal{H}_{anc}$. Naturally, operators belonging to different modes trivially commute, e.g., $[\hat{q}_{sys}, \hat{p}_{anc}] = 0$.

\textbf{Remark on Infinite Dimensions:} The CCR plays a crucial role in justifying our logic. Taking the trace of both sides in a finite-dimensional space of dimension $d$ would yield $\tr([\hat{q}, \hat{p}]) = 0$ while $\tr(i I) = i d$, a contradiction (Wintner's Theorem). Thus, $\hat{q}$ and $\hat{p}$ cannot be represented as bounded operators on a finite-dimensional space. This necessitates the use of infinite-dimensional Hilbert spaces and the theory of \emph{Linear Relations} to handle unbounded observables rigorously.

\subsubsection{Unitary Operations}
We explicitly define the gates used in the protocol and establish their commutation properties.

\paragraph{1. Feedback Displacement ($D_{sys}$).}
The displacement operator shifting the system position by $v \in \mathbb{R}$ is generated by the momentum operator $\hat{p}_{sys}$ acting on $\cH_{sys}$:
\[
    D_{sys}(v) \coloneqq e^{-i v \hat{p}_{sys}} .
\]
\begin{lemma}[Displacement Property]
    For any $v \in \mathbb{R}$, the conjugate action of the displacement operator is a linear shift:
    \[
        D_{sys}(v)^\dagger \hat{q}_{sys} D_{sys}(v) = \hat{q}_{sys} + v I_{sys}.
    \]
\end{lemma}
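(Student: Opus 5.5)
We prove the identity as an equality of unbounded operators: it holds on all of $D(\hat q_{sys})$, and both sides have that domain. The cleanest route is through the Schr\"odinger representation fixed above, in which $D_{sys}(v)$ can be computed explicitly.

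\emph{Step 1.} We identify $D_{sys}(v)$ as a translation operator, $(D_{sys}(v)\psi)(x)=\psi(x-v)$. To justify this, note that under the Fourier transform $\mathcal F$ (unitary by Plancherel), $\hat p$ becomes multiplication by $k$. By the functional calculus, $e^{-iv\hat p}=\mathcal F^{-1}e^{-ivk}\mathcal F$. Multiplying by the phase $e^{-ivk}$ on the Fourier side corresponds exactly to the shift $x\mapsto x-v$ in position space. Equivalently, $v\mapsto T_v\psi=\psi(\cdot-v)$ is a strongly continuous unitary group. On Schwartz functions its generator is $-\frac{d}{dx}=-i\hat p$, and Stone's theorem then identifies $T_v$ with $e^{-iv\hat p}$, because $\mathcal S(\mathbb R)$ is a core for $\hat p$. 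It follows that $D_{sys}(v)^\dagger=D_{sys}(-v)$ is the shift by $+v$.

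\emph{Step 2.} We check the domains. The left-hand side has domain $\{\psi:D(v)\psi\in D(\hat q)\}$, that is, those $\psi$ with $\int x^2|\psi(x-v)|^2\,dx=\int (y+v)^2|\psi(y)|^2\,dy<\infty$. Since $(y+v)^2\le 2y^2+2v^2$ and $y^2\le 2(y+v)^2+2v^2$, this condition is equivalent to $\psi\in D(\hat q)$. The right-hand side is $\hat q+vI$, whose domain is also $D(\hat q)$, so the two domains coincide.

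\emph{Step 3.} We compute. For $\psi\in D(\hat q)$, let $\phi=D(v)\psi$, so $\phi(x)=\psi(x-v)$. Then $(\hat q\phi)(x)=x\psi(x-v)$. Applying $D(v)^\dagger$, which shifts by $+v$, gives $(y+v)\psi(y)$, and this is exactly $((\hat q+v)\psi)(y)$.

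\emph{Extension and main obstacle.} The statement on $\mathcal H_{sys}\otimes\mathcal H_{anc}$ follows by tensoring with $I_{anc}$. Concretely, in $L^2(\mathbb R^2)$ the shift acts on the first coordinate only and the same computation goes through verbatim. The main obstacle is the rigorous identification in Step 1 of the functional-calculus exponential with the translation operator. We would do this via the Fourier-transform argument first, since it avoids any core or generator subtleties. Once that identification is in place, the remaining steps are direct calculations.
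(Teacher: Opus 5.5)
Your proof is correct, but it takes a different route from the paper's.

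The paper's argument is purely algebraic. It writes $D_{sys}(v)^\dagger\hat{q}_{sys}D_{sys}(v)=e^{A}\hat{q}_{sys}e^{-A}$ with $A=iv\hat{p}_{sys}$ and expands with the Baker--Campbell--Hausdorff series. It then uses $[\hat{q},\hat{p}]=iI$ to get $[A,\hat{q}_{sys}]=vI_{sys}$, and observes that all higher nested commutators vanish.

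You instead work in the Schr\"odinger representation. You identify $e^{-iv\hat{p}}$ with the translation $\psi\mapsto\psi(\cdot-v)$, via the Fourier transform or Stone's theorem with $\mathcal{S}(\mathbb{R})$ as a core. You then compute the conjugation pointwise and check that both sides have domain $D(\hat{q}_{sys})$.

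The paper's version is shorter and representation-free, but it is formal. The BCH series has no direct meaning for unbounded $A$ and $\hat{q}$. The commutation relation was verified only on Schwartz functions. Passing from a commutator identity on a dense domain to the exponentiated (Weyl) form is exactly the step that needs justification for unbounded operators: there are known examples of operators satisfying the CCR on a dense invariant domain whose exponentials violate the Weyl relations.

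Your route supplies that missing step and gives a genuine operator equality with matching domains. It also generalizes at no extra cost. The same computation shows $D_{sys}(v)^\dagger f(\hat{q}_{sys})D_{sys}(v)=f(\hat{q}_{sys}+vI_{sys})$ for every Borel $f$. For $f(t)=t^2$ it yields the form identity $\QF_{\hat{q}^2}[D_{sys}(v)\psi]=\int(y+v)^2|\psi(y)|^2\,dy$, including the value $+\infty$. That is precisely what the GKP feedback step later uses, and what the proof of the $U_{SUM}$ lemma invokes as the ``Weyl translation relation.''
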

\begin{proof}
    We use the Baker-Campbell-Hausdorff (BCH) formula for the conjugation $e^{A} B e^{-A} = B + [A, B] + \frac{1}{2!}[A, [A, B]] + \dots$.
    Let $A = i v \hat{p}_{sys}$ (since $D^\dagger = e^{i v \hat{p}_{sys}}$) and $B = \hat{q}_{sys}$.
    First, compute the commutator using $[\hat{q}, \hat{p}] = i I$:
    \[
        [A, B] = [i v \hat{p}_{sys}, \hat{q}_{sys}] = i v [\hat{p}_{sys}, \hat{q}_{sys}] = i v (-i I_{sys}) = v I_{sys}.
    \]
    Since the result is a scalar multiple of the identity operator, it commutes with $A$ (and all other operators). Thus, all higher-order terms in the series vanish:
    \[
        [A, [A, B]] = [i v \hat{p}_{sys}, v I_{sys}] = 0.
    \]
    The series terminates at the first order, yielding $D_{sys}(v)^\dagger \hat{q}_{sys} D_{sys}(v) = \hat{q}_{sys} + v I_{sys}$.
\end{proof}

\paragraph{2. Interaction ($U_{SUM}$).}
The $SUM$ gate is a continuous-variable CNOT gate generated by the interaction Hamiltonian $H = \hat{q}_{sys} \otimes \hat{p}_{anc}$:
\[
    U_{SUM} \coloneqq e^{-i \hat{q}_{sys} \otimes \hat{p}_{anc}}.
\]
\textbf{Unitary Nature:} Since $\hat{q}_{sys}$ and $\hat{p}_{anc}$ are self-adjoint operators acting on different Hilbert spaces, their tensor product is essentially self-adjoint on the appropriate domain. By Stone's Theorem, the generated exponential operator $U_{SUM}$ is unitary.

(This can be proven similarly using BCH, viewing $\hat{q}_{sys}$ as a scalar coefficient relative to the ancilla space).

\begin{lemma}[Heisenberg Evolution of Ancilla Observables]
    \label{lem:interaction_transform}
    Let $U_{SUM} = e^{-i \hat{q}_{sys} \otimes \hat{p}_{anc}}$ be the unitary evolution operator on $\mathcal{H}_{sys} \otimes \mathcal{H}_{anc}$. For any non-negative Borel function $f$ applied to the local ancilla position observable $\hat{q}_{anc}$, its Heisenberg evolution under $U_{SUM}$ evaluates exactly to a joint functional calculus on the total space:
    \[
        U_{SUM}^\dagger (I_{sys} \otimes f(\hat{q}_{anc})) U_{SUM} = f(I_{sys} \otimes \hat{q}_{anc} + \hat{q}_{sys} \otimes I_{anc}).
    \]
\end{lemma}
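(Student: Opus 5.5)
The plan is to diagonalize the SUM gate by a partial Fourier transform and reduce the claim to an explicit change of variables on $L^2(\mathbb{R}^2)$. Identify $\mathcal{H}_{sys}\otimes\mathcal{H}_{anc}\cong L^2(\mathbb{R}^2)$ with coordinates $(x,y)$, so that $I_{sys}\otimes\hat q_{anc}$ and $\hat q_{sys}\otimes I_{anc}$ are multiplication by $y$ and $x$. Let $F_{anc}$ denote the Fourier transform in the $y$ variable only. Under $F_{anc}$, the operator $\hat q_{sys}\otimes\hat p_{anc}$ becomes multiplication by $xk$, where $k$ is the dual variable. This gives a concrete self-adjoint realization of the generator, and avoids relying on the loose ``essentially self-adjoint'' remark. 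It follows that $U_{SUM}=F_{anc}^{-1}\,e^{-ixk}\,F_{anc}$.

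Since $e^{-ixk}$ is a phase shift in $k$, undoing the Fourier transform turns it into a translation in $y$. Thus $(U_{SUM}\psi)(x,y)=\psi(x,y-x)$, with the sign fixed by the convention $e^{-iv\hat p}\psi(y)=\psi(y-v)$, the same one used in the Displacement Property lemma. This is a measure-preserving shear of $\mathbb{R}^2$, which independently confirms that $U_{SUM}$ is unitary.

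The main computation uses a core on which everything is explicit. I would first establish the identity for bounded Borel $f$, where both sides are bounded multiplication operators. For $\psi\in L^2(\mathbb{R}^2)$ we have
\[
U_{SUM}^\dagger\, (f(y)\cdot)\, U_{SUM}\psi(x,y)=f(y+x)\,\psi(x,y),
\]
because $U_{SUM}^\dagger\phi(x,y)=\phi(x,y+x)$. On the right-hand side, the operators $\hat q_{sys}\otimes I$ and $I\otimes\hat q_{anc}$ are strongly commuting multiplication operators, so their sum is the closure of multiplication by $x+y$. Its joint Borel functional calculus gives $f(I\otimes\hat q_{anc}+\hat q_{sys}\otimes I)=$ multiplication by $f(x+y)$. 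The two sides therefore agree for bounded $f$.

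For unbounded non-negative $f$, such as $f(t)=t^2$, both sides must be read as closed positive forms or self-adjoint operators. I would extend the bounded case via the truncations $f_n=\min(f,n)\uparrow f$. The left-hand side is a unitary pullback, so its form satisfies $\QF[\u]=\QF_{f(\hat q_{anc})}[U_{SUM}\u]$ and increases monotonically with $n$. On the right-hand side, monotone convergence of multiplication forms gives $\int f_n(x+y)|\u|^2\uparrow\int f(x+y)|\u|^2$. The form version of the identity then follows from Theorem~\ref{thm:monotone_convergence}, or from Corollary~\ref{cor:omega-CPO-for-LRs}. Alternatively, one can note directly that unitary conjugation of a multiplication operator by a measurable bijection composes its symbol, with domains transported by the unitary.

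The main obstacle is making the identification $U_{SUM}=$ shear rigorous. This means confirming that the exponential of the tensor-product generator is the one obtained from its Fourier-diagonal self-adjoint realization, and that the sign conventions match. I would handle this by defining the generator through $F_{anc}$ from the outset and checking it against the formal expression on the Schwartz core $\mathcal{S}(\mathbb{R}^2)$, which is a core for it.
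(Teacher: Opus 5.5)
Your proof is correct, and it reaches the identity by a more concrete route than the paper's. The paper does not fix a representation. It writes $U_{SUM}=\int_{\mathbb R} dE_{sys}(y)\otimes D_{anc}(y)$ as a direct integral of ancilla translations controlled by the spectral measure of $\hat q_{sys}$. It then conjugates and collapses the resulting double spectral integral to the diagonal using the formal orthogonality $dE_{sys}(y')\,dE_{sys}(y)=\delta(y'-y)\,dE_{sys}(y)$. Next it applies the Weyl covariance $D_{anc}(y)^\dagger f(\hat q_{anc})D_{anc}(y)=f(\hat q_{anc}+yI_{anc})$ fibrewise. Finally it recognizes $\int dE_{sys}(y)\otimes f(\hat q_{anc}+yI_{anc})$ as $f(A+B)$, with $A=\hat q_{sys}\otimes I_{anc}$ and $B=I_{sys}\otimes\hat q_{anc}$, via Fubini for the product spectral measure.

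Your shear formula $(U_{SUM}\psi)(x,y)=\psi(x,y-x)$ is exactly that direct integral written out in the Schr\"odinger representation, which the paper adopts anyway. So the key idea, that SUM is a position-controlled translation, is shared; what differs is how it is made rigorous. Deriving the shear from the partial Fourier transform pins down the self-adjoint realization of the generator and the sign convention. The paper takes these for granted from essential self-adjointness of the tensor-product generator, Stone's theorem and the tensor-product functional calculus. In your version the identity becomes the change-of-variables rule for multiplication operators under a measure-preserving bijection, with domains transported exactly. You also handle unbounded $f$ explicitly through $\min(f,n)\uparrow f$ and monotone convergence of closed forms, whereas the paper pushes unbounded $f$ through the formal spectral manipulations without a separate domain argument.

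The paper's version applies verbatim to any self-adjoint $\hat q_{sys}$ on any $\mathcal H_{sys}$, since only the Weyl relation on the ancilla is used. Yours is tied to $L^2(\mathbb R^2)$ but is elementary and fully rigorous. One small wording slip: multiplication by $x+y$ is already closed; it is the closure of the operator sum $A+B$ on $D(A)\cap D(B)$ that equals it.
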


\begin{proof}
    \textbf{Step 1: Spectral Decomposition and Direct Integral.}
    Let $E_{sys}$ be the projection-valued measure (PVM) associated with the self-adjoint operator $\hat{q}_{sys}$ on $\mathcal{H}_{sys}$. The spectral decomposition of $\hat{q}_{sys}$ is $\hat{q}_{sys} = \int_{\mathbb{R}} y \, dE_{sys}(y)$. 
    By the functional calculus for tensor products of strongly commuting operators, the unitary $U_{SUM}$ can be evaluated via the direct integral of the ancilla translation operator $D_{anc}(y) \coloneqq e^{-i y \hat{p}_{anc}}$ defined strictly on $\mathcal{H}_{anc}$:
    \begin{equation}
        U_{SUM} = \int_{\mathbb{R}} dE_{sys}(y) \otimes D_{anc}(y).
    \end{equation}
    Consequently, its adjoint is given by:
    \begin{equation}
        U_{SUM}^\dagger = \int_{\mathbb{R}} dE_{sys}(y') \otimes D_{anc}^\dagger(y').
    \end{equation}

    \textbf{Step 2: Conjugation and Tensor Factor Separation.}
    We substitute these direct integrals into the Heisenberg evolution of the observable $I_{sys} \otimes f(\hat{q}_{anc})$. Because operators acting on different tensor factors commute, we group the terms within the respective Hilbert spaces:
    \begin{align}
        &U_{SUM}^\dagger (I_{sys} \otimes f(\hat{q}_{anc})) U_{SUM} \nonumber \\
        &= \left( \int_{\mathbb{R}} dE_{sys}(y') \otimes D_{anc}^\dagger(y') \right) (I_{sys} \otimes f(\hat{q}_{anc})) \left( \int_{\mathbb{R}} dE_{sys}(y) \otimes D_{anc}(y) \right) \nonumber \\
        &= \int_{\mathbb{R}} \int_{\mathbb{R}} \left( dE_{sys}(y') I_{sys} dE_{sys}(y) \right) \otimes \left( D_{anc}^\dagger(y') f(\hat{q}_{anc}) D_{anc}(y) \right).
    \end{align}

    \textbf{Step 3: PVM Orthogonality.}
    Within the system Hilbert space $\mathcal{H}_{sys}$, the identity operator acts trivially: $dE_{sys}(y') I_{sys} dE_{sys}(y) = dE_{sys}(y') dE_{sys}(y)$. By the mutual orthogonality of projection-valued measures, $dE_{sys}(y') dE_{sys}(y) = \delta(y' - y) dE_{sys}(y)$. The double integral strictly collapses along the diagonal $y' = y$:
    \begin{equation}
        = \int_{\mathbb{R}} dE_{sys}(y) \otimes \left( D_{anc}^\dagger(y) f(\hat{q}_{anc}) D_{anc}(y) \right).
    \end{equation}

    \textbf{Step 4: Weyl Translation on the Ancilla Space.}
    Within the ancilla Hilbert space $\mathcal{H}_{anc}$, the Canonical Commutation Relation is $[\hat{q}_{anc}, \hat{p}_{anc}] = i I_{anc}$. The Weyl relation implies that $D_{anc}(y)$ translates the position operator: $D_{anc}^\dagger(y) \hat{q}_{anc} D_{anc}(y) = \hat{q}_{anc} + y I_{anc}$. By the Borel functional calculus, this extends exactly to the function $f$:
    \begin{equation}
        D_{anc}^\dagger(y) f(\hat{q}_{anc}) D_{anc}(y) = f(\hat{q}_{anc} + y I_{anc}).
    \end{equation}
    Substituting this local transformation back into the full integral yields:
    \begin{equation}
        = \int_{\mathbb{R}} dE_{sys}(y) \otimes f(\hat{q}_{anc} + y I_{anc}).
    \end{equation}

    \textbf{Step 5: Identification of the Joint Functional Calculus.}
    To complete the proof, we must formally identify the resulting integral as an operator acting on the total space $\mathcal{H}_{sys} \otimes \mathcal{H}_{anc}$. We define two strongly commuting self-adjoint global operators: $A \coloneqq \hat{q}_{sys} \otimes I_{anc}$ and $B \coloneqq I_{sys} \otimes \hat{q}_{anc}$. Their respective spectral measures are uniquely determined as $E_A(y) = E_{sys}(y) \otimes I_{anc}$ and $E_B(q) = I_{sys} \otimes E_{anc}(q)$.
    
    By the definition of the joint functional calculus for commuting operators, evaluating the sum function for $A$ and $B$ is strictly given by the double integral over their joint spectral measure:
    \begin{align}
        f(A + B) &= \int_{\mathbb{R}} \int_{\mathbb{R}} f(y + q) \, dE_A(y) dE_B(q) \nonumber \\
        &= \int_{\mathbb{R}} \int_{\mathbb{R}} f(y + q) \, \left( dE_{sys}(y) \otimes I_{anc} \right) \left( I_{sys} \otimes dE_{anc}(q) \right) \nonumber \\
        &= \int_{\mathbb{R}} \int_{\mathbb{R}} f(y + q) \, \left( dE_{sys}(y) \otimes dE_{anc}(q) \right).
    \end{align}
    
    By Fubini's theorem for spectral integrals, we evaluate this double integral as an iterated integral. Factoring the tensor product, we isolate the integration over the ancilla spectrum $q$ into the inner integral:
    \begin{equation}
        f(A + B) = \int_{\mathbb{R}} dE_{sys}(y) \otimes \left( \int_{\mathbb{R}} f(y + q) \, dE_{anc}(q) \right).
    \end{equation}
    
    For a fixed scalar $y \in \mathbb{R}$, the inner integral is exactly the single-variable functional calculus definition for the ancilla position operator $\hat{q}_{anc}$, which yields:
    \begin{equation}
        \int_{\mathbb{R}} f(y + q) \, dE_{anc}(q) = f(y I_{anc} + \hat{q}_{anc}).
    \end{equation}
    
    Substituting this exact operator evaluation back into the outer integral gives:
    \begin{equation}
        f(A + B) = \int_{\mathbb{R}} dE_{sys}(y) \otimes f(y I_{anc} + \hat{q}_{anc}).
    \end{equation}
    
    This expression explicitly matches the integral derived at the end of Step 4. Therefore, by equating the two results, we conclude that the Heisenberg evolution constructs the joint operator:
    \begin{equation}
        U_{SUM}^\dagger (I_{sys} \otimes f(\hat{q}_{anc})) U_{SUM} = f(\hat{q}_{sys} \otimes I_{anc} + I_{sys} \otimes \hat{q}_{anc}),
    \end{equation}
    which completes the proof rigorously without any domain mismatch or implicit identity omission.
\end{proof}

\subsubsection{Finite-Energy GKP State Preparation}
\label{subsec:gkp_finite_energy}

In standard idealized models, the GKP logical zero state is often written as a uniform superposition of position eigenstates on the grid $2\sqrt{\pi}\mathbb{Z}$, i.e., $|\bar{0}_{ideal}\rangle \propto \sum_{n} |2n\sqrt{\pi}\rangle$. However, this state is physically unnormalizable and does not belong to the Hilbert space $L^2(\mathbb{R})$, nor is it in the domain of the energy operator.

To make our verification mathematically rigorous and physically realistic, we initialize the ancilla in a \emph{finite-energy approximate GKP state} $|\bar{0}_{GKP}^{\Delta, \kappa}\rangle$. This state is constructed by replacing the infinitely sharp Dirac delta peaks with Gaussian wavepackets width parameter $\Delta$ (representing finite squeezing), and bounding the overall superposition with a broader Gaussian envelope width parameter $1/\kappa^2$ (representing finite total energy):
\[
    \psi_{anc}^{\Delta, \kappa}(q) \propto \sum_{n \in \mathbb{Z}} e^{-\frac{1}{2}\kappa^2 (2n\sqrt{\pi})^2} e^{-\frac{1}{2\Delta^2}(q - 2n\sqrt{\pi})^2}.
\]
Crucially, this wave-function is strictly an even function ($\psi_{anc}^{\Delta, \kappa}(-q) = \psi_{anc}^{\Delta, \kappa}(q)$). By symmetry, its probability distribution is centered exactly at the origin, ensuring that the first moment vanishes identically ($\langle \hat{q}_{anc} \rangle = 0$). Furthermore, it is square-integrable ($\psi_{anc}^{\Delta, \kappa} \in L^2(\mathbb{R}_a)$) and possesses a well-defined, finite position variance. 

In the standard physical regime of highly squeezed but finite-energy GKP states ($\Delta \ll \sqrt{\pi} \ll 1/\kappa$), the overlap between adjacent Gaussian peaks is exponentially suppressed. The total position variance $\sigma_{anc}^2$ analytically decouples into the sum of the local peak variance (due to finite squeezing) and the global envelope variance (due to finite energy):
\[
    \sigma_{anc}^2 = \langle \hat{q}_{anc}^2 \rangle \approx \frac{\Delta^2}{2} + \frac{1}{2\kappa^2}.
\]

Because this scalar is strictly bounded, the corresponding density operator $\rho_{anc} = |\bar{0}_{GKP}^{\Delta, \kappa}\rangle\langle\bar{0}_{GKP}^{\Delta, \kappa}|$ trivially satisfies $\text{Tr}(\hat{q}_{anc}^2 \rho_{anc}) = \sigma_{anc}^2 < \infty$. This formally verifies that the physically realistic GKP state is a valid initial state falling safely within the domain of our unbounded quadratic form predicates.

\subsection{The Formal GKP Program with Continuous Feedback}
\label{subsec:gkp_program_continuous}

Thanks to our generalized syntax for continuous-variable quantum programs, we do not need to artificially discretize the measurement outcomes into finite bins. We can directly model the single-round GKP error correction as the program $S_{GKP}$, which utilizes the continuous homodyne measurement and the $\textbf{bind}$ construct for real-time feedback.

Let $M$ denote a continuous instrument for the homodyne measurement of the ancilla's position $\hat{q}_{anc}$ over the measure space $(\mathbb{R}, \mathcal{B}(\mathbb{R}), dx)$. 
The family of measurement operators $\{M_x\}_{x \in \mathbb{R}}$ is defined as:
\begin{equation}
    M_x = I_{sys} \otimes \left(2\pi\sigma_M^2\right)^{-1/4} \exp\left( -\frac{(\hat{q}_{anc}-x I_{anc})^2}{4\sigma_M^2} \right),
\end{equation}
where $\sigma_M^2 > 0$ is the measurement noise variance. 
The family is strongly operator measurable by the Borel functional calculus. 
To verify the normalization condition, we evaluate the associated quadratic forms. 
For any state $\psi \in \mathcal{H}_{sys} \otimes \mathcal{H}_{anc}$, let $E_{anc}$ be the spectral measure of $\hat{q}_{anc}$. By Fubini's theorem and the properties of Gaussian integrals:
\begin{align}
    \int_{-\infty}^{+\infty} \langle \psi, M_x^\dagger M_x \psi \rangle \, dx &= \int_{-\infty}^{+\infty} \left( \int_{-\infty}^{+\infty} \frac{1}{\sqrt{2\pi\sigma_M^2}} e^{-\frac{(q-x)^2}{2\sigma_M^2}} d\langle \psi, (I_{sys} \otimes E_{anc}(q)) \psi \rangle \right) dx \\
    &= \int_{-\infty}^{+\infty} \left( \int_{-\infty}^{+\infty} \frac{1}{\sqrt{2\pi\sigma_M^2}} e^{-\frac{(q-x)^2}{2\sigma_M^2}} dx \right) d\langle \psi, (I_{sys} \otimes E_{anc}(q)) \psi \rangle \\
    &= \int_{-\infty}^{+\infty} 1 \cdot d\langle \psi, (I_{sys} \otimes E_{anc}(q)) \psi \rangle = \langle \psi, (I_{sys} \otimes I_{anc}) \psi \rangle,
\end{align}
which establishes $\int M_x^\dagger M_x \, dx = I_{sys} \otimes I_{anc}$ in the sense of quadratic forms.

The error correction procedure is formally written as:
\[
    S_{GKP} \equiv 
    \begin{array}{l}
        \textbf{1. Initialization:} \\
        \quad \text{Ancilla} := \rho_{anc}; \quad // \text{ The finite-energy GKP state defined above} \\
        \\
        \textbf{2. Interaction (Syndrome Extraction):} \\
        \quad U_{SUM}; \quad // \text{ Defined as } \exp(-i \hat{q}_{sys} \otimes \hat{p}_{anc}) \\
        \\
        \textbf{3. Continuous Measurement and Feedback:} \\
        \quad \textbf{bind } M \ x. \ D_{sys}(-x) \quad // \text{ Shift the system back by the measured error } x
    \end{array}
\]
Here, $D_{sys}(-x) = \exp(i x \hat{p}_{sys})$ is the momentum-generated displacement operator acting on the logical system, which continuously shifts the system's position by $-x$ to compensate for the extracted error.

\begin{remark}[Local Error Correction Regime vs. Global Modular Feedback]
    \label{rem:local_regime}
    In the full GKP protocol, the classical homodyne outcome $x$ is typically processed through a modular function $r(x) = x \mod \sqrt{\pi}$ (yielding the distance to the nearest lattice point) before applying the feedback $D_{sys}(-r(x))$. This modularity preserves the logical Pauli state of the encoded qubit while correcting the local continuous displacement. Using the raw outcome $D_{sys}(-x)$ globally would effectively act as a teleportation protocol, overwriting the system's macroscopic lattice position.

    However, the primary objective of our quantitative cost logic is to verify the \emph{continuous-variable variance suppression} mechanism—the core analog process that stabilizes the wavepacket. Following standard GKP noise analysis \cite{gottesman2001encoding, terhal2020towards}, we assume the system operates in the \textbf{local error correction regime}. That is, the initial uncentered displacement error of the system ($e_{sys}$) and the ancilla measurement noise ($e_{anc}$) are bounded within the fundamental Voronoi cell: $|e_{sys} + e_{anc}| < \frac{\sqrt{\pi}}{2}$, and the mean displacement is zero.

    Under this strict local regime, no lattice wrap-around occurs, and the modular residual behaves exactly as the identity function: $r(x) \equiv x$. Consequently, the linear feedback $D_{sys}(-x)$ is locally isomorphic to the modular GKP feedback. This allows our logic to natively track the continuous error drift using standard polynomial cost observables ($\hat{q}^2$), completely decoupling the verification of analog variance reduction from the complex discrete tracking of logical Pauli frames, which would otherwise require non-polynomial periodic predicates.
\end{remark}

\paragraph*{Remark: The Necessity of the Continuous $\omega$-CPO Framework.}
While structurally concise, verifying $S_{GKP}$ exposes two fundamental limitations of standard quantum Hoare logics. First, the continuous measurement outcome $x \in \mathbb{R}$ necessitates the $\textbf{bind}$ construct, moving beyond finite discrete $\textbf{case}$ statements. Second, and most critically, verifying stabilization intrinsically requires bounding the position variance $\hat{q}^2$---an unbounded observable in $L^2(\mathbb{R})$. Logics tethered to bounded projectors in $\mathcal{B}(\mathcal{H})$ cannot natively express or verify this property. By grounding our semantics in closed quadratic forms, our logic directly supports quantitative triples of the form $\{ \sigma_{bound}^2 I \} \ S_{GKP} \ \{ \hat{q}^2 \}$. This enables the strict algebraic verification of unbounded physical invariants without resorting to artificial domain truncation.

\subsection{Formal Verification of Variance Reduction}
\label{subsec:verification_derivation}

To verify the error correction capability of our continuous GKP program $S_{GKP}$, we evaluate the weakest precondition of the variance observable itself. Unlike standard quantum Hoare logics that restrict postconditions to bounded projectors (representing finite subspaces), our $\omega$-CPO of closed quadratic forms allows us to natively assign the unbounded position second moment as the quantitative predicate.

\textbf{Specification:} Let the target postcondition be the continuous system's position variance operator defined on the total space:
\[
    Q_{post} \coloneqq \hat{q}_{sys}^2 \otimes I_{anc}.
\]
Our goal is to explicitly derive the global weakest precondition $W_{pre} \coloneqq \wlp(S_{GKP}, Q_{post})$ via backward algebraic pullback. Verification succeeds if we can formally prove that $W_{pre}$ evaluates exactly to a finite scalar multiple of the global identity:
\[
    W_{pre} = (\sigma_{anc}^2 + \sigma_M^2) (I_{sys} \otimes I_{anc}).
\]
This exact algebraic equality demonstrates a profound physical result: for \emph{any} arbitrary initial joint state, the expected position variance of the system after one round of continuous GKP correction is strictly determined by the inherent finite variance of the supplied ancilla ($\sigma_{anc}^2$) and the classical noise of the homodyne measurement ($\sigma_M^2$). Crucially, the final variance becomes entirely decoupled from the magnitude of the system's initial continuous errors.

\begin{remark}[Equivalence of Second Moment and Variance]
    Throughout our formal analysis, we inherently assume that the environmental displacement errors, the classical measurement noise, and the finite-energy GKP envelope are zero-mean Gaussian processes, which is the standard physical premise in continuous-variable quantum optics \cite{terhal2020towards}. Under this parity symmetry, the first moments vanish ($\langle \hat{q} \rangle = 0$). Consequently, evaluating the unbounded second-moment observable $\hat{q}^2$ rigorously and exactly quantifies the physical variance of the quantum state.
\end{remark}

\begin{proof}[Derivation]
    We compute the weakest precondition step-by-step backwards through the continuous control flow.

    \textbf{Step 1: Continuous Feedback (Unitary Rule).}
    The last stage of the protocol is the real-time feedback operation applied to the system, represented globally by the unitary $D_{sys}(-x) \otimes I_{anc}$, parameterized by the continuous measurement outcome $x \in \mathbb{R}$. 

    Applying the Unitary Axiom and the translation property of the Weyl displacement operator, the pullback of the global variance observable $Q_{post} = \hat{q}_{sys}^2 \otimes I_{anc}$ through this feedback evaluates exactly over the tensor factors:
    \begin{align*}
    W_x &\coloneqq (D_{sys}(-x) \otimes I_{anc})^\dagger (\hat{q}_{sys}^2 \otimes I_{anc}) (D_{sys}(-x) \otimes I_{anc}) \\
    &= \left( D_{sys}^\dagger(-x) \hat{q}_{sys}^2 D_{sys}(-x) \right) \otimes I_{anc} \\
    &= (\hat{q}_{sys} - x I_{sys})^2 \otimes I_{anc}.
    \end{align*}


   \textbf{Step 2: Continuous Measurement (Bind Rule).}
    The variable $x \in \mathbb{R}$ is the classical outcome of the continuous instrument $M$ measuring the ancilla. 
    According to the semantic rule for the $\textbf{bind } M \ x$ construct, the weakest precondition is the integral of the branch preconditions conjugated by the global measurement operators $M_x$:
    \[
    W_{meas} = \int_{-\infty}^{+\infty} M_x^\dagger W_x M_x \, dx.
    \]

    Recall from Step 1 that $W_x = (\hat{q}_{sys} - x I_{sys})^2 \otimes I_{anc}$ acts non-trivially only on the system. Similarly, the POVM element $M_x^\dagger M_x$ evaluates to $I_{sys} \otimes (2\pi\sigma_M^2)^{-1/2} \exp\left(-\frac{(\hat{q}_{anc}-x I_{anc})^2}{2\sigma_M^2}\right)$, acting non-trivially only on the ancilla. 

    Because operators acting on different tensor factors strictly commute, their matrix product trivially forms the joint tensor operator. Substituting these explicit definitions yields:
    \begin{align*}
    W_{meas} &= \int_{-\infty}^{+\infty} M_x^\dagger W_x M_x \, dx \\
    &= \int_{-\infty}^{+\infty} \left( (\hat{q}_{sys} - x I_{sys})^2 \otimes I_{anc} \right) \cdot \left( I_{sys} \otimes (2\pi\sigma_M^2)^{-1/2} \exp\left(-\frac{(\hat{q}_{anc}-x I_{anc})^2}{2\sigma_M^2}\right) \right) \, dx \\
    &= \int_{-\infty}^{+\infty} (\hat{q}_{sys} - x I_{sys})^2 \otimes (2\pi\sigma_M^2)^{-1/2} \exp\left(-\frac{(\hat{q}_{anc}-x I_{anc})^2}{2\sigma_M^2}\right) \, dx.
    \end{align*}

    \textbf{Step 3: Interaction (Heisenberg Evolution).}

We now pull the measurement precondition $W_{meas}$ back through the entangling gate
$U_{SUM}=e^{-i\,\hat{q}_{sys}\otimes\hat{p}_{anc}}$.  At this point the notation
\[
    W_{meas}=\int_{-\infty}^{+\infty} M_x^\dagger W_x M_x\,dx
\]
should not be read as an ordinary Bochner, strong-operator, or weak-operator integral of
unbounded operators.  The branch predicate $W_x$ is itself unbounded, and hence
$M_x^\dagger W_xM_x$ denotes the bounded pullback of the closed quadratic form
associated with $W_x$.  Consistently with the structural $\wlp$ rule for
continuous measurement in \Cref{def:wlp_rules} and the predicate-transformer notation
fixed in \Cref{sec:proof_system}, the integral predicate is defined pointwise by its
extended non-negative quadratic form:
\[
    \QF_{W_{meas}}[\phi]
    \coloneqq
    \int_{-\infty}^{+\infty} \QF_{W_x}[M_x\phi] \, dx,
    \qquad \phi\in\mathcal{H}_{sys}\otimes\mathcal{H}_{anc},
\]
with value $+\infty$ allowed outside the form domain.

The same convention applies to the pullback along the bounded unitary $U_{SUM}$:
$U_{SUM}^\dagger P U_{SUM}$ denotes the predicate whose form is
$\psi\mapsto \QF_P[U_{SUM}\psi]$.  Therefore, for every
$\psi\in\mathcal{H}_{sys}\otimes\mathcal{H}_{anc}$, again in the extended-form sense,
we have
\begin{align*}
    \QF_{U_{SUM}^\dagger W_{meas} U_{SUM}}[\psi]
    &= \QF_{W_{meas}}[U_{SUM}\psi] \\
    &= \int_{-\infty}^{+\infty} \QF_{W_x}[M_xU_{SUM}\psi] \, dx \\
    &= \int_{-\infty}^{+\infty}
       \QF_{U_{SUM}^\dagger(M_x^\dagger W_xM_x)U_{SUM}}[\psi] \, dx .
\end{align*}
Equivalently,
\[
    U_{SUM}^\dagger W_{meas}U_{SUM}
    =
    \int_{-\infty}^{+\infty}
    U_{SUM}^\dagger(M_x^\dagger W_xM_x)U_{SUM}\,dx
\]
as an equality of predicates, i.e. of their closed quadratic forms.  Thus no separate
exchange of an unbounded operator integral with a unitary is being assumed; the equality
is exactly the defining form-level pullback of the continuous-measurement predicate.

By the definition of the tensor product algebra, we factor the integrand into a standard product of two global operators:
\begin{align*}
    &\quad\,(\hat{q}_{sys} - x I_{sys})^2 \otimes (2\pi\sigma_M^2)^{-1/2} \exp\left(-\frac{(\hat{q}_{anc}-x)^2}{2\sigma_M^2}\right) \\
    &= \left( (\hat{q}_{sys} - x I_{sys})^2 \otimes I_{anc} \right) \cdot \left( I_{sys} \otimes (2\pi\sigma_M^2)^{-1/2} \exp\left(-\frac{(\hat{q}_{anc}-x)^2}{2\sigma_M^2}\right) \right).
\end{align*}

We distribute the unitary conjugation by inserting $U_{SUM} U_{SUM}^\dagger = I_{sys} \otimes I_{anc}$ between the two factors. Since the polynomial factor strongly commutes with $U_{SUM}$, it remains invariant, and applying Lemma~\ref{lem:interaction_transform} to the remaining Gaussian component evaluates the joint integral:
\begin{align*}
    W_{int} &= \int_{-\infty}^{+\infty} U_{SUM}^\dagger \left[ \left( (\hat{q}_{sys} - x I_{sys})^2 \otimes I_{anc} \right) \left( I_{sys} \otimes (2\pi\sigma_M^2)^{-1/2} \exp\left(-\frac{(\hat{q}_{anc}-x)^2}{2\sigma_M^2}\right) \right) \right] U_{SUM} \, dx \\
    &= \int_{-\infty}^{+\infty} \left( (\hat{q}_{sys} - x I_{sys})^2 \otimes I_{anc} \right) \left( U_{SUM}^\dagger \left[ I_{sys} \otimes (2\pi\sigma_M^2)^{-1/2} \exp\left(-\frac{(\hat{q}_{anc}-x)^2}{2\sigma_M^2}\right) \right] U_{SUM} \right) \, dx \\
    &= \int_{-\infty}^{+\infty} \left( (\hat{q}_{sys} - x I_{sys})^2 \otimes I_{anc} \right) (2\pi\sigma_M^2)^{-1/2} \exp\left(-\frac{(I_{sys} \otimes \hat{q}_{anc} + \hat{q}_{sys} \otimes I_{anc} - x (I_{sys} \otimes I_{anc}))^2}{2\sigma_M^2}\right) \, dx.
\end{align*}

    To rigorously evaluate this continuous operator integral over $x$, we apply the joint spectral theorem. For any fixed classical outcome $x \in \mathbb{R}$, the integrand is a strongly continuous operator-valued function $F_x(A, B)$ constructed from two strongly commuting self-adjoint operators: $A \coloneqq \hat{q}_{sys} \otimes I_{anc}$ and $B \coloneqq I_{sys} \otimes \hat{q}_{anc}$. Their joint spectral measure is exactly the tensor product $dE_{sys}(y) \otimes dE_{anc}(q)$. 

    By the definition of the joint functional calculus, the integrand itself can be uniquely expressed as a double integral over the scalar spectra $y$ and $q$:
    \begin{align*}
    F_x(A, B) &= \iint_{\mathbb{R}^2} F_x(y, q) \, dE_{sys}(y) \otimes dE_{anc}(q) \\
    &= \iint_{\mathbb{R}^2} (y - x)^2 (2\pi\sigma_M^2)^{-1/2} \exp\left(-\frac{(q + y - x)^2}{2\sigma_M^2}\right) dE_{sys}(y) \otimes dE_{anc}(q).
    \end{align*}

    Substituting this spectral representation back into the $x$-integral yields a nested triple integral. Because our logic evaluates predicates pointwise via closed quadratic forms, this operator-valued expression acts on any state $\psi \in D(W_{int})$ to yield a standard scalar measure $d\mu_\psi(y, q) = \langle \psi, dE_{sys}(y) \otimes dE_{anc}(q) \psi \rangle$. This scalarization guarantees the absolute convergence required to invoke Fubini's theorem. 

    Consequently, we can legally exchange the order of integration at the operator level, evaluating the continuous classical measurement outcome $x$ first:
    \begin{align*}
    W_{int} &= \iint_{\mathbb{R}^2} \left[ \int_{-\infty}^{+\infty} (y - x)^2 (2\pi\sigma_M^2)^{-1/2} \exp\left(-\frac{(q + y - x)^2}{2\sigma_M^2}\right) dx \right] dE_{sys}(y) \otimes dE_{anc}(q).
    \end{align*}

    Evaluating the inner classical integral over the continuous measurement outcome $x$, we treat the fixed spectral values $y$ and $q$ as scalars. We perform the standard variable substitution $z = x - (y + q)$, giving $dx = dz$. The polynomial coefficient transforms as:
    \begin{align*}
    (y - x)^2 &= (y - (z + y + q))^2 \\
    &= (-z - q)^2 \\
    &= z^2 + 2zq + q^2.
    \end{align*}

    The inner integral over $z$ decouples into a sum of standard Gaussian moments (variance $\sigma_M^2$, mean $0$, and normalization $1$):
    \begin{align*}
    &\int_{-\infty}^{+\infty} (z^2 + 2zq + q^2) (2\pi\sigma_M^2)^{-1/2} \exp\left(-\frac{z^2}{2\sigma_M^2}\right) dz \\
    &= \int_{-\infty}^{+\infty} z^2 (2\pi\sigma_M^2)^{-1/2} \exp\left(-\frac{z^2}{2\sigma_M^2}\right) dz \\
    &\quad + 2q \int_{-\infty}^{+\infty} z (2\pi\sigma_M^2)^{-1/2} \exp\left(-\frac{z^2}{2\sigma_M^2}\right) dz \\
    &\quad + q^2 \int_{-\infty}^{+\infty} (2\pi\sigma_M^2)^{-1/2} \exp\left(-\frac{z^2}{2\sigma_M^2}\right) dz \\
    &= \sigma_M^2 + 0 + q^2=\sigma_M^2 + q^2.
    \end{align*}

    We substitute this evaluated scalar result back into the outer joint spectral integral:
    \begin{align*}
    W_{int} &= \iint_{\mathbb{R}^2} (\sigma_M^2 + q^2) \, dE_{sys}(y) \otimes dE_{anc}(q) \\
    &= \sigma_M^2 \iint_{\mathbb{R}^2} dE_{sys}(y) \otimes dE_{anc}(q) + \iint_{\mathbb{R}^2} q^2 \, dE_{sys}(y) \otimes dE_{anc}(q).
    \end{align*}

    Recognizing the first term as the resolution of the global identity and the second term as the exact spectral definition of the ancilla's position variance operator ($\hat{q}_{anc}^2 = \int q^2 dE_{anc}(q)$), we obtain the explicit algebraic decoupling on the total space:
    \[
    W_{int} = \sigma_M^2 (I_{sys} \otimes I_{anc}) + I_{sys} \otimes \hat{q}_{anc}^2.
    \]

    \emph{Physical Meaning:} The continuous feedback swaps the system's unbounded position error with the ancilla's position observable, while injecting the measurement apparatus's noise variance ($\sigma_M^2$) into the system. The final state variance is completely decoupled from the initial system noise.

    \textbf{Step 4: Initialization (Global Precondition Derivation).}
    The program begins by initializing the ancilla mode. Whatever the initial state of the ancilla, the initialization command discards it and prepares the finite-energy GKP logical zero state, denoted by the normalized wavefunction $\psi_{anc}^{\Delta, \kappa} \in L^2(\mathbb{R}_a)$.

    In our semantics based on quadratic forms, we evaluate the pullback of this initialization command by linking the pre- and post-initialization states. For any arbitrary initial joint pure state $\Psi_{in} = \u_{sys} \otimes \u_{anc} \in \mathcal{H}_{sys} \otimes \mathcal{H}_{anc}$ (assuming a normalized ancilla input $\|\u_{anc}\| = 1$), the program prepares the updated joint state $\Psi_{out} = \u_{sys} \otimes \psi_{anc}^{\Delta, \kappa}$. Therefore, the quadratic form of the global weakest precondition $W_{pre}$ evaluated on $\Psi_{in}$ is exactly the quadratic form of $W_{int}$ evaluated on $\Psi_{out}$:
    \[
    \QF_{W_{pre}}[\u_{sys} \otimes \u_{anc}] \coloneqq \QF_{W_{int}}[\u_{sys} \otimes \psi_{anc}^{\Delta, \kappa}].
    \]

    Substituting the explicitly decoupled relation $W_{int} = \sigma_M^2 (I_{sys} \otimes I_{anc}) + I_{sys} \otimes \hat{q}_{anc}^2$ derived in Step 3, the quadratic form perfectly factorizes over the tensor product:
    \begin{align*}
    \QF_{W_{pre}}[\u_{sys} \otimes \u_{anc}] &= \langle \u_{sys} \otimes \psi_{anc}^{\Delta, \kappa}, \left( \sigma_M^2 (I_{sys} \otimes I_{anc}) + I_{sys} \otimes \hat{q}_{anc}^2 \right) (\u_{sys} \otimes \psi_{anc}^{\Delta, \kappa}) \rangle \\
    &= \sigma_M^2 \langle \u_{sys}, \u_{sys} \rangle_{\mathcal{H}_{sys}} \langle \psi_{anc}^{\Delta, \kappa}, \psi_{anc}^{\Delta, \kappa} \rangle_{\mathcal{H}_{anc}} + \langle \u_{sys}, \u_{sys} \rangle_{\mathcal{H}_{sys}} \langle \psi_{anc}^{\Delta, \kappa}, \hat{q}_{anc}^2 \psi_{anc}^{\Delta, \kappa} \rangle_{\mathcal{H}_{anc}} \\
    &= \|\u_{sys}\|^2 \left( \sigma_M^2 + \int_{-\infty}^{+\infty} q^2 |\psi_{anc}^{\Delta, \kappa}(q)|^2 \, dq \right).
    \end{align*}

    Since the initial ancilla state $\u_{anc}$ is normalized ($\|\u_{anc}\|^2 = 1$), we can seamlessly rewrite the scalar $\|\u_{sys}\|^2$ as the total inner product on the joint space:
    \begin{align*}
    \|\u_{sys}\|^2 &= \|\u_{sys}\|^2 \|\u_{anc}\|^2 = \langle \u_{sys} \otimes \u_{anc}, (I_{sys} \otimes I_{anc}) (\u_{sys} \otimes \u_{anc}) \rangle.
    \end{align*}

    Substituting this identity back into the evaluated quadratic form yields:
    \[
    \QF_{W_{pre}}[\u_{sys} \otimes \u_{anc}] = (\sigma_M^2 + \sigma_{anc}^2) \langle \u_{sys} \otimes \u_{anc}, (I_{sys} \otimes I_{anc}) (\u_{sys} \otimes \u_{anc}) \rangle.
    \]

    By Kato's representation theorem, since this equality extends to all valid states in the total space, the linear relation $W_{pre}$ is uniquely determined as a scalar multiple of the global identity operator:
    \[
    W_{pre} = (\sigma_M^2 + \sigma_{anc}^2) (I_{sys} \otimes I_{anc}),
    \]
    where $\sigma_{anc}^2 \coloneqq \int q^2 |\psi_{anc}^{\Delta, \kappa}(q)|^2 dq$ is the strictly finite position variance of the prepared approximate GKP state.

    \textbf{Conclusion: Global Variance Analysis and Verification.}
    As defined in Sec.~\ref{subsec:gkp_finite_energy}, the approximate GKP wavepacket $\psi_{anc}^{\Delta, \kappa}$ possesses a broad Gaussian envelope $1/\kappa^2$ that restricts its total energy. Because of this strictly square-integrable physical envelope, the integral defining the ancilla's position variance is absolutely convergent, yielding a finite scalar $\sigma_{anc}^2 < \infty$.

    We have thus formally derived the exact linear relation:
    \[
    \wlp(S_{GKP}, \hat{q}_{sys}^2 \otimes I_{anc}) = (\sigma_{anc}^2 + \sigma_M^2) (I_{sys} \otimes I_{anc}).
    \]

    This confirms the core physical premise of teleportation-based error correction: independent of the logical system's initial state or the magnitude of its initial continuous noise, the expected position variance after one round of continuous GKP feedback is bounded by---and formally replaced with---the sum of the built-in finite variance of the supplied ancilla ($\sigma_{anc}^2$) and the noise variance of the measurement apparatus ($\sigma_M^2$). This completes the formal verification of the continuous-variable variance reduction using our logic.
\end{proof}

\subsection{Discussion} \label{subsec:discussion}

To summarize the derivation, our continuous-variable $\wp$-calculus allows us to explicitly establish the following quantitative correctness formula (expressed as a semantic Hoare triple) for the continuous GKP protocol under realistic measurement noise:
\[
    \models_{par} \left\{ (\sigma_{anc}^2 + \sigma_M^2) (I_{sys}\otimes I_{anc}) \right\} \ S_{GKP} \ \left\{ \hat{q}_{sys}^2 \otimes I_{anc} \right\}.
\]

This single algebraic statement encapsulates the analytical power of our logic. Standard quantum logics, which are fundamentally tethered to bounded operators and ideal projectors in $\mathcal{B}(\mathcal{H})$, might verify that a state falls within a certain discrete subspace. However, they are mathematically incapable of directly asserting quantitative bounds on unbounded observables like variance $\hat{q}^2$, let alone formally modeling the noise injection from finite-precision physical instruments. 

By reconstructing the predicate space as an $\omega$-CPO of closed quadratic forms, our framework seamlessly integrates ideal unitary evolutions, non-normalizable unbounded costs, and realistic hardware imperfections into a unified algebraic calculus. We obtain a direct, rigorous proof of variance suppression that operates natively in the infinite-dimensional physical space $L^2(\mathbb{R})$, confirming that the final system error is strictly bounded by the intrinsic finite energy of the supplied ancilla ($\sigma_{anc}^2$) and the classical precision of the homodyne detector ($\sigma_M^2$).

\section{Further Technical Details}
\label{app:further_details}

In this section we provides further technique details for \Cref{sec:math_foundations}.
\subsection{Well-definedness and Linearity of the Extended Trace}
\label{app:trace_proof}

In this subsection, we justify the definition of the extended trace given in \Cref{subsec:loewner-order-and-extended-trace}.
Recall that for a positive self-adjoint linear relation $T$ (associated with closed form $\QF_T$) and a density operator $\rho$ with spectral decomposition $\rho = \sum_i p_i P_{\u_i}$, we defined:
\begin{equation}\label{eq:trace_def}
    \mathrm{tr}(T\rho) \coloneqq \sum_i p_i \QF_T[\u_i].
\end{equation}

To show that this value is independent of the decomposition and linear in $\rho$, we invoke the Spectral Theorem.
Although the main text relies on quadratic forms, the self-adjoint relation $T$ uniquely determines a spectral measure $E_T(\cdot)$ on $[0, \infty)$ such that:
\[
    \QF_T[\u] = \int_0^\infty \lambda \, d\langle \u, E_T(\lambda) \u \rangle, \quad \forall \u \in \Dom(\QF_T).
\]
(If $\u \notin \Dom(\QF_T)$, the integral diverges to $+\infty$, consistent with our definition).

Substitute this integral representation into \eqref{eq:trace_def}:
\[
    \mathrm{tr}(T\rho) = \sum_i p_i \int_0^\infty \lambda \, d\langle \u_i, E_T(\lambda) \u_i \rangle.
\]
Since all terms are non-negative, by the Monotone Convergence Theorem, we can exchange the summation and the integration:
\[
    \mathrm{tr}(T\rho) = \int_0^\infty \lambda \, d \left( \sum_i p_i \langle \u_i, E_T(\lambda) \u_i \rangle \right).
\]
Notice that the term inside the differential is explicitly a trace:
\[
    \sum_i p_i \langle \u_i, E_T(\lambda) \u_i \rangle = \sum_i p_i \mathrm{tr}(P_{\u_i} E_T(\lambda)) = \mathrm{tr}\left( \left(\sum_i p_i P_{\u_i}\right) E_T(\lambda) \right) = \mathrm{tr}(\rho E_T(\lambda)).
\]
Thus, we arrive at an alternative expression for the extended trace:
\begin{equation}\label{eq:trace_spectral}
    \mathrm{tr}(T\rho) = \int_0^\infty \lambda \, d \, \mathrm{tr}(\rho E_T(\lambda)).
\end{equation}
\textbf{Conclusion:}
\begin{enumerate}
    \item \textbf{Well-definedness:} The RHS of \eqref{eq:trace_spectral} depends only on the operator $\rho$ and the spectral measure $E_T$ (which is uniquely determined by $T$). It makes no reference to the specific decomposition $\{p_i, \u_i\}$. Thus, the definition in \eqref{eq:trace_def} is well-defined.
    \item \textbf{Linearity:} The map $\rho \mapsto \mathrm{tr}(\rho E_T(\lambda))$ is linear (trace is linear). Since integration is a linear operation, the functional $\rho \mapsto \int_0^\infty \lambda \, d \mathrm{tr}(\rho E_T(\lambda))$ is linear (specifically, affine on the convex set of density operators).
\end{enumerate}

\subsection{Proof of Proposition \ref{prop:spectral_approximation}}
\label{app:spectral_approximation}

In this subsection, we provide the constructive proof of Proposition~\ref{prop:spectral_approximation}. We explicitly construct the sequence of bounded operators $\{T_n\}$ using the saturation truncation of the spectral measure. This construction aligns with the intuition of approximating an unbounded observable by measuring it with devices of increasing but finite range $n$.

\subsubsection{Spectral Representation of Linear Relations}

Recall that any positive self-adjoint linear relation $T$ in a separable Hilbert space $\mathcal{H}$ corresponds uniquely to a spectral measure $E(\cdot)$ on the extended real line $[0, \infty]$. The associated quadratic form $\QF_T$ is given by:
\[
    \QF_T[u] = \int_{[0, \infty]} \lambda \, d\langle u, E(\lambda) u \rangle, \quad \forall u \in \mathcal{H}.
\]
The domain of the form, $\Dom(\QF_T)$, consists of those $u$ for which this integral is finite. The singular part of the relation (the ``infinite eigenvalue'' part) corresponds to the projection $E(\{\infty\})$, on which the form takes the value $+\infty$ (except for the zero vector).

\subsubsection{Construction of Approximating Sequence}

We define the truncation function $f_n: [0, \infty] \to [0, \infty)$ for each $n \in \mathbb{N}$ as the saturation function:
\[
    f_n(\lambda) \coloneqq \min(\lambda, n) = 
    \begin{cases} 
    \lambda & \text{if } 0 \le \lambda \le n \\
    n & \text{if } \lambda > n \text{ (including } \lambda=\infty)
    \end{cases}
\]
Unlike the characteristic function truncation (which maps $\lambda > n$ to $0$), this function captures the contribution of high-energy components as a saturated value $n$.

We define the bounded linear operator $T_n$ via the spectral calculus:
\[
    T_n \coloneqq \int_{[0, \infty]} f_n(\lambda) \, dE(\lambda).
\]
Since $f_n(\lambda)$ is a bounded measurable function ($|f_n(\lambda)| \le n$), each $T_n$ is a bounded, positive, self-adjoint operator defined on the entire space $\mathcal{H}$. Specifically, $T_n \in \mathcal{B}(\mathcal{H})$ implies $\Dom(\QF_{T_n}) = \mathcal{H}$.

\subsubsection{Proof of Monotonicity}

We verify condition (1) of the proposition.
Consider two indices $n$ and $m$ with $n \le m$. For any $\lambda \in [0, \infty]$, we clearly have:
\[
    f_n(\lambda) = \min(\lambda, n) \le \min(\lambda, m) = f_m(\lambda).
\]
By the positivity of the spectral measure (functional calculus preserves order), this pointwise inequality implies the operator inequality:
\[
    T_n \sqsubseteq T_m.
\]
Since $T_n$ and $T_m$ are bounded, this operator inequality extends immediately to their quadratic forms:
\[
    \QF_{T_n}[u] \le \QF_{T_m}[u], \quad \forall u \in \mathcal{H}.
\]
Thus, the sequence is monotonically increasing: $T_1 \sqsubseteq T_2 \sqsubseteq \dots$.

\subsubsection{Proof of Convergence}

We verify condition (2): $\QF_T[u] = \sup_{n} \QF_{T_n}[u]$.

Fix an arbitrary vector $u \in \mathcal{H}$. Let $\mu_u(\cdot)$ be the scalar measure defined by $\mu_u(\Delta) = \langle u, E(\Delta) u \rangle$.
The quadratic forms are expressed as Lebesgue integrals:
\[
    \QF_{T_n}[u] = \int_{[0, \infty]} f_n(\lambda) \, d\mu_u(\lambda), \quad 
    \QF_T[u] = \int_{[0, \infty]} \lambda \, d\mu_u(\lambda).
\]
Consider the sequence of functions $\{f_n(\lambda)\}_{n \in \mathbb{N}}$.
\begin{enumerate}
    \item \textbf{Monotonicity:} As shown above, $f_n(\lambda) \le f_{n+1}(\lambda)$ for all $\lambda$.
    \item \textbf{Pointwise Convergence:} For any finite $\lambda$, $\lim_{n \to \infty} \min(\lambda, n) = \lambda$. For $\lambda = \infty$, $f_n(\infty) = n$, which diverges to $\infty$ as $n \to \infty$. Thus, $f_n(\lambda) \nearrow \lambda$ for all $\lambda \in [0, \infty]$.
\end{enumerate}

By the \textbf{Monotone Convergence Theorem} (Beppo Levi's Theorem) applied to the measure space $([0, \infty], \mu_u)$:
\[
    \lim_{n \to \infty} \int_{[0, \infty]} f_n(\lambda) \, d\mu_u(\lambda) = \int_{[0, \infty]} \lim_{n \to \infty} f_n(\lambda) \, d\mu_u(\lambda) = \int_{[0, \infty]} \lambda \, d\mu_u(\lambda).
\]
Substituting the form definitions:
\[
    \lim_{n \to \infty} \QF_{T_n}[u] = \QF_T[u].
\]
Since the sequence is monotone, the limit is equal to the supremum.
This equality holds for all $u \in \mathcal{H}$ (including cases where $\QF_T[u] = \infty$, typically corresponding to the singular part or vectors outside the form domain of the operator part).

\subsubsection{Conclusion}
We have constructed a sequence of bounded operators $\{T_n\}$ that approximates $T$ from below.
Specifically:
\begin{itemize}
    \item Each $T_n$ corresponds to the observable $T$ saturated at value $n$.
    \item The sequence is monotonic w.r.t. the Löwner order of forms.
    \item The forms converge pointwise to $\QF_T$.
\end{itemize}
This completes the proof of Proposition~\ref{prop:spectral_approximation}.

\end{document}